\newcommand{\mtimestamp}{compil\'e le  \ifnum\day<10 0\fi\the\day.\,%
\ifnum\month<10 0\fi\the\month.\,%
\the\year\ \`a \xxivtime\,h}
\newtheorem*{theorem*}{Theorem}
\newtheorem{problem}{Problem}
\newtheorem{theorem}{Theorem}
\newtheorem{definition}{Definition}
\newtheorem{corollary}{Corollary}
\newtheorem{lemma}{Lemma}
\newtheorem{claim}{Claim}
\newtheorem{subclaim}{Subclaim}[claim]
\newcommand{\qedclaim}{\hfill $\diamond$ \medskip}
\newcommand{\qedsubclaim}{\hfill $\circ$ \medskip}
\newenvironment{proofclaim}{\noindent{\em Proof.}}{\qedclaim}
\newenvironment{proofsubclaim}{\noindent{\em Proof.}}{\qedsubclaim}
\newenvironment{proofof}[1]{\medskip\noindent\emph{Proof of #1. }\ignorespaces}{\hfill\qed\medskip\par\noindent\ignorespacesafterend}
\title{On computing tree and path decompositions with metric constraints on the bags}
\author[1,2]{Guillaume Ducoffe}
\author[3]{Sylvain Legay}
\author[2,1]{Nicolas Nisse}
\affil[1]{\small Univ. Nice Sophia Antipolis, CNRS, I3S, UMR 7271, 06900 Sophia Antipolis, France}
\affil[2]{\small INRIA, France}
\affil[3]{\small LRI, Univ Paris Sud, Universit\'e Paris-Saclay, 91405 Orsay, France}
\date{}
\begin{document}

\maketitle

\begin{abstract}
We here investigate on the complexity of computing the \emph{tree-length} and the \emph{tree-breadth} of any graph $G$, that are respectively the best possible upper-bounds on the diameter and the radius of the bags in a tree decomposition of $G$.
\emph{Path-length} and \emph{path-breadth} are similarly defined and studied for path decompositions.	
So far, it was already known that tree-length is NP-hard to compute.
We here prove it is also the case for tree-breadth, path-length and path-breadth.
Furthermore, we provide a more detailed analysis on the complexity of computing the tree-breadth.
In particular, we show that graphs with tree-breadth one are in some sense the hardest instances for the problem of computing the tree-breadth.
We give new properties of graphs with tree-breadth one.
Then we use these properties in order to recognize in polynomial-time all graphs with tree-breadth one that are planar or bipartite graphs.
On the way, we relate tree-breadth with the notion of \emph{$k$-good} tree decompositions (for $k=1$), that have been introduced in former work for routing.
As a byproduct of the above relation, we prove that deciding on the existence of a $k$-good tree decomposition is NP-complete (even if $k=1$).
All this answers open questions from the literature.
\end{abstract}

\keywords{tree-length; tree-breadth; path-length; path-breadth; $k$-good tree decompositions.}


\section{Introduction}
\label{sec:intro}

\paragraph{Context.}
It is a fundamental problem in metric graph theory~\cite{Bandelt2008} to embed a graph into a simpler metric space while minimizing the (multiplicative) distortion of the distances in the graph.
In particular, minimum distortion embeddings of a graph into a tree or a path have practical applications in computer vision~\cite{Tenenbaum2000}, computational chemistry and biology~\cite{Indyk2001} as well as in network design and distributed protocols~\cite{Herlihy2006}.
The two problems to embed a graph into a tree or a path with minimum distortion are NP-hard~\cite{Agarwala1998,Bǎdoiu2005,Monien1986}.
However, there exists a nice setting in order to approximate these two problems.
More precisely, a series of graph parameters has been introduced in recent work in order to measure how much the \emph{distance distribution} of a graph is close to a tree metric or a path metric~\cite{Dourisboure2007b,Dragan2014a,Dragan2014b}.
We refer to~\cite{Dragan2013,Dragan2014b} for details about the relationships between these parameters and the two above-mentioned embedding problems.
Here we study the complexity of computing these parameters, thereby solving open problems in the literature.

The parameters that are considered in this note can be defined using the terminology of Robertson and Seymour tree decompositions~\cite{Robertson1986}.
Informally, a tree decomposition is a dividing of a graph $G$ into ``bags'': that are overlapping subgraphs that can be pieced together in a tree-like manner (formal definitions will be given in the technical sections of the paper).   
The shortest-path metric of $G$ is ``tree-like'' when each bag of the tree decomposition has bounded diameter and bounded radius, where the distance taken between two vertices in a same bag is their distance in $G$.
The \emph{tree-length}~\cite{Dourisboure2007b} and the \emph{tree-breadth}~\cite{Dragan2014a} of $G$ are respectively the best possible upper-bounds on the diameter and the radius of the bags in a tree decomposition of $G$.
\emph{Path-length}~\cite{Umezawa2009} and \emph{path-breadth}~\cite{Dragan2014b} are defined in the same fashion as tree-length and tree-breadth for path decompositions.
In this paper, we focus on the complexity of computing the four parameters tree-length, tree-breadth, path-length and path-breadth.  

\medskip
Recent studies suggest that some classes of real-life networks -- including biological networks and social networks -- have bounded tree-length and tree-breadth~\cite{Abu2015}.
This metric tree-likeness can be exploited in algorithms.
For instance, bounded tree-length graphs admit compact distance labeling scheme~\cite{Dourisboure2007a} as well as a PTAS for the well-known Traveling Salesman problem~\cite{Krauthgamer2006}.
Furthermore, the diameter and the radius of bounded tree-length graphs can be approximated up to an additive constant in linear-time~\cite{Chepoi2008}.
In contrast to the above result, we emphasize that under classical complexity assumptions the diameter of general graphs \emph{cannot} be approximated up to an additive constant in subquadratic-time, that is prohibitive for very large graphs~\cite{Chechik2014}. 

Note that a large amount of the literature about tree decompositions rather seeks to minimize the \emph{size} of the bags than their diameter. 
The \emph{tree-width}~\cite{Robertson1986} of a graph $G$ is the best possible upper-bound on the size of the bags in a tree decomposition of $G$. 
However, tree-length and the other parameters that are considered in this paper can differ arbitrarily from tree-width;
we refer to~\cite{Coudert2014} for a global picture on the relations between tree-length and tree-width.
Furthermore, one aim of this paper is to complete the comparison between tree-width and path-width on one side, and tree-length, tree-breadth, path-length and path-breadth on the other side, from the complexity point of view.
Let us remind that computing the tree-width (resp. the path-width) is NP-hard~\cite{Arnborg1987,Kashiwabara1979}, however for every fixed $k \geq 1$ there is a linear-time algorithm to decide whether a graph has tree-width at most $k$ (resp., path-width at most $k$)~\cite{Bodlaender1996,Bodlaender1996b}.

\paragraph{Related work.}
The complexity of computing tree-length, tree-breadth, path-length and path-breadth has been left open in several works~\cite{Dourisboure2007b,Dragan2014a,Dragan2014b}.
So far, it has been solved only for tree-length, that is NP-hard to compute.

{\it Tree-length and tree-breadth.} 
It is NP-complete to decide whether a graph has tree-length at most $k$ for every constant $k \geq 2$~\cite{Lokshtanov2010}.
However, the reduction used for tree-length goes through weighted graphs and then goes back to unweighted graphs using rather elegant gadgets.
It does not seem to us these gadgets can be easily generalized in order to apply to the other parameters that are considered in this note.
On a more positive side, there exists a $3$-approximation algorithm for tree-length~\cite{Dourisboure2007b}.
In this aspect, it looks natural to investigate on the complexity of computing the tree-breadth, since any polynomial-time algorithm would imply an improved $2$-approximation algorithm for tree-length.

{\it Path-length and path-breadth.}
There exist constant-factor approximation algorithms for path-length and path-breadth~\cite{Dragan2014b}.
Recently, the minimum eccentricity shortest-path problem -- that is close to the computation of path-length and path-breadth -- has been proved NP-hard~\cite{Dragan2015}.
Let us point out that for every fixed $k$, it can be decided in polynomial-time whether a graph admits a shortest-path with eccentricity at most $k$~\cite{Dragan2015}.
Our results will show the situation is different for path-length and path-breadth than for the  minimum eccentricity shortest-path. 

\paragraph{Our contributions.}
On the negative side, we prove that tree-breadth, path-length and path-breadth are NP-hard to compute.
More precisely:
\begin{itemize}
 \item recognizing graphs with tree-breadth one is NP-complete;
 \item recognizing graphs with path-length two is NP-complete;
 \item recognizing graphs with path-breadth one is NP-complete.
\end{itemize} 
It is remarkable the last two results (for path-length and path-breadth) are obtained using \emph{the same reduction}.
Our reductions have distant similarities with the reduction that was used for tree-length.
However, they do not need any detour through weighted graphs.

\medskip
We next focus our work on tree-breadth (although part of the results may extend to the three other parameters that are considered in this note).
We give a more in-depth analysis on the complexity of computing this parameter.
In particular, we prove it is equally hard to compute tree-breadth as to recognize graphs with tree-breadth one.
Therefore, graphs with tree-breadth one are in some sense the hardest instances for the problem of computing the tree-breadth.
The latter partly answers an open question from~\cite{Dragan2014a}, where it was asked for a characterization of graphs with tree-breadth one.
We also prove a few properties of graphs with tree-breadth one.
In particular, graphs with tree-breadth one are exactly those graphs admitting a \emph{$1$-good tree decomposition}, that is a tree decomposition whose each bag has a spanning star.
The more general notion of $k$-good tree decompositions was introduced in~\cite{Kosowski2015} to obtain new compact routing schemes.
Note that as a byproduct of the above relation between $1$-good tree decompositions and graphs with tree-breadth one, we obtain that deciding on the existence of a $k$-good tree decomposition is an NP-complete problem even when $k=1$. 

Finally, on the algorithmic side, we show how to recognize in polynomial time all graphs of tree-breadth one that are planar or bipartite. 
In particular, our recognition algorithm for planar graphs of tree-breadth one relies upon deep structural properties of these graphs.

\medskip
Definitions and useful notations are given in Section~\ref{sec:def}.
All our results of NP-completeness are listed and proved in Section~\ref{sec:np-hard}.
Sections~\ref{sec:tb} and~\ref{sec:planar} are devoted to the computation of tree-breadth.
In particular, in Section~\ref{sec:planar} we present and we prove correctness of an algorithm to recognize planar graphs with tree-breadth one.
Finally, we discuss about some open questions in the conclusion (Section~\ref{sec:conclusion}). 

\section{Definitions and preliminary results}
\label{sec:def}

We refer to~\cite{Diestel2012} for a survey on graph theory.
Graphs in this study are finite, simple, connected and unweighted.
Let $G=(V,E)$ be a graph. 
For any $X \subseteq V$, let $G[X]$ denote the subgraph of $G$ induced by $X$. 
For any subgraph $H$ of $G$, let $N_H(v)$ denote the set of neighbors of $v\in V$ in $H$, and let $N_H[v]=N_H(v)\cup \{v\}$. 
The distance $dist_H(u,v)$ between two vertices $u,v \in V$ in $H$ is the minimum length (number of edges) of a path between $u$ and $v$ in a subgraph $H$ of $G$. 
In what follows, we will omit the subscript when no ambiguity occurs. 
A set $S \subseteq V$ is a dominating set of $G$ if any vertex of $V\setminus S$ has a neighbor in $S$. 
The {\it dominating number} $\gamma(G)$ of a graph $G$ is the minimum size of a dominating set of $G$.

\paragraph{Tree decompositions and path decompositions of a graph.}
A {\it tree decomposition} $(T,{\cal X})$ of $G$ is a pair consisting of a tree $T$ and of a family ${\cal X}=(X_t)_{t \in V(T)}$ of subsets of $V$ indexed by the nodes of $T$ and satisfying:
\begin{itemize}
\item $\bigcup_{t \in V(T)}X_t=V$;
\item for any edge $e=\{u,v\} \in E$, there exists $t\in V(T)$ such that $u,v \in X_t$;
\item for any $v \in V$, the set of nodes $t \in V(T)$ such that $v \in X_t$ induces a subtree, denoted by $T_v$, of $T$. 
\end{itemize}

The sets $X_t$ are called {\it the bags} of the decomposition. 
If no bag is contained into another one, then the tree decomposition is called \emph{reduced}.
Starting from any tree decomposition, a reduced tree decomposition can be obtained in polynomial-time by contracting any two adjacent bags with one contained into the other until it is no more possible to do that.

In the following we will make use of the \emph{Helly property} in our proofs:

\begin{lemma}~\cite[Helly property]{Balakrishnan2012}
\label{lem:helly}	
 Let $T$ be a tree and let $T_1,T_2,\ldots,T_k$ be a finite family of pairwise intersecting subtrees.
 Then, $\bigcap_{i=1}^k T_i \neq \emptyset$, or equivalently there is a node contained in all the $k$ subtrees.
\end{lemma}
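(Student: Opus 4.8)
The plan is to prove the statement by rooting the tree and using a ``topmost vertex'' argument, which keeps the pairwise-intersection hypothesis as the only external input. First I would fix an arbitrary root $r$ of $T$ and, for a vertex $x$, read its depth off $dist_T(r,x)$. The one structural fact I would record is that two vertices which are both ancestors of a common vertex $w$ are comparable, one being an ancestor of the other, since both lie on the unique $r$-to-$w$ path. For each subtree $T_i$, connectedness gives a unique vertex $r_i \in T_i$ of minimum depth, and every vertex of $T_i$ is a descendant of $r_i$: if the path inside $T_i$ from $r_i$ to some $u \in T_i$ left the subtree hanging below $r_i$, it would pass through the parent of $r_i$, a vertex of $T_i$ of strictly smaller depth, contradicting minimality.

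Next I would single out the candidate common vertex. Let $v = r_j$ be a vertex of maximum depth among $r_1,\ldots,r_k$, and I claim $v \in T_i$ for every $i$. Fixing $i$, the hypothesis supplies some $w \in T_i \cap T_j$. By the previous paragraph both $r_i$ and $v = r_j$ are ancestors of $w$, hence comparable; and since $v$ has maximum depth, $dist_T(r,r_i) \le dist_T(r,v)$, so $r_i$ is an ancestor of $v$. On the $r$-to-$w$ path the three vertices therefore occur in the order $r_i, v, w$, which places $v$ on the path from $r_i$ to $w$. That path lies inside $T_i$ because $r_i, w \in T_i$ and $T_i$ is connected, so $v \in T_i$. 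As $i$ was arbitrary, $v \in \bigcap_{i=1}^{k} T_i$, and the intersection is nonempty.

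The step I expect to be the only real obstacle is the comparability argument, namely making precise that the two minimum-depth vertices $r_i$ and $r_j$ must share a root-to-$w$ path and hence be nested; the rest is bookkeeping about rooted trees. Should a cleaner writeup be preferred, an alternative is induction on $k$ with all the content in the base case $k=3$: given witnesses $a \in T_1 \cap T_2$, $b \in T_2 \cap T_3$, $c \in T_1 \cap T_3$, the median of $a,b,c$ lies in all three subtrees since each pairwise path is swallowed by the subtree containing both its endpoints, and the passage from $k-1$ to $k$ then merely replaces $T_1, T_2$ by the nonempty subtree $T_1 \cap T_2$ after confirming pairwise intersection through the $k=3$ case.
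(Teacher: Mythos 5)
Your argument is correct. Note, however, that the paper itself gives no proof of this lemma at all: it is stated as a known fact and attributed to a textbook (the citation to Balakrishnan), so there is no internal proof to compare against. Your deepest-root argument is the standard self-contained proof: rooting $T$, observing that each subtree $T_i$ has a unique minimum-depth vertex $r_i$ of which all of $T_i$ consists of descendants, and then checking that the deepest $r_j$ lies in every $T_i$ via the comparability of ancestors of a common witness $w \in T_i \cap T_j$. All steps go through: $r_i$ and $r_j$ both lie on the unique $r$--$w$ path, so they are nested; the depth comparison forces $r_i$ to be the ancestor; and the path from $r_i$ to $w$ stays inside $T_i$ by connectedness, so it carries $r_j$ into $T_i$. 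Your fallback via the median of three witnesses plus induction (replacing $T_1,T_2$ by the subtree $T_1 \cap T_2$, whose connectedness and pairwise intersection with the remaining subtrees follow from the $k=3$ case) is equally valid and is the other classical route. Either writeup would serve as a complete proof of the lemma the paper merely cites.
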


Finally, let $(T,{\cal X})$ be a tree decomposition, it is called a \emph{path decomposition} if $T$ induces a path.

\paragraph{Metric tree-likeness and path-likeness.}
All graph invariants that we consider in the paper can be defined in terms of tree decompositions and path decompositions.
Let $(T,{\cal X})$ be any tree decomposition of a graph $G$.
For any $t \in V(T)$, 
\begin{itemize}
 \item the {\it diameter} of bag $X_t$ equals $\max_{v,w \in X_t} dist_G(v,w)$;
 \item the {\it radius} $\rho(t)$ of a bag $X_t$ equals $\min_{v \in V} \max_{w\in X_t} dist_G(v,w)$. 
\end{itemize}
The {\it length} of $(T,{\cal X})$ is the maximum diameter of its bags, while the {\it breadth} of $(T,{\cal X})$ is the maximum radius of its bags.
The {\it tree-length} and the {\it tree-breadth} of $G$, respectively denoted by $tl(G)$ and $tb(G)$, are the minimum length and breadth of its tree decomposition, respectively.

Let $k$ be a positive integer, the tree decomposition $(T,{\cal X})$ is called \emph{$k$-good} when each bag contains a dominating induced path of length at most $k-1$.
It is proved in~\cite{Kosowski2015} every graph $G$ has a $k$-good tree decomposition for $k = ch(G) - 1$, with $ch(G)$ denoting the size of a longest induced cycle of $G$.
Finally, path-length, path-breadth and $k$-good path decompositions are similarly defined and studied for the path decompositions as tree-length, tree-breadth and $k$-good tree decompositions are defined and studied for the tree decompositions.
The path-length and path-breadth of $G$ are respectively denoted by $pl(G)$ and $pb(G)$.

\medskip
It has been observed in~\cite{Dragan2014a,Dragan2014b} that the four parameters tree-length, tree-breadth and path-length, path-breadth are contraction-closed invariants.
We will use the latter property in our proofs.

\begin{lemma}[~\cite{Dragan2014a,Dragan2014b}]
\label{lem:contraction-closed}
For every $G = (V,E)$ and for any edge $e \in E$:
$$tl(G/e) \leq tl(G), \ tb(G /e) \leq tb(G) \text{ and } pl(G/e) \leq pl(G), \ pb(G/e) \leq pb(G).$$
\end{lemma}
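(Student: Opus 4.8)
The plan is to push an optimal decomposition of $G$ forward through the contraction while keeping the underlying tree $T$ untouched. Write $e=\{u,v\}$, let $x_e$ be the vertex of $G/e$ obtained by identifying $u$ and $v$, and let $\phi\colon V\to V(G/e)$ be the contraction map, sending both $u$ and $v$ to $x_e$ and fixing every other vertex. Given a tree decomposition $(T,{\cal X})$ of $G$, I would set $X_t'=\phi(X_t)=\{\phi(y):y\in X_t\}$ for every $t\in V(T)$; that is, each occurrence of $u$ or $v$ is replaced by $x_e$ and all other bag contents are left alone. Since the tree $T$ is not modified, a path decomposition maps to a path decomposition, which is exactly why a single construction will settle all four inequalities simultaneously.

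First I would verify that $(T,{\cal X}')$ is a tree decomposition of $G/e$. The union and edge conditions are immediate: every vertex of $G/e$ is the image of some vertex of $G$, and every edge of $G/e$ is the image of an edge of $G$, so a bag witnessing a vertex or an edge of $G$ witnesses its image. The delicate point is the connectivity condition for the new vertex $x_e$: the set of nodes $t$ with $x_e\in X_t'$ is precisely $T_u\cup T_v$. Because $\{u,v\}$ is an edge of $G$, some bag contains both $u$ and $v$, so $T_u$ and $T_v$ intersect, and the union of two intersecting subtrees of a tree is again a subtree. Every other vertex keeps its original (hence still connected) occurrence set.

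The metric estimates rest on one monotonicity fact: contraction does not increase distances, $dist_{G/e}(\phi(a),\phi(b))\le dist_G(a,b)$ for all $a,b\in V$. This is seen by projecting a shortest path from $a$ to $b$ in $G$: the edge $e$, if used, becomes a stationary step, every other edge maps to an edge of $G/e$, and deleting the stationary step yields a walk from $\phi(a)$ to $\phi(b)$ of length at most $dist_G(a,b)$. Using this, fix a node $t$ and write any two elements of $X_t'$ as $\phi(y),\phi(y')$ with $y,y'\in X_t$; then $dist_{G/e}(\phi(y),\phi(y'))\le dist_G(y,y')$, so the diameter of $X_t'$ is at most that of $X_t$. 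For the radius, let $c\in V$ be a center realizing $\rho_G(t)=\max_{y\in X_t}dist_G(c,y)$; for any $z=\phi(y)\in X_t'$ we get $dist_{G/e}(\phi(c),z)\le dist_G(c,y)\le \rho_G(t)$, so $\phi(c)$ witnesses that the radius of $X_t'$ is at most $\rho_G(t)$. Hence every bag of $(T,{\cal X}')$ has diameter and radius no larger than the matching bag of $(T,{\cal X})$, so its length and breadth do not exceed those of $(T,{\cal X})$.

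To finish, I would start from a decomposition of $G$ of minimum length (respectively breadth); its image is a decomposition of $G/e$ of no greater length (respectively breadth), and minimising over all decompositions of $G/e$ yields $tl(G/e)\le tl(G)$ and $tb(G/e)\le tb(G)$. Running the same argument on optimal \emph{path} decompositions of $G$, which stay path decompositions under the construction, gives $pl(G/e)\le pl(G)$ and $pb(G/e)\le pb(G)$. The only step that is not pure bookkeeping is the connectivity check for $x_e$, where it is essential that $\{u,v\}$ is an edge so that $T_u$ and $T_v$ actually meet; the rest reduces to the distance-monotonicity inequality.
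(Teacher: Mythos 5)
Your proof is correct and complete: the pushforward construction (replace every occurrence of $u$ and $v$ in the bags by the contracted vertex $x_e$, keep the tree $T$ fixed, and use that $T_u\cap T_v\neq\emptyset$ precisely because $\{u,v\}$ is an edge) verifies all three tree-decomposition axioms, and the distance-monotonicity of contraction gives the diameter and radius bounds, settling all four inequalities simultaneously. Note that the paper itself offers no proof of this lemma --- it is imported from~\cite{Dragan2014a,Dragan2014b} --- so there is nothing internal to compare against; your argument is the standard one, and the only genuinely delicate points (the connectivity of the occurrence set of $x_e$, and the fact that the radius center $c$ need not lie in the bag but its image $\phi(c)$ still works) are both handled explicitly and correctly.
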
 

Furthermore, it can be observed that for any graph $G$, $tb(G)  \leq tl(G) \leq 2 \cdot tb(G)$ and similarly $pb(G) \leq pl(G) \leq 2 \cdot pb(G)$.
Moreover, if a graph $G$ admits a $k$-good tree decomposition, then $tb(G) \leq \lfloor k/2 \rfloor + 1$ and $tl(G) \leq k + 1$.
Before we end this section, let us prove the stronger equivalence, $tb(G) = 1$ if and only if $G$ admits a $1$-good tree decomposition. 
This result will be of importance in the following.
Since a tree decomposition is $1$-good if and only if each bag contains a spanning star, we will name the $1$-good tree decompositions \emph{star-decompositions} in the following. 

\begin{definition}\label{def:star-tree-dec}
Let $G=(V,E)$ be a connected graph, a {\it star-decomposition} is a tree decomposition $(T,{\cal X})$ of $G$ whose each bag induces a subgraph of dominating number one, i.e., for any $t \in V(T)$, $\gamma(G[X_t])=1$.
\end{definition}

Clearly, if a graph admits a star-decomposition, then it has tree-breadth at most one.
Let us prove that the converse also holds.

\begin{lemma}\label{lem:dominator-in-bag}
For any graph $G$ with $tb(G) \leq 1$, every \emph{reduced} tree decomposition of $G$ of breadth one is a star-decomposition.
In particular:
\begin{itemize}
	\item any tree decomposition of $G$ of breadth one can be transformed into a star-decomposition in polynomial-time;
	\item similarly, any path decomposition of $G$ of breadth one can be transformed into a $1$-good path decomposition in polynomial-time.
\end{itemize}
\end{lemma}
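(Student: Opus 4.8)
The plan is to establish the main assertion — that a \emph{reduced} tree decomposition of breadth one must already be a star-decomposition — and then to read off the two ``in particular'' items as routine consequences. Fix a reduced tree decomposition $(T,{\cal X})$ of breadth one and an arbitrary bag $X_t$. By the definition of radius there is a vertex $v\in V$ with $dist_G(v,w)\le 1$ for every $w\in X_t$, i.e. $X_t\subseteq N_G[v]$. If $v\in X_t$, then $v$ dominates $X_t$ inside $G[X_t]$ and hence $\gamma(G[X_t])=1$, which is what we want. The whole difficulty is therefore the case $v\notin X_t$, which I will rule out using reducedness.

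So suppose $v\notin X_t$. Then $w\neq v$ for every $w\in X_t$, and since $dist_G(v,w)\le 1$ this forces $\{v,w\}\in E$. Consider the subtrees $T_v$ and $(T_w)_{w\in X_t}$ of $T$. They are pairwise intersecting: any two $T_w,T_{w'}$ share the node $t$ (as $w,w'\in X_t$), and for each $w$ the edge $\{v,w\}$ is covered by some bag, so $T_v\cap T_w\neq\emptyset$. Applying the Helly property (Lemma~\ref{lem:helly}) to the finite family $\{T_v\}\cup\{T_w : w\in X_t\}$ yields a common node $r$. Then $v\in X_r$ and $X_t\subseteq X_r$; since $v\in X_r\setminus X_t$ the inclusion is strict, so $r\neq t$ and $X_t\subsetneq X_r$, contradicting reducedness. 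Hence $v\in X_t$, and every bag has dominating number one — which is exactly the definition of a star-decomposition.

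For the first ``in particular'' item I would start from an arbitrary tree decomposition of breadth one and apply the polynomial-time contraction procedure recalled in Section~\ref{sec:def}. That procedure only deletes bags and never creates new ones, so each surviving bag (being one of the original bags) still has radius at most one; the outcome is a reduced tree decomposition of breadth one, which by the main assertion is a star-decomposition. For the second item I would note that contracting two adjacent comparable bags on a path keeps the underlying tree a path, so the same reduction applied to a path decomposition of breadth one produces a reduced path decomposition of breadth one, which by the main assertion is a star-decomposition whose tree is a path — that is, a $1$-good path decomposition. The crux of the whole lemma is the step $v\notin X_t$: breadth one only guarantees a dominating center somewhere in $V$, and the content here is that reducedness forces that center into the bag, the Helly property being precisely the tool that converts ``$v$ is adjacent to all of $X_t$'' into an honest bag strictly containing $X_t$.
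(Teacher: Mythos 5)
Your proposal is correct and follows essentially the same route as the paper: breadth one gives a dominator $v$ with $X_t\subseteq N_G[v]$, the Helly property applied to $T_v$ and the subtrees $T_w$, $w\in X_t$, produces a bag containing $\{v\}\cup X_t$, and reducedness then forces $v\in X_t$; the two ``in particular'' items follow, as in the paper, from the fact that the reduction procedure only keeps (a subset of) the original bags and preserves the path structure. Your case split on $v\in X_t$ versus $v\notin X_t$ and the explicit justification of pairwise intersection are just a slightly more detailed write-up of the paper's one-line argument, not a different method.
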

\begin{proof}
Let $(T,{\cal X})$ be any reduced tree decomposition of $G$ of breadth one.
We will prove it is a star-decomposition.
To prove it, let $X_t \in {\cal X}$ be arbitrary and let $v \in V$ be such that $\max_{w\in X_t} dist_G(v,w) = 1$, which exists because $X_t$ has radius one.
We now show that $v \in X_t$.
Indeed, since the subtree $T_v$ and the subtrees $T_w, w \in X_t$, pairwise intersect, then it comes by the Helly Property (Lemma~\ref{lem:helly}) that $T_v \cap \left(\bigcap_{w \in X_t} T_w\right) \neq \emptyset$ {\it i.e.}, there is some bag containing $\{v\} \cup X_t$.
As a result, we have that $v \in X_t$ because $(T,{\cal X})$  is a reduced tree decomposition, hence $\gamma(G[X_t]) = 1$.
The latter implies that $(T,{\cal X})$ is a star-decomposition because $X_t$ is arbitrary. 

Now let $(T,{\cal X})$ be any tree decomposition of $G$ of breadth one.
It can be transformed in polynomial-time into a reduced tree decomposition $(T',{\cal X}')$ so that ${\cal X}' \subseteq {\cal X}$.
Furthermore, $(T',{\cal X}')$ has breadth one because it is the case for $(T,{\cal X})$, therefore $(T',{\cal X}')$ is a star-decomposition.
In particular, if $(T,{\cal X})$ is a path decomposition then so is $(T',{\cal X}')$.
\end{proof}

\begin{corollary}\label{cor:equiv-tb-star}
For any graph $G$, $tb(G) \leq 1$ if and only if $G$ admits a star-decomposition.
\end{corollary}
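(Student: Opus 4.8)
The plan is to prove the two directions of the equivalence separately; both rest on material already established in the excerpt, so the argument is short and the only thing to watch is a small subtlety in the definition of the radius of a bag.

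First I would dispatch the easy implication, that a star-decomposition witnesses $tb(G)\leq 1$. This is essentially the ``clearly'' remark stated just before Lemma~\ref{lem:dominator-in-bag}, but I would make it explicit. Let $(T,{\cal X})$ be a star-decomposition. By Definition~\ref{def:star-tree-dec}, each bag satisfies $\gamma(G[X_t])=1$, so there is a vertex $v\in X_t$ dominating $G[X_t]$. Then for every $w\in X_t$ we have $dist_G(v,w)\leq dist_{G[X_t]}(v,w)\leq 1$, whence $\max_{w\in X_t} dist_G(v,w)\leq 1$ and therefore $\rho(t)\leq 1$. As this holds for every $t\in V(T)$, the decomposition $(T,{\cal X})$ has breadth one, and so $tb(G)\leq 1$.

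For the converse I would invoke Lemma~\ref{lem:dominator-in-bag} directly. Assuming $tb(G)\leq 1$, the graph $G$ admits by definition a tree decomposition of breadth one; the lemma transforms it in polynomial time into a reduced tree decomposition of breadth one that it guarantees to be a star-decomposition. Hence $G$ admits a star-decomposition, and the two implications together give the stated equivalence.

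There is no genuine obstacle here, the result being an immediate consequence of the preceding lemma. The one point requiring care is that the radius $\rho(t)$ is a minimum taken over \emph{all} vertices of $V$, not only over the bag $X_t$; a dominator of $G[X_t]$ therefore yields only the bound $\rho(t)\leq 1$ rather than an equality, but an upper bound of one is exactly what is needed to conclude that the breadth is one.
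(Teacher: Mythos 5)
Your proof is correct and follows exactly the paper's route: the forward direction is the paper's ``clearly'' remark preceding Lemma~\ref{lem:dominator-in-bag} (a dominator of $G[X_t]$ bounds the radius $\rho(t)$ by one), and the converse is a direct application of Lemma~\ref{lem:dominator-in-bag}, which is precisely how the paper obtains the corollary. Your explicit attention to the fact that $\rho(t)$ is a minimum over all of $V$, so the dominator only needs to give an upper bound, is a sound (if minor) clarification of the same argument.
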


\section{Intractability results}
\label{sec:np-hard}

\subsection{Path-length and path-breadth}
\label{sec:plpb}

This section is devoted to the complexity of all path-like invariants that we consider in this paper.

\begin{theorem}\label{thm:pl-npc}
Deciding whether a graph has path-length at most $k$ is NP-complete even if $k=2$.
\end{theorem}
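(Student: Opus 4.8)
The plan is first to settle membership in NP, then to reduce from an NP-complete problem. For membership, observe that it suffices to exhibit a \emph{reduced} path decomposition: such a decomposition has at most $|V|$ bags, so it is a certificate of polynomial size, and one checks in polynomial time that every pair of vertices lying in a common bag is at distance at most $2$ in $G$. Hence the problem is in NP.

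For hardness I would design a polynomial reduction from a suitable NP-complete problem (a SAT variant or an ordering/selection problem is natural here, since a path decomposition is essentially a linear layout). The single structural feature I would exploit throughout is the following: in any path decomposition the bags containing a fixed vertex $v$ form a contiguous sub-path (by the interval structure and the Helly property, Lemma~\ref{lem:helly}), so each vertex is assigned an \emph{interval} on the path; moreover, if two vertices are at distance at least $3$ in $G$, then the length-$2$ constraint forbids them from sharing a bag, hence their intervals must be \emph{disjoint}. This forced separation of far-apart vertices is exactly what lets one encode binary choices and consistency constraints into the geometry of the layout.

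Concretely, I would build gadgets of two kinds. \emph{Selection gadgets} would be small subgraphs whose far-apart vertices can be threaded onto the path in essentially two admissible orders, each order encoding a truth value or a choice in the source instance. \emph{Verification gadgets} would attach to the selection gadgets so that their vertices can be inserted into a length-$2$ layout if and only if the encoded choices satisfy the corresponding constraint. The forward direction (YES-instance $\Rightarrow$ path-length $2$) would then be proved by writing down an explicit ordering of bags obtained from a solution and checking that every bag has diameter at most $2$; in fact I would arrange that every bag is dominated by a single vertex, so that each bag is a star.

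The main obstacle will be the backward direction: showing that \emph{every} path decomposition of length $2$ of the constructed graph encodes a valid solution, thereby ruling out ``cheating'' layouts. Here one argues with the interval structure together with the distance-$3$ separations: the enforced disjointness of certain intervals rigidly fixes a global left-to-right order on the gadgets, and a verification gadget can be accommodated only when the induced order is consistent, from which a solution of the source instance is read off. Finally, because in the forward direction the bags are stars and, by the distance constraints, any diameter-$\leq 2$ bag of the constructed graph is in turn forced to be dominated by a single vertex, the very same reduction yields that recognizing graphs of path-breadth one is NP-complete (invoking the path analogue of Lemma~\ref{lem:dominator-in-bag}); this is the sense in which one reduction settles both Theorem~\ref{thm:pl-npc} and the path-breadth result.
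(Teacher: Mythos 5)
Your membership-in-NP argument is fine, and the structural ideas you isolate for the hardness part are exactly the right ones: reducing from an ordering problem, using the fact that vertices at distance at least $3$ must have disjoint intervals on the path so that a length-$2$ decomposition induces a total order, and arranging the forward direction so that every bag is a star, which makes a single reduction serve simultaneously for path-length $2$ and path-breadth $1$. All of these features are present in the paper's proof, which reduces from the Betweenness (Total Ordering) problem: each element $s_i$ gets a pair $u_i,v_i$ with the $u_i$'s forming a dominating clique $U$, the $v_i$'s pairwise at distance $3$ (forcing disjoint intervals, hence a total order on ${\cal S}$), and each triple $t=(s_i,s_j,s_k)$ gets verification paths $(v_i,a_t,b_t,v_j)$ and $(v_j,c_t,d_t,v_k)$ whose vertices are adjacent to all of $U$ except one designated $u_l$; the backward direction is a separator argument showing that if $v_j$'s interval is not between those of $v_i$ and $v_k$, some bag would have to contain $v_i$ together with $c_t$ or $d_t$, which are at distance $3$.

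However, your proposal stops at the level of a plan: you never choose the source problem, never construct the selection and verification gadgets, and never prove either direction of the equivalence. The entire content of the theorem lies precisely in this construction and, as you yourself anticipate, in the backward direction ruling out ``cheating'' layouts --- which in the paper requires a delicate balance in the gadget design (the paths of length $3$ must be long enough to create distance-$3$ pairs, yet their interior vertices must be dominated by appropriate clique vertices so that the forward direction still yields breadth-one bags). Without a concrete instantiation there is no way to verify that gadgets with the properties you postulate actually exist, so the proof as written has a genuine gap: it is a correct strategy statement, not a proof.
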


In contrast to Theorem~\ref{thm:pl-npc}, graphs with path-length one are exactly the interval graphs~\cite{Dragan2014b}, {\it i.e.}, they can be recognized in linear-time.

\begin{theorem}\label{thm:pb-npc}
Deciding whether a graph has path-breadth at most $k$ is NP-complete even if $k=1$.
\end{theorem}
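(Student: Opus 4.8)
The plan is to reduce from an NP-complete problem, most naturally 3-SAT or a constraint-satisfaction / domination-flavored problem, and to build a graph $G$ whose path-breadth is at most one exactly when the instance is satisfiable. Recall from Theorem~\ref{thm:pl-npc} that the authors already established NP-completeness of path-length~$2$, so the most economical route is to piggyback on that reduction: since $pb(G) \leq pl(G) \leq 2\cdot pb(G)$, a graph with $pb(G)=1$ has $pl(G)\in\{1,2\}$, and the interval graphs (the $pl=1$ case, recognizable in linear time) can be excluded or handled separately. In fact the authors' remark that ``the last two results are obtained using \emph{the same reduction}'' strongly suggests that the identical gadget construction used for path-length~$2$ should be reused here, and that one proves the two-sided statement $pl(G)\le 2 \iff pb(G)\le 1$ on the constructed instances $G$. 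So the first step is to recall that construction and verify membership in NP (a path decomposition of breadth one is a polynomial certificate, since by Lemma~\ref{lem:dominator-in-bag} each bag may be taken to induce a dominating-vertex subgraph).

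The core of the argument is the forward and backward correspondence between satisfying assignments and path decompositions of breadth one. For the \textbf{completeness} direction, I would take a satisfying assignment, lay the variable/clause gadgets out along a line in the order dictated by the assignment, and exhibit explicitly a path decomposition whose every bag has a dominating vertex (equivalently, radius one). Here the linear structure of a \emph{path} decomposition is what the gadgets must encode: the left-to-right ordering along the path $T$ should force a consistent truth value for each variable, exactly as in the path-length~$2$ proof. For the \textbf{soundness} direction, I would assume a path decomposition of breadth one exists, invoke Corollary~\ref{cor:equiv-tb-star} and Lemma~\ref{lem:dominator-in-bag} to normalize it so that each bag has a spanning star (a $1$-good path decomposition), and then read off a satisfying assignment from the relative positions of the gadget vertices along the path. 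The Helly property (Lemma~\ref{lem:helly}) is the key tool for showing that certain sets of vertices must co-occur in a common bag, which pins down the combinatorial constraints.

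\textbf{The main obstacle} will be the soundness direction, and specifically controlling the \emph{linear} order. Tree decompositions are flexible, but path decompositions impose that the subtrees $T_v$ are subintervals of a line; the whole reduction hinges on arranging gadgets so that breadth-one feasibility along a \emph{path} (rather than a tree) is equivalent to a global consistency condition. Proving that no ``cheating'' path decomposition of breadth one exists for an unsatisfiable instance requires a careful interval-overlap analysis: one must show that the intervals $T_v$ corresponding to conflicting literals cannot all be simultaneously dominated within breadth one unless the underlying assignment is consistent. I would isolate this as the technical heart, likely via a sequence of claims showing that each variable gadget forces the relevant bags into one of two ``orientations'' along the path, and that a clause gadget is dominable by a single vertex in every bag only if at least one of its literals is set true. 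Once that local-to-global argument is in place, combining it with the easy NP membership and the completeness construction finishes the proof.
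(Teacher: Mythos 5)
Your proposal has the right architecture but is missing its core: the reduction itself. You correctly anticipate that membership in NP is easy (a breadth-one path decomposition is a polynomial certificate), that a single construction serves both path-length and path-breadth, and that one should prove the strong conclusion $pb(G)\leq 1$ in the completeness direction while assuming only the weak hypothesis $pl(G)\leq 2$ in the soundness direction, so that both theorems follow from $pb(G) \leq pl(G) \leq 2\cdot pb(G)$; this is exactly how the paper organizes Lemmas~\ref{lem:betweenness-implies} and~\ref{lem:betweenness-implied}. You also correctly identify that the linear structure of a path decomposition is the resource to be exploited. But everything between these observations is left as a plan: you never specify the source problem or its gadgets, and those gadgets are the entire technical content of the theorem.

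Moreover, the source problem you guess --- 3-SAT with variable/clause gadgets encoding truth values --- is not what the paper uses, and it is poorly matched to the structure you yourself single out as essential. A path decomposition naturally induces a \emph{linear order} on vertices that are pairwise far apart, so the natural source problem is an ordering constraint problem, and the paper reduces from Betweenness: given ordered triples $(s_i,s_j,s_k)$, decide whether some total order places each $s_j$ between $s_i$ and $s_k$. In the construction of Definition~\ref{def:plpb-reduction}, the clique $U=\{u_1,\dots,u_n\}$ dominates the graph, the vertices $v_1,\dots,v_n$ are pairwise at distance $3$ --- which forces the intervals $P_{v_i}$ to be pairwise disjoint and hence yields a total ordering of ${\cal S}$ --- and for each triple $t$ the paths $(v_i,a_t,b_t,v_j)$ and $(v_j,c_t,d_t,v_k)$ have their adjacencies to $U$ chosen so that any bag containing $v_i$ together with $c_t$ or $d_t$ would violate diameter $2$ (resp.\ radius $1$). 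Soundness then falls out in a few lines: if $v_i$ lies between $v_j$ and $v_k$ along the decomposition, a bag of $P_{v_i}$ is a $v_jv_k$-separator and must contain $c_t$ or $d_t$, a contradiction. Your sketch of variable ``orientations'' and clause gadgets ``dominable iff a literal is true'' gestures at encoding booleans into the linear order, but that encoding is precisely the hard step you leave undone; the paper sidesteps it entirely by choosing a source problem whose solutions already \emph{are} linear orders.
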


From the complexity result of Theorem~\ref{thm:pb-npc}, we will also prove the hardness of deciding on the existence of $k$-good path decompositions.  

\begin{theorem}\label{thm:k-good-path-npc}
Deciding whether a graph admits a $k$-good path decomposition is NP-complete even if $k=1$.
\end{theorem}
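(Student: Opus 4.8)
The plan is to reduce the problem to the recognition of graphs with path-breadth one, which is already known to be NP-complete by Theorem~\ref{thm:pb-npc}. The heart of the argument is the observation that, for $k=1$, admitting a $1$-good path decomposition is \emph{equivalent} to having path-breadth at most one; once this equivalence is established the theorem follows immediately.

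First I would settle membership in NP. A natural certificate is a path decomposition of $G$ together with, for each bag $X_t$, a vertex of $X_t$ dominating $G[X_t]$. To keep the certificate of polynomial size I would restrict attention to \emph{reduced} path decompositions: since no bag is contained in another, the number of bags is linear in $|V|$, and checking that a designated vertex dominates its bag in $G[X_t]$ is done in polynomial time. Hence the verification is polynomial.

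For the hardness, I would prove the two directions of the equivalence between ``$G$ admits a $1$-good path decomposition'' and ``$pb(G) \leq 1$''. For the forward direction, recall that a $1$-good (path) decomposition is precisely one in which each bag $X_t$ satisfies $\gamma(G[X_t]) = 1$, i.e.\ contains a vertex $v$ dominating $G[X_t]$; as $G[X_t]$ is an induced subgraph, such a $v \in X_t \subseteq V$ is at distance at most one in $G$ from every $w \in X_t$, so every bag has radius one and the decomposition has breadth one, giving $pb(G) \leq 1$. The converse is exactly the second item of Lemma~\ref{lem:dominator-in-bag}: any path decomposition of breadth one can be turned in polynomial time into a $1$-good path decomposition (this is the path-analogue, via the Helly property, of the star-decomposition statement of Corollary~\ref{cor:equiv-tb-star}). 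Combining the two directions shows that the identity map is a trivial, hence polynomial-time, reduction from recognizing graphs with $pb(G) \leq 1$ to recognizing graphs admitting a $1$-good path decomposition, and NP-hardness follows from Theorem~\ref{thm:pb-npc}.

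The only delicate point, and the step I would expect to require the most care, is the converse direction of the equivalence: transforming a breadth-one path decomposition into a $1$-good one while preserving the fact that the underlying tree is a path. This is precisely why it matters that Lemma~\ref{lem:dominator-in-bag} proceeds by \emph{contracting} pairs of nested bags rather than by inserting the dominator as a fresh node: contraction keeps the host tree a path, whereas adding new nodes could destroy the path structure. Granting that lemma, no further obstacle remains.
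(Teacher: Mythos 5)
Your proof is correct and takes essentially the same approach as the paper: the paper's own argument likewise deduces NP-hardness from the equivalence ``$G$ admits a $1$-good path decomposition $\iff pb(G)\leq 1$'' given by Lemma~\ref{lem:dominator-in-bag}, combined with Theorem~\ref{thm:pb-npc}. Your additional details (the polynomial-size certificate via reduced decompositions, and the remark that contracting nested bags preserves the path structure) are sound elaborations of points the paper leaves implicit.
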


\begin{proof}
The problem is in NP.
By Lemma~\ref{lem:dominator-in-bag}, a graph $G$ admits a $1$-good path decomposition if and only if $pb(G) \leq 1$, therefore it is NP-hard to decide whether a graph admits a $1$-good path decomposition by Theorem~\ref{thm:pb-npc}.
\end{proof}

All of the NP-hardness proofs in this section will rely upon the same reduction from the Betweenness problem, defined below.
The Betweenness problem, sometimes called the Total Ordering problem, is NP-complete~\cite{Opatrny1979}.
In~\cite{Golumbic1994}, it was used to show that the Interval Sandwich problem is NP-complete.
What we here prove is that the Interval Sandwich problem remains NP-complete even if the second graph is a \emph{power} of the first one, where the $k^{\text{th}}$ power $G^k$ of any graph $G$ is obtained from $G$ by adding an edge between every two distinct vertices that are at distance at most $k$ in $G$ for every integer $k \geq 1$.
Indeed, a graph $G$ has path-length at most $k$ if and only if there is an Interval Sandwich between $G$ and $G^k$ (we refer to~\cite{Lokshtanov2010} for the proof of a similar equivalence between tree-length and the Chordal Sandwich problem).

\begin{center}
    \fbox{\begin{minipage}{.95\linewidth}\label{prob:betweenness}
        \begin{problem}[Betweenness]\ 
          \begin{description}
          \item[Input:] a set ${\cal S}$ of $n$ elements, a set ${\cal T}$ of $m$ ordered triples of elements in ${\cal S}$.
          \item[Question:] Is there a total ordering of ${\cal S}$ such that for every triple $t =(s_i,s_j,s_k) \in {\cal T}$, \\either $s_i < s_j < s_k$ or $s_k < s_j < s_i$ ? 
          \end{description}
        \end{problem}     
      \end{minipage}}
  \end{center}

Now, given an instance $({\cal S},{\cal T})$ of the Betweenness problem, we will construct from ${\cal S}$ and ${\cal T}$ a graph $G_{{\cal S},{\cal T}}$ as defined below.
We will then prove that $pl(G_{{\cal S},{\cal T}}) \leq 2$ (resp. $pb(G_{{\cal S},{\cal T}}) \leq 1$) if and only if $({\cal S},{\cal T})$ is a yes-instance of the Betweenness problem.

\begin{definition}\label{def:plpb-reduction}
Let ${\cal S}$ be a set of $n$ elements, let ${\cal T}$ be a set of $m$ ordered triples of elements in ${\cal S}$.
The graph $G_{{\cal S},{\cal T}}$ is constructed as follows:
\begin{itemize}
\item For every element $s_i \in {\cal S}$, $1 \leq i \leq n$, there are two adjacent vertices $u_i,v_i$ in $G_{{\cal S},{\cal T}}$.
The vertices $u_i$ are pairwise adjacent {\it i.e.}, the set $U = \{ u_i \mid 1 \leq i \leq n \}$ is a clique.
\item For every triple $t =(s_i,s_j,s_k) \in {\cal T}$, let us add in $G_{{\cal S},{\cal T}}$ the $v_iv_j$-path $(v_i,a_t,b_t,v_j)$ of length $3$, and the $v_{j}v_{k}$-path $(v_j,c_t,d_t,v_k)$ of length $3$.
\item Finally, for every triple $t =(s_i,s_j,s_k) \in {\cal T}$ let us make adjacent $a_t,b_t$ with every $u_l$ such that $l \neq k$, similarly let us make adjacent $c_t,b_t$ with every $u_l$ such that $l \neq i$. 
\end{itemize}
\end{definition}

It can be noticed from Definition~\ref{def:plpb-reduction} that for any $1 \leq i \leq n$, the vertex $u_i$ is adjacent to any vertex but those $v_j$ such that $j \neq i$, those $a_t,b_t$ such that $s_i$ is the last element of triple $t$ and those $c_t,b_t$ such that $s_i$ is the first element of triple $t$. 
We refer to Figure~\ref{fig:plpb} for an illustration (see also Figure~\ref{fig:adj}).
Observe that $G_{{\cal S},{\cal T}}$ has diameter $3$ because the clique $U$ dominates $G_{{\cal S},{\cal T}}$, therefore $pl(G_{{\cal S},{\cal T}}) \leq 3$ and we will show that it is hard to distinguish graphs with path-length two from graphs with path-length three.
Similarly, the clique $U$ dominates $G_{{\cal S},{\cal T}}$ hence $pb(G_{{\cal S},{\cal T}}) \leq 2$, thus we will show that it is hard to distinguish graphs with path-breadth one from graphs with path-breadth two. 

\begin{figure}[h]
\begin{minipage}[h]{.46\linewidth}
  \centering
  \includegraphics[width=.75\textwidth]{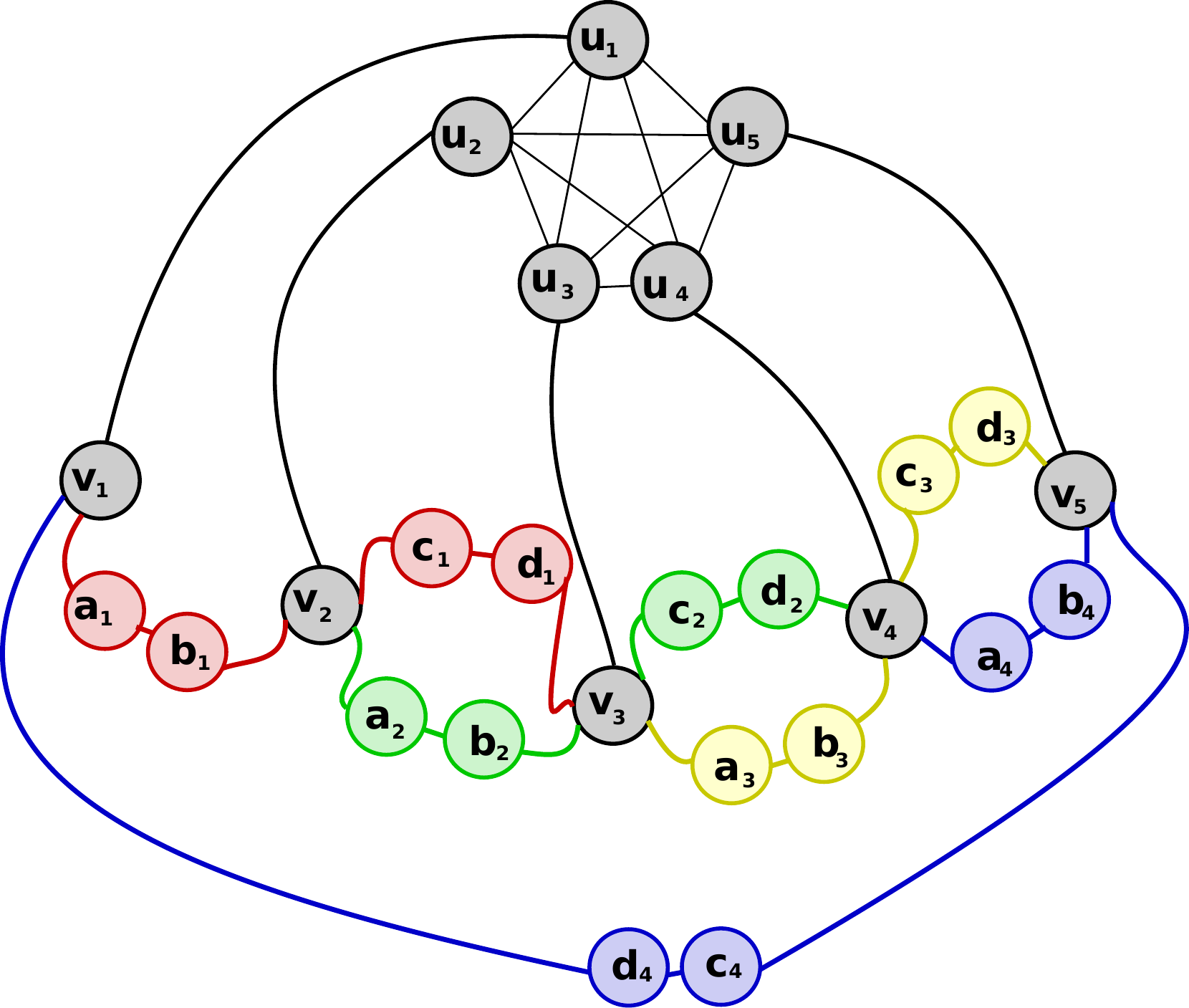}
  \caption{The graph $G_{{\cal S},{\cal T}}$ for ${\cal S}=[|1,5|]$ and ${\cal T}=\{ (i,i+1,i+2) \mid 1 \leq i \leq 4\}$. Each colour corresponds to a given triple of ${\cal T}$. For ease of reading, the adjacency relations between the vertices $u_i$ and the colored vertices $a_t,b_t,c_t,d_t$ are not drawn. 
}
  \label{fig:plpb}
\end{minipage} \hfill
\begin{minipage}[c]{.46\linewidth}
      \centering
      \includegraphics[width=0.55\textwidth]{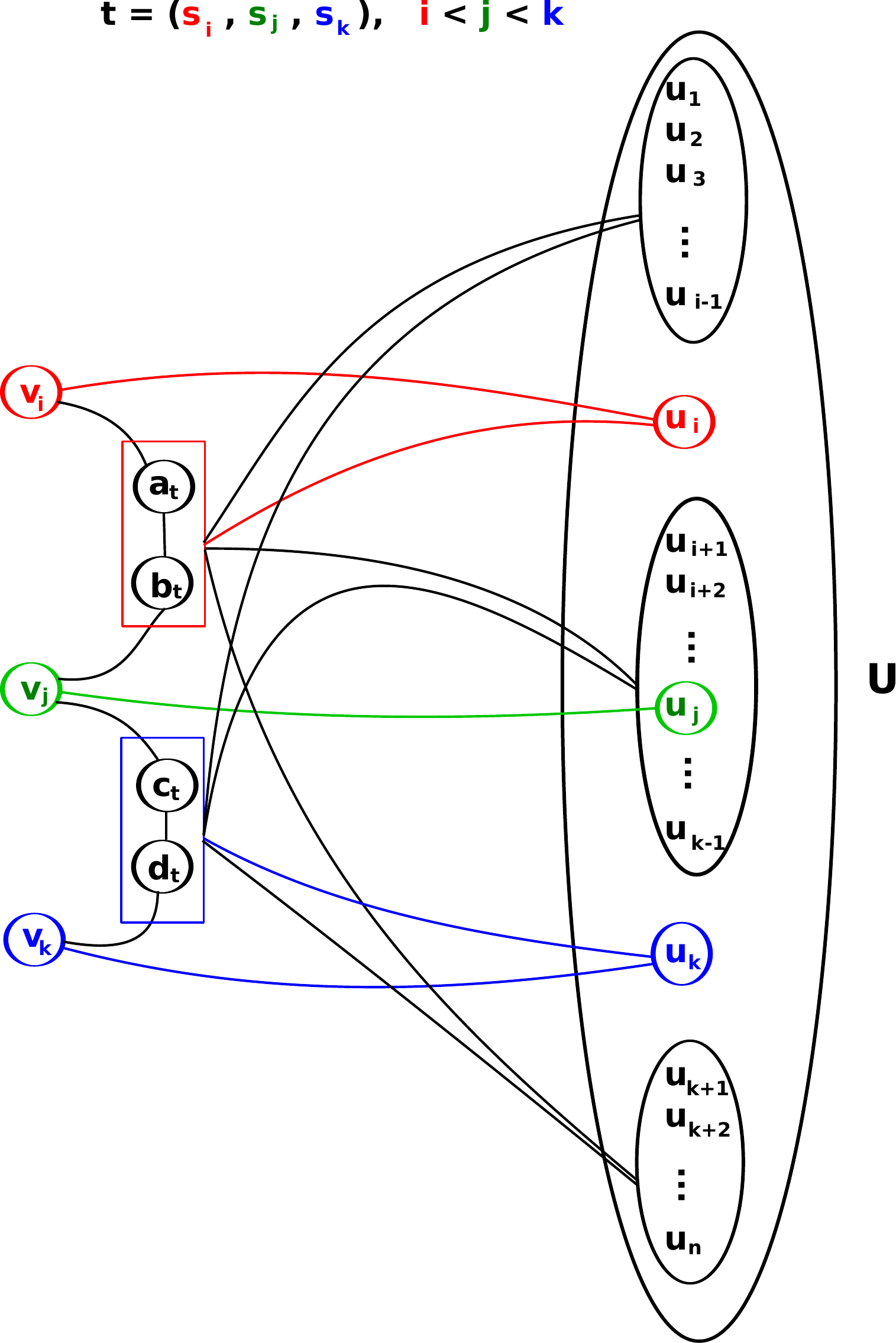}
      \caption{Adjacency relations in $G_{{\cal S},{\cal T}}$ for one given triple $t=(s_i,s_j,s_k)$.}
      \label{fig:adj}
\end{minipage}
\end{figure}

\begin{lemma}
\label{lem:betweenness-implies}
Let ${\cal S}$ be a set of $n$ elements, let ${\cal T}$ be a set of $m$ ordered triples of elements in ${\cal S}$.
If $({\cal S},{\cal T})$ is a yes-instance of the Betweenness problem then $pb(G_{{\cal S},{\cal T}}) \leq 1$ and $pl(G_{{\cal S},{\cal T}}) \leq 2$, where $G_{{\cal S},{\cal T}}$ is the graph that is defined in Definition~\ref{def:plpb-reduction}.
\end{lemma}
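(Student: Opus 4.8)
The plan is to construct a single path decomposition of $G_{{\cal S},{\cal T}}$ all of whose bags are contained in the closed neighbourhood $N_G[u_l]$ of some clique vertex $u_l$. Such a bag has radius at most $1$ (it is dominated by $u_l$) and diameter at most $2$ (any two of its vertices are joined through $u_l$), so the \emph{same} decomposition witnesses both $pb(G_{{\cal S},{\cal T}}) \le 1$ and $pl(G_{{\cal S},{\cal T}}) \le 2$ at once; in any case the earlier observation $pl \le 2\,pb$ already yields the length bound for free once the breadth bound is established. The whole point of the reduction is that the linear order supplied by a yes-instance of Betweenness serves as the ordering of the bags along the path.

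First I would relabel ${\cal S}$ so that the certifying order is $s_1 < s_2 < \cdots < s_n$, and for a triple $t$ I write its three elements as $s_i,s_j,s_k$ with $i,j,k$ the new indices; the yes-instance then guarantees that $j$ lies strictly between $i$ and $k$. I take a path on $n$ nodes with bags $X_1,\dots,X_n$, where each $X_l$ contains the whole clique $U$ together with $v_l$, and I declare the centre of $X_l$ to be $u_l$. The gadget vertices of each triple $t=(s_i,s_j,s_k)$ are spread over contiguous runs of bags: $a_t$ goes into $X_i$ only, $c_t$ into $X_j$ only, $b_t$ into every $X_l$ with $l$ between $i$ and $j$, and $d_t$ into every $X_l$ with $l$ between $j$ and $k$. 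Since every vertex is thereby assigned an interval of consecutive indices ($U$ to all of them, $v_l$ to the singleton $\{l\}$), the connectivity axiom of path decompositions holds by construction.

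Next I would check edge coverage and the metric bound. Recall that $a_t,b_t$ are adjacent to every vertex of $U$ except $u_k$, and $c_t,d_t$ to every vertex of $U$ except $u_i$. For edge coverage one verifies the six gadget edges — $v_ia_t,\ a_tb_t$ in $X_i$;\ $b_tv_j,\ v_jc_t,\ c_td_t$ in $X_j$;\ and $d_tv_k$ in $X_k$ — and observes that every edge joining a gadget vertex to $U$ is covered because $U$ lies inside each bag containing that gadget vertex. For the metric bound I check $X_l\subseteq N_G[u_l]$ vertex by vertex: $U\subseteq N_G[u_l]$ as $U$ is a clique, $v_l\in N_G[u_l]$, $a_t\in N_G[u_i]$ since $i\neq k$, and $c_t\in N_G[u_j]$ since $j\neq i$. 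The only nontrivial cases are $b_t$ and $d_t$, and this is exactly where Betweenness enters: $b_t$ fails to be dominated only by $u_k$, so I must ensure $k$ is not among the indices between $i$ and $j$; symmetrically $d_t$ fails only at $u_i$, so $i$ must not lie between $j$ and $k$. Both requirements are precisely the assertion that $j$ lies strictly between $i$ and $k$, which the yes-instance provides for every triple.

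The main obstacle — in truth the only content of the argument — is this translation between the betweenness condition and the placement of $b_t$ and $d_t$; everything else is a routine verification of the path-decomposition axioms. I would emphasise that the two halves $v_i\,a_t\,b_t\,v_j$ and $v_j\,c_t\,d_t\,v_k$ of the gadget play symmetric roles, so that the single phrasing ``indices between $x$ and $y$'' handles uniformly both orientations $i<j<k$ and $k<j<i$ of a triple with no separate case analysis. Once $X_l\subseteq N_G[u_l]$ holds for every $l$, each bag has radius $1$ and diameter at most $2$, yielding $pb(G_{{\cal S},{\cal T}})\le 1$ and $pl(G_{{\cal S},{\cal T}})\le 2$ as required.
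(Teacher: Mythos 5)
Your proof is correct and takes essentially the same approach as the paper's: you use the certifying order of ${\cal S}$ as the ordering of the bags $X_1,\dots,X_n$, place $U$ and $v_l$ in $X_l$, spread the gadget vertices over contiguous intervals of bags, and verify $X_l \subseteq N[u_l]$, which is exactly the paper's construction (the paper spreads $a_t$ over all of $[i,j]$ and $c_t$ over all of $[j,k]$ rather than confining them to $X_i$ and $X_j$, and argues the domination of the bags by contradiction rather than by your direct check, but these are immaterial variations). Both arguments hinge on the same point, namely that the betweenness of $s_j$ keeps $u_k$ out of the bags containing $b_t$ and $u_i$ out of the bags containing $d_t$.
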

\begin{proof}
Since $pl(G_{{\cal S},{\cal T}}) \leq 2 \cdot pb(G_{{\cal S},{\cal T}})$ then we only need to prove that $pb(G_{{\cal S},{\cal T}}) \leq 1$.
For convenience, let us reorder the elements of ${\cal S}$ so that for every triple $(s_i,s_j,s_k) \in {\cal T}$ either $i < j < k$ or $k < j < i$.
It is possible to do that because by the hypothesis $({\cal S},{\cal T})$ is a yes-instance of the Betweenness problem.
If furthermore $k < j < i$, let us also replace $(s_i,s_j,s_k)$ with the inverse triple $(s_k,s_j,s_i)$.
This way, we have a total ordering of ${\cal S}$ such that $s_i < s_j < s_k$ for every triple $(s_i,s_j,s_k) \in {\cal T}$.
Then, let us construct a path decomposition $(P,{\cal X})$ with $n$ bags, denoted $X_1,X_2, \ldots, X_n$, as follows. 
For every $1 \leq i \leq n$, $U \subseteq X_i$ and $v_i \in X_i$.
For every $t = (s_i,s_j,s_k) \in {\cal T}$, we add both $a_t,b_t$ into the bags $X_l$ with $i \leq l \leq j$, similarly we add both $c_t,d_t$ into the bags $X_l$ with $j \leq l \leq k$.
By construction, the clique $U$ is contained in any bag of $P$ and for every triple $t=(s_i,s_j,s_k) \in {\cal T}$ we have $a_t,b_t,v_i \in X_i$ and $a_t,b_t,c_t,d_t,v_j \in X_j$ and $c_t,d_t,v_k \in X_k$, therefore $(P,{\cal X})$ is indeed a path decomposition of $G_{{\cal S},{\cal T}}$.

We claim that for every $i$, $X_i \subseteq N[u_i]$, that will prove the lemma.
Indeed if it were not the case for some $i$ then by Definition~\ref{def:plpb-reduction} there should exist $t \in {\cal T},j,k$ such that: either $t = (s_i,s_j,s_k) \in {\cal T}$ and $c_t,d_t \in X_i$; or $t = (s_k,s_j,s_i) \in {\cal T}$ and $a_t,b_t \in X_i$.
But then by construction either $a_t,b_t$ are only contained in the bags $X_l$ for $k \leq l \leq j$, or $c_t,d_t$ are only contained in the bags $X_l$ for $j \leq l \leq k$, thus contradicting the fact that either $a_t,b_t \in X_i$ or $c_t,d_t \in X_i$.
\end{proof}

\begin{lemma}
\label{lem:betweenness-implied}
Let ${\cal S}$ be a set of $n$ elements, let ${\cal T}$ be a set of $m$ ordered triples of elements in ${\cal S}$.
If $pb(G_{{\cal S},{\cal T}}) \leq 1$ or $pl(G_{{\cal S},{\cal T}}) \leq 2$ then $({\cal S},{\cal T})$ is a yes-instance of the Betweenness problem, where $G_{{\cal S},{\cal T}}$ is the graph that is defined in Definition~\ref{def:plpb-reduction}.
\end{lemma}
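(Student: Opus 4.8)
The lemma asserts the converse direction: if either $pb(G_{{\cal S},{\cal T}}) \leq 1$ or $pl(G_{{\cal S},{\cal T}}) \leq 2$, then the Betweenness instance is satisfiable. Since $pl(G_{{\cal S},{\cal T}}) \leq 2 \cdot pb(G_{{\cal S},{\cal T}})$ gives me $pb \leq 1 \Rightarrow pl \leq 2$, but I actually need to handle both hypotheses, the cleanest route is to show that \emph{either} hypothesis forces a path decomposition whose bags are dominated by the clique structure, and then read off a total ordering of ${\cal S}$ from the linear order of the bags along the path.

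**The plan.**

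The plan is to start from a path decomposition $(P,{\cal X})$ witnessing $pb(G_{{\cal S},{\cal T}}) \leq 1$ (the $pl \leq 2$ case should reduce to this, perhaps via Lemma~\ref{lem:dominator-in-bag} applied to path decompositions, or by a direct argument bounding bag diameters). By Lemma~\ref{lem:dominator-in-bag}, after reduction each bag $X_t$ is dominated by a single vertex, so $X_t \subseteq N[w_t]$ for some $w_t$. First I would argue that the dominating vertex of each bag must essentially be one of the $u_i$: because the clique $U$ together with the path-vertices $a_t, b_t, c_t, d_t$ creates large neighborhoods only around the $u_i$'s, the only vertices capable of dominating a bag containing a chunk of $U$ are the $u_i$ themselves. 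This pins down the combinatorial meaning of each bag.

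**Extracting the ordering.**

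Next I would use the subtree (sub-path) property: for each element $s_i$, the vertex $v_i$ appears in a contiguous interval $P_{v_i}$ of bags along the path $P$, and similarly each $a_t,b_t,c_t,d_t$ occupies an interval. The key is to define, for each $i$, a representative position along $P$ — for instance the location of the bag dominated by $u_i$, or the midpoint of $P_{v_i}$ — and to show these positions are pairwise distinct, yielding a total ordering $\sigma$ of ${\cal S}$. The crucial step is then to verify the Betweenness constraint: for each triple $t=(s_i,s_j,s_k)$, the internal path-vertices $a_t,b_t$ (adjacent to every $u_l$ except $u_k$) and $c_t,d_t$ (adjacent to every $u_l$ except $u_i$) force their intervals to avoid the bag dominated by $u_k$ and $u_i$ respectively. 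Because $a_t,b_t,v_i$ must share a bag and $c_t,d_t,v_k$ must share a bag and $a_t,b_t,c_t,d_t,v_j$ overlap around $v_j$'s position, the forbidden-adjacency constraints squeeze $j$ strictly between $i$ and $k$ in $\sigma$. Reversing the global order if necessary gives exactly the required betweenness.

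**The main obstacle.**

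The hard part will be the forced betweenness: I must show rigorously that the non-edges $u_k a_t, u_k b_t$ and $u_i c_t, u_i d_t$ translate into a strict separation of intervals along $P$. Concretely, since $X_t \subseteq N[u_{l_t}]$ for the dominating index $l_t$, any bag containing $a_t$ or $b_t$ cannot be dominated by $u_k$ (else $a_t \in N[u_k]$, contradiction), and symmetrically for $c_t,d_t$ and $u_i$. Combined with the requirement that $a_t,b_t,c_t,d_t$ all coexist in the bag(s) near $v_j$, this should force the bag dominated by $u_j$ to lie strictly between those dominated by $u_i$ and $u_k$. Handling the boundary interactions between overlapping triples — ensuring these local constraints are simultaneously consistent — is where the argument requires the most care, and I expect the Helly property together with the interval structure of path decompositions to be the decisive tools.
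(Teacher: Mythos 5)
There is a genuine gap, and it begins with your very first reduction. Since $pb(G) \leq pl(G) \leq 2\,pb(G)$, the hypothesis $pb(G_{{\cal S},{\cal T}}) \leq 1$ implies $pl(G_{{\cal S},{\cal T}}) \leq 2$, but \emph{not} conversely: a path decomposition of length two need not have breadth one, and Lemma~\ref{lem:dominator-in-bag} only transforms decompositions that \emph{already} have breadth one -- it says nothing about length-two decompositions. So your plan (prove the statement from a breadth-one decomposition and ``reduce'' the $pl \leq 2$ case to it) handles only one of the two hypotheses and leaves the other, weaker one unproven; one cannot assume $pl\le 2 \Rightarrow pb\le 1$ here, since that equivalence is exactly what the lemma is establishing. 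The paper makes the opposite (and correct) move: it proves the lemma assuming only $pl(G_{{\cal S},{\cal T}}) \leq 2$, and the $pb \leq 1$ case is then subsumed because $pl \leq 2\,pb$.

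Second, the heart of the proof -- the forced betweenness -- is missing, and the route you sketch for it is both shakier and more complicated than needed. You want to anchor each $s_i$ at ``the bag dominated by $u_i$'' and squeeze intervals via the non-edges between $u_k$ and $a_t,b_t$ (resp.\ $u_i$ and $c_t,d_t$); but bags need not be dominated by vertices of $U$, a given $u_i$ may dominate several bags or none, and $a_t,b_t,c_t,d_t$ need not all coexist in any bag near $v_j$ (the adjacencies only form the path $(v_i,a_t,b_t,v_j,c_t,d_t,v_k)$, so the corresponding intervals overlap pairwise and consecutively, nothing more). You acknowledge this step is unresolved. The paper's argument avoids dominators entirely: the vertices $v_i$ are pairwise at distance $3$ in $G_{{\cal S},{\cal T}}$, so in a length-two decomposition no bag contains two of them, hence the intervals $P_{v_i}$ are pairwise disjoint and reading them off along $P$ yields the total order directly. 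If a triple $t=(s_i,s_j,s_k)$ were violated, say $j<i<k$ after relabeling, then any bag $B \in P_{v_i}$ is a $v_jv_k$-separator by the properties of a tree decomposition, so $B$ must contain $c_t$ or $d_t$ (which lie on the $v_jv_k$-path $(v_j,c_t,d_t,v_k)$); but $dist(v_i,c_t)=dist(v_i,d_t)=3>2$, contradicting the length bound. That separator-plus-distance-$3$ argument is the idea your proposal lacks; note it also works verbatim under $pb\le 1$ (a radius-one bag has diameter at most two), whereas the $pl\le 2$ hypothesis genuinely has to be treated directly rather than by reduction to breadth one.
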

\begin{proof}
Since $pl(G_{{\cal S},{\cal T}}) \leq 2 \cdot pb(G_{{\cal S},{\cal T}})$ then we only need to consider the case when $pl(G_{{\cal S},{\cal T}}) \leq 2$.
Let $(P,{\cal X})$ be a path decomposition of length two, that exists by the hypothesis.
Since the vertices $v_i$ are pairwise at distance $3$ then the subpaths $P_{v_{i}}$ that are induced by the bags containing vertex $v_i$ are pairwise disjoint.
Therefore, starting from an arbitrary endpoint of $P$ and considering each vertex $v_i$ in the order that it appears in the path decomposition, this defines a total ordering over ${\cal S}$.
Let us reorder the set ${\cal S}$ so that vertex $v_i$ is the $i^{\text{th}}$ vertex to appear in the path-decomposition.
We claim that for every triple $t = (s_i,s_j,s_k) \in {\cal T}$, either $i < j < k$ or $k < j < i$, that will prove the lemma.

By way of contradiction, let $t = (s_i,s_j,s_k) \in {\cal T}$  such that either $j < \min \{i,k\}$ or $j > \max \{i,k\}$.
By symmetry, we only need to consider the case when $j < i < k$.
In such case by construction the path between $P_{v_j}$ and $P_{v_k}$ in $P$ contains $P_{v_i}$.
Let $B \in P_{v_i}$, by the properties of a tree decomposition it is a $v_jv_k$-separator, so it must contain one of $c_t,d_t$.
However, vertex $v_i \in B$ is at distance $3$ from both vertices $c_t,d_t$, thus contradicting the fact that $(P,{\cal X})$ has length $2$.
\end{proof}

We are now able to prove Theorems~\ref{thm:pl-npc} and~\ref{thm:pb-npc}.

\begin{proofof}{Theorem~\ref{thm:pl-npc}}
To prove that a graph $G$ satisfies $pl(G) \leq k$, it suffices to give as a certificate a tree decomposition of $G$ with length at most $k$.
Indeed, the all-pairs-shortest-paths in $G$ can be computed in polynomial-time.
Therefore, the problem of deciding whether a graph has path-length at most $k$ is in NP.
Given an instance $({\cal S},{\cal T})$ of the Betweenness problem, let $G_{{\cal S},{\cal T}}$ be as defined in Definition~\ref{def:plpb-reduction}.
We claim that $pl(G_{{\cal S},{\cal T}}) \leq 2$ if and only if the pair $({\cal S},{\cal T})$ is a yes-instance of the Betweenness problem.
This will prove the NP-hardness because our reduction is polynomial and the Betweenness problem is NP-complete.
To prove the claim in one direction, if $({\cal S},{\cal T})$ is a yes-instance then by Lemma~\ref{lem:betweenness-implies} $pl(G_{{\cal S},{\cal T}}) \leq 2 $.
Conversely, if $pl(G_{{\cal S},{\cal T}}) \leq 2$ then $({\cal S},{\cal T})$ is a yes-instance by Lemma~\ref{lem:betweenness-implied}, that proves the claim in the other direction.
\end{proofof}

\begin{proofof}{Theorem~\ref{thm:pb-npc}}
To prove that a graph $G$ satisfies $pb(G) \leq k$, it suffices to give as a certificate a tree decomposition of $G$ with breadth at most $k$.
Indeed, the all-pairs-shortest-paths in $G$ can be computed in polynomial-time.
Therefore, the problem of deciding whether a graph has path-breadth at most $k$ is in NP.
Given an instance $({\cal S},{\cal T})$ of the Betweenness problem, let $G_{{\cal S},{\cal T}}$ be as defined in Definition~\ref{def:plpb-reduction}.
We claim that $pb(G_{{\cal S},{\cal T}}) \leq 1$ if and only if the pair $({\cal S},{\cal T})$ is a yes-instance of the Betweenness problem.
This will prove the NP-hardness because our reduction is polynomial and the Betweenness problem is NP-complete.
To prove the claim in one direction, if $({\cal S},{\cal T})$ is a yes-instance then by Lemma~\ref{lem:betweenness-implies} $pb(G_{{\cal S},{\cal T}}) \leq 1$.
Conversely, if $pb(G_{{\cal S},{\cal T}}) \leq 1$ then $({\cal S},{\cal T})$ is a yes-instance by Lemma~\ref{lem:betweenness-implied}, that proves the claim in the other direction.
\end{proofof}

To conclude this section, we strenghten the above hardness results with two inapproximability results.
Indeed, it has to be noticed that for any graph parameter $param$, an $\alpha$-approximation algorithm for $param$ with $\alpha < 1 + \frac 1 k$ is enough to separate the graphs $G$ such that $param(G) \leq k$ from those such that $param(G) \geq k+1$.
Therefore, the two following corollaries follow from our polynomial-time reduction.

\begin{corollary}
\label{cor:tl-inapprox}
For every $\varepsilon > 0$, the path-length of a graph cannot be approximated within a factor $\frac 3 2 - \varepsilon$ unless P=NP.
\end{corollary}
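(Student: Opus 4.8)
The plan is to exploit the gap that the reduction of Theorem~\ref{thm:pl-npc} already builds into the instances $G_{{\cal S},{\cal T}}$ of Definition~\ref{def:plpb-reduction}. Recall that $G_{{\cal S},{\cal T}}$ always satisfies $pl(G_{{\cal S},{\cal T}}) \leq 3$, since the clique $U$ dominates the graph which hence has diameter $3$. Together, Lemmas~\ref{lem:betweenness-implies} and~\ref{lem:betweenness-implied} show that $pl(G_{{\cal S},{\cal T}}) \leq 2$ if and only if $({\cal S},{\cal T})$ is a yes-instance of the Betweenness problem. Thus, on the family of graphs output by the reduction, path-length takes only the two values $2$ and $3$, and deciding which one holds is NP-hard.

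First I would assume, for contradiction, the existence of a polynomial-time $\alpha$-approximation algorithm $A$ for path-length with $\alpha = \frac 3 2 - \varepsilon$ for some fixed $\varepsilon > 0$. Since path-length is a minimization parameter, on input $G$ such an algorithm returns a value $A(G)$ with $pl(G) \leq A(G) \leq \alpha \cdot pl(G)$. Running $A$ on a graph $G_{{\cal S},{\cal T}}$, I would then separate the two cases by comparing $A(G_{{\cal S},{\cal T}})$ with the threshold $3$: if $pl(G_{{\cal S},{\cal T}}) = 2$ then $A(G_{{\cal S},{\cal T}}) \leq 2\alpha = 3 - 2\varepsilon < 3$, whereas if $pl(G_{{\cal S},{\cal T}}) = 3$ then $A(G_{{\cal S},{\cal T}}) \geq 3$. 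Consequently, $A(G_{{\cal S},{\cal T}}) < 3$ certifies a yes-instance and $A(G_{{\cal S},{\cal T}}) \geq 3$ certifies a no-instance, so $A$ would decide the Betweenness problem in polynomial time, forcing P=NP.

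This is exactly the generic principle recalled just before the corollary, instantiated at $k = 2$: an approximation ratio strictly below $1 + \frac 1 k = \frac 3 2$ suffices to separate the graphs with $pl \leq 2$ from those with $pl \geq 3$. The only point demanding any care is that the inequality $2\alpha < 3$ holds precisely because $\alpha < \frac 3 2$ is \emph{strict}, which is what the term $-\varepsilon$ guarantees; I expect no genuine obstacle here, the whole argument being a direct reading of the hardness gap of Theorem~\ref{thm:pl-npc}.
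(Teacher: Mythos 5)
Your proof is correct and follows exactly the paper's argument: it uses the gap built into the reduction of Theorem~\ref{thm:pl-npc} (namely $pl(G_{{\cal S},{\cal T}}) \leq 3$ always, and $pl(G_{{\cal S},{\cal T}}) \leq 2$ iff the Betweenness instance is a yes-instance), together with the generic observation stated just before the corollary, instantiated at $k=2$. You have merely spelled out the threshold comparison that the paper leaves implicit, so there is nothing to add.
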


\begin{proof}
Let $G_{{\cal S},{\cal T}}$ be the graph of the reduction in Theorem~\ref{thm:pl-npc}.
By Definition~\ref{def:plpb-reduction}, it has diameter at most $3$ and so $pl(G_{{\cal S},{\cal T}}) \leq 3$.
Since it is NP-hard to decide whether $pl(G_{{\cal S},{\cal T}}) \leq 2$, therefore it does not exist a  $\left(\frac 3 2 - \varepsilon\right)$-approximation algorithm for path-length unless P=NP.
\end{proof}

\begin{corollary}
\label{cor:tb-inapprox}
For every $\varepsilon > 0$, the path-breadth of a graph cannot be approximated within a factor $2 - \varepsilon$ unless P=NP.
\end{corollary}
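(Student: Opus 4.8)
The plan is to reuse, almost verbatim, the gap argument behind Corollary~\ref{cor:tl-inapprox}, this time tuned to path-breadth at the threshold $k=1$ (so that the relevant bound $1 + \frac{1}{k}$ equals $2$). The crucial ingredient is already built into the reduction: as noted right after Definition~\ref{def:plpb-reduction}, the clique $U$ dominates $G_{{\cal S},{\cal T}}$, so the trivial path decomposition consisting of the single bag $V$ has breadth at most $2$. Consequently every graph output by the reduction satisfies $pb(G_{{\cal S},{\cal T}}) \in \{1,2\}$.

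Given this, I would proceed in three short steps. First, I invoke Theorem~\ref{thm:pb-npc} to recall that deciding $pb(G_{{\cal S},{\cal T}}) \leq 1$ is NP-hard; together with the a priori bound $pb(G_{{\cal S},{\cal T}}) \leq 2$, this means it is NP-hard to separate the reduction graphs with $pb = 1$ from those with $pb = 2$. Second, I suppose for contradiction that a polynomial-time $(2-\varepsilon)$-approximation algorithm for path-breadth exists for some fixed $\varepsilon > 0$; run on $G_{{\cal S},{\cal T}}$ it returns a value $A$ with $pb(G_{{\cal S},{\cal T}}) \leq A \leq (2-\varepsilon)\, pb(G_{{\cal S},{\cal T}})$. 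Third, I observe that $pb = 1$ forces $A \leq 2 - \varepsilon < 2$, whereas $pb = 2$ forces $A \geq 2$, so thresholding $A$ at $2$ decides the NP-hard separation in polynomial time, and hence P=NP.

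There is no genuine obstacle here, since the entire difficulty is already discharged by Theorem~\ref{thm:pb-npc}; the only point deserving a line of justification is the a priori bound $pb(G_{{\cal S},{\cal T}}) \leq 2$. This holds because $G_{{\cal S},{\cal T}}$ has radius at most $2$: any $u_i \in U$ is adjacent to all vertices except a handful (the $v_j$ with $j \neq i$, together with a few of the $a_t,b_t,c_t$), and each of those exceptions is dominated by some other vertex of the clique $U$, hence lies at distance at most $2$ from $u_i$. The radius of the single bag $V$ is therefore at most $2$, which is exactly the a priori bound the gap argument consumes.
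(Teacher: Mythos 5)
Your proposal is correct and takes essentially the same approach as the paper: both arguments combine the a priori bound $pb(G_{{\cal S},{\cal T}}) \leq 2$ (coming from the dominating clique $U$ of Definition~\ref{def:plpb-reduction}) with the NP-hardness of deciding $pb(G_{{\cal S},{\cal T}}) \leq 1$ (Theorem~\ref{thm:pb-npc}), so that any polynomial-time $(2-\varepsilon)$-approximation would resolve the gap and imply P=NP. The only difference is cosmetic: you unfold the generic threshold argument that the paper states once, before both corollaries, for an arbitrary parameter.
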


\begin{proof}
Let $G_{{\cal S},{\cal T}}$ be the graph of the reduction in Theorem~\ref{thm:pb-npc}.
By Definition~\ref{def:plpb-reduction}, the set $U$ is a dominating clique and so $pb(G_{{\cal S},{\cal T}}) \leq 2$.
Since it is NP-hard to decide whether $pb(G_{{\cal S},{\cal T}}) \leq 1$, therefore it does not exist a  $\left(2 - \varepsilon\right)$-approximation algorithm for path-breadth unless P=NP.
\end{proof}

So far, there exists a $2$-approximation algorithm for path-length and a $3$-approximation algorithm for path-breadth~\cite{Dragan2014b}.
Therefore, we let open whether there exist $\frac 3 2$-approximation algorithms for path-length and $2$-approximation algorithms for path-breadth. 

\subsection{Tree-breadth}
\label{sec:tb-npc}

We prove next that computing the tree-breadth is NP-hard.

\begin{theorem}\label{thm:tb-npc}
	Deciding whether a graph has tree-breadth at most $k$ is NP-complete even if $k=1$.
\end{theorem}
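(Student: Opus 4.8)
The plan is to establish membership in NP first and then to give a polynomial-time reduction from an NP-complete problem to the recognition of graphs of tree-breadth one, relying throughout on the characterization that $tb(G) \le 1$ if and only if $G$ admits a star-decomposition (Corollary~\ref{cor:equiv-tb-star}). Membership in NP is the easy part: a tree decomposition of breadth at most $k$ serves as a certificate, since the all-pairs shortest paths of $G$ are computable in polynomial time and one can then check directly that every bag has radius at most $k$, which for $k=1$ amounts to checking that every bag has a dominating vertex. Hence essentially all the content lies in the NP-hardness.

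For the reduction I would exploit the standard dictionary between tree decompositions and triangulations. By Lemma~\ref{lem:dominator-in-bag} and Corollary~\ref{cor:equiv-tb-star}, a graph $G$ satisfies $tb(G)\le 1$ precisely when $G$ has a chordal supergraph $H$ on the same vertex set whose maximal cliques are the bags of a reduced tree decomposition, each such clique $C$ admitting a vertex $v\in C$ with $C \subseteq N_G[v]$, where the closed neighborhood is taken in $G$. I would then build from a source instance a graph $G_\phi$ whose only \emph{admissible} triangulations, namely those all of whose maximal cliques are dominated in $G_\phi$, are in bijection with the solutions of the source instance. Variable gadgets would be based on induced cycles: the smallest non-chordal obstruction, an induced $4$-cycle $(p,q,r,s)$, must receive one of the two diagonals $pr$ or $qs$ in any triangulation, and the requirement that the resulting triangles each lie in a single closed neighborhood of $G_\phi$ forces exactly one of these two choices to be globally consistent, which I would read off as a truth value. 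Clause gadgets would be designed so that the existence of at least one satisfying literal is exactly what permits the local fill-in to be completed with every bag dominated; if all literals of a clause fail, some induced cycle of the gadget could not be triangulated without creating a bag with no dominating vertex.

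Correctness splits into two directions. The forward direction is the more mechanical one: from a solution I would select the corresponding diagonals, glue the local star-decompositions of the gadgets into one global tree decomposition, and verify that each bag is dominated, using the Helly property (Lemma~\ref{lem:helly}) to merge bags consistently. The backward direction is where I expect the real difficulty. Starting from an arbitrary reduced breadth-one (hence star-, by Lemma~\ref{lem:dominator-in-bag}) decomposition, I must show that its forced structure yields a consistent global solution and that the clause gadgets guarantee satisfaction. The main tool is the interplay between the separator property of tree decompositions, by which any bag lying on the tree-path between two vertices must absorb the vertices of every path joining them, and the radius-one metric constraint, which requires the absorbing bag to have a dominator adjacent in $G_\phi$ to all of them. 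The hard part, and the bulk of the case analysis, is designing the gadgets tightly enough that no unintended fill-in can rescue a no-instance: one must rule out every \emph{parasitic} triangulation in which some clever choice of dominator would accommodate an otherwise undominatable clique. Controlling these parasitic decompositions, rather than verifying either implication on a single intended decomposition, is the crux of the argument.
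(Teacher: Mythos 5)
Your proposal establishes membership in NP correctly and identifies the right characterization ($tb(G)\le 1$ iff $G$ has a chordal supergraph each of whose maximal cliques is contained in a closed neighborhood of $G$), but the NP-hardness argument is not actually there: everything that constitutes the proof is deferred to gadgets that are never constructed. The variable gadget you describe does not do what you claim even locally: for a bare induced $4$-cycle $(p,q,r,s)$, \emph{either} diagonal yields a valid star-decomposition --- with fill edge $pr$ the bags $\{p,q,r\}$ and $\{p,r,s\}$ are dominated in the original graph by $q$ and $s$ respectively --- so nothing forces one choice over the other, let alone global consistency; indeed $tb(C_4)=1$. All of the forcing must come from surrounding structure, and that structure (as well as the clause gadget, the consistency/propagation mechanism, and the argument excluding the ``parasitic'' triangulations you rightly identify as the crux) is left entirely unspecified. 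A reduction whose correctness hinges on ruling out all unintended fill-ins cannot be assessed, let alone accepted, without the actual construction; what you have written is a research plan for a SAT-style reduction, not a proof.

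For comparison, the paper avoids having to engineer Boolean gadgets at all: it reduces from the Chordal Sandwich problem restricted to instances where $\bar{G_2}$ is a perfect matching, which is already known to be NP-complete, so the combinatorial hardness is inherited rather than rebuilt. Given $\langle G_1,G_2\rangle$, the paper adds a clique $V'$ in bijection with $V$ (this clique supplies dominators for every bag, which is the role your proposal has no mechanism for), makes each vertex of a forbidden pair $\{u,v\}\notin E_2$ adjacent to all of $V'$ except the partner's copy, and attaches a small fixed gadget $F_{uv}$ whose purpose, via the Helly property (Lemma~\ref{lem:helly}), is to force $T_{s_{uv}}\cap T_{t_{uv}}\neq\emptyset$ and hence $T_u\cap T_v=\emptyset$ in any star-decomposition --- this is exactly the ``forcing'' device missing from your plan. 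The converse direction is where the perfect-matching restriction pays off: taking an edge-maximal sandwich, Lemma~\ref{lem:chordal-sandwich} produces a reduced tree decomposition of $G_1$ in which, for every forbidden edge, $T_u\cup T_v=T$ and there are adjacent bags $B_u,B_v$ with $B_u\setminus u=B_v\setminus v$; from this rigid structure a star-decomposition of the constructed graph is assembled explicitly, with no case analysis over unintended decompositions. If you want to complete your argument, the most economical fix is to adopt this route: reduce from Chordal Sandwich (with the perfect-matching complement) rather than from a satisfiability problem directly.
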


\begin{theorem}\label{thm:k-good-tree-npc}
	Deciding whether a graph admits a $k$-good tree decomposition is NP-complete even if $k=1$.
\end{theorem}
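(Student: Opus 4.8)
The plan is to mirror exactly the argument already used for the path analogue in Theorem~\ref{thm:k-good-path-npc}, since the case $k=1$ reduces to a decision problem whose hardness has just been established in Theorem~\ref{thm:tb-npc}. First I would argue membership in NP: a $1$-good tree decomposition, i.e.\ a star-decomposition, serves as a polynomial-size certificate. To verify it, one checks the three defining axioms of a tree decomposition in polynomial time and, for each bag $X_t$, confirms that $\gamma(G[X_t]) = 1$ by testing whether some vertex of $X_t$ dominates $X_t$; all the distances needed for this are obtained from a single all-pairs-shortest-paths computation, which is polynomial.

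For the NP-hardness, the key observation is that the two decision problems---``does $G$ admit a $1$-good tree decomposition?'' and ``is $tb(G) \leq 1$?''---coincide on every instance, so no new reduction is required. Indeed, by definition a tree decomposition is $1$-good exactly when each bag contains a spanning star, which is precisely a star-decomposition in the sense of Definition~\ref{def:star-tree-dec}; and Corollary~\ref{cor:equiv-tb-star} (itself a consequence of Lemma~\ref{lem:dominator-in-bag}) states that $G$ admits a star-decomposition if and only if $tb(G) \leq 1$. Hence the existence of a $1$-good tree decomposition is equivalent to $tb(G) \leq 1$, and the NP-hardness of the latter, just proved in Theorem~\ref{thm:tb-npc}, transfers verbatim.

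The step carrying all the weight is therefore not in this proof at all but in Theorem~\ref{thm:tb-npc}, whose reduction supplies the hard instances; there is essentially no obstacle left once that equivalence is in hand. The only point I would expect to warrant a sentence of care is spelling out why the $1$-good condition specialised to $k=1$ collapses to the spanning-star condition (equivalently, dominating number one), so that Corollary~\ref{cor:equiv-tb-star} applies directly. With that remark, the present statement follows immediately and the proof can be kept to a few lines.
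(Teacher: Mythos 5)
Your proposal is correct and matches the paper's own proof: membership in NP plus the observation that, via Corollary~\ref{cor:equiv-tb-star} (equivalently, the identification of $1$-good tree decompositions with star-decompositions), the problem coincides with deciding $tb(G)\leq 1$, whose NP-hardness is Theorem~\ref{thm:tb-npc}. No differences worth noting.
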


\begin{proof}
	The problem is in NP.
	By Corollary~\ref{cor:equiv-tb-star}, a graph $G$ admits a star-decomposition if and only if $tb(G) \leq 1$, therefore it is NP-hard to decide whether a graph admits a $1$-good path decomposition by Theorem~\ref{thm:tb-npc}.
\end{proof}

In order to prove Theorem~\ref{thm:tb-npc}, we will reduce from the Chordal Sandwich problem (defined below).
In~\cite{Lokshtanov2010}, the author also proposed a reduction from the Chordal Sandwich problem in order to prove that computing tree-length is NP-hard.
However, we will need different gadgets than in~\cite{Lokshtanov2010}, and we will need different arguments to prove correctness of the reduction.

\begin{center}
	\fbox{\begin{minipage}{.95\linewidth}\label{prob:chordal-sandwich}
			\begin{problem}[Chordal Sandwich]\ 
				\begin{description}
					\item[Input:] graphs $G_1=(V,E_1)$ and $G_2=(V,E_2)$ such that $E_1 \subseteq E_2$.
					\item[Question:] Is there a chordal graph $H=(V,E)$ such that $E_1 \subseteq E \subseteq E_2$ ?
				\end{description}
			\end{problem}     
		\end{minipage}}
	\end{center}
	
The Chordal Sandwich problem is NP-complete even when the $2n = |V|$ vertices induce a perfect matching in $\bar{G_2}$ (the complementary of $G_2$)~\cite{Bodlaender1992,Golumbic1995}.
Perhaps surprisingly, the later constriction on the structure of $\bar{G_2}$ is a key element in our reduction.
Indeed, we will need the following technical lemma.

\begin{lemma}
\label{lem:chordal-sandwich}
Let $G_1 =(V,E_1)$, $G_2=(V,E_2)$ such that $E_1 \subseteq E_2$ and $\bar{G_2}$ (the complementary of $G_2$) is a perfect matching.
Suppose that $\langle G_1, G_2 \rangle$ is a yes-instance of the Chordal Sandwich problem.

Then, there exists a reduced tree decomposition $(T,{\cal X})$ of $G_1$ such that for every forbidden edge $\{u,v\} \notin E_2$: $T_u \cap T_v = \emptyset$, $T_u \cup T_v = T$, furthermore there are two adjacent bags $B_u \in T_u$ and $B_v \in T_v$ such that $B_u \setminus u = B_v \setminus v$.
\end{lemma}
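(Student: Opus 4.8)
The plan is to start from a clique tree of the sandwich graph $H$, then \emph{complete} each of its bags into a full transversal of the matching $\bar{G_2}$, and finally split the edges along which several completions happen at once so that each forbidden pair acquires its own ``swap'' edge.

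First I would fix a chordal graph $H=(V,E)$ with $E_1\subseteq E\subseteq E_2$ and take a clique tree $(T,{\cal X})$ of $H$, whose bags are the maximal cliques of $H$. Since $E_1\subseteq E$, this is also a tree decomposition of $G_1$, and since every vertex lies in a maximal clique, each $T_w\neq\emptyset$. Every bag is a clique of $H\subseteq G_2$, and because $\bar{G_2}$ is a perfect matching, a clique of $G_2$ meets each matching edge in at most one endpoint; hence each bag is a \emph{partial} transversal of $\bar{G_2}$. This already yields $T_u\cap T_v=\emptyset$ for every forbidden edge $\{u,v\}$, as $u,v$ never occur together in a clique of $H$.

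Next I observe that, for all forbidden pairs at once, requiring \emph{both} $T_u\cap T_v=\emptyset$ and $T_u\cup T_v=T$ is equivalent to asking that every bag be a \emph{full} transversal of $\bar{G_2}$. To reach this I treat each forbidden pair $\{u,v\}$ independently: as $T_u,T_v$ are disjoint nonempty subtrees, there is a unique path joining them; I pick one edge $e_{uv}$ of this path, add $u$ to every bag on the $T_u$-side and $v$ to every bag on the $T_v$-side. Since this only ever inserts $u$ or $v$, the operations for distinct pairs do not interfere, each $T_w$ stays a subtree (one side of its boundary edge), and validity for $G_1$ is preserved because bags only grow. After processing all pairs, every bag is a transversal, and for an original edge $\{s,t\}$ the transversals $X_s,X_t$ differ \emph{exactly} on those pairs whose chosen edge is $\{s,t\}$, say $X_s\setminus X_t=\{u_1,\dots,u_r\}$ and $X_t\setminus X_s=\{v_1,\dots,v_r\}$, with common part $W=X_s\cap X_t$.

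The remaining property, the existence of adjacent almost-equal bags $B_u\setminus u=B_v\setminus v$, amounts to giving each pair a \emph{private} boundary edge, i.e. an edge whose two bags differ by exactly one swap $u\leftrightarrow v$. This is the main obstacle, since a single edge $\{s,t\}$ may carry $r>1$ simultaneous swaps. I resolve it by \emph{sequentializing} them: subdivide $\{s,t\}$ into a path $s=w_0,w_1,\dots,w_r=t$ with $X_{w_j}=W\cup\{v_1,\dots,v_j\}\cup\{u_{j+1},\dots,u_r\}$, so consecutive bags differ precisely by swapping $u_j$ for $v_j$. One checks this stays a valid tree decomposition, because each $u_i$ occupies a prefix and each $v_i$ a suffix of the inserted path (and $W$ sits in all of them), so all subtrees remain connected; thus $\{w_{j-1},w_j\}$ is the desired private boundary of $\{u_j,v_j\}$, with $B_{u_j}=X_{w_{j-1}}$ and $B_{v_j}=X_{w_j}$. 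Finally I contract adjacent equal bags to make the decomposition reduced; as all bags are size-$n$ transversals, containment forces equality, and the private boundary edges — whose endpoints differ by a genuine swap — are never contracted, so $T_u\cap T_v=\emptyset$, $T_u\cup T_v=T$, and the almost-equal adjacent bags all persist.
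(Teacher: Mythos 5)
Your proof is correct, but it takes a genuinely different route from the paper's. The paper makes one global choice: among all chordal sandwich graphs $H$ with $E_1 \subseteq E(H) \subseteq E_2$ it fixes an \emph{edge-maximal} one, and then shows that \emph{any} clique tree of this $H$ already has all three properties; each property is verified by a short contradiction argument in which, were it to fail, one could add further $E_2$-allowed edges (extending $u$ into all bags of the component beyond $T_u$, or inserting the bag $(B_u\setminus u)\cup\{v\}$ between $B_u$ and $B_v$), contradicting maximality. You instead start from an \emph{arbitrary} sandwich witness and its clique tree and force the properties by explicit surgery on the decomposition: you complete every bag to a full transversal of the matching $\bar{G_2}$ by choosing, for each forbidden pair, a splitting edge on the path between $T_u$ and $T_v$ (disjointness of the pairs guarantees these operations commute); you serialize the $r$ swaps carried by a single tree edge by subdividing it into a path of transversal bags in which consecutive bags differ by exactly one swap; and you reduce at the end, observing that since all bags are equal-size transversals, containment forces equality, so the swap edges survive the contraction. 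Both arguments are sound. The paper's maximality trick is shorter and handles all three properties uniformly; your construction is more explicit and effectively algorithmic (it converts any witness into the desired decomposition, rather than requiring an edge-maximal witness), and it isolates the clean structural picture underlying the lemma, namely that the target decompositions are exactly those whose bags are transversals of the perfect matching $\bar{G_2}$.
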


\begin{proof}
Let $H=(V,E)$ be any chordal graph such that $E_1 \subseteq E \subseteq E_2$ (that exists because $\langle G_1, G_2 \rangle$ is a yes-instance of the Chordal Sandwich problem by the hypothesis) and the number $|E|$ of edges is maximized.
We will prove that any clique-tree $(T,{\cal X})$ of $H$ satisfies the above properties (given in the statement of the lemma).
To prove it, let $\{u,v\} \notin E_2$ be arbitrary.
Observe that $T_u \cap T_v = \emptyset$ (else, $\{u,v\} \in E$, that would contradict that $E \subseteq E_2$).

Furthermore, let $B_u \in T_u$ minimize the distance in $T$ to the subtree $T_v$, let $B$ be the unique bag that is adjacent to $B_u$ on a shortest-path between $B_u$ and $T_v$ in $T$.
Note that $B \notin T_u$ by the minimality of $dist_T(B_u,T_v)$, however $B$ may belong to $T_v$.
Removing the edge $\{B_u,B\}$ in $T$ yields two subtrees $T_1,T_2$ with $T_u \subseteq T_1$ and $T_v \subseteq T_2$.
In addition, we have that for every $x \in V \setminus u$ such that $T_x \cap T_1 \neq \emptyset$, $\{u,x\} \in E_2$ since $x \neq v$ and $\bar{G_2}$ is a perfect matching by the hypothesis.
Similarly, we have that for every $y \in V \setminus v$ such that $T_y \cap T_2 \neq \emptyset$, $\{v,y\} \in E_2$.
Therefore, by maximality of the number $|E|$ of edges, it follows that $T_1 = T_u$ and $T_2 = T_v$, and so, $T_u \cup T_v = T$.
In particular, $B=B_v \in T_v$.

Finally, let us prove that $B_u \setminus u = B_v \setminus v$.
Indeed, assume for the sake of contradiction that $B_u \setminus u \neq B_v \setminus v$.
In particular, $(B_u \setminus B_v) \setminus u \neq \emptyset$ or $(B_v \setminus B_u) \setminus v \neq \emptyset$.
Suppose w.l.o.g. that $(B_u \setminus B_v) \setminus u \neq \emptyset$.
Let $H'=(V,E')$ be obtained from $H$ by adding an edge between vertex $v$ and every vertex of $(B_u \setminus B_v) \setminus u$.
By construction $|E'| > |E|$.
Furthermore, $H'$ is chordal since a clique-tree of $H'$ can be obtained from $(T,{\cal X})$ by adding a new bag $(B_u \setminus u) \cup \{v\}$ in-between $B_u$ and $B_v$.
However, for every $x \in (B_u \setminus B_v) \setminus u$ we have that $\{x,v\} \in E_2$ since $x \neq u$ and $\bar{G_2}$ is a perfect matching by the hypothesis.
As a result, $E' \subseteq E_2$, thus contradicting the maximality of the number $|E|$ of edges in $H$.       
\end{proof}

\begin{proofof}{Theorem~\ref{thm:tb-npc}}
	The problem is in NP.
	To prove the NP-hardness,let $\langle G_1, G_2 \rangle$ be any input of the Chordal Sandwich problem such that  $\bar{G_2}$ is a perfect matching.
	The graph $G'$ is constructed from $G_1$ as follows.
	First we add a clique $V'$ of $2n = |V|$ vertices in $G'$.
	Vertices $v \in V$ are in one-to-one correspondance with vertices $v' \in V'$.
	Then, for every forbidden edge $\{u,v\} \notin E_2$, vertices $u,v$ are respectively made adjacent to all vertices in $V' \setminus v'$ and $V' \setminus u'$.
	Finally, we add a distinct copy of the gadget $F_{uv}$ in Figure~\ref{fig:gadget}, and we make adjacent $s_{uv}$ and $t_{uv}$ to the two vertices $u',v'$ (see also Figure~\ref{fig:reduction-tb} for an illustration).
	We will prove $tb(G') = 1$ if and only if $\langle G_1, G_2 \rangle$ is a yes-instance of the Chordal Sandwich problem.
	This will prove the NP-hardness because our reduction is polynomial and the Chordal Sandwich problem is NP-complete even when the $2n = |V|$ vertices induce a perfect matching in $\bar{G_2}$ (the complementary of $G_2$)~\cite{Bodlaender1992,Golumbic1995}.
	
	\begin{figure}[h!]
		\centering
		\includegraphics[width=0.25\textwidth]{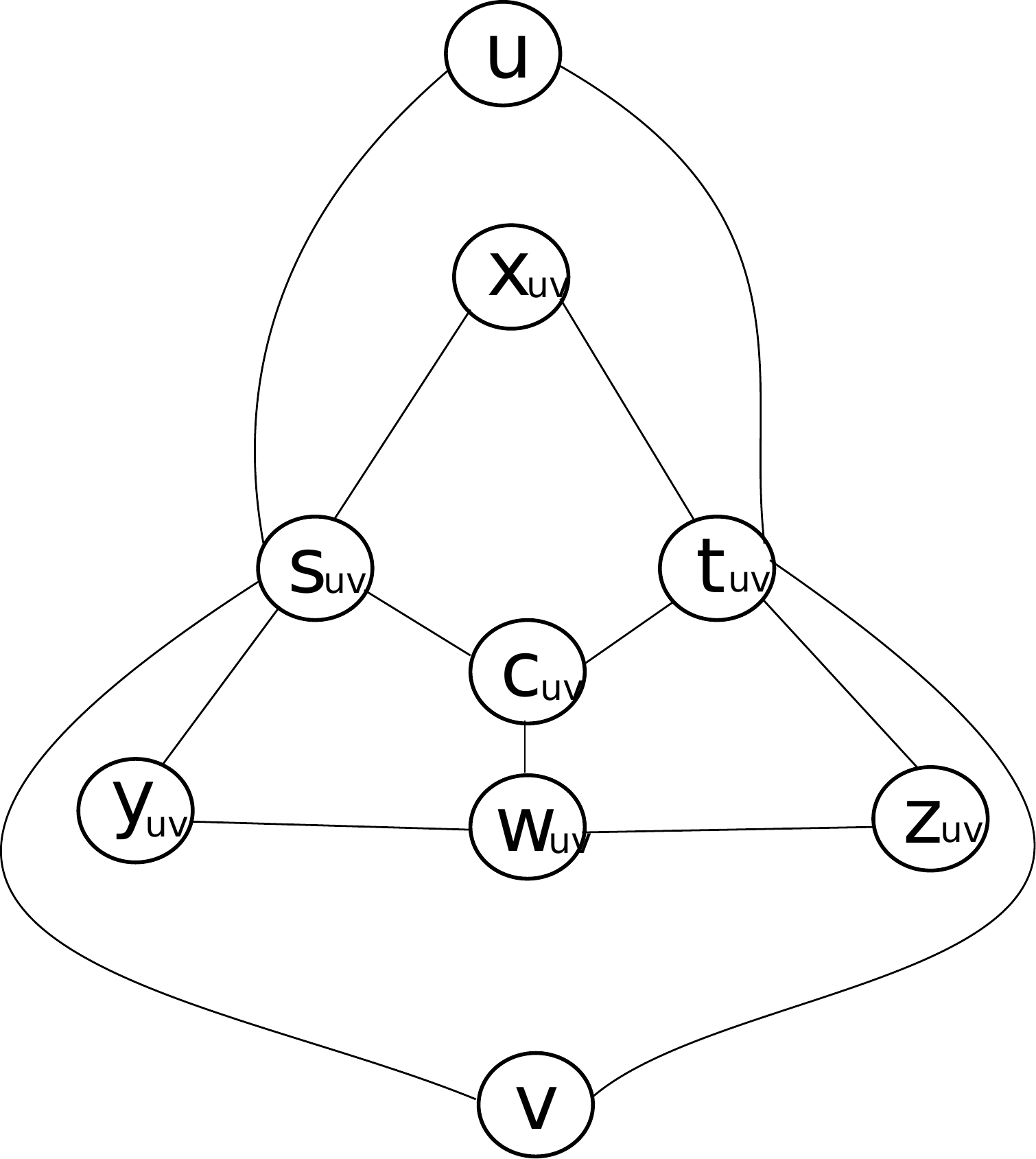}
		\caption{The gadget $F_{uv}$.}
		\label{fig:gadget}
	\end{figure}
	
	In one direction, assume $tb(G') = 1$, let $(T,{\cal X})$ be a star-decomposition of $G'$. 
	Let $H=(V, \{ \{u,v\} \mid T_u \cap T_v \neq \emptyset\})$, that is a chordal graph such that $E_1 \subseteq E(H)$.
	To prove that $\langle G_1, G_2 \rangle$ is a yes-instance of the Chordal Sandwich problem, it suffices to prove that $T_u \cap T_v = \emptyset$ for every forbidden edge $\{u,v\} \notin E_2$.
	More precisely, we will prove that $T_{s_{uv}} \cap T_{t_{uv}} \neq \emptyset$, for we claim that the latter implies $T_u \cap T_v = \emptyset$.
	Indeed, assume $T_{s_{uv}} \cap T_{t_{uv}} \neq \emptyset$ and $T_u \cap T_v \neq \emptyset$.
	Since $s_{uv}$ and $t_{uv}$ are both adjacent to $u$ and $v$, therefore the four subtrees $T_u,T_v,T_{s_{uv}},T_{t_{uv}}$ pairwise intersect.
	By the Helly property (Lemma~\ref{lem:helly}) $T_u \cap T_v \cap T_{s_{uv}} \cap T_{t_{uv}} \neq \emptyset$, hence there is a bag containing $u,v,s_{uv},t_{uv}$ but then it contradicts the fact that $(T,{\cal X})$ is a star-decomposition because no vertex dominates the four vertices.
	Therefore, $T_{s_{uv}} \cap T_{t_{uv}} \neq \emptyset$ implies $T_u \cap T_v = \emptyset$.
	Let us prove that $T_{s_{uv}} \cap T_{t_{uv}} \neq \emptyset$.
	By contradiction, assume $T_{s_{uv}} \cap T_{t_{uv}} = \emptyset$.
	Every bag $B$ onto the path between $T_{s_{uv}}$ and $T_{t_{uv}}$ must contain $c_{uv},x_{uv}$, furthermore $N[c_{uv}] \cap N[x_{uv}] = \{s_{uv}, t_{uv}\}$.
	Since, $(T,{\cal X})$ is a star-decomposition, the latter implies either $s_{uv} \in B$ and $B \subseteq N[s_{uv}]$ or $t_{uv} \in B$ and $B \subseteq N[t_{uv}]$.
	Consequently, there exist two adjacent bags $B_s \in T_{s_{uv}},B_t \in T_{t_{uv}}$ such that $B_s \subseteq N[s_{uv}]$ and $B_t \subseteq N[t_{uv}]$.
	Furthermore, $B_s \cap B_t$ is an $s_{uv}t_{uv}$-separator by the properties of a tree decomposition. 
	In particular, $B_s \cap B_t$ must intersect the path $(y_{uv},w_{uv},z_{uv})$ because $y_{uv} \in N(s_{uv})$ and $z_{uv} \in N(t_{uv})$. 
	However, $B_s \subseteq N[s_{uv}], B_t \subseteq N[t_{uv}]$ but $N[s_{uv}] \cap N[t_{uv}] \cap \{y_{uv},w_{uv},z_{uv}\} = \emptyset$, hence $B_s \cap B_t \cap \{y_{uv},w_{uv},z_{uv}\} = \emptyset$, that is a contradiction.
	As a result, $T_{s_{uv}} \cap T_{t_{uv}} \neq \emptyset$ and so, $T_u \cap T_v = \emptyset$.
	
		\begin{figure}[h!]
			\centering
			\includegraphics[width=0.45\textwidth]{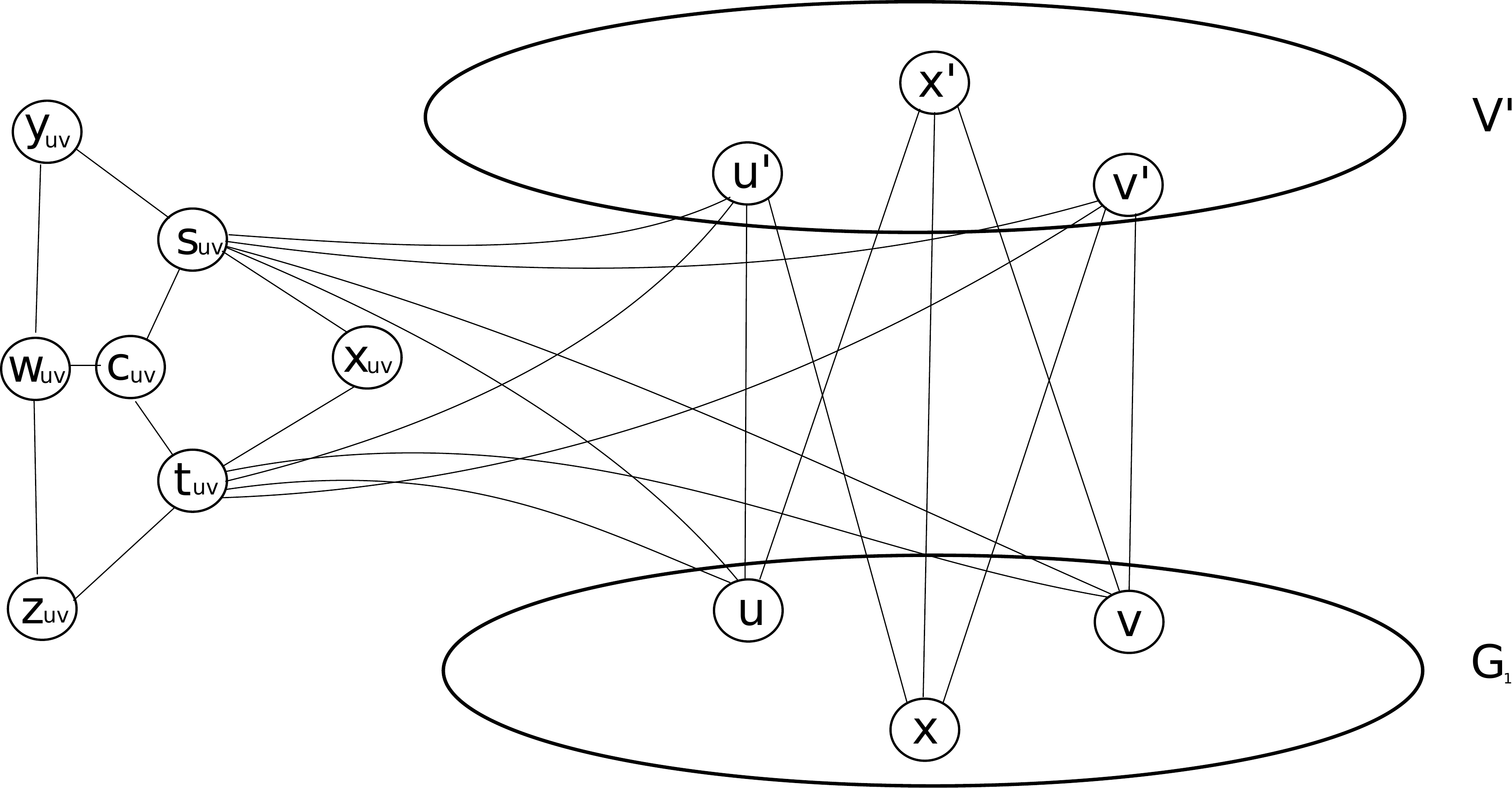}
			\caption{The graph $G'$ (simplified view).}
			\label{fig:reduction-tb}
		\end{figure}
	
	Conversely, assume that $\langle G_1, G_2 \rangle$ is a yes-instance of the Chordal Sandwich problem.
	Since $\bar{G_2}$ is a perfect matching by the hypothesis, by Lemma~\ref{lem:chordal-sandwich} there exists a reduced tree decomposition $(T,{\cal X})$ of $G_1$ such that for every forbidden edge $\{u,v\} \notin E_2$: $T_u \cap T_v = \emptyset, T_u \cup T_v = T$ and there are two adjacent bags $B_u \in T_u, B_v \in T_v$ so that $B_u \setminus u = B_v \setminus v$.
	Let us modify $(T,{\cal X})$ in order to obtain a star-decomposition of $G'$.
	
	In order to achieve the result, we first claim that for every edge $\{t,t'\} \in E(T)$, the bags $X_t,X_{t'}$ differ in exactly one vertex, that is, $|X_t \setminus X_{t'}| = 1$ and similarly $|X_{t'} \setminus X_t| = 1$. 
	Indeed, $X_t \setminus X_{t'} \neq \emptyset$ because $(T,{\cal X})$ is reduced, so, let $u_{tt'} \in X_t \setminus X_{t'}$.
	Let $v_{tt'} \in V$ be the unique vertex satisfying $\{u_{tt'},v_{tt'}\} \notin E_2$, that is well-defined because $\bar{G_2}$ is a perfect matching by the hypothesis.
	Note that $v_{tt'} \in X_{t'}$ because $u_{tt'} \notin X_{t'}$ and $T_{u_{tt'}} \cup T_{v_{tt'}} = T$.
	Furthermore, $v_{tt'} \notin X_{t}$ because $u_{tt'} \in X_t$ and $T_{u_{tt'}} \cap T_{v_{tt'}} = \emptyset$.
	By construction of $(T,{\cal X})$, there are two adjacent bags $B_{u_{tt'}} \in T_{u_{tt'}},B_{v_{tt'}} \in T_{v_{tt'}}$ such that $B_{u_{tt'}} \setminus u_{tt'} = B_{v_{tt'}} \setminus v_{tt'}$.
	Since $u_{tt'} \in X_t \setminus X_{t'}$ and $v_{tt'} \in X_{t'} \setminus X_t$, therefore, $X_t = B_{u_{tt'}}$ and $X_{t'} = B_{v_{tt'}}$, and so, $X_t \setminus X_{t'} = \{u_{tt'}\}$ and $X_{t'} \setminus X_t = \{v_{tt'}\}$. 
	In the following, we will keep the above notations $u_{tt'},v_{tt'}$ for every edge $\{t,t'\} \in E(T)$ (in particular, $u_{tt'} = v_{t't}$ and $v_{tt'} = u_{t't}$).
	
	Let us construct the star-decomposition $(T',{\cal X}')$ of $G'$ as follows.
	\begin{itemize}
		\item For every node $t \in V(T)$, let $S_t = X_t \cup V' \cup (\bigcup_{t' \in N_T(t)} \{s_{u_{tt'}v_{tt'}}, t_{u_{tt'}v_{tt'}} \})$ (in particular, $|S_t| = 2n + |X_t| + 2\cdot deg_T(t)$).
		We will first construct a path decomposition of $G'[S_t]$ whose bags are the sets $Y_{tt'} = X_t \cup V' \cup \{s_{u_{tt'}v_{tt'}}, t_{u_{tt'}v_{tt'}}\}$ for every edge $\{t,t'\} \in E(T)$ (note that the bags can be linearly ordered in an arbitrary way in the path decomposition).
		Furthermore, for every edge $\{t,t'\} \in E(T)$, $Y_{tt'} \subseteq N[u'_{tt'}]$, where $u'_{tt'} \in V'$ is the corresponding vertex to $u_{tt'} \in V$ in the clique $V'$ (see Figure~\ref{fig:reduction-tb-step1} for an illustration).
		Therefore, the above constructed path decomposition is a $1$-good path decomposition.
		
				\begin{figure}[h!]
					\centering
					\includegraphics[width=0.85\textwidth]{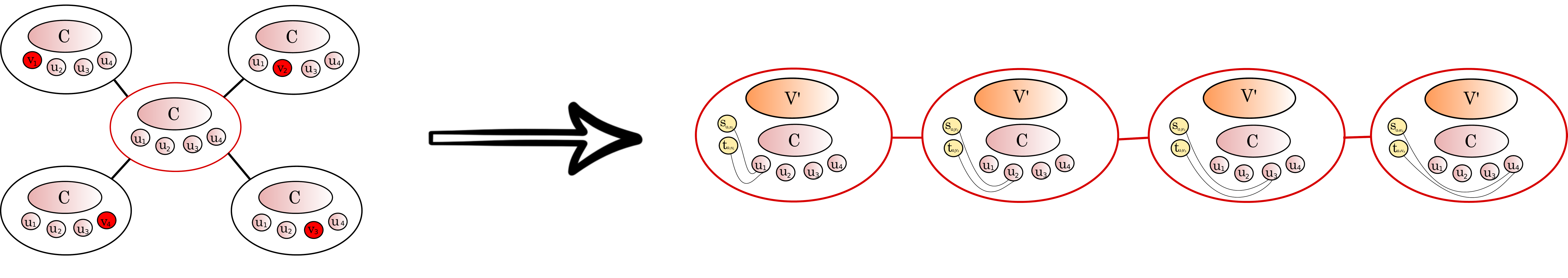}
					\caption{The $1$-good path decomposition (right) obtained from the central bag with degree four (left).}
					\label{fig:reduction-tb-step1}
				\end{figure}
		
		\item Then, we will connect the $1$-good path decompositions together.
		More precisely, let us add an edge between the two bags $Y_{tt'}$ and $Y_{t't}$ for every edge $\{t,t'\} \in E(T)$ (see Figure~\ref{fig:reduction-tb-step2} for an illustration).
		
		In so doing, we claim that one obtains a star-decomposition of $G'[\bigcup_{t \in V(T)} S_t]$.
		Indeed, it is a tree decomposition since: 
		\begin{itemize}
			\item the clique $V'$ is contained in all bags;
			\item for every $\{t,t'\} \in E(T)$ the two vertices $s_{u_{tt'}v_{tt'}}, t_{u_{tt'}v_{tt'}}$ are only contained in the two adjacent bags $Y_{tt'}$ and $Y_{t't}$, furthermore  $u_{tt'},u'_{tt'},v'_{tt'} \in Y_{tt'}$ and $v_{tt'},u'_{tt'},v'_{tt'} \in Y_{t't}$;
			\item last, each vertex $v \in V$ is contained in $\{ Y_{tt'} \mid v \in X_t \text{ and } t' \in N_T(t)\}$ which induces a subtree since $(T,{\cal X})$ is a tree decomposition of $G_1$. 
		\end{itemize}
		Since in addition every bag $Y_{tt'}$, with $\{t,t'\} \in E(T)$, is dominated by $u'_{tt'} \in V'$, this proves the claim that one obtains a star-decomposition.
		
				\begin{figure}[h!]
					\centering
					\includegraphics[width=0.55\textwidth]{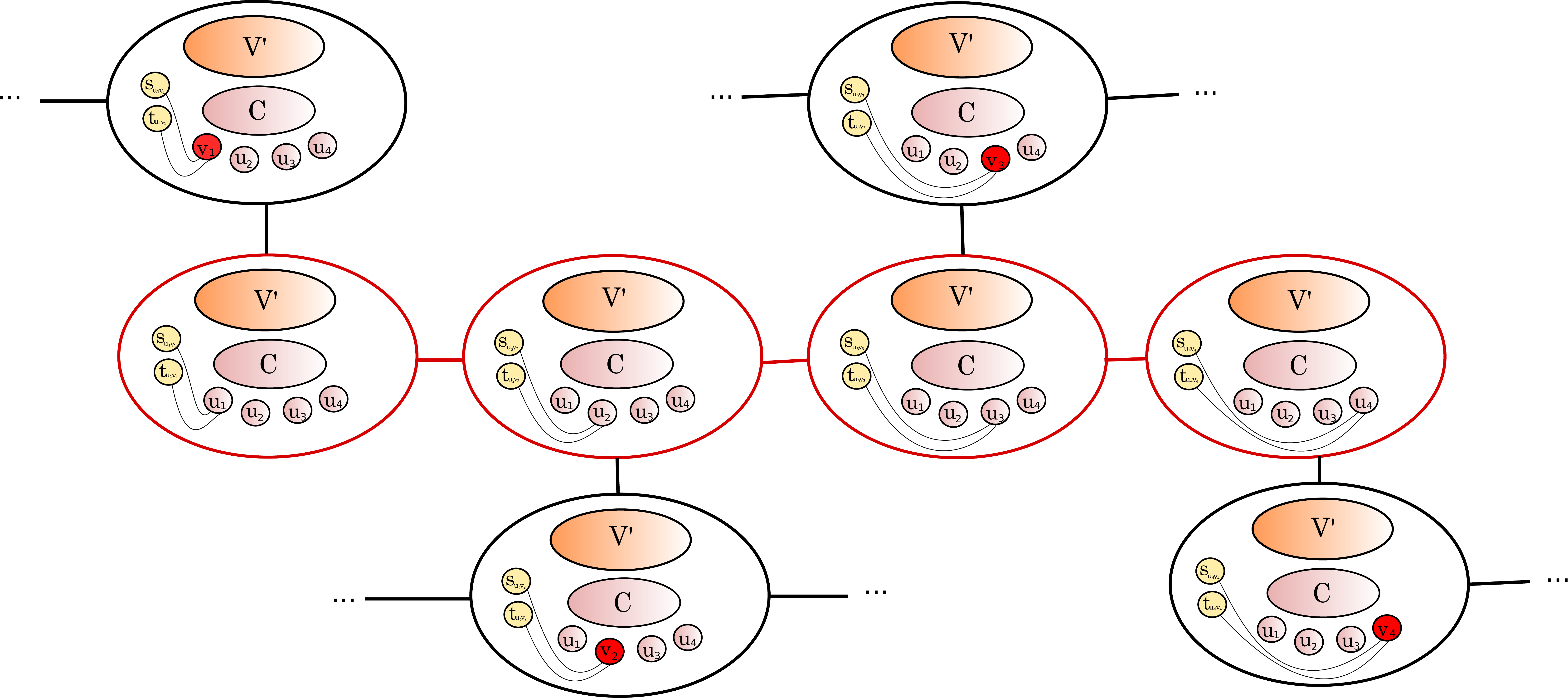}
					\caption{Connection of the $1$-good path decomposition in Figure~\ref{fig:reduction-tb-step1} to the neighbouring $1$-good path decompositions.}
					\label{fig:reduction-tb-step2}
				\end{figure}
		
		\item In order to complete the construction, let us observe that for every forbidden edge $\{u,v\} \notin E_2$, there is a star-decomposition of $F_{uv} \setminus \{u,v\}$ with three leaf-bags $\{x_{uv}, s_{uv}, t_{uv}\}, \ \{y_{uv}, s_{uv}, w_{uv}\}$ and $\{z_{uv}, s_{uv}, w_{uv}\}$ and one internal bag of degree three $B_{uv} = \{c_{uv}, s_{uv}, t_{uv}, w_{uv}\}$.
		For every $\{t,t'\} \in E(T)$, we simply connect the above star-decomposition of $F_{u_{tt'}v_{tt'}} \setminus \{u_{tt'},v_{tt'}\}$ by making the internal bag $B_{u_{tt'}v_{tt'}}$ adjacent to one of $Y_{tt'}$ or $Y_{t't}$ (see Figure~\ref{fig:reduction-tb-step3} for an illustration). 
	\end{itemize}
	
				\begin{figure}[h!]
					\centering
					\includegraphics[width=0.55\textwidth]{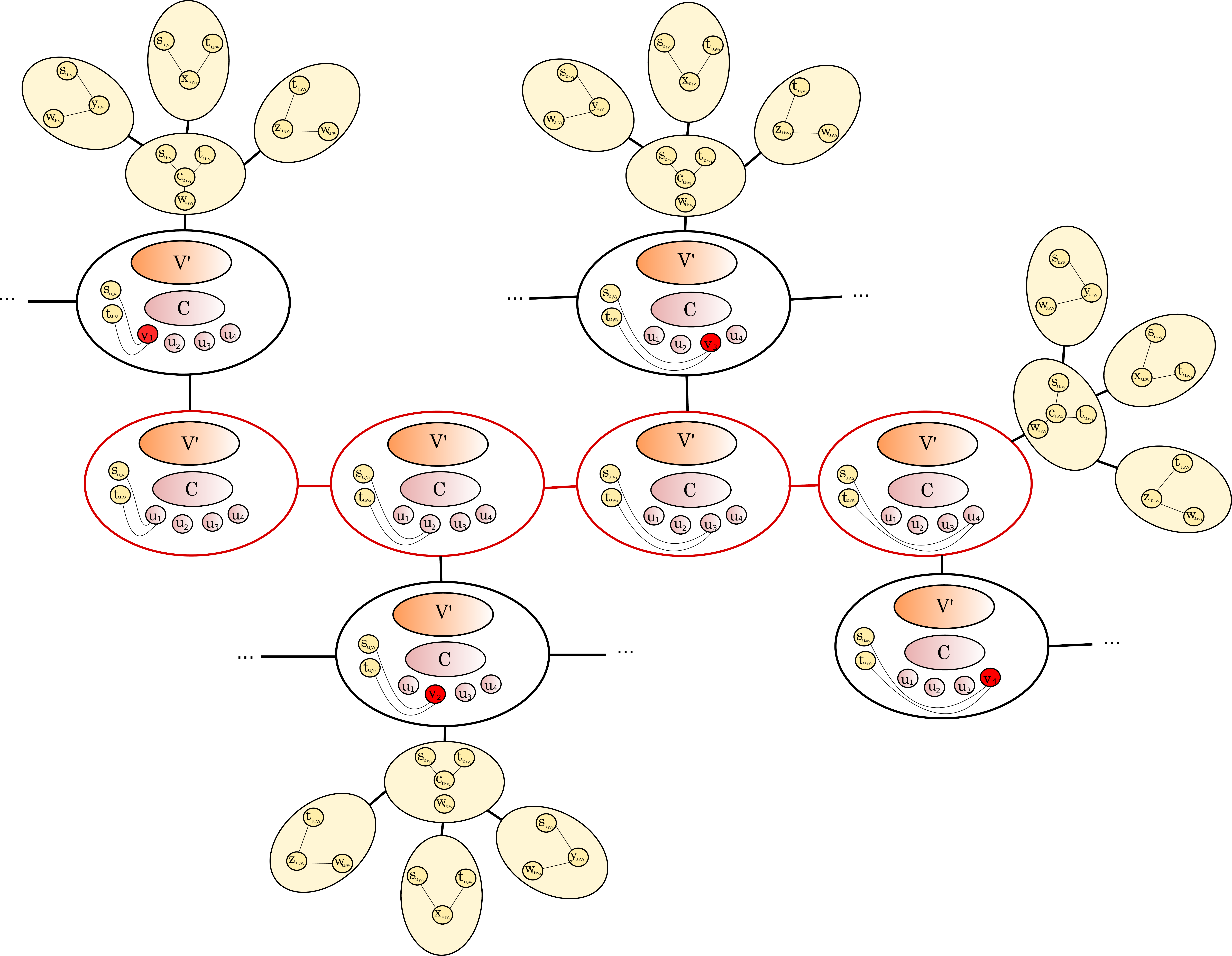}
					\caption{The respective star-decompositions of the gadgets $F_{u_iv_i}$ are connected to the other bags.}
					\label{fig:reduction-tb-step3}
				\end{figure}
				
	By construction, the resulting tree decomposition $(T',{\cal X}')$ of $G'$ is a star-decomposition, hence $tb(G') = 1$.	
				
				\end{proofof}

Recall that we can use our reduction from Definition~\ref{def:plpb-reduction} in order to prove that computing path-length and path-breadth is NP-hard.
By contrast, our reduction from Theorem~\ref{thm:tb-npc} cannot be used to prove that tree-length is NP-hard to compute (in fact, the graph $G'$ resulting from the reduction has tree-length two).

Finally, as in the previous Section~\ref{sec:plpb}, let us strenghten Theorem~\ref{thm:tb-npc} with an inapproximability result.

\begin{corollary}
	\label{cor:pb-inapprox}
	For every $\varepsilon > 0$, the tree-breadth of a graph cannot be approximated within a factor $2 - \varepsilon$ unless P=NP.
\end{corollary}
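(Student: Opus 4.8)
The plan is to mirror the argument used for Corollary~\ref{cor:tb-inapprox}, simply replacing the reduction of Theorem~\ref{thm:pb-npc} by that of Theorem~\ref{thm:tb-npc}. Recall the general observation made just before Corollary~\ref{cor:tl-inapprox}: for any parameter $param$ and any integer $k$, an $\alpha$-approximation with $\alpha < 1 + \frac 1 k$ is enough to separate the graphs $G$ with $param(G) \leq k$ from those with $param(G) \geq k+1$. Specializing to $param = tb$ and $k=1$ gives the threshold $1 + \frac 1 1 = 2$, so any $(2-\varepsilon)$-approximation algorithm for tree-breadth would distinguish the graphs $G$ with $tb(G) = 1$ from those with $tb(G) \geq 2$. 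It remains to exhibit hard instances on which the only two possible values are $1$ and $2$.

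Let $G'$ be the graph produced by the reduction of Theorem~\ref{thm:tb-npc}. To make the threshold bite, I first need the \emph{unconditional} bound $tb(G') \leq 2$, so that on these instances $tb(G') = 1$ exactly for yes-instances of Chordal Sandwich and $tb(G') = 2$ otherwise. This is precisely where the tree-breadth reduction differs from the path-breadth one: here $V'$ is \emph{not} a dominating clique of $G'$, since the interior vertices of each gadget $F_{uv}$ (for instance $x_{uv}$) have no neighbour in $V'$. I would therefore obtain $tb(G') \leq 2$ from the inequality $tb(G') \leq tl(G')$ recalled in Section~\ref{sec:def}, together with the fact, already noted right after the proof of Theorem~\ref{thm:tb-npc}, that $tl(G') = 2$.

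To keep the argument self-contained I would alternatively build a breadth-$2$ tree decomposition of $G'$ by hand. Take one central bag $B_0 = V' \cup V \cup \{ s_{uv}, t_{uv} \mid \{u,v\} \notin E_2 \}$: it is centred at any fixed $u' \in V'$ with radius at most $2$, since every $v \in V$ lies within distance $2$ of $u'$ through its copy $v'$, and every $s_{uv}, t_{uv}$ is at distance at most $2$ from $u'$ via its neighbours in $V'$. Then, for each forbidden edge $\{u,v\}$, attach to $B_0$ the breadth-one star-decomposition of $F_{uv} \setminus \{u,v\}$ already constructed in the proof of Theorem~\ref{thm:tb-npc}, glued along its internal bag $B_{uv} = \{c_{uv}, s_{uv}, t_{uv}, w_{uv}\}$, which contains both $s_{uv}$ and $t_{uv}$. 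One checks that $B_0$ covers all edges of $E_1$ and all gadget-to-$V'$ edges, that the subtrees of $s_{uv}, t_{uv}$ stay connected, and that every bag has radius at most $2$, whence $tb(G') \leq 2$.

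Finally I would conclude by contradiction. Suppose a $(2-\varepsilon)$-approximation algorithm ${\cal A}$ for tree-breadth existed. Run on $G'$, it returns a value in $[\,tb(G'),\,(2-\varepsilon)\,tb(G')\,]$; as $tb(G') \in \{1,2\}$, this value is strictly below $2$ exactly when $tb(G') = 1$, i.e.\ exactly when $\langle G_1, G_2 \rangle$ is a yes-instance of Chordal Sandwich. This decides Chordal Sandwich in polynomial time and forces P=NP. The only genuinely delicate point is the unconditional bound $tb(G') \leq 2$; everything else is a verbatim adaptation of the reasoning behind Corollaries~\ref{cor:tl-inapprox} and~\ref{cor:tb-inapprox}.
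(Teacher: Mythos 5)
Your proof is correct and follows exactly the route the paper intends for Corollary~\ref{cor:pb-inapprox}, which the paper leaves implicit: combine the threshold observation preceding Corollary~\ref{cor:tl-inapprox} with the NP-hardness of deciding $tb(G') \leq 1$ (Theorem~\ref{thm:tb-npc}) and the unconditional bound $tb(G') \leq 2$ on the reduction's instances. Your explicit breadth-two decomposition (the central bag $V \cup V' \cup \{ s_{uv}, t_{uv} \mid \{u,v\} \notin E_2 \}$ centred at a vertex of $V'$, with the gadget star-decompositions attached) soundly supplies the one detail the paper only asserts parenthetically via $tl(G') = 2$, and your observation that $V'$ is not a dominating clique of $G'$ --- so the argument of Corollary~\ref{cor:tb-inapprox} does not transfer verbatim --- is accurate.
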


\section{General properties of graphs with tree-breadth one}
\label{sec:tb}

In Section~\ref{sec:tb-npc}, we prove that computing the tree-breadth is NP-hard.
In particular, the recognition of graphs with tree-breadth one is NP-complete.
In light of this result, we focus on graphs with tree-breadth one (in order to obtain a better understanding of what makes the problem difficult)

\begin{center}
    \fbox{\begin{minipage}{.95\linewidth}\label{prob:tb}
        \begin{problem}[1-tree-breadth]\ 
          \begin{description}
          \item[Input:] a connected graph $G=(V,E)$
          \item[Question:] $tb(G) \leq 1$ ?
          \end{description}
        \end{problem}     
      \end{minipage}}
  \end{center}

In Lemma~\ref{lem:reduction-tb}, we show that the problem of recognizing graphs with tree-breadth at most one is equivalent to the problem of computing tree-breadth.
This further motivates our study of these graphs.
Then, we will prove necessary conditions for a graph to be of tree-breadth one.
\begin{itemize}
\item One is that all graphs with a star-decomposition have a \emph{domination elimination ordering} (see Section~\ref{sec:order}).
We will outline a few implications of this property.
\item Second, we will prove in Lemma~\ref{lem:sep-keep} that if a graph $G$ admits a star-decomposition then so do all the \emph{blocks} of $G$, where the blocks here denote a particular case of induced subgraphs of $G$ ({\it e.g.}, see Definition~\ref{def:min-sep}).
\end{itemize} 
Finally, we will obtain from the latter result a polynomial-time algorithm to decide whether a bipartite graph has tree-breadth at most one ({\it e.g.}, see Section~\ref{sec:bipartite}).

\begin{definition}
\label{def:reduction-tb}
Let $G$ be a graph with $n$ vertices, denoted by $v_1,v_2,\ldots,v_n$, and let $r$ be a positive integer.
The graph $G'_r$ is obtained from $G$ by adding a clique with $n$ vertices, denoted by $U = \{ u_1,u_2,\ldots,u_n \}$, so that for every $1 \leq i \leq n$, vertex $u_i$ is adjacent to $B_G(v_i,r) = \{ x \in V(G) \mid dist_G(v_i,x) \leq r \}$.
\end{definition}

\begin{lemma}
\label{lem:reduction-tb}
For every graph $G$, for every positive integer $r$, let $G'_r$ be as defined in Definition~\ref{def:reduction-tb}, $tb(G) \leq r$ if and only if $tb(G'_r) \leq 1$.
\end{lemma}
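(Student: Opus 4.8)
The plan is to prove the two implications separately, using Corollary~\ref{cor:equiv-tb-star} to replace ``$tb(G'_r) \leq 1$'' throughout by ``$G'_r$ admits a star-decomposition''. Both directions are obtained by transforming a decomposition on one side into a decomposition on the other while keeping the \emph{same} underlying tree, so that only the domination/radius bounds have to be checked.

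For the forward direction ($tb(G) \leq r \Rightarrow tb(G'_r) \leq 1$), I would start from a tree decomposition $(T,{\cal X})$ of $G$ of breadth at most $r$. By the definition of breadth, each bag $X_t$ has a center $c_t = v_{i_t} \in V(G)$ with $X_t \subseteq B_G(v_{i_t},r)$. Keeping the tree $T$, I enlarge every bag by the whole clique, setting $X'_t = X_t \cup U$. Checking that $(T,{\cal X}')$ is a tree decomposition of $G'_r$ is routine: since $U$ lies in every bag, the clique edges of $U$ and all edges between $U$ and $V(G)$ are covered and each $T_{u_i}$ is the whole tree, while the bags containing a given vertex of $V(G)$ are unchanged and the original axioms carry over. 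The key point is that each bag is a star: the vertex $u_{i_t}$ dominates $U$ (because $U$ is a clique) and dominates $X_t$, since by Definition~\ref{def:reduction-tb} the neighbors of $u_{i_t}$ inside $V(G)$ are exactly $B_G(v_{i_t},r) \supseteq X_t$. Hence $\gamma(G'_r[X'_t]) = 1$ for every $t$, so $(T,{\cal X}')$ is a star-decomposition and $tb(G'_r) \leq 1$.

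For the converse, I would take a star-decomposition $(T,{\cal X})$ of $G'_r$ (which exists by Corollary~\ref{cor:equiv-tb-star}) and restrict it to $G$, i.e. consider on the same tree $T$ the bags $X_t \cap V(G)$; this is a tree decomposition of $G$ because restricting to an induced subgraph preserves the three axioms. It remains to bound the breadth by $r$. Fix a bag $X_t$ and let $d \in X_t$ dominate $X_t$ in $G'_r$. If $d = v_j \in V(G)$, then every $w \in X_t \cap V(G)$ with $w \neq v_j$ is adjacent to $v_j$ in $G'_r$; but the only edges of $G'_r$ absent from $G$ are incident to $U$, so this edge already lies in $G$, and $v_j$ witnesses that $X_t \cap V(G)$ has radius at most $1 \leq r$. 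If instead $d = u_j \in U$, then every $w \in X_t \cap V(G)$ is a neighbor of $u_j$, hence $w \in B_G(v_j,r)$ by Definition~\ref{def:reduction-tb}, so $dist_G(v_j,w) \leq r$ and $v_j$ witnesses radius at most $r$ for $X_t \cap V(G)$. In both cases the restricted bag has radius at most $r$ in $G$, which gives $tb(G) \leq r$.

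Each direction is short, and I expect the only delicate point to be the converse: domination in $G'_r$ must be converted into a \emph{radius bound measured in $G$}. The subtlety is that distances in $G'_r$ can be strictly smaller than in $G$ because of shortcuts through the clique $U$, so one cannot merely argue about $G'_r$-distances. The argument circumvents this by exploiting that a star has a single dominator and by splitting on whether that dominator lies in $V(G)$ (an edge of $G'_r$ between two vertices of $V(G)$ is automatically an edge of $G$, giving radius $1$) or in $U$ (where the $U$-to-$V(G)$ adjacencies are, by construction, exactly the balls $B_G(v_j,r)$, giving radius $r$ with center $v_j \in V(G)$).
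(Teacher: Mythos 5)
Your proof is correct and takes essentially the same approach as the paper: both directions keep the same tree and simply add the clique $U$ to every bag (forward, with $u_{i_t}$ dominating each enlarged bag) or strip $U$ away (converse, with $v_j$ serving as a radius-$r$ center in $G$). The only cosmetic difference is in the converse, where the paper absorbs the case of a dominator $v_i \in V(G)$ via the containment $N_{G'_r}[v_i] \subseteq N_{G'_r}[u_i]$, while you handle it by a case split using the fact that $G'_r[V(G)] = G$; both arguments are valid.
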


\begin{proof}
If $tb(G) \leq r$ then we claim that starting from any tree decomposition $(T,{\cal X})$ of $G$ with breadth at most $r$, one obtains a star-decomposition of $G_r'$ by adding the clique $U$ in every bag $X_t, \ t\in V(T)$.
Indeed, in such case for every bag $X_t, \ t \in V(T)$, by the hypothesis there is $v_i \in V(G)$ such that $\max_{x \in X_t} dist_G(v_i,x) \leq r$, whence $X_t \cup U \subseteq N_{G'_r}[u_i]$.
Conversely, if $tb(G'_r) \leq 1$ then we claim that starting from any tree decomposition $(T',{\cal X}')$ of $G'_r$ with breadth at most one, one obtains a tree decomposition of $G$ with breadth at most $r$ by removing every vertex of the clique $U$ from every bag $X_t', \ t \in V(T')$. 
Indeed, in such case for every bag $X_t', \ t \in V(T')$, by the hypothesis there is $y \in V(G'_r)$ such that $X_t' \subseteq N_{G'_r}[y]$.
Furthermore, $y \in \{ u_i, v_i \}$ for some $1 \leq i \leq n$, and so, since $N_{G'_r}[v_i] \subseteq N_{G'_r}[u_i]$ by construction, therefore $X_t' \setminus U \subseteq N_{G'_r}(u_i) \setminus U = \{ x \in V(G) \mid dist_G(v_i,x) \leq r \}$.
\end{proof}

\subsection{Existence of specific elimination orderings}
\label{sec:order}

Independently from the remaining of the section, let us prove some interesting properties of graphs with tree-breadth one in terms of \emph{elimination orderings}. 
More precisely, a \emph{domination elimination ordering}~\cite{Dahlhaus1994} of a graph $G$ is a total ordering of its vertex-set, denoted by $v_1,v_2,\ldots,v_n$, so that for every $1 \leq i < n$, there is $j > i$ satisfying that $N_G(v_i) \cap \{ v_{i+1}, v_{i+2}, \ldots, v_n \} \subseteq N_G[v_j]$.
The existence of domination elimination orderings in some graph classes and their algorithmic applications has been studied in~\cite{Chepoi1998}.  
Let us prove that graphs with tree-breadth one all admit a domination elimination ordering.

\begin{lemma}
\label{lem:dom-order}
Let $G$ be such that $tb(G) \leq 1$, $G$ admits a domination elimination ordering.
\end{lemma}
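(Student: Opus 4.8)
The plan is to prove the existence of a domination elimination ordering by induction on the number of vertices, peeling off one vertex at a time and placing it at the front of the ordering. The key observation is that, by Corollary~\ref{cor:equiv-tb-star}, the hypothesis $tb(G) \leq 1$ guarantees a star-decomposition $(T,{\cal X})$ of $G$, and I want to use its tree structure to locate a vertex that is dominated (within the as-yet-unplaced vertices) by some other vertex. First I would root $T$ arbitrarily and consider a \emph{leaf} bag $X_t$ of $T$; since $(T,{\cal X})$ is a star-decomposition, there is a vertex $c \in X_t$ with $X_t \subseteq N_G[c]$, {\it i.e.}, $c$ dominates the whole bag.

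The main step is to find, inside a leaf bag, a vertex $v_1$ that occurs \emph{only} in that leaf bag (so that it is safe to remove), together with a dominator that survives. A vertex $v$ appears only in $X_t$ precisely when its subtree $T_v$ is the single node $\{t\}$. I would argue that such a private vertex exists in any leaf bag of a \emph{reduced} star-decomposition: by Lemma~\ref{lem:dominator-in-bag} I may assume the decomposition is reduced, so $X_t$ is not contained in its neighbouring bag, which forces at least one vertex $v_1$ of $X_t$ to lie in no other bag. I then take $v_1$ to be the first vertex of the ordering. Its full neighbourhood $N_G(v_1)$ is contained in $X_t \subseteq N_G[c]$, and I must ensure the dominator $c$ is distinct from $v_1$ (or handle the degenerate case). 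If $c \neq v_1$ then $N_G(v_1) \subseteq N_G[c]$ with $c$ placed later; if the only dominator of the bag is $v_1$ itself, then $X_t = N_G[v_1]$ and I instead choose the private vertex to be some other element of the bag, using that $X_t \setminus \{v_1\}$ is still dominated by $v_1$.

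Next I would remove $v_1$ and recurse on $G - v_1$. The crucial point is that $tb(G - v_1) \leq 1$ as well, so the inductive hypothesis applies: deleting $v_1$ and the node $t$ (now carrying only repeats of vertices in the neighbouring bag) from $(T,{\cal X})$ yields a valid star-decomposition of $G - v_1$. By induction $G - v_1$ has a domination elimination ordering $v_2,\ldots,v_n$, and prepending $v_1$ gives the desired ordering of $G$: for $i \geq 2$ the defining condition holds by the inductive ordering (note that dominators witnessed in $G - v_1$ remain dominators in $G$ since deleting a vertex only shrinks neighbourhoods among the surviving vertices), and for $i = 1$ it holds because $N_G(v_1) \cap \{v_2,\ldots,v_n\} \subseteq N_G(v_1) \subseteq N_G[c]$ with $c = v_j$ for some $j > 1$.

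The hard part, and the step I would scrutinize most carefully, is the existence of a \emph{private} vertex in a leaf bag together with a \emph{distinct} dominator that is not itself deleted. The reduced-decomposition argument cleanly gives a private vertex, but the interaction with the star (dominator) structure needs care in the degenerate case where the leaf bag is exactly a closed neighbourhood $N_G[v_1]$ and $v_1$ is the unique center: one must verify that some other vertex of the bag can play the role of the deleted vertex, or equivalently that removing a non-dominating vertex preserves both the star property of the remaining bag and the tree-decomposition axioms. Establishing that $tb(G - v_1) \leq 1$ via an explicit surgery on $(T,{\cal X})$ — rather than merely invoking that tree-breadth is monotone, which is not stated for vertex deletion in the excerpt — is the point where I expect the most bookkeeping.
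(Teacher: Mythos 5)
Your overall strategy is the same as the paper's (induct by peeling a vertex off a leaf bag of a reduced star-decomposition, using a vertex private to that bag), but the degenerate case is exactly where your argument has a genuine gap, and your fallback does not repair it. Suppose the private vertex $v_1$ is the \emph{unique} dominator of its leaf bag $X_t$, and moreover every other vertex of $X_t$ also belongs to the neighbouring bag $X_{t'}$. Then your proposed fix --- ``choose the private vertex to be some other element of the bag'' --- is vacuous: no other element of $X_t$ is private. Deleting a non-private $x \in X_t \cap X_{t'}$ instead is not licensed by anything you prove: its neighbourhood need not be contained in $X_t$ (so $v_1$ need not dominate $N(x)$ among the survivors), and you have no argument that $tb(G \setminus x) \leq 1$, since $x$ may be the unique dominator of other bags and, as you yourself note, tree-breadth is not known to be monotone under vertex deletion. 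The paper resolves precisely this configuration the other way around: it deletes $v_1$ itself, taking as the witness the dominator $u$ of the \emph{neighbouring} bag $X_{t'}$ --- this works because in this case $N(v_1) = X_t \cap X_{t'} \subseteq X_{t'} \subseteq N[u]$ --- and the surgery of deleting the node $t$ yields a valid star-decomposition of $G \setminus v_1$ exactly because $X_t \setminus v_1 \subseteq X_{t'}$.

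You also miss the tool that makes the remaining bookkeeping disappear: whenever $N(x) \subseteq N[w]$ for a surviving vertex $w$, the graph $G \setminus x$ \emph{is} the contraction $G / wx$, so $tb(G \setminus x) \leq 1$ follows at once from contraction-closedness (Lemma~\ref{lem:contraction-closed}), with no surgery on the decomposition at all. This is how the paper handles both the case where the leaf bag has a dominator $c \neq v_1$ (delete $v_1$, viewing $G \setminus v_1$ as $G/cv_1$) and the sub-case where some $x \in X_t \setminus v_1$ is itself private (delete $x$, viewing $G \setminus x$ as $G/v_1x$). Two smaller problems: your surgery in the non-degenerate case (deleting the whole node $t$) is wrong as stated, since other vertices of $X_t$ may be private to $X_t$ and would be lost --- you should only remove $v_1$ from $X_t$, or use the contraction argument; and the paper first disposes of the case where $G$ has a universal vertex, which guarantees that the neighbouring bag $X_{t'}$ exists whenever the degenerate case is invoked, a point your write-up never secures.
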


\begin{proof}
Assume $G$ has at least two vertices (or else, the lemma is trivial).
To prove the lemma, it suffices to prove the existence of $u,v \in V(G)$ distinct such that $N(v) \subseteq N[u]$ and $tb(G \setminus v) \leq 1$ (then, the lemma follows by induction on the order of the graph).

If $G$ admits a universal vertex $u$, then one can pick $v \in V(G) \setminus u$ arbitrary, $N(v) \subseteq N[u]$ because $u$ is universal in $G$, furthermore $tb(G \setminus v) \leq 1$ because $G \setminus v$ admits a universal vertex $u$.

Else, $G$ does not admit any universal vertex, let $(T,{\cal X})$ be a reduced tree decomposition of $G$ of breadth one, that is a star-decomposition by Lemma~\ref{lem:dominator-in-bag}.
Let $X_t, \ t \in V(T)$ be a leaf.
Since the tree decomposition is reduced, there must be $v \in X_t$ satisfying $T_v = \{ X_t \}$.
Now there are two cases.
\begin{itemize}
\item Suppose there is $u \in X_t \setminus v$ such that $X_t \subseteq N[u]$.
Then, $N(v) \subseteq X_t \subseteq N[u]$, and $tb(G \setminus v) \leq 1$ because $G \setminus v$ can be obtained from $G$ by contracting the edge $\{u,v\}$ and tree-breadth is contraction-closed by Lemma~\ref{lem:contraction-closed}.
\item Else, $X_t \subseteq N[v]$, and for every $x \in X_t \setminus v, \ X_t \not\subseteq N[x]$.
Let $t' \in V(T)$ be the unique node adjacent to node $t$ in $T$, that exists because $G$ does not admit any universal vertex and so, $T$ has at least two bags. 
Let us assume that for every $x \in X_t \setminus v$, $x \in X_t \cap X_{t'}$ (for otherwise, $N(x) \subseteq N[v]$ and $tb(G \setminus x) \leq 1$ because $G \setminus x$ can be obtained from $G$ by contracting the edge $\{v,x\}$ to $v$ and tree-breadth is contraction-closed by Lemma~\ref{lem:contraction-closed}).
In particular, let $u \in X_{t'}$ satisfy $X_{t'} \subseteq N[u]$.
Then, $N(v) = X_t \cap X_{t'} \subseteq N[u]$, furthermore $tb(G \setminus v) \leq 1$ because $(T \setminus t, {\cal X} \setminus X_t)$ is a star-decomposition of $G \setminus v$.
\end{itemize}\vspace{-15pt}\end{proof}

Note that for a graph to have tree-breadth one, it must satisfy the necessary condition of Lemma~\ref{lem:dom-order} and this can be checked in polynomial-time.
However, the existence of some domination elimination ordering is not a sufficient condition for the graph to have tree-breadth one.
Indeed, every grid has a domination elimination ordering but the \emph{tree-length} of the $n \times m$ grid is at least $\min\{n,m\} -1$~\cite{Dourisboure2007b} (recall that $tl(G) \leq 2tb(G)$ for any graph $G$).


The existence of a domination elimination ordering has some interesting consequences about the graph structure.
Let us recall one such a consequence about the \emph{cop-number} of the graph.

\begin{corollary}
\label{cor:cop-number}
For any graph $G$ with $tb(G) \leq 1$, $G$ has cop-number $\leq 2$ and the upper-bound is sharp.
\end{corollary}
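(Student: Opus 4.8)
Prove Corollary \ref{cor:cop-number}: any graph $G$ with $tb(G) \leq 1$ has cop-number at most $2$, and this bound is sharp.

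The plan is to exploit the domination elimination ordering guaranteed by Lemma~\ref{lem:dom-order}, since the cop-number is exactly the kind of invariant that is well-behaved under such orderings. The key structural fact I would invoke is the notion of a \emph{dismantlable} (or \emph{cop-win}) graph and its generalizations: a graph is cop-win (cop-number one) precisely when it admits a \emph{domination elimination ordering in the strong sense} where each vertex $v_i$ is dominated by a \emph{single} later neighbor $v_j$ with $N[v_i]\subseteq N[v_j]$ (the classical Nowakowski--Winkler/Quilliot characterization of retracts of dismantlable graphs). The ordering from Lemma~\ref{lem:dom-order} is weaker: it only requires $N(v_i)\cap\{v_{i+1},\dots,v_n\}\subseteq N[v_j]$, i.e.\ domination of the \emph{open} neighborhood restricted to later vertices, which need not give a true retraction step. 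So the first thing I would do is quantify how far this weaker ordering is from dismantlability, and then argue that the ``defect'' can be covered by a single additional cop.

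The cleanest route I would take is to use a known theorem from the cops-and-robbers literature relating cop-number to such elimination orderings — specifically results on graphs that are \emph{dismantlable up to bounded defect} or the fact that graphs admitting a domination elimination ordering have cop-number at most $2$ (this is essentially the content of \cite{Chepoi1998}, where domination orderings are studied precisely for their algorithmic and structural consequences). Concretely, I would run a two-cop strategy along the reverse ordering $v_n, v_{n-1}, \dots, v_1$: maintain the invariant that after finitely many rounds the robber is confined to the subgraph induced by $\{v_1,\dots,v_i\}$ for a strictly decreasing sequence of thresholds $i$, using the two cops to ``shadow'' the dominating vertex $v_j$ and block the robber from escaping through $v_i$. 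Each time the robber is pushed below a threshold, one cop occupies a dominator guaranteed by the ordering while the other maintains pressure; the two cops suffice because the single-dominator property that would allow one cop fails only for the open-neighborhood/later-vertex subtlety, and a second cop compensates for the at most one ``missing'' vertex ($v_i$ itself) not covered by $N[v_j]$.

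The main obstacle I anticipate is making the two-cop shadowing strategy precise enough to handle the gap between the open-neighborhood domination of Lemma~\ref{lem:dom-order} and the closed-neighborhood domination required for a genuine retraction: I must check that at each elimination step the robber cannot slip back to an already-eliminated vertex, which requires that once a vertex is vacated the cops guard all re-entry points, and this is exactly where the second cop is used. For sharpness, it suffices to exhibit a single graph with $tb(G)\le 1$ and cop-number exactly $2$; a natural candidate is a graph built from a cycle with chords admitting a star-decomposition but not cop-win (any non-dismantlable graph of tree-breadth one, e.g.\ a suitable bipartite-like gadget containing an induced $C_4$ with appropriate domination structure), which shows cop-number $1$ is impossible and hence the bound $2$ is attained. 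I would defer to \cite{Chepoi1998} for the precise cop-number consequence of domination elimination orderings and present the argument above as the justification, emphasizing that the reverse-ordering two-cop strategy is the operative mechanism.
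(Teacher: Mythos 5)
Your overall route is the same as the paper's: invoke Lemma~\ref{lem:dom-order} to obtain a domination elimination ordering, appeal to a known cops-and-robbers theorem saying that such an ordering forces cop-number at most $2$, and finish with a sharpness example. The paper does exactly this, citing \cite[Theorem 4]{Clarke2005} for the cop-number consequence and taking $G = C_4$ for sharpness.

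Two points in your write-up need fixing, and both matter because you defer the hard work to them. First, the theorem you need is not ``essentially the content of \cite{Chepoi1998}'': that reference (as used in the paper) concerns the existence of domination elimination orderings in graph classes and their algorithmic applications, not a cop-number bound; the correct citation is Clarke's result \cite[Theorem 4]{Clarke2005}. Since your own two-cop ``shadowing'' strategy is only a sketch --- the invariant that the robber can never slip back into vacated vertices is asserted rather than proved, and this is precisely the delicate point when only open-neighborhood domination of later vertices is available --- the entire weight of the argument rests on the citation, so it must be the right one. Second, your sharpness example is left as ``a suitable bipartite-like gadget containing an induced $C_4$,'' which is both unverified and unnecessarily complicated: $C_4$ itself works. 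It has tree-breadth one (take two adjacent bags $\{v_1,v_2,v_3\}$ and $\{v_1,v_3,v_4\}$, each of radius one), and its cop-number is exactly $2$ because it has no dominated vertex (no corner), hence is not cop-win. With these two repairs your argument coincides with the paper's proof.
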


\begin{proof}
By Lemma~\ref{lem:dom-order}, $G$ admits a domination elimination ordering.
Therefore, by~\cite[Theorem 4]{Clarke2005} $G$ has cop-number $ \leq 2$. 
One can prove the sharpness of the upper-bound by setting $G := C_4$, the cycle with four vertices. 
\end{proof}

\subsection{Properties of particular decompositions}

In the following, it will be useful not only to constrain the properties of the star-decomposition whose existence we are interested in, but also to further constrain the properties of the graph $G$ that we take as input.
Let us first remind basic terminology about graph separators.

\begin{definition}
\label{def:min-sep}
Let $G=(V,E)$ be connected, a \emph{separator} of $G$ is any subset $S \subseteq V$ such that $G \setminus S$ has at least two connected components.

In particular, a \emph{full component} for $S$ is any connected component $C$ of $G \setminus S$ satisfying $N(C) = S$.
A \emph{block} is any induced subgraph $G[C \cup S]$ with $S$ being a separator and $C$ being a full component for $S$.  

Finally, a \emph{minimal separator} is a separator with at least two full components. 
\end{definition}

Our objective is to prove that if a graph $G$ has tree-breadth one then so do all its blocks.
In fact, we will prove a slightly more general result:

\begin{lemma}\label{lem:sep-keep}
Let $G=(V,E)$ be a graph, $S \subseteq V$ be a separator, and $W \subseteq V \setminus S$ be the union of some connected components of $G \setminus S$.
If $tb(G) = 1$ and $W$ contains a full component for $S$, then $tb(G[W \cup S]) = 1$.
More precisely if $(T,{\cal X})$ is a tree decomposition of $G$ of breadth one, then $(T, \{ X_t \cap (W \cup S) \mid X_t \in {\cal X}\})$ is a tree decomposition of $G[W \cup S]$ of breadth one.
\end{lemma}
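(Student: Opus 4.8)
The plan is to verify directly that the natural ``trace'' of a breadth-one tree decomposition onto the vertex set $W \cup S$ is again a valid tree decomposition, and that it still has breadth one. Let $(T,{\cal X})$ be a tree decomposition of $G$ of breadth one, and set ${\cal X}' = \{ X_t \cap (W \cup S) \mid X_t \in {\cal X}\}$. The first task is routine: restricting every bag to $W \cup S$ preserves the three tree-decomposition axioms for the induced subgraph $G[W \cup S]$, since every edge of $G[W \cup S]$ is an edge of $G$ hence covered by some bag (and both endpoints survive the restriction), every vertex of $W \cup S$ survives in its bags, and each $T'_x$ for $x \in W \cup S$ is the same subtree $T_x$ as before, hence still connected. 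So $(T,{\cal X}')$ is a tree decomposition of $G[W \cup S]$.

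The real content is showing it has \emph{breadth one}, i.e. that for every $t$ there is a vertex $y \in W \cup S$ dominating $X_t \cap (W \cup S)$ \emph{within $G[W \cup S]$}. Here I would exploit that $(T,{\cal X})$ has breadth one, so for each bag $X_t$ there is some $y \in V$ with $X_t \subseteq N_G[y]$. The danger is twofold: the dominator $y$ might lie outside $W \cup S$, and even if $y \in W \cup S$, distances in $G[W \cup S]$ might differ from distances in $G$, so domination in $G$ need not transfer to domination in the induced subgraph. The hypothesis that $W$ contains a \emph{full} component $C$ for $S$ is exactly what breaks the first difficulty, and the key structural observation I would establish is: if a vertex $y$ dominates (in $G$) a set that meets $W$, then $y \in W \cup S$. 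Indeed, any neighbour of a vertex of $W \setminus S$ lies in $W \cup S$ because $S$ separates $W$ from the rest; so if $X_t \cap (W \setminus S) \neq \emptyset$ then its $G$-dominator $y$ must be in $W \cup S$, and one can moreover relocate $y$ into $S$ (using the full component $C$, whose neighbourhood is all of $S$) when the bag is entirely contained in $S \cup (\text{other components})$.

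Concretely, I would split into cases according to whether $X_t \cap (W \setminus S)$ is empty. If it is nonempty, pick the $G$-dominator $y$ of $X_t$; by the separation argument $y \in W \cup S$ and every vertex of $X_t \cap (W\cup S)$ that is a $G$-neighbour of $y$ is still a neighbour in the induced subgraph $G[W\cup S]$ (both endpoints lie in $W \cup S$), so $y$ dominates the restricted bag inside $G[W\cup S]$. If $X_t \cap (W \setminus S) = \emptyset$, then the restricted bag is contained in $S$; here I would use the full component $C \subseteq W$ to exhibit a dominator lying in $S$ itself, arguing that $S$ (being $N(C)$) is contained in the neighbourhood of some bag of the decomposition that also meets $C$, so a vertex dominating a bag meeting $C$ lands in $W \cup S$ and dominates the $S$-portion in the induced subgraph. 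The main obstacle I anticipate is precisely this second case, i.e. controlling bags that project entirely into the separator $S$: one must make sure the chosen dominator both lies in $W \cup S$ and realises its domination through edges surviving in $G[W \cup S]$, and it is here that the ``full component'' hypothesis (rather than an arbitrary union of components) is indispensable.
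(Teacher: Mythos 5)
Your proposal is correct and follows essentially the same route as the paper: restrict every bag to $W \cup S$, observe that a bag meeting $W$ forces its $G$-dominator into $W \cup S$ (so domination survives in the induced subgraph), and handle bags whose trace lies entirely in $S$ via the full component. The single step you assert rather than prove -- that the $S$-portion of such a bag is contained in one bag of the original decomposition which also meets $C$ -- is exactly where the paper invokes the Helly property for subtrees (Lemma~\ref{lem:helly}): the subtrees $T_x$, $x \in X_t \cap S$, pairwise intersect (at $t$ itself), and each meets the subtree $T_C$ of bags intersecting $C$ because $x \in N(C) = S$, so all of them share a node, and the dominator of that node's bag must lie in $C \cup S \subseteq W \cup S$.
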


\begin{proof}
Let $(T,{\cal X})$ be a tree decomposition of breadth one of $G$.
Let us remove all vertices in $V \setminus (W \cup S)$ from bags in $(T,{\cal X})$, which yields a tree decomposition $(T',{\cal X}')$ of the induced subgraph $G[W \cup S]$.
To prove the lemma, we are left to prove that $(T',{\cal X}')$ has breadth one.
Let $X_t$ be a bag of $(T',{\cal X}')$.
By construction, $X_t$ is fully contained into some bag of $(T,{\cal X})$, so it has radius one in $G$.
Let $v \in V$ be such that $X_t \subseteq N_G[v]$.
If $v \in W \cup S$, then we are done.
Else, since for all $x \notin S \cup W, N(x) \cap (S \cup W) \subseteq S$ (because $S$ is a separator by the hypothesis), we must have that $X_t \subseteq S$.
Let $A \subseteq W$ be a full component for $S$, that exists by the hypothesis, and let $T_A$ be the subtree that is induced by the bags intersecting the component.
Since we have that the subtree $T_A$ and the subtrees $T_x, x \in X_t$ pairwise intersect --- because for all $x \in X_t$, $x \in S$ and so, $x$ has a neighbour in $A$ ---, then by the Helly property (Lemma~\ref{lem:helly}) $T_A \cap \left(\bigcap_{x \in X_t} T_x\right) \neq \emptyset$ {\it i.e.}, there exists a bag in $(T,{\cal X})$ containing $X_t$ and intersecting $A$.
Moreover, any vertex dominating this bag must be either in $S$ or in $A$, so in particular there exists $u \in A \cup S$ dominating $X_t$, which proves the lemma.
\end{proof}

Lemma~\ref{lem:sep-keep} implies that, under simple assumptions, a graph of tree-breadth one can be disconnected using any (minimal) separator, and the components must still induce subgraphs with tree-breadth one.
The converse does not hold in general, yet there are interesting cases when it does.

\begin{lemma}\label{lem:cliqueSep}
Let $G=(V,E)$ be a graph, $S \subseteq V$ be a clique-minimal-separator and $A$ be a full component for $S$.
Then, $tb(G) = 1$ if and only if both $tb(G[A \cup S]) = 1$ and $tb(G[V \setminus A]) = 1$.
\end{lemma}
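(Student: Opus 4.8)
The plan is to prove the two implications separately: the forward direction will be an immediate consequence of Lemma~\ref{lem:sep-keep}, while the converse is a gluing argument that crucially exploits that $S$ is a \emph{clique}. For the forward direction, assume $tb(G) = 1$. Applying Lemma~\ref{lem:sep-keep} with $W := A$ — which is itself a full component for $S$, hence contains one — immediately yields $tb(G[A \cup S]) = 1$. For the other part I would set $W := V \setminus (A \cup S)$, the union of all connected components of $G \setminus S$ other than $A$; since $S$ is a \emph{minimal} separator (Definition~\ref{def:min-sep}) it admits at least two full components, so at least one full component $B \neq A$ is contained in $W$. Lemma~\ref{lem:sep-keep} then gives $tb(G[W \cup S]) = tb(G[V \setminus A]) = 1$.

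For the converse, I would work with star-decompositions through Corollary~\ref{cor:equiv-tb-star}. Let $(T_1, {\cal X}_1)$ and $(T_2, {\cal X}_2)$ be star-decompositions of $G[A \cup S]$ and $G[V \setminus A]$ respectively. Since $S$ is a clique, for every pair $x, y \in S$ the subtrees $T_x, T_y$ intersect, so by the Helly property (Lemma~\ref{lem:helly}) the subtrees $\{T_x \mid x \in S\}$ share a common node in \emph{each} decomposition; pick bags $B_1 \in {\cal X}_1$ and $B_2 \in {\cal X}_2$ with $S \subseteq B_1$ and $S \subseteq B_2$. I then form $(T, {\cal X})$ by taking the disjoint union of $T_1$ and $T_2$ together with all their bags and adding a single tree-edge between the nodes carrying $B_1$ and $B_2$. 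This is the classical construction for gluing two decompositions along a clique separator.

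It remains to verify the three axioms of a tree decomposition and control the breadth. Coverage of the vertex set and of the edges is clear: because $S$ separates $A$ from $V \setminus (A \cup S)$, every edge of $G$ lies inside $A \cup S$ or inside $V \setminus A$, hence is covered by ${\cal X}_1$ or ${\cal X}_2$. Connectivity of the subtrees is the only delicate point: a vertex of $A$ (resp.\ of $V \setminus (A \cup S)$) occurs only in $T_1$ (resp.\ $T_2$), whereas a vertex $v \in S$ belongs to both $B_1$ and $B_2$, so adding the edge $\{B_1, B_2\}$ joins its two subtrees into a single subtree. Finally, the bags of $(T, {\cal X})$ are exactly those of ${\cal X}_1$ and ${\cal X}_2$; each is dominated by one of its own vertices in the corresponding induced subgraph, and since every edge of an induced subgraph is an edge of $G$, the same vertex dominates it in $G$. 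Thus $(T, {\cal X})$ is a star-decomposition of $G$, whence $tb(G) \leq 1$; as $G$ contains an edge, $tb(G) = 1$.

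The main obstacle — and the unique place where the clique hypothesis is used — is guaranteeing that each of the two decompositions contains a bag holding all of $S$, so that they can be glued along a single common bag without breaking the connectivity of the subtrees $T_v$ for $v \in S$. This is exactly what the Helly property supplies once $S$ is a clique, and it is also why the converse fails for arbitrary separators.
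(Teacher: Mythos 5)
Your proof is correct and follows essentially the same route as the paper: the forward direction is the same application of Lemma~\ref{lem:sep-keep} to $W = A$ and $W = V \setminus (A \cup S)$ (using minimality of $S$ to get a second full component), and the converse is the same Helly-property argument (Lemma~\ref{lem:helly}) producing a bag containing the clique $S$ in each decomposition, glued by a single tree-edge. The only cosmetic difference is that you phrase the converse with star-decompositions via Corollary~\ref{cor:equiv-tb-star}, whereas the paper glues breadth-one tree decompositions directly; both verifications are immediate because adjacency in an induced subgraph is adjacency in $G$.
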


\begin{proof}
By the hypothesis $V \setminus (A \cup S)$ contains a full component because $S$ is a minimal separator.
Therefore, if $G$ has tree-breadth one, then so do $G[A \cup S]$ and $G[V \setminus A]$ by Lemma~\ref{lem:sep-keep}.
Conversely, suppose that we have both $tb(G[A \cup S]) = 1$ and $tb(G[V \setminus A]) = 1$.
Let $(T^1,{\cal X}^1)$ be a tree decomposition of $G[A \cup S]$ with breadth one, let $(T^2,{\cal X}^2)$ be a tree decomposition of $G[V \setminus A]$ with breadth one.
Then for every $i \in \{1,2\}$ we have that since $S$ is a clique the subtrees $T^i_s, \ s \in S$, pairwise intersect, so by the Helly Property (Lemma~\ref{lem:helly}) $\bigcap_{s \in S}T^i_s \neq \emptyset$ {\it i.e.}, $S$ is fully contained into some bag of $(T^1,{\cal X}^1)$ and it is fully contained into some bag of $(T^2,{\cal X}^2)$.
Moreover, $(A \cup S) \cap (V \setminus A) = S$, therefore a tree decomposition of $G$ with breadth one can be obtained by adding an edge between some bag of $(T^1,{\cal X}^1)$ containing $S$ and some bag of $(T^2,{\cal X}^2)$ containing $S$.
\end{proof}

Recall that computing the clique-minimal-decomposition of a graph $G$ can be done in ${\cal O}(nm)$-time, where $m$ denotes the number of edges~\cite{Berry2010}.
By doing so, one replaces a graph $G$ with the maximal subgraphs of $G$ that have no clique-separator, \textit{a.k.a.} \emph{atoms}.
Therefore, we will assume in the remaining of the proofs that there is no clique-separator in the graphs that we will study, we will call them {\it prime graphs}.

\subsection{Application to bipartite graphs}
\label{sec:bipartite}

In this section, we describe an ${\cal O}(nm)$-time algorithm so as to decide whether a prime bipartite graph has tree-breadth one.
This combined with Lemma~\ref{lem:cliqueSep} proves that it can be decided in polynomial-time whether a bipartite graph has tree-breadth one.

We will first describe a more general problem and how to solve it in polynomial-time.

\paragraph{Tree decompositions with constrained set of bags.}
Our algorithm for bipartite graphs makes use of the correspondance between tree decompositions and triangulations of a graph.
Indeed, recall that any reduced tree decomposition $(T,{\cal X})$ of a graph $G$ is a clique-tree for some chordal supergraph $H$ of $G$ whose maximal cliques are the bags of ${\cal X}$.
Conversely, for any chordal supergraph $H$ of $G$, every clique-tree of $H$ is a tree decomposition of $G$ whose bags are the maximal cliques of $H$~\cite{Gavril1974}.
Therefore as shown below, the following subproblem can be solved in polynomial-time:

\begin{center}
    \fbox{\begin{minipage}{.95\linewidth}\label{pb:td-bags-given}
        \begin{problem}\ 
          \begin{description}
          \item[Input:] a graph $G$, a family ${\cal X}$ of subsets of $V(G)$.
          \item[Question:] Does there exist a tree $T$ such that $(T,{\cal X})$ is a tree decomposition of $G$ ?
          \end{description}
        \end{problem}     
      \end{minipage}}
  \end{center}

Let us assume w.l.o.g. that no subset of ${\cal X}$ is properly contained into another one.
To solve Problem~\ref{pb:td-bags-given}, it suffices to make every subset $X \in {\cal X}$ a clique in $G$, then to verify whether the resulting supergraph $H$ of $G$ is a chordal graph whose maximal cliques are exactly the sets in ${\cal X}$. 
Since chordal graphs can be recognized in linear-time, and so can be enumerated their maximal cliques~\cite{Gavril1972}, therefore Problem~\ref{pb:td-bags-given} can be solved in polynomial-time.

\paragraph{The algorithm for bipartite graphs.}
Now, given a bipartite graph $G$, we aim to exhibit a family ${\cal X}$ so that $tb(G) = 1$ if and only if there is a star-decomposition of $G$ whose bags are ${\cal X}$.
By doing so, we will reduce the recognition of bipartite graph with tree-breadth at most one to the more general Problem~\ref{pb:td-bags-given}.

\begin{lemma}
\label{lem:bipartite-tb}
Let $G = (V_0 \cup V_1, E)$ be a prime bipartite graph with tree-breadth one.
There is $(T,{\cal X})$ a star-decomposition of $G$ such that either ${\cal X} = \{ N[v_0] \mid v_0 \in V_0 \}$, or ${\cal X} = \{ N[v_1] \mid v_1 \in V_1 \}$.
\end{lemma}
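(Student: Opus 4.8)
The goal is to show that a prime bipartite graph $G = (V_0 \cup V_1, E)$ of tree-breadth one admits a star-decomposition all of whose bags are closed neighbourhoods of vertices lying on a single side of the bipartition. The plan is to start from an arbitrary star-decomposition $(T,{\cal X})$ of $G$ (which exists by Corollary~\ref{cor:equiv-tb-star} and which we may assume reduced, hence a genuine star-decomposition by Lemma~\ref{lem:dominator-in-bag}), and to analyse the dominators of its bags. For each bag $X_t$ there is a vertex $d_t$ with $X_t \subseteq N[d_t]$; after possibly replacing the bag by $N[d_t]$, we may assume every bag is the closed neighbourhood of its dominator. The key observation is that in a \emph{bipartite} graph, $N[d_t]$ consists of $d_t$ together with vertices on the opposite side, so a bag $N[d_t]$ with $d_t \in V_0$ contains exactly one $V_0$-vertex (namely $d_t$) and otherwise only $V_1$-vertices, and symmetrically. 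I would use this rigid structure to argue that all dominators can be forced onto the same side.

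First I would show that once the bags are closed neighbourhoods, the dominators may be taken to lie all in $V_0$ (or all in $V_1$). The obstacle is a bag dominated by a vertex $d_t \in V_1$ sitting in a decomposition whose other dominators are in $V_0$. Here I expect to exploit primality: a clique in a bipartite graph has at most two vertices (an edge), so any separator that is a clique is just a single edge $\{x,y\}$ with $x \in V_0, y \in V_1$, and such a separator is present wherever two adjacent bags meet. Because $G$ is prime (no clique-separator), the intersection $X_t \cap X_{t'}$ of adjacent bags cannot be a clique unless it is empty or a single vertex; I would leverage this to constrain how bags overlap and to rule out mixed-side dominators, or to reassign a $V_1$-dominated bag $N[d_t]$, $d_t\in V_1$, to the family $\{N[v_1] \mid v_1 \in V_1\}$ consistently across the whole tree. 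The main technical content is to prove that the two options cannot genuinely coexist: if some bag is forced to be $N[v_0]$ and another forced to be $N[v_1]$ with both forcings rigid, then primality is violated.

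The cleanest route I would pursue is the following. Take the reduced star-decomposition and replace each bag $X_t$ by $N[d_t]$ for a chosen dominator $d_t$; verify this is still a tree decomposition (each $N[d_t] \supseteq X_t$, so edge- and vertex-coverage are preserved, and one checks the subtree property survives, shrinking back to a reduced decomposition if needed). Then I would argue by the bipartite structure that the set of all dominators $\{d_t\}$ can be taken entirely within $V_0$ \emph{or} entirely within $V_1$: suppose not, so both sides occur; pick adjacent bags $N[d_t]$, $N[d_{t'}]$ with $d_t \in V_0$, $d_{t'} \in V_1$. Their intersection is a separator of $G$; since $N[d_t] \cap N[d_{t'}] \subseteq N[d_t]$ contains $d_t$ and some $V_1$-neighbours, and examining adjacencies forces this intersection to be a clique (an edge) in the prime graph $G$, contradicting primality. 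This contradiction pins all dominators to one side. Finally, completing to \emph{every} vertex of the chosen side: I would observe that each $v_0 \in V_0$ has $N[v_0]$ forced into the family because $v_0$ must appear in some bag and the rigidity of bipartite closed neighbourhoods identifies that bag as $N[v_0]$.

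\textbf{Main obstacle.} I expect the genuinely delicate step to be showing the dominators can be made uniform across one side, i.e. ruling out a legitimately mixed star-decomposition. The danger is that locally either choice of side works, and only a global/primality argument forbids mixing; making the edge $X_t \cap X_{t'}$ into a clique-separator argument fully rigorous — in particular handling the boundary cases where the intersection is a single vertex or empty, and ensuring the reassignment to one side does not break the subtree (Helly) property via Lemma~\ref{lem:helly} — is where the real care is needed. The rest is bookkeeping about bipartite closed neighbourhoods.
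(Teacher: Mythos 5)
Your central idea is the right one, and it is the paper's: if two \emph{adjacent} bags were dominated by vertices $d_t\in V_0$ and $d_{t'}\in V_1$ on opposite sides, then their intersection would be contained in $N[d_t]\cap N[d_{t'}]\subseteq\{d_t,d_{t'}\}$, hence be a clique (empty, a vertex, or an edge, since $d_t\in X_{t'}$ forces $d_t\sim d_{t'}$), and being the intersection of adjacent bags of a reduced tree decomposition of a connected graph it is a clique-separator, contradicting primality. Your closing observation is also sound: a bag of the form $N[d]$ with $d\in V_0$ contains exactly one $V_0$-vertex, so any $v_0\in V_0$ lying in it must equal $d$. The genuine gap is your replacement step, and it is not bookkeeping. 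Enlarging the bags of a tree decomposition to $N[d_t]$ does \emph{not} in general preserve the subtree (running-intersection) property: a vertex $u$ may be adjacent to the dominators of two far-apart nodes of $T$ while some bag on the path between them, after replacement, fails to contain $u$. The only way to rule this out is to \emph{already} know that all dominators lie on one side: then each bag contains exactly one $V_0$-vertex (its dominator), each fiber $\{t : d_t=v_0\}$ is a subtree of $T$ (because $d_t$ lies in its own bags, and any intermediate bag's unique $V_0$-vertex must then be $v_0$), and each such fiber meets $T_u$ for every $u\in N(v_0)$ by edge coverage, so the enlarged bag-sets are unions of subtrees all meeting the subtree $T_u$, hence connected. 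But in your write-up the replacement is performed \emph{before} the uniformity claim, and your uniformity argument is phrased on the replaced bags (you need them to form a tree decomposition in order to say that the intersection of adjacent bags separates $G$). As structured, the two steps each presuppose the other.

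The fix is short and essentially lands you on the paper's proof: run the clique-separator argument directly on the \emph{original} bags ($X_t\cap X_{t'}\subseteq N[d_t]\cap N[d_{t'}]\subseteq\{d_t,d_{t'}\}$ works verbatim), and only afterwards identify bags with closed neighbourhoods. The paper sidesteps the expansion issue altogether by taking a star-decomposition with the \emph{minimum number of bags}: once all dominators are in $V_0$, every bag containing a fixed $v_0\in V_0$ lies inside $N[v_0]$ (by the bipartite rigidity you noted), so the whole subtree $T_{v_0}$ could be contracted into a single bag without losing the star property; minimality forces $v_0$ to lie in a unique bag, and edge coverage then forces that bag to equal $N[v_0]$. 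Either route completes your plan, but the ``one checks the subtree property survives'' step is precisely where the content lies, and it cannot be discharged in the order you propose.
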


\begin{proof}
Let $(T,{\cal X})$ be a star-decomposition of $G$, that exists by Lemma~\ref{lem:dominator-in-bag}, minimizing the number $|{\cal X}|$ of bags.
Suppose there is some $v_0 \in V_0$, there is $t \in V(T)$ such that $X_t \subseteq N_G[v_0]$ (the case when there is some $v_1 \in V_1$, there is $t \in V(T)$ such that $X_t \subseteq N_G[v_1]$ is symmetrical to this one).
We claim that for every $t' \in V(T)$, there exists $v_0' \in V_0$ satisfying $X_{t'} \subseteq N_G[v_0']$.
By contradiction, let $v_0 \in V_0, v_1 \in V_1$, let $t,t'\in V(T)$ be such that $X_t \subseteq N_G[v_0], X_{t'} \subseteq N_G[v_1]$.
By connectivity of the tree $T$ we may assume w.l.o.g. that $\{t,t'\} \in E(T)$.
Moreover, $N_G(v_0) \cap N_G(v_1) = \emptyset$ because $G$ is bipartite.
Therefore, $X_t \cap X_{t'} \subseteq \{ v_0, v_1 \}$, and in particular if $X_t \cap X_{t'} = \{ v_0, v_1 \}$ then $v_0,v_1$ are adjacent in $G$.
However, by the properties of a tree decomposition this implies that $X_t \cap X_{t'}$ is a clique-separator (either an edge or a single vertex), thus contradicting the fact that $G$ is prime.

Now, let $v_0 \in V_0$ be arbitrary.
We claim that there is a unique bag $X_t, \ t \in V(T)$, containing $v_0$.
Indeed, any such bag $X_t$ must satisfy $X_t \subseteq N_G[v_0]$, whence the subtree $T_{v_0}$ can be contracted into a single bag $\bigcup_{t \in T_{v_0}} X_t$ without violating the property for the tree decomposition to be a star-decomposition.
As a result, the unicity of the bag $X_t$ follows from the minimality of $|{\cal X}|$.
Finally, since $X_t$ is unique and $X_t \subseteq N_G[v_0]$, therefore $X_t = N_G[v_0]$ and so, ${\cal X} = \{ N[v_0] \mid v_0 \in V_0 \}$.
\end{proof}

We can easily deduce from Lemma~\ref{lem:bipartite-tb} the following algorithm for deciding whether a prime bipartite graph $G$ has tree-breadth one.
Let $(V_0,V_1)$ be the (unique) bipartition of the vertex-set of $G$ into two stable sets.
Let ${\cal X}_0 = \{ N[v_0] \mid v_0 \in V_0 \}$, let ${\cal X}_1 = \{ N[v_1] \mid v_1 \in V_1 \}$.
By Lemma~\ref{lem:bipartite-tb}, $tb(G) = 1$ if and only if one of $(G,{\cal X}_0), (G,{\cal X}_1)$ is a yes-instance of Problem~\ref{pb:td-bags-given}.

\section{Algorithm for planar graphs}
\label{sec:planar}

We are now ready to present our main result.
In this section, we describe a quadratic-time algorithm for deciding whether a prime planar graph has tree-breadth one. 
Overall, we claim that it gives us a quadratic-time algorithm for deciding whether a general planar graph has tree-breadth one.
Indeed, the clique-decomposition of a planar graph takes ${\cal O}(n^2)$-time to be computed, furthermore the disjoint union of the atoms has ${\cal O}(n+m)$ vertices~\cite{Berry2010}, that is ${\cal O}(n)$ for planar graphs.

\medskip
Roughly, we will construct a star-decomposition of the graph by increments.
The main principle of the recursive algorithm is to find a particular vertex, called {\it leaf-vertex}.
Informally, it extracts a new bag of the star-decomposition from some ball around the leaf-vertex.
Then, depending on the case, either the leaf-vertex vertex is removed or some edge is added or contracted.
In both cases, the resulting graph remains prime and planar and has tree-breadth one if and only if the initial one has tree-breadth one. 

We prove that each inductive step takes a linear time. Moreover, we prove that there are at most a linear number of recursive iterations (Lemma~\ref{lem:ub-steps}). 

\medskip
There are three kinds of leaf-vertices ({\it e.g.}, see Figure~\ref{fig:leaf-vertex}). 
\begin{definition}\label{def:leafVertex}
Let $G=(V,E)$ be a graph. A vertex $v$ is a \emph{leaf-vertex} if one of the following conditions hold.
\begin{description}
\item[Type 1.]  $N(v)$ induces an $a_vb_v$-path for some $a_v,b_v \in V \setminus \{v\}$, denoted by $\Pi_v$, of length at least $3$ and there exists $d_v \in V \setminus \{v\}$ such that $N(v) \subseteq N(d_v)$, i.e., $d_v$ dominates $\Pi_v$. 
\item [Type 2.] $N(v)$ induces a path, denoted by $\Pi_v=(a_v,b_v,c_v)$,  of length $2$. 
\item[Type 3.] $N(v)$ consists of two non adjacent vertices $a_v$ and $c_v$, and there exists $b_v \in (N(a_v) \cap N(c_v)) \setminus \{v\}$. 
\end{description} 
\end{definition}

\begin{figure}[h!]
 \centering
 \includegraphics[width=0.75\textwidth]{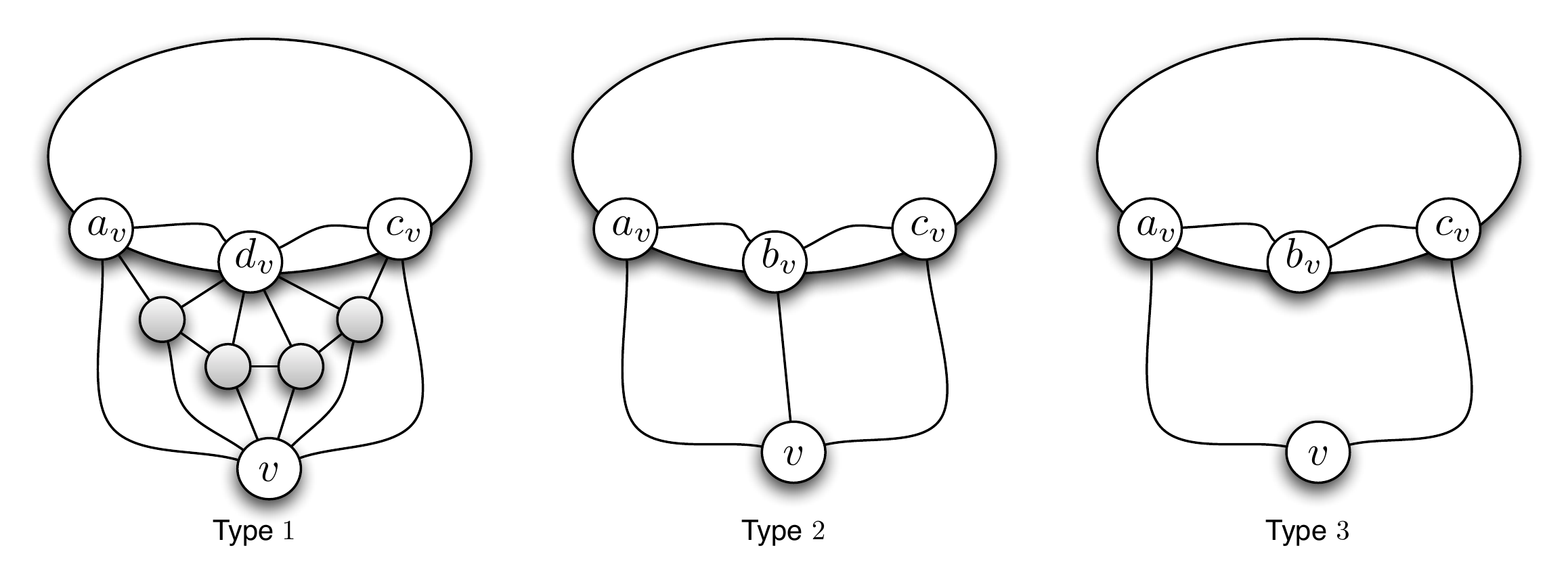}
 \caption{The three kinds of leaf-vertices.}
 \label{fig:leaf-vertex}
\end{figure}


We are now ready to describe the algorithm. 

\subsection{Algorithm \texttt{Leaf-BottomUp}}

Let $G=(V,E)$ be prime planar graph.
Assume $G$ has at least $7$ vertices (else, it is easy to conclude).

\begin{enumerate}[label = Step \arabic*, itemsep=15pt]
	
\item\label{step:find-leaf-vertex}
The first step is to find a leaf-vertex in $G$.
In Section~\ref{sec:complexity-leaf-vertex}, we describe how to decide whether $G$ has a leaf-vertex in linear-time. 

	\begin{itemize}[itemsep=10pt]
		
		\item if $G$ has no leaf-vertex, then, by Theorem~\ref{th:noP3}, no minimal separator of $G$ induces a path of length $2$. 
		Therefore, by Lemma~\ref{lem:2bags}, $tb(G)=1$ only if $G$ has a star-decomposition with at most $2$ bags.
		In that case, Algorithm \texttt{Leaf-BottomUp} checks whether it exists a star-decomposition with at most $2$ bags, which can be done in quadratic time (see Lemma~\ref{lem:complexity-2bags}).
		If it exists, then $tb(G)=1$.
		Otherwise, $tb(G)>1$. 
		
		\item otherwise, let $v$ be a leaf-vertex of $G$ and go to~\ref{step:type-1} if $v$ is of Type $1$ and go to~\ref{step:type-2-3} otherwise.
		
	\end{itemize}

\item\label{step:type-1}{\bf Case $v$ is of Type $1$.}
Let $\Pi_v$ and $d_v$ be defined as in Definition~\ref{def:leafVertex}.
If $V = N[v] \cup \{d_v\}$ then trivially $tb(G) = 1$. 
Else by Theorem~\ref{th:Type1}, $G'$ is prime and planar, where $G'$ is the graph obtained from $G \setminus v$ by contracting the internal nodes of $\Pi_v$ to a single edge, and $tb(G)=1$ if and only if $tb(G')=1$. In that case, Algorithm \texttt{Leaf-BottomUp} is recursively applied on $G'$. 

\item\label{step:type-2-3}{\bf Case $v$ is of Type $2$ or $3$.}
Let $a_v,b_v,c_v$ be defined as in Definition~\ref{def:leafVertex}.

In that case, Algorithm \texttt{Leaf-BottomUp} checks whether $G \setminus v$ is prime.
By Theorem~\ref{th:GminusVprime?}, for any clique minimal separator $S$ of $G \setminus v$ (if any), there exists $u_v \in V \setminus \{ a_v, b_v, c_v, v \}$ such that $S=\{b_v,u_v\}$.  
Therefore, this can be checked in linear time (by checking with a Depth-First-Search whether there is a cut-vertex of $G \setminus \{ a_v, b_v, c_v, v \}$ in the neighbors of $b_v$).
If $G \setminus v$ is prime then go to~\ref{step:prime-case}, else go to~\ref{step:new-clique-sep}.

	\begin{enumerate}[label=\labelenumi.\arabic*, itemsep=10pt]
		
		\item\label{step:prime-case}{\bf Case $v$ is of Type $2$ or $3$ and $G \setminus v$ is prime.}
		There are $4$ cases that can be determined in linear-time.
		
			\begin{enumerate}[label=(\alph*), itemsep=5pt]
				
				\item\label{step:many-neighbours}{\bf Case $|N(a_v) \cap N(c_v)| \geq 3$ in $G \setminus v$, or there exists a minimal separator $S\subseteq (N(a_v) \cap N(c_v)) \cup \{a_v,c_v\}$ in $G \setminus v$ and $\{a_v,c_v\} \subseteq S$.}
				
				By Theorem~\ref{th:primeEasy}, $tb(G)=1$ if and only if $tb(G \setminus v)=1$.
				Since, moreover, $G \setminus v$ is planar and prime, then Algorithm \texttt{Leaf-BottomUp} is recursively applied on $G \setminus v$.  
				
				\item\label{step:few-neighbours}{\bf Case: $|N(a_v) \cap N(c_v)| < 3$ in $G \setminus v$ and there is no minimal separator $S\subseteq (N(a_v) \cap N(c_v)) \cup \{a_v,c_v\}$ in $G \setminus v$ such that $\{a_v,c_v\} \subseteq S$.}
				
				\smallskip
				
					\begin{enumerate}[label=\roman*]
						
						\item\label{step:one-neighbour}{\bf Subcase: $|N(a_v) \cap N(c_v)| =1$ in $G \setminus v$}.
						In that subcase, $N(a_v) \cap N(c_v)=\{v,b_v\}$ and, by Theorem~\ref{thm:force-clique}, $tb(G)=1$ if and only if $G=C_4$, a cycle with four vertices.
						Note that here it implies that $tb(G) > 1$ because $G$ has at least $7$ vertices.
						
						\item\label{step:two-neighbours}{\bf Subcase: $|N(a_v) \cap N(c_v)| =2$ in $G \setminus v$}.
						In that subcase, let $N(a_v) \cap N(c_v)=\{v,b_v,u_v\}$.
						By Theorem~\ref{th:primeDifficult}, since $G$ has more than $5$ vertices, the graph $G'$ obtained from $G$ by adding edges $\{v,u_v\}$ and $\{b_v,v\}$ is planar and prime, and moreover $tb(G) = 1$ if and only if $tb(G') = 1$. In that case, Algorithm \texttt{Leaf-BottomUp} is recursively applied on $G'$.
						
					\end{enumerate}
				
			\end{enumerate}
		
		\item\label{step:new-clique-sep}{\bf Case $v$ is of Type $2$ or $3$ and $G \setminus v$ has a clique separator.}
		As mentioned in~\ref{step:type-2-3}, in that case, there exists $u_v \in V \setminus \{ a_v, b_v, c_v, v\}$ such that $S=\{b_v,u_v\}$ is a minimal clique separator of $G \setminus v$. 
		Moreover, by Theorem~\ref{th:GminusVprime?}, $G \setminus \{ a_v, b_v, c_v, v\}$ is connected.
		
		\smallskip
		
		By Theorem~\ref{th:addEdgebv}, $tb(G)=1$ if and only if $tb(G')=1$ where $G'$ is obtained from $G$ by adding the edge $\{v,b_v\}$ (if it were not already there). 
		Moreover, $G'$ is prime and planar. 
		Hence, we may assume that $\{v,b_v\} \in E$ (if not Algorithm \texttt{Leaf-BottomUp} adds it).
		
		\smallskip
		
		Furthermore by Theorem~\ref{th:GminusVprime?}, since $G$ has more than $5$ vertices, $u_v \notin N(a_v) \cap N(c_v)$. 
		In the latter case, let us assume w.l.o.g. that $u_v \notin N(a_v)$, that is either $u_v \notin N(a_v) \cup N(c_v)$ or $u_v \in N(c_v) \setminus N(a_v)$. 
		There are several cases to be considered.
		
			\begin{enumerate}[label=(\alph*), itemsep=5pt]
				
				\item\label{step:no-separation}{\bf Case  $u_v \notin N(a_v) \cup N(c_v)$, \\ or $(N(u_v) \cap N(a_v)) \cup \{v,c_v\}$ does not separate $u_v$ and $a_v$ in $G$.}
				
				\smallskip
				
				By Theorem~\ref{th:contractva}, $G/ va_v$ is prime and planar, and $tb(G)=1$ if and only if $tb(G/va_v)=1$. 
				In that case, Algorithm \texttt{Leaf-BottomUp} is recursively applied on $G/va_v$, the graph obtained from $G$ by contracting the edge $\{v,a_v\}$. 
				
				\item\label{step:separation}{\bf Case $u_v \in N(c_v) \setminus N(a_v)$, \\ and $(N(u_v) \cap N(a_v)) \cup \{v,c_v\}$ separates $u_v$ and $a_v$ in $G$.} 
				
				\smallskip
				
				In that case, recall that $G$ has at least $7$ vertices. 
				Again, Algorithm \texttt{Leaf-BottomUp} distinguishes several subcases.
				
					\begin{enumerate}[label=\roman*]
						
						\item\label{step:contraction-case}{\bf Subcase $N(b_v)=\{v,a_v,c_v,u_v\}$.}
						In that subcase, by Theorem~\ref{claim:connect-vertex-b}, we can find in linear-time a vertex $x \in (N(a_v) \cap N(u_v)) \setminus \{b_v\}$ such that $G'$ is planar, where $G'$ is obtained from $G$ by adding the edge $\{b_v,x\}$. 
						Moreover, by Theorem~\ref{lem:clique-case-2}, $G' / b_vx$ (obtained by contracting $\{b_v,x\}$) is prime and $tb(G)=1$ if and only if $tb(G' / b_vx)=1$. 
						In that case, Algorithm \texttt{Leaf-BottomUp} is recursively applied on $G' / b_vx$.
						
						\item\label{step:diamond-case}{\bf Subcase $\{v,a_v,c_v,u_v\} \subset N(b_v)$ and $N(b_v)\cap N(a_v) \cap N(u_v)\neq \emptyset$.}
						In that subcase, $|N(b_v)\cap N(a_v) \cap N(u_v)|=1$ by Lemma~\ref{claim:sep-b} and let $x$ be this common neighbor. 
						By Theorem~\ref{lem:clique-case-2}, $G/ b_vx$ (obtained by contracting $\{b_v,x\}$) is prime and $tb(G)=1$ if and only if $tb(G/ b_vx)=1$. 
						In that case, Algorithm \texttt{Leaf-BottomUp} is recursively applied on $G/ b_vx$.
						
						\item\label{step:no-diamond-case}{\bf Subcase $\{v,a_v,c_v,u_v\} \subset N(b)$ and $N(b_v)\cap N(a_v) \cap N(u_v)= \emptyset$.}
						In that subcase, by Theorem~\ref{lem:final-case-1}, there must be a unique $x \in (N(a_v) \cap N(u_v)) \setminus \{b_v\}$ such that $N(b_v) \cap N(x)$ is a $b_vx$-separator of $G$ and $|N(b_v) \cap N(x)| \geq 3$ (or else, $tb(G) > 1$).

							\begin{itemize}
								
								\item {\bf Suppose there is a leaf-vertex $\ell \in N(b_v) \cap N(x)$.}
								By Lemma~\ref{claim:final-case-2}, $\ell$ is of Type 1 or $G \setminus \ell$ is prime.
								In that case, go to~\ref{step:type-1} if $\ell$ is of Type $1$, and go to~\ref{step:prime-case} if $\ell$ has Type $2$ or $3$ (in both cases, $\ell$ takes the role of $v$).
								Note that we never go back to~\ref{step:new-clique-sep} in such case, so the algorithm cannot loop. 
								
								\item Otherwise, by Theorem~\ref{lem:final-case-3}, there exist $y,z \in N(b_v) \cap N(x)$ two non-adjacent vertices, such that $G'$ is prime and planar, and $tb(G)=1$ if and only if $tb(G')=1$, where $G'$ is obtained from $G$ by adding the edge $\{x,y\}$. 
								In that case, Algorithm \texttt{Leaf-BottomUp} is recursively applied on $G'$.
								
							\end{itemize}
						
					\end{enumerate}
				
			\end{enumerate}
		
	\end{enumerate} 
	
\end{enumerate}

\subsection{Properties of prime planar graphs with tree-breadth one}

\subsubsection{General lemmas}

We will first investigate on general properties of prime planar graphs.
In particular, the following properties do not depend on the existence of a star-decomposition, therefore we do not use tree decompositions in our proofs.
However, note that we refer to Definition~\ref{def:leafVertex} in Theorem~\ref{th:GminusVprime?}.
For clarity, we will separate the properties that hold for every {\it biconnected} planar graph from those that only hold for prime planar graphs.

\paragraph{Properties of biconnected planar graphs.}
In order to obtain these properties, we will mostly rely on the notion of \emph{intermediate graphs}, defined below. 

\begin{definition}~\cite[Definition 6]{Bouchitte2003}
\label{def:intermediate-graph}
Let $G=(V,E)$ be a planar graph.
We fix a plane embedding of $G$.
Let $F$ be the set of faces of this embedding.
The \emph{intermediate graph} $G_I = (V \cup F, E_I)$ has vertex-set $V \cup F$,
furthermore $E \subseteq E_I$ and we add an edge in $G_I$ between an original vertex $v \in V$ and a face-vertex $f \in F$ whenever the corresponding vertex and face are incident in $G$ (see Figure~\ref{fig:intermediate-graph}).
\end{definition}

\begin{figure}[h!]
 \centering
 \includegraphics[width=0.5\textwidth]{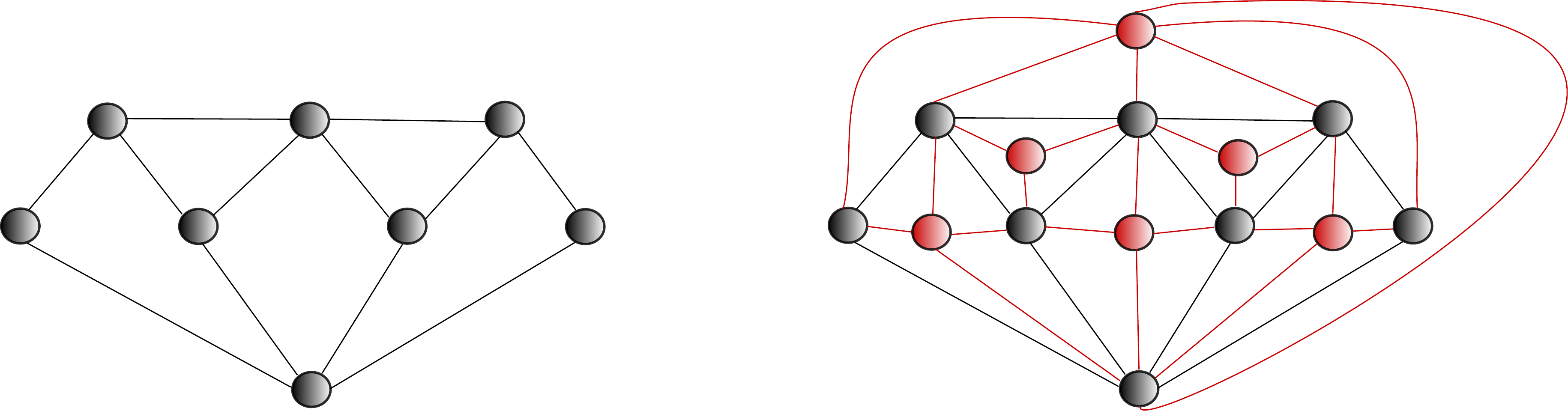}
 \caption{A plane embedding of some planar graph (left) and the corresponding intermediate graph (right). Face-vertices are coloured in red.}
 \label{fig:intermediate-graph}
\end{figure}

Note that an intermediate graph is planar.
Furthermore, since a plane embedding can be constructed in linear-time~\cite{Hopcroft1974}, therefore so can be an intermediate graph.
This is important for the quadratic-time complexity of Algorithm \texttt{Leaf-BottomUp}.
To prove the correctness of Algorithm \texttt{Leaf-BottomUp} in the following, we will rely upon the following property of intermediate graphs.

\begin{lemma}~\cite[Proposition 9]{Bouchitte2003}
\label{lem:prop-todinca}
Let $S$ be a minimal separator of some biconnected planar graph $G=(V,E)$ and let $C$ be a full component of $G \setminus S$.
We fix a plane embedding of $G$.
Then $S$ corresponds to a cycle $v_S(C)$ of $G_I$, of length $2|S|$ and with $V \cap v_S(C) = S$, and such that $G_I \setminus v_S(C)$ has at least two connected components.
Moreover, the original vertices of one of these components are exactly the vertices of $C$. 
\end{lemma}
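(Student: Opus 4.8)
The plan is to prove the lemma by contracting the full component $C$ to a single vertex and reading the required cycle off the cyclic sequence of faces around that vertex in the fixed embedding. First I would contract the connected set $C$ to a single vertex $c$, obtaining a plane (multi)graph $G_1 = G/C$ that inherits the embedding of $G$ everywhere outside $C$; contracting a connected subgraph preserves planarity. Since $C$ is a full component for $S$ we have $N(C)=S$ (Definition~\ref{def:min-sep}), so in $G_1$ the vertex $c$ is adjacent to every vertex of $S$ and to no other vertex. Minimality of $S$ also guarantees a second full component $C'$, which survives the contraction and remains separated from $c$ by $S$.

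Next I would read the rotation around $c$. Let $ct_1, ct_2, \dots, ct_k$ be the edges incident to $c$ in rotational order, with each $t_i \in S$, and let $f_i$ be the face of $G$ lying between the consecutive edges $ct_i$ and $ct_{i+1}$ (indices cyclic). Then $f_i$ is incident in $G$ to both $t_i$ and $t_{i+1}$, so in the intermediate graph $G_I$ (Definition~\ref{def:intermediate-graph}) the edges $t_if_i$ and $f_it_{i+1}$ are present. Hence $t_1, f_1, t_2, f_2, \dots, t_k, f_k$ is a closed walk in $G_I$ that uses only vertex--face edges and, among original vertices, visits only vertices of $S$; this already yields $V \cap v_S(C) \subseteq S$. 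Note also that no vertex of $C$ lies on this walk, since the neighbours of $C$ are exactly $S$, so every vertex of $C$ stays strictly inside the region enclosed by $C$.

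The crux, which I expect to be the main obstacle, is to upgrade this closed walk to a \emph{simple} cycle $v_S(C)$ meeting each vertex of $S$ exactly once, so that its length is $2|S|$ and $V \cap v_S(C) = S$. Here I would lean on the two hypotheses. Biconnectivity of $G$ should prevent the boundary of the region occupied by $C$ from pinching: a repeated appearance of some $s \in S$ (i.e.\ $s$ joined to $C$ ``on two sides'') would split that region into two lobes meeting only at $s$, exhibiting a cut vertex of $G$ or else a proper subset of $S$ already separating $C$ from $C'$, contradicting $2$-connectivity or the fact that $N(C)=S$. Minimality of $S$ forces \emph{every} vertex of $S$ onto the cycle: each $s \in S$ has a neighbour inside $C$ and a neighbour inside the other full component $C'$, so $s$ must lie on any curve separating $C$ from $C'$, and its two flanking faces supply its two face-neighbours on the cycle. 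Combining, $v_S(C)$ alternates the $|S|$ vertices of $S$ with $|S|$ distinct faces and has length exactly $2|S|$; establishing this simplicity rigorously (rather than at the sketch level above) is the delicate part, and is exactly where $2$-connectivity is indispensable.

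Finally, for the separation claim I would appeal to the Jordan curve theorem. Drawing $G_I$ in the plane with each face-vertex placed inside its face makes $v_S(C)$ a simple closed curve; by construction the contracted vertex $c$, equivalently all of $C$, lies in the bounded region it encloses, while $C'$ lies outside. Any path of $G_I$ from $C$ to the exterior must cross this curve, that is, pass through a vertex of $v_S(C)$, so $G_I \setminus v_S(C)$ has at least two connected components. The component containing $C$ contains no vertex of $S$ (these all lie on the curve) and no original vertex lying outside the curve (these are separated from $C$ by the curve), so its original vertices are exactly $C$, as required.
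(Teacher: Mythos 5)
First, note that the paper does not prove this lemma at all: it is quoted from \cite[Proposition 9]{Bouchitte2003}, so your attempt has to stand on its own merits --- and it does not, because the step you yourself flag as the crux is genuinely false as argued, not merely delicate. You claim that a repeated appearance of some $s\in S$ among the endpoints $t_1,\dots,t_k$ of the edges leaving $C$ would split the region of $C$ into two lobes meeting only at $s$ and thereby contradict $2$-connectivity or $N(C)=S$. This fails because a lobe may contain no vertices at all --- just a face of $G$ --- in which case nothing is separated. Concretely, take $V=\{u_1,u_2,s,t,w\}$ with edges $u_1u_2,\ su_1,\ su_2,\ tu_1,\ sw,\ tw$. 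This graph is planar and biconnected, $S=\{s,t\}$ is a minimal separator, and $C=\{u_1,u_2\}$ is a full component, yet $s$ has two neighbours in $C$. Your closed walk around $C$ is $s,\ f_{\triangle},\ s,\ f_1,\ t,\ f_2$, where $f_{\triangle}$ is the triangular face $(s,u_1,u_2)$ and $f_1,f_2$ are the two faces incident to both $s$ and $t$: it has length $6=2k>2|S|=4$ and visits $s$ twice, while neither $s$ nor $t$ is a cut vertex and no proper subset of $S$ separates $C$ from $C'=\{w\}$. The cycle $v_S(C)$ promised by the lemma does exist --- it is $s,f_1,t,f_2$ --- but it is not your walk: one must discard $f_{\triangle}$, which ends up strictly inside the cycle, in the component of $C$ in $G_I\setminus v_S(C)$.

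So the missing idea is precisely the extraction of a length-$2|S|$ cycle from the length-$2k$ walk ($k$ being the number of $C$--$S$ edges, which in general exceeds $|S|$): for each $s\in S$ one keeps only the two extreme faces flanking its block of edges to $C$ and discards the faces lying strictly between consecutive such edges (these are incident only to $C\cup\{s\}$ and fall inside). Making this rigorous requires two facts absent from your proposal: (i) after this shortcutting each $s\in S$ occurs exactly once, i.e., the occurrences of $s$ in the rotation around $C$ cannot have vertices of $S$ on both separating arcs; and (ii) the retained faces are pairwise distinct. Both are proved by Jordan-curve arguments that crucially exploit the \emph{second} full component $C'$ guaranteed by minimality of $S$: a closed curve through $s$ and the region of $C$ (for (i)), or through a twice-occurring face and the region of $C$ (for (ii)), would separate two vertices of $S$ that the connected set $C'$, disjoint from the curve, joins --- a contradiction. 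Your proposal only invokes $C'$ to argue that every $s\in S$ lies on the cycle; without (i) and (ii) the simplicity of $v_S(C)$, hence the length bound $2|S|$ and the Jordan argument you build on top of it, is unsupported.
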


In the following, we will rely upon two properties which both follow from Lemma~\ref{lem:prop-todinca}.
The first one is the following structural property of minimal separators of planar graphs.

\begin{corollary}
\label{cor:sep-planar}
Let $S$ be a minimal separator of a biconnected planar graph $G=(V,E)$.
Then, $S$ either induces a cycle or a forest of paths.
\end{corollary}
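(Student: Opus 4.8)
The plan is to deduce the dichotomy from the cyclic structure provided by Lemma~\ref{lem:prop-todinca}, combined with the fact that a minimal separator has at least two full components. First I would fix a plane embedding of $G$ and a full component $C$ of $G \setminus S$, and apply Lemma~\ref{lem:prop-todinca}: the separator $S$ corresponds to a cycle $v_S(C)$ of the intermediate graph $G_I$ of length $2|S|$ in which the $|S|$ vertices of $S$ and $|S|$ face-vertices alternate. Reading off the vertices of $S$ along this cycle yields a cyclic order $s_1, s_2, \ldots, s_k$ (with $k = |S|$), and $v_S(C)$ is a Jordan curve whose interior contains exactly the original vertices of $C$. Since $S$ is a minimal separator, it also admits a second full component $C'$, which is disjoint from $S \cup C$ and hence lies in the exterior of this Jordan curve.

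The heart of the argument is the following claim: every edge of $G[S]$ joins two cyclically consecutive vertices $s_\ell, s_{\ell+1}$. To prove it, let $\{s_i, s_j\} \in E$; being an edge of $G_I$ as well, it does not cross $v_S(C)$ in the embedding, so it lies either inside or outside the curve. If it lies inside, I would contract the connected component $C$ to a single vertex $c$ placed in the interior; because $N(C) = S$, the vertex $c$ is joined to all of $s_1, \ldots, s_k$ by edges drawn inside the disk, and since the resulting graph is still planar these $k$ radial edges can be taken pairwise non-crossing. They cut the interior disk into $k$ sectors, the $\ell$-th of which is bounded by $cs_\ell$, $cs_{\ell+1}$ and the boundary arc from $s_\ell$ to $s_{\ell+1}$. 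An edge leaving $s_i$ enters one of the two sectors incident to $s_i$ and, not being allowed to cross any radial edge $cs_\ell$, must remain inside it; thus its other endpoint lies on that sector's arc, forcing $s_j \in \{s_{i-1}, s_{i+1}\}$. The case where $\{s_i, s_j\}$ lies outside is symmetric, using the second full component $C'$: contracting $C'$ to a vertex $c'$ in the exterior (again joined to all of $S$, since $N(C') = S$) and repeating the sector argument in the outer disk shows once more that $s_i$ and $s_j$ are consecutive.

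With the claim in hand, the conclusion is immediate. Since every edge of $G[S]$ joins consecutive vertices, $G[S]$ is a spanning subgraph of the cycle $s_1 s_2 \cdots s_k s_1$. If all $k$ of these edges are present then $G[S]$ is exactly this cycle; otherwise at least one edge is missing, and deleting any edge from a cycle leaves a disjoint union of paths, so $G[S]$ is a forest of paths (in particular this covers the case $k = 2$, where $G[S]$ is a single edge or two isolated vertices). This yields the desired dichotomy.

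I expect the main obstacle to be making the topological crossing argument of the second paragraph fully rigorous: in particular, justifying that after contracting $C$ the radial edges to $S$ inherit precisely the cyclic order $s_1, \ldots, s_k$ coming from $v_S(C)$, and that the interior and exterior of the Jordan curve behave symmetrically. Once the ``consecutive edges'' claim is secured, everything else is routine. An alternative route that sidesteps the intermediate graph is to contract \emph{both} full components simultaneously, obtaining a planar graph in which two vertices $c$ and $c'$ are each adjacent to all of $S$; the faces of the resulting $K_{2,k}$ are quadrilaterals carrying only two consecutive vertices of $S$ on their boundary, which forces the same conclusion in a single step.
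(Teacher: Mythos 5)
Your proposal reaches the right conclusion and is genuinely different in mechanism from the paper's proof, even though both start from Lemma~\ref{lem:prop-todinca} and both exploit the existence of two full components of a minimal separator. The paper does not work with the cycle $v_S(C)$ itself: it takes a \emph{shortest} cycle $C_S$ of $G_I$ whose original vertices are exactly $S$, and shows that $C_S$ is an \emph{induced} cycle of $G_I$ --- a chord would split $C_S$ into two cycles, each containing (by minimality) a vertex of $S$ not on the other, and contracting these two cycles to triangles, together with the two full components, yields a $K_{3,3}$-minor of $G_I$. This is purely combinatorial: no Jordan curves, no rotation systems, just forbidden minors. Once $C_S$ is induced, every edge of $G[S]$ is an edge of $C_S$, so $G[S]$ is a subgraph of a cycle. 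Your main route instead argues topologically (Jordan curve, sectors cut out by the spokes of a contracted full component), which buys geometric transparency but costs rigor exactly where you flagged it. Your closing alternative --- contract \emph{both} full components and use the fact that every face of any plane embedding of $K_{2,k}$, $k \geq 3$, is a quadrilateral $c\,s_i\,c'\,s_j$ with $s_i,s_j$ consecutive in the rotation at $c$ --- is actually the cleanest of the three: it needs neither the intermediate graph nor Lemma~\ref{lem:prop-todinca} nor any interior/exterior bookkeeping, and since every edge of $G[S]$ must lie inside one such face, $G[S]$ is immediately a subgraph of a $k$-cycle.

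One step of your main route is a genuine, though repairable, gap: ``disjoint from $S \cup C$ and hence lies in the exterior'' is a non sequitur. Lemma~\ref{lem:prop-todinca} only asserts that \emph{some} component of $G_I \setminus v_S(C)$ has original vertex set exactly $C$; a priori the component containing $C'$ could lie on the same side of the curve as $C$, in which case your treatment of an exterior chord has no full component to contract. Ruling this out requires an argument, e.g.: if $C'$ were also inside, it would be confined to one of the $k$ sectors cut out by the spokes of the contracted vertex $c$, hence adjacent to at most two vertices of $S$, contradicting $N(C') = S$ when $|S| \geq 3$ (and for $|S| \leq 2$ the corollary is trivial). Since this is the same sector argument you already use, the hole is easy to patch --- or you can simply promote the $K_{2,k}$ argument from a remark to the proof, which sidesteps the issue entirely.
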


\begin{proof}
Let us fix a plane embedding of $G$, let $G_I$ be the corresponding intermediate graph.	
Then, let $C_S$ be a smallest cycle of $G_I$ such that $V \cap C_S = S$, that exists by Lemma~\ref{lem:prop-todinca}.
To prove the corollary, it suffices to prove that $C_S$ is an induced cycle of $G_I$.
By contradiction, assume the existence of a chord $xy$ of $C_S$.
Note that $x \in S$ or $y \in S$ because face-vertices are pairwise non-adjacent in $G_I$.
Therefore assume w.l.o.g. that $x \in S$. 
Let us divide $C_S$ in two cycles $C_1,C_2$ such that $C_1 \cap C_2 = \{x,y\}$.
By the minimality of $C_S$, $S$ intersects both $C_1 \setminus C_2$ and $C_2 \setminus C_1$.
Therefore, let $z_1,z_2 \in S$ such that $z_1 \in C_1 \setminus C_2$ and $z_2 \in C_2 \setminus C_1$.
Finally, let $A,B$ be two full components of $G \setminus S$.
Observe that $(A \cup B) \cap (C_1 \cup C_2) = \emptyset$ because $V \cap C_S = S$.
Let us contract $C_1,C_2$ in order to obtain the two triangles $(z_1,x,y)$ and $(z_2,x,y)$.
In such case, there is a $K_{3,3}$-minor of $G_I$ with $\{A,B,y\}$ and $\{x,z_1,z_2\}$ being the respective sides of the bipartition, thus contradicting the fact that $G_I$ is planar.
Therefore, $C_S$ is an induced cycle of $G_I$ and so, $S$ induces a subgraph of a cycle in $G$, that is either a cycle or a forest of path.
\end{proof}

On the algorithmic side, one can also deduce from Lemma~\ref{lem:prop-todinca} the following corollary.

\begin{corollary}
\label{cor:make-sep-cyc}
Let $G$ be a biconnected planar graph, let $S$ be a minimal separator of $G$.
There is a planar supergraph $G_S$ of $G$ with same vertex-set so that $S$ either induces an edge (if $|S|=2$) or a cycle of $G_S$, and it can be constructed in linear-time.
\end{corollary}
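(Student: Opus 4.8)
The plan is to exploit Lemma~\ref{lem:prop-todinca} to read off, from a fixed plane embedding, a cyclic order of the separator $S$ together with, for each consecutive pair of vertices of $S$, a face of $G$ incident to both of them; I then insert the missing edges, one per face, so that planarity is preserved. First I would fix a plane embedding of $G$ (computable in linear time by~\cite{Hopcroft1974}) and build the intermediate graph $G_I$, again in linear time. Let $C$ be a full component of $G\setminus S$. By Lemma~\ref{lem:prop-todinca}, $S$ corresponds to a cycle $v_S(C)$ of $G_I$ of length $2|S|$ with $V\cap v_S(C)=S$, and this cycle can be traced in linear time as the boundary of the region of $G_I\setminus v_S(C)$ whose original vertices are exactly those of $C$.

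The crucial structural observation is that $v_S(C)$ alternates between original vertices and face-vertices. Indeed, face-vertices are pairwise non-adjacent in $G_I$, so no two of them are consecutive on $v_S(C)$; since the cycle has exactly $|S|$ original vertices and $|S|$ face-vertices, it must read $s_1,f_1,s_2,f_2,\ldots,s_{|S|},f_{|S|}$ (indices mod $|S|$), where $s_1,\ldots,s_{|S|}$ enumerates $S$ and each $f_i$ is a face of $G$ incident to both $s_i$ and $s_{i+1}$. Note that the faces $f_1,\ldots,f_{|S|}$ are pairwise distinct, being distinct vertices of the cycle.

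I would then construct $G_S$ as follows: for every $i$ with $\{s_i,s_{i+1}\}\notin E$, add this edge, drawn inside the face $f_i$ (a single edge suffices when $|S|=2$). This is always possible: since $G$ is biconnected, every face boundary is a simple cycle, so $s_i$ and $s_{i+1}$ each appear exactly once on the boundary of $f_i$ and the new edge can be routed as a chord of $f_i$. Because the $f_i$ are pairwise distinct and each hosts at most one new edge, these chords lie in disjoint face-regions and cross neither one another nor any edge of $G$; hence $G_S$ remains planar. After the insertions, $G_S[S]$ contains the cycle $s_1 s_2\cdots s_{|S|}$ (an edge when $|S|=2$). To see that it contains nothing more, I invoke Corollary~\ref{cor:sep-planar}: $G[S]$ is a subgraph of a cycle on $S$ taken in exactly this cyclic order, so $G$ has no edge between non-consecutive vertices of $S$, and the construction only ever adds consecutive edges. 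Thus $G_S[S]$ is precisely an edge (if $|S|=2$) or the cycle $s_1\cdots s_{|S|}$. For the complexity, the embedding, the graph $G_I$, and the cycle $v_S(C)$ are obtained in linear time, and only $O(|S|)$ edges are inserted, so $G_S$ is built in linear time.

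The step I expect to be the main obstacle is justifying that all the new edges can be realized simultaneously without destroying planarity, namely pinning down that each pair of consecutive separator vertices shares a common incident face and that these faces are distinct. This is exactly where the alternating structure of $v_S(C)$ from Lemma~\ref{lem:prop-todinca} and the biconnectivity of $G$ (which guarantees simple face boundaries, hence that each chord can be drawn) are essential; the chord-freeness of $G_S[S]$ then rests on Corollary~\ref{cor:sep-planar}.
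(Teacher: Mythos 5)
Your overall framework is sound and close in spirit to the paper's: both proofs work in the intermediate graph $G_I$, and your observations that $v_S(C)$ must alternate between original vertices and face-vertices, that the faces $f_1,\ldots,f_{|S|}$ are pairwise distinct, and that one new edge can be drawn inside each such face without breaking planarity, are all correct. The genuine gap is your last step: you claim that Corollary~\ref{cor:sep-planar} guarantees $G[S]$ is a subgraph of the cycle on $S$ \emph{taken in exactly the cyclic order given by $v_S(C)$}. The corollary says only that $G[S]$ is a cycle or a forest of paths; it carries no information about which cyclic order is realized. Indeed, its proof works with a \emph{smallest} cycle $C_S$ of $G_I$ with $V\cap C_S=S$, which may use original edges of $G$ as shortcuts and need not coincide with $v_S(C)$. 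Consequently, an edge of $G$ joining two vertices of $S$ that are non-consecutive on $v_S(C)$ is perfectly compatible with the statement of Corollary~\ref{cor:sep-planar} (in that case $G[S]$ is simply a forest of paths not inscribed in your cyclic order), and if such an edge existed your $G_S[S]$ would be a cycle plus a chord. Excluding it requires a separate argument (for instance a Jordan-curve argument using both full components of $S$), which you have not supplied; as written, the step does not follow.

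The gap is easy to close, and the paper closes it at exactly this point by applying Corollary~\ref{cor:sep-planar} not to $G$ but to the supergraph $G_S$ just constructed: $G_S$ is planar (your argument), biconnected (it is a supergraph of the biconnected graph $G$ on the same vertex set), and $S$ is still a minimal separator of $G_S$, since the added edges lie inside $S$, so the components of $G_S\setminus S$ are those of $G\setminus S$ and the full components remain full. As $G_S[S]$ contains the spanning cycle $s_1s_2\cdots s_{|S|}$, it is not a forest of paths, hence by Corollary~\ref{cor:sep-planar} it induces a cycle --- which is the desired conclusion and, a posteriori, also proves your chord-freeness claim about $G$. Apart from this endgame, the paper's construction differs slightly from yours: it proves $|S\cap N_{G_I}(f)|\leq 2$ for \emph{every} face-vertex $f$ via a $K_{3,|S_f|}$-minor argument and adds one edge per face having two $S$-neighbours, deducing planarity from the fact that $G_S$ is then a minor of $G_I$; your more economical variant (one chord per face of $v_S(C)$) works equally well once repaired. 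Finally, note that your linear-time claim for tracing $v_S(C)$ is stated circularly (you describe the region by means of the cycle you are trying to find); the paper's per-face construction avoids having to exhibit the cycle explicitly.
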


\begin{figure}[h!]
 \centering
 \includegraphics[width=0.75\textwidth]{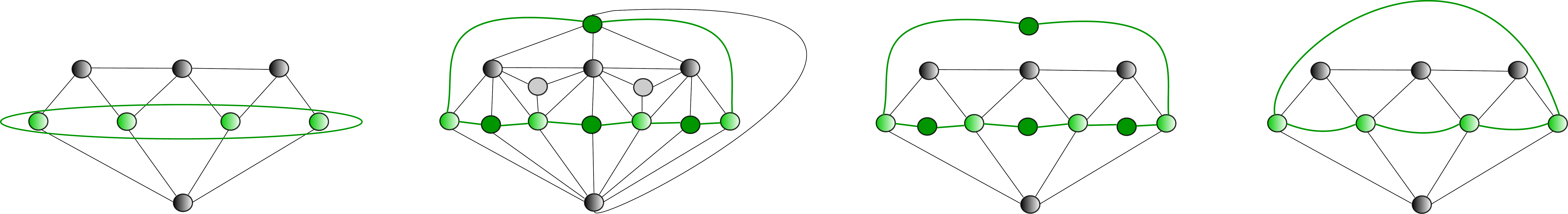}
 \caption{Addition of edges in a planar graph $G$ so as to make a minimal separator of $G$ induce a cycle.}
 \label{fig:make-sep-cyc}
\end{figure}

\begin{proof}
Let us fix a plane embedding of $G$, let $G_I$ be the corresponding intermediate graph.
For every face-vertex $f$ of $G_I$, let us consider $S_f = S \cap N_{G_I}(f)$.
We first claim that $|S_f| \leq 2$.
Indeed, let $A,B$ be two full components of $G \setminus S$, let us contract them to any two vertices $a \in A,b \in B$.
Then, there is a $K_{3,|S_f|}$-minor of $G_I$ with $\{a,b,f\}$ and $S_f$ being the respective parts of the bipartition.
Since $G_I$ is planar by construction, therefore, $|S_f| \leq 2$.

Now, the graph $G_S$ is constructed from $G$ as follows (we refer to Figure~\ref{fig:make-sep-cyc} for an illustration of the proof).
For every face-vertex $f$ of $G_I$, if $S_f = (x,y)$ then we add the edge $\{x,y\}$ in $G_S$.
Note that $G_S$ is a minor of $G_I$ and so, it is a planar graph.
Moreover, by Lemma~\ref{lem:prop-todinca} there is a cycle of $G_I$ whose original vertices are exactly $S$ and so, $S$ induces a connected subgraph of $G_S$.
In particular if $|S|=2$, then it must be an edge.
Else, $|S| > 2$ and the connected subgraph $G_S[S]$ contains a cycle by construction.
Since $S$ is a minimal separator of $G_S$ by construction and $G_S[S]$ is not acyclic, it follows from Corollary~\ref{cor:sep-planar} that $S$ induces a cycle of $G_S$.
\end{proof}

We will often make use of the routine of Corollary~\ref{cor:make-sep-cyc} in order to prove the quadratic-time complexity of Algorithm \texttt{Leaf-BottomUp}.

\paragraph{Properties of prime planar graphs.}
Unlike the above Corollaries~\ref{cor:sep-planar} and~\ref{cor:make-sep-cyc} (which hold for every biconnected planar graph), the following results only hold for prime planar graphs.
We will make use of the following structural properties of prime planar graphs in order to prove the correctness of Algorithm \texttt{Leaf-BottomUp}.

\begin{lemma}\label{lem:P3exists}
Let $G=(V,E)$ be a prime graph that is $K_{3,3}$-minor-free. 
Let $v \in V$, for every minimal separator $S \subseteq N_G(v)$ of the subgraph $G \setminus v$, $S$ consists of two non-adjacent vertices.
\end{lemma}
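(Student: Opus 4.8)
The plan is to split into cases according to $|S|$, deriving in each case a contradiction with exactly one of the two hypotheses: the size of $S$ will be controlled by $K_{3,3}$-minor-freeness, and the adjacency between the vertices of $S$ will be controlled by primeness (absence of clique-separators). Throughout I fix two full components $A$ and $B$ of $(G \setminus v) \setminus S$, which exist because $S$ is a minimal separator (Definition~\ref{def:min-sep}); by definition $N_{G \setminus v}(A) = N_{G \setminus v}(B) = S$, and moreover $v$ is adjacent to every vertex of $S$ since $S \subseteq N_G(v)$.

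First I would show $|S| \leq 2$. Suppose for contradiction that $S$ contains three distinct vertices $x, y, z$. Then $A$, $B$, $\{v\}$, $\{x\}$, $\{y\}$, $\{z\}$ are six pairwise disjoint connected vertex sets: indeed $A$ and $B$ are distinct connected components of $(G \setminus v) \setminus S$, so they contain neither $v$ nor any vertex of $S$, and $v \notin S$ as $v \notin N_G(v)$. Each of $A$, $B$, $\{v\}$ sends an edge to each of $x$, $y$, $z$, because $x, y, z \in S = N_{G\setminus v}(A) = N_{G\setminus v}(B)$ and $x, y, z \in S \subseteq N_G(v)$. Taking $\{A, B, \{v\}\}$ and $\{\{x\},\{y\},\{z\}\}$ as the two sides of the bipartition yields a $K_{3,3}$-minor, contradicting the hypothesis. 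Hence $|S| \leq 2$.

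The core observation for the remaining cases is that if $T \subseteq V(G \setminus v)$ separates $A$ from $B$ in $G \setminus v$, then $T \cup \{v\}$ separates $A$ from $B$ in $G$: any $A$--$B$ path in $G$ either avoids $v$, hence lies in $G \setminus v$ and must meet $T$, or else passes through $v$. Now if $|S| = 1$, say $S = \{x\}$, then $\{x, v\}$ separates $A$ from $B$ in $G$ and induces an edge (since $x \in N_G(v)$), so it is a clique-separator, contradicting that $G$ is prime; thus $|S| = 2$, say $S = \{x, y\}$. Suppose $x$ and $y$ are adjacent. Applying the observation with $T = S$, the set $\{x, y, v\}$ separates $A$ from $B$ in $G$; and it induces a triangle because $x \sim y$ by assumption and $x, y \in N_G(v)$. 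This is again a clique-separator, contradicting primeness. Therefore $\{x, y\} \notin E$, which proves the statement.

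The only delicate point, and the step I would write out most carefully, is the lifting of a separator of $G \setminus v$ to a separator of $G$ by adjoining $v$, together with the verification that the resulting set is both a genuine separator (the full components $A$, $B$ remain nonempty and separated, since they are disjoint from the removed set) and a clique (which reduces entirely to $S \subseteq N_G(v)$). Everything else is routine. Note that $K_{3,3}$-minor-freeness is invoked only for the size bound and in particular covers the prime planar graphs to which the lemma will be applied.
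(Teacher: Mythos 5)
Your proof is correct and follows essentially the same route as the paper's: two full components of $S$ in $G\setminus(S\cup\{v\})$ together with $v$ form one side of a $K_{3,|S|}$-minor to force $|S|\leq 2$, and then $S\cup\{v\}$ being a separator of $G$ rules out $|S|=1$ (edge-separator) and an adjacent pair (triangle-separator) by primeness. The only cosmetic difference is that you exhibit the $K_{3,3}$-minor from three chosen vertices of $S$ rather than contracting $A$ and $B$ to single vertices, which is the same argument.
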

\begin{proof}
Let $S \subseteq N_G(v)$ be a minimal separator of $G \setminus v$.
There must exist two full components $A$ and $B$ of $S$ in $G \setminus (S \cup \{v\})$. 
Let us remove all nodes of the components of $G \setminus (S \cup \{v\})$ but the ones in $A$ or $B$. 
Then, let us contract $A$ (resp., $B$) in a single vertex $a$ (resp., $b$). We get a $K_{3,|S|}$ as a minor of $G$ where $\{a,b,v\}$ is one part of the bipartition, and so $|S|\leq 2$. 
Finally, since $S \cup \{v\}$ is also a separator of $G$, then $|S| \geq 2$ because otherwise $S \cup \{v\}$ would be an edge-separator.
Therefore, $|S| = 2$ and it is a stable set because otherwise there would be a clique-separator of size $3$ in $G$.
\end{proof}

\begin{lemma}\label{lem:unique-neighbour}
Let $G$ be a prime planar graph, let the path $\Pi=(a,b,c)$ be a separator of $G$, and let $C$ be a component of $G \setminus \Pi$.
Then, there is at most one common neighbour of $a,b$ in $C$.
\end{lemma}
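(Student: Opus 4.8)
The plan is to argue by contradiction: suppose $a$ and $b$ have two distinct common neighbours $x,y$ lying in the same component $C$ of $G\setminus\Pi$. First I would record the easy structural facts. Since $\Pi=(a,b,c)$ is a path, $a\sim b$, and therefore $\{a,x,b\}$ and $\{a,y,b\}$ are triangles (using $a\sim x\sim b$ and $a\sim y\sim b$). Because $G$ is prime it has no clique-separator, so in particular no cut-vertex, hence $G$ is biconnected and the results on biconnected planar graphs (Corollary~\ref{cor:sep-planar} and the embedding machinery) apply. As $C$ is connected and $x,y\in C$, I may fix an $x$--$y$ path $P$ entirely inside $C$, i.e.\ avoiding $\Pi$. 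The four edges $ax,xb,by,ya$ then form a $4$-cycle $Q=(a,x,b,y)$ having the chord $\{a,b\}$.

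The heart of the argument is an embedding/face analysis. Fix a plane embedding of $G$. The chord $\{a,b\}$ lies in one of the two regions bounded by $Q$ and splits it into the two triangular regions bounded by $\{a,x,b\}$ and $\{a,y,b\}$. I claim both these regions are empty faces: if some vertex $z$ were strictly inside the region bounded by the triangle $\{a,x,b\}$, then every path from $z$ to $y$ (which lies strictly outside this region) would have to meet the clique $\{a,x,b\}$, so $\{a,x,b\}$ would be a clique-separator, contradicting primality; the same reasoning applies to $\{a,y,b\}$. Consequently every remaining vertex of $G$ --- in particular $c$, the whole of $P$, and all other vertices of $C$ --- must lie in the \emph{exterior} region of $Q$.

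To reach the contradiction I would bring in the separator hypothesis together with primality once more. Since $\{a,b\}$ is a clique it cannot separate the prime graph $G$, so $G\setminus\{a,b\}$ is connected; because $N(C)\subseteq\{a,b,c\}$, the only possible exit from $C$ in $G\setminus\{a,b\}$ is through $c$, which forces $c$ to have a neighbour $w\in C$. Now, inside the exterior region, the path $P$ joins $x$ to $y$, the arc $x$--$b$--$y$ of $Q$ also joins them, and $c$ is attached both to $b$ and (through $w$ and the connectivity of $C$) to the $P$-side; by cutting $C$ into three connected pieces around $x$, $y$ and $w$ and taking $\{a\},\{b\},\{c\}$ as the opposite side, one aims to exhibit a $K_{3,3}$-minor, which is impossible in the planar graph $G$. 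I expect the genuinely delicate point to be exactly this last step: the vertices of $C$ may attach to $a$, $b$ and $c$ in uncontrolled ways, so I would need a careful case analysis on the cyclic positions of $c$ and $w$ relative to $P$ --- pinning down the rotations at $a$ and $b$ using the two empty triangular faces found above --- to guarantee that the configuration really yields the forbidden $K_{3,3}$-minor, and to show that in the degenerate cases it instead produces a clique-separator, each outcome contradicting that $G$ is prime and planar.
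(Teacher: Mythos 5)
Your opening moves are sound: the empty-triangle argument (a vertex strictly inside the region bounded by the triangle $\{a,x,b\}$ would be separated from $y$ by a clique, contradicting primality), the biconnectivity remark, and the observation that $c$ must have a neighbour $w\in C$ are all correct. But the gap you flag at the end is genuine, and the plan you sketch for closing it cannot work. Your target is a $K_{3,3}$-minor with $\{a\},\{b\},\{c\}$ as singleton branch sets on one side and three disjoint connected pieces of $C$ on the other; each of those three pieces would then have to contain a neighbour of $c$. Nothing in your setup prevents $c$ from having a \emph{unique} neighbour in $C$ --- for instance $C$ inducing a path $(x,w,y)$ with $N(c)\cap C=\{w\}$, so that every connection from $x$ or $y$ to $c$ funnels through $w$. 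In that configuration at most one of your three disjoint pieces can ever attach to $c$, so the pattern you are hunting for simply does not exist in the graph; no case analysis on rotations can produce it, and the configuration need not contain a clique-separator either. The contradiction there comes from a minor of a different shape, namely $\{a,b,w\}$ versus $\{x\},\{y\},\{c\}\cup D$, where $D$ is a component of $G\setminus\Pi$ other than $C$ (such a component exists because $\Pi$ is a separator, and $a,c\in N(D)$ by primality).

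That is essentially how the paper argues, and it rests on two ideas absent from your proposal. First, for each common neighbour it constructs a path to $c$ inside $C\cup\{c\}$ that avoids the \emph{other} common neighbour (if no such path existed, the triangle formed by $a$, $b$ and that other neighbour would be a clique-separator). Second, it uses the other components of $G\setminus\Pi$: contracting one of them makes $a$ and $c$ adjacent. Then two cases close the proof: if the two paths are internally disjoint, contracting them, a connecting path inside $C$, and an outside component yields a $K_5$ on $\{a,b,c\}$ and the two common neighbours; if they intersect, the first intersection vertex is placed on the \emph{same} side as $a$ and $b$, yielding a $K_{3,3}$ whose other side consists of the two common neighbours and $c$ merged with an outside component. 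So the fix is not to refine your embedding analysis, but to change the shape of the forbidden minor: vertices of $C$ must be allowed on the $\{a,b\}$ side, and $c$ must be allowed to absorb an external component so as to reach $a$.
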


\begin{proof}
First note that $\Pi$ is induced or else it would be a clique-separator of $G$.	
Furthermore, $a,c \in N(C')$ for every component $C'$ of $G \setminus (\Pi \cup C)$ or else $N(C')$ would be a clique-separator of $G$ (either a vertex-separator or an edge-separator).
In particular, it is always possible to make vertices $a,c$ adjacent by contracting an arbitrary component of $G \setminus (\Pi \cup C)$.
	
\begin{figure}[h!]
	\centering
	\includegraphics[width=0.15\textwidth]{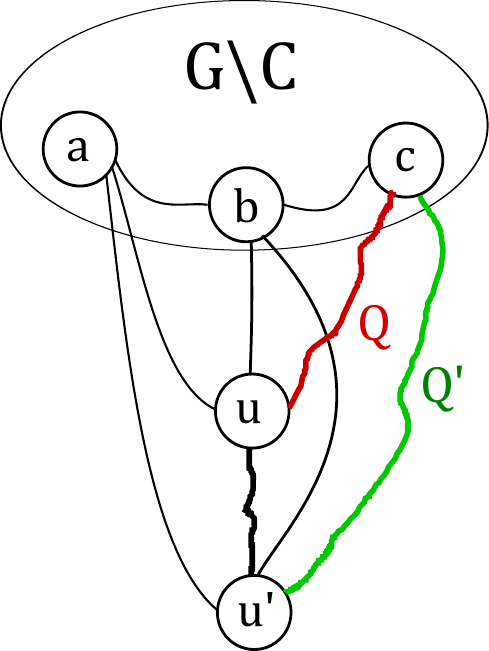}
	\caption{Case where the paths $Q$ and $Q'$ are internally vertex-disjoint paths.}
	\label{fig:disjoint-path-k5}
\end{figure}	
	
By contradiction, let $u,u' \in N(a) \cap N(b) \cap C$ be distinct.
We claim that there exists a $uc$-path $Q$ in $C \cup \{c\}$ that does not contain $u'$, because else the triangle $a,b,u'$ would separate $u$ from $\Pi$, that contradicts the fact that $G$ is prime.
By symmetry, there also exists a $u'c$-path $Q'$ in $C \cup \{c\}$ that does not contain $u$.
There are two cases.

\begin{itemize}
\item $Q$ and $Q'$ are internally vertex-disjoint paths (see Figure~\ref{fig:disjoint-path-k5} for an illustration).
Let us contract $Q \setminus c, Q' \setminus c$ to the vertices $u,u'$, let us contract an arbitrary component of $G \setminus (\Pi \cup C)$ in order to make vertices $a,c$ adjacent, then let us contract a path from $Q$ to $Q'$ in $C$ (that exists, because $C$ is connected by the hypothesis) in order to make vertices $u,u'$ adjacent.
Then one obtains from $a,b,c,u,u'$ a $K_5$-minor, which contradicts the fact that $G$ is planar.

\begin{figure}[h!]
	\centering
	\includegraphics[width=0.175\textwidth]{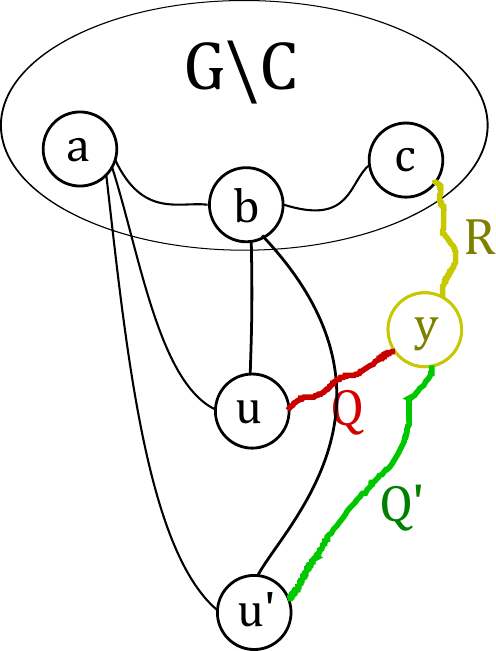}
	\caption{Case where the paths $Q$ and $Q'$ intersect.}
	\label{fig:notdisjoint-path-k33}
\end{figure}

\item $Q$ and $Q'$ intersect (see Figure~\ref{fig:notdisjoint-path-k33} for an illustration).
Let $y \in (Q \cap Q') \setminus c$ be such that the $uy$-subpath of $Q$ does not intersect $Q'$. 
Let $R$ be the $yc$-subpath of $Q'$.
We may assume w.l.o.g. that $R \subseteq Q \cap Q'$ for the remaining of the proof, whence $Q \cap Q' = R$.
Let us contract $Q \setminus R, Q' \setminus R, R \setminus c$ in order to make vertices $u,u',c$ adjacent to vertex $y$, then let us contract an arbitrary component of $G \setminus (P \cup C)$ in order to make vertices $a,c$ adjacent.
One obtains from $a,b,c,u,u',y$ a $K_{3,3}$-minor with $\{a,b,y\}$ being one side of the bipartition, that contradicts the fact that $G$ is planar.
\end{itemize}
\end{proof}

\begin{lemma}\label{lem:common-neighbour}
Let $G$ be a prime planar graph, let the path $\Pi=(a,b,c)$ be a separator of $G$, and let $C$ be a component of $G \setminus \Pi$.
Suppose there is some vertex $v \in C$ that is a common neighbour of $a,b,c$.
Then, either $C$ is reduced to $v$, or $(a,v,c)$ is a separator of $G$.
Furthermore, in the latter case, the path $(a,v,c)$ separates vertex $b$ from $C \setminus v$.
\end{lemma}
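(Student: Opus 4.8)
The plan is to prove Lemma~\ref{lem:common-neighbour} by a direct analysis of what happens when we remove $v$ and look at the neighbourhood structure around $a,b,c$. We are given a prime planar graph $G$, an induced path $\Pi=(a,b,c)$ that separates $G$, a component $C$ of $G\setminus\Pi$, and a vertex $v\in C$ adjacent to all three of $a,b,c$. We want to show that either $C=\{v\}$, or $(a,v,c)$ is a separator that isolates $b$ from $C\setminus v$.

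**Main approach.** Suppose $C\neq\{v\}$. The key observation I would exploit is that $b$ cannot "escape" into $C\setminus v$ except through $v$. Concretely, I would argue that $N(b)\cap(C\setminus v)=\emptyset$: if $b$ had a neighbour $w\in C\setminus v$, then together with the existing common neighbour $v$ of $a,b,c$ and a connecting path inside $C$, I could build a forbidden minor. This is exactly the kind of argument already used in Lemma~\ref{lem:unique-neighbour}: there, two common neighbours of $a,b$ in $C$ produce a $K_5$- or $K_{3,3}$-minor. Here I would apply that very lemma, or a variant of its proof, to the pair $a,b$: since $v\in N(a)\cap N(b)\cap C$, Lemma~\ref{lem:unique-neighbour} says $v$ is the \emph{unique} such common neighbour. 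The same statement applied to the pair $b,c$ (the path $(c,b,a)$ is the same separator read backwards) shows $v$ is also the unique common neighbour of $b,c$ in $C$. Thus any neighbour of $b$ in $C$ other than $v$ would be adjacent to $b$ but not simultaneously to $a$ and $c$, and I would derive a contradiction from connectivity of $C$ by routing a path that recreates a second common neighbour or a $K_{3,3}$/$K_5$ structure, exactly as in the proof of Lemma~\ref{lem:unique-neighbour}.

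**Concluding the separator property.** Once $N(b)\cap(C\setminus v)=\emptyset$ is established, the rest is routine. Every path from $b$ into $C\setminus v$ must pass through $v$ (its only neighbour inside $C$) or leave $\Pi$ through $a$ or $c$. Hence $\{a,v,c\}$ separates $b$ from $C\setminus v$: removing these three vertices leaves $b$ with no edge into the remaining part of $C$, and no edge to any vertex outside $C\cup\Pi$ other than through $a,c$, which are removed. Since $C\setminus v$ is nonempty (as $C\neq\{v\}$) and contains vertices adjacent to $a$ or $c$ but now cut off from $b$, the set $(a,v,c)$ is a genuine separator of $G$, and it separates $b$ from $C\setminus v$ as claimed. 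I would also note $a,v,c$ induce a path (not a clique) since $a,c$ are non-adjacent, which keeps it from being a clique-separator contradicting primality.

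**Expected main obstacle.** The delicate step is ruling out neighbours of $b$ in $C\setminus v$, i.e.\ proving $N(b)\cap(C\setminus v)=\emptyset$. The subtlety is that such a neighbour $w$ need not itself be adjacent to $a$ or $c$, so I cannot invoke Lemma~\ref{lem:unique-neighbour} as a black box; instead I must use connectivity of $C$ to route a path from $w$ (or from $v$) back to the boundary and contract it to manufacture a second common neighbour of $b$ with both $a$ and $c$, or directly a $K_5$/$K_{3,3}$-minor. Handling the two cases of whether the connecting paths are internally disjoint or intersecting will mirror the case split in Lemma~\ref{lem:unique-neighbour}, and getting the minor bookkeeping right — ensuring the contracted pieces stay disjoint and the planarity-violating minor genuinely appears — is where the real work lies.
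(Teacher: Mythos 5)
Your reduction of the lemma to the single claim $N(b)\cap(C\setminus v)=\emptyset$ is exactly right (it matches the paper's ``it suffices to prove that $b\notin N(D)$''), and so is the general idea of refuting a bad neighbour $w$ by contracting pieces of the graph into a forbidden minor. But the proposal leaves precisely this crucial step unproved, and the route you sketch for it would stall. You plan to ``use connectivity of $C$ to route a path from $w$ (or from $v$) back to the boundary'' and contract it to manufacture a second common neighbour of $a,b,c$. The problem: let $D$ be the component of $G[C\setminus v]$ containing $w$. A priori you only know $N(D)\subseteq\{a,b,c,v\}$ with $b,v\in N(D)$; connectivity of $C$ gives you nothing more, since any connected subset of $C\setminus v$ containing $w$ lies inside $D$, and a path from $w$ to $a$ or $c$ whose internal vertices avoid $\{a,b,c,v\}$ exists if and only if $a$ (resp.\ $c$) already belongs to $N(D)$. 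If, say, $N(D)=\{b,v\}$ or $\{a,b,v\}$, there is simply nothing to route along, and no contraction inside $D$ can create a vertex adjacent to all of $a,b,c$. So the ``minor bookkeeping'' you defer is not the real difficulty; the real difficulty is establishing that $D$ sees $a$ and $c$ at all, and that is not addressed.

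The missing idea --- and essentially the whole content of the paper's proof --- is a primality argument applied to $N(D)$ itself. Since $N(D)$ separates $D$ from the rest of $G$, and since $ab$, $bc$, $av$, $bv$, $cv$ are all edges while $a,c$ are non-adjacent, every proper subset of $\{a,b,c,v\}$ containing $\{b,v\}$ is a clique; primality of $G$ therefore forces $N(D)=\{a,b,c,v\}$. Once this is known, the minor is immediate and needs none of the disjoint-versus-intersecting path case analysis you anticipate from Lemma~\ref{lem:unique-neighbour}: contract $D$ to a single vertex $x$ (now adjacent to $a,b,c,v$), and contract any component $C'$ of $G\setminus(\Pi\cup C)$ onto $a$ to create the edge $\{a,c\}$ (such a $C'$ exists because $\Pi$ is a separator, and $a,c\in N(C')$ again by primality); then $\{a,b,c,v,x\}$ yields a $K_5$-minor, contradicting planarity. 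Your preliminary deduction from Lemma~\ref{lem:unique-neighbour} (that $w$ is adjacent to neither $a$ nor $c$) is valid but leads nowhere; the clique-separator observation about $N(D)$ is what has to replace it.
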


\begin{proof}
Let us assume that $C \setminus v \neq \emptyset$.
Let $D$ be a connected component of $G[C\setminus v]$.
Note that $v \in N(D)$ because $C$ is a connected component of $G \setminus \Pi$ by the hypothesis.
To prove the lemma, it suffices to prove that $b \notin N(D)$.
By contradiction, suppose that $b \in N(D)$ (see Figure~\ref{fig:common-neighbour-to-path} for an illustration).
Since $v,b,a$ and $v,b,c$ are pairwise adjacent and $G$ has no clique-separator by the hypothesis, then necessarily $N(D) = \{a,b,c,v\}$.

\begin{figure}[h!]
	\centering
	\includegraphics[width=0.2\textwidth]{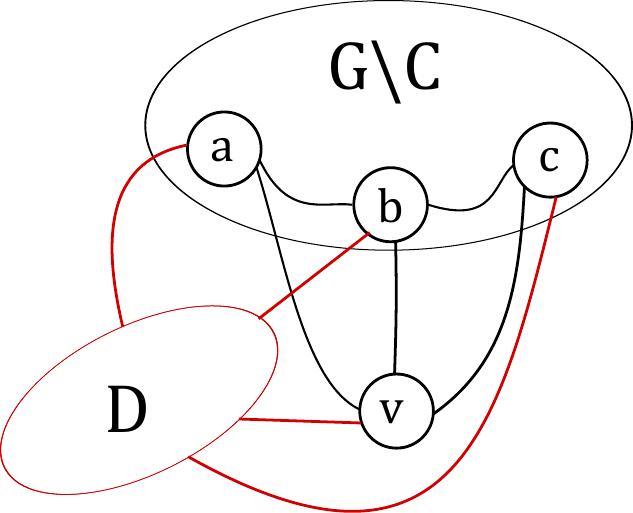}
	\caption{Existence of a component $D \subseteq C \setminus v$ that is adjacent to $b$.}
	\label{fig:common-neighbour-to-path}
\end{figure}

Let us contract the component $D$ to a single vertex $x$.
Then, let $C'$ be any component of $G \setminus (\Pi \cup C)$.
We have that $a,c \in N(C')$ or else $N(C')$ would be a clique-separator of $G$ (either a vertex-separator or an edge-separator).
So, let us contract the component $C'$ onto vertex $a$ in order to make $a$ and $c$ adjacent.
One obtains from $a,b,c,v,x$ a $K_5$-minor, which contradicts the fact that $G$ is planar.
\end{proof}

We recall that the gist of Algorithm \texttt{Leaf-BottomUp} is (informally) to try to remove a leaf-vertex $v$ from $G$ then to apply recursively the algorithm on $G \setminus v$.
Because the algorithm is strongly dependent on the fact that $G$ is prime, it is important to characterize the cases when $G \setminus v$ is also prime. 
Indeed, new clique-decompositions are needed when $G \setminus v$ is not prime, which may provoke a combinatorial explosion of the number of subgraphs to be considered.
Therefore, before we conclude this section, let us characterize whenever there may be clique-separators in $G \setminus v$ with $v$ being a leaf-vertex.
This will first require the following lemma.

\begin{lemma}\label{lem:separatorInCompo}
Let $G=(V,E)$ be a graph and let the path $\Pi=(a,b,c)$ be a separator of $G$. 
Let $C$ be the union of some components of $G \setminus \Pi$ and let $S$ be a separator of $G[C \cup \Pi]$. 
Then, $S$ is a separator in $G$ or $S$ separates $a$ and $c$ in $G[C \cup \Pi]$.

Moreover, in the latter case, $G[C \cup \Pi] \setminus S$ has exactly two components $C_a$ and $C_c$ containing $a$ and $c$ respectively.
\end{lemma}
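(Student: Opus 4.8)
The plan is to isolate the part of $G$ lying outside $G[C\cup\Pi]$, reduce to a single interesting case, and then finish with a short counting argument on the components.

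First I would set $W := V \setminus (C \cup \Pi)$, the union of the remaining components of $G \setminus \Pi$. The crucial structural fact is that there is no edge between $C$ and $W$ (otherwise two vertices lying in different components of $G \setminus \Pi$ would be adjacent); consequently, in $G$ every vertex of $C$ has all its neighbours inside $C \cup \Pi$, and $W$ is attached to the rest of the graph only through $\Pi = \{a,b,c\}$. I would then split according to whether $S$ is a separator of $G$. If it is, the first alternative of the statement holds and we are done. So the real work lies in the case where $G \setminus S$ is connected (here $W \neq \emptyset$, since otherwise $G[C\cup\Pi]=G$ and this case cannot arise), where the goal becomes to show that $S$ separates $a$ from $c$ in $G[C \cup \Pi]$ with exactly the two asserted components.

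In that case write $D_1, \dots, D_k$ for the components of $G[C \cup \Pi] \setminus S$; by hypothesis $k \geq 2$. The key step is that each $D_i$ must contain a vertex of $\Pi' := \Pi \setminus S$. Indeed, if $D_i \cap \Pi' = \emptyset$ then $D_i \subseteq C$, so by the structural fact above $D_i$ has no neighbour in $W$ and, being a component of $G[C\cup\Pi]\setminus S$, no neighbour in $(C \cup \Pi) \setminus S$ outside itself; hence $D_i$ would be a full connected component of $G \setminus S$, contradicting the connectivity of $G \setminus S$ (some $D_j$, $j \neq i$, is disjoint from it). Since the $D_i$ are pairwise disjoint, this already forces $k \leq |\Pi'| \leq 3$.

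It remains to pin down $\Pi'$ and the components, which I expect to be the main (though short) obstacle, handled by a case analysis on $\Pi \cap S$ exploiting that $b$ is adjacent to both $a$ and $c$ in $\Pi$. If $b \notin S$, then $b \in \Pi'$, and every other vertex of $\Pi'$ (necessarily in $\{a,c\}$) is adjacent to $b$, so all of $\Pi'$ lies in the single component of $G[C \cup \Pi]\setminus S$ containing $b$; but then no second component can meet $\Pi'$, contradicting $k \ge 2$ together with the previous step. Hence $b \in S$, so $\Pi' \subseteq \{a,c\}$, and combining $2 \le k \le |\Pi'| \le 2$ gives $\Pi' = \{a,c\}$ and $k = 2$. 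Finally, the two components $D_1, D_2$ each contain a vertex of $\{a,c\}$, and $a,c$ cannot lie in the same one (that would leave the other component free of $\Pi'$-vertices), so $S$ separates $a$ from $c$ in $G[C \cup \Pi]$ and $\{D_1,D_2\} = \{C_a, C_c\}$ with $a \in C_a$ and $c \in C_c$, as required.
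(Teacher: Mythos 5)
Your proof is correct and takes essentially the same route as the paper's: both split on whether $S$ separates $G$, observe that otherwise every component of $G[C\cup\Pi]\setminus S$ must meet $\Pi$ (since a component avoiding $\Pi$ would have all its $G$-neighbours in $S$ and thus be a full component of $G\setminus S$), and then use the adjacencies of the path $\Pi$ to force exactly two components containing $a$ and $c$. The only cosmetic difference is the endgame: the paper fixes a component $C_a \ni a$ and notes every other component avoids $N[C_a] \supseteq \{a,b\}$ and hence contains $c$, whereas you first derive $b \in S$ and then conclude by counting $2 \le k \le |\Pi\setminus S| \le 2$; these are minor variants of the same argument.
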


\begin{proof}
There are two cases.
\begin{itemize}
\item Suppose there exists a component $D$ of $G[C\cup \Pi]	\setminus S$ such that $N_G(D) \subseteq C \cup \Pi$.
Since $D \cap S = \emptyset$, $S \neq V \setminus D$ and $N_G(D) \subseteq S$ therefore $S$ is a separator of $G$ with $D$ being a component of $G \setminus S$.
\item Else, every component $D$ of $G[C\cup \Pi]	\setminus S$ has a neighbour in $V \setminus (C \cup \Pi)$.
The latter implies that $D \cap \Pi \neq \emptyset$ for every component $D$ of $G[C\cup \Pi]	\setminus S$ because $C$ is a union of components of $G \setminus \Pi$ by the hypothesis.
In particular, since there exist at least two components of $G[C\cup \Pi]	\setminus S$ then there must be one containing an endpoint of $\Pi$.
W.l.o.g. assume there is a component $C_a$ of $G[C\cup \Pi]	\setminus S$ such that $a \in C_a$.
Let $C_c$ be any component of $G[C\cup \Pi]	\setminus (S \cup C_a)$.
We have that $C_c \cap N[C_a] = \emptyset$ because $S$ separates $C_a$ and $C_c$ in $G[C \cup \Pi]$.
Therefore, $a,b \notin C_c$ and so, $c \in C_c$ because $C_c \cap \Pi \neq \emptyset$.
This finally proves that $G[C \cup \Pi] \setminus S$ has exactly two components $C_a$ and $C_c$ containing $a$ and $c$ respectively. 
\end{itemize}		
\end{proof}

\begin{theorem}\label{th:GminusVprime?}
Let $G = (V,E)$ be a prime planar graph, let $v$ be a leaf-vertex of Type either 2 or 3 and let $\Pi_v = (a_v,b_v,c_v)$ be as defined in Definition~\ref{def:leafVertex}.
Suppose that there exists a minimal separator $S$ in $G \setminus v$ that is a clique.
Then, $S = \{u_v,b_v\}, u_v \notin \Pi_v$ and the following hold:
\begin{itemize}
\item $V \setminus (\Pi_v \cup \{v\})$ is a full component of $G \setminus \Pi_v$.
\item If $u_v \in N(a_v)$ (resp. $u_v \in N(c_v)$), then $a_v$ (resp. $c_v$) is simplicial in $G \setminus v$ with neighbours $\{u_v,b_v\}$;
\item Furthermore $u_v \notin N(a_v) \cap N(c_v)$ unless $V = \Pi_v \cup \{u_v,v\}$;
\end{itemize}
\end{theorem}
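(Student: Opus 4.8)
The plan is to first pin down the shape of $S$ with Lemma~\ref{lem:separatorInCompo}, then bound its size through the forbidden minors of planar graphs, and finally read off the three structural consequences from the resulting partition of $V\setminus\{v\}$. First I would record that $\Pi_v=(a_v,b_v,c_v)$ is an induced separator of $G$ that isolates $v$ — indeed $N(v)\subseteq\Pi_v$ in both Type~2 and Type~3 (Definition~\ref{def:leafVertex}) — so that $W:=V\setminus(\Pi_v\cup\{v\})$ is exactly the union of the components of $G\setminus\Pi_v$ other than $\{v\}$, and $G[W\cup\Pi_v]=G\setminus v$. Applying Lemma~\ref{lem:separatorInCompo} with $\Pi=\Pi_v$ and $C=W$ to the clique $S$: since $G$ is prime it has no clique separator, so $S$ cannot be a separator of $G$, and the lemma forces $S$ to separate $a_v$ from $c_v$ in $G\setminus v$ with exactly two components $C_a\ni a_v$ and $C_c\ni c_v$, both full for $S$. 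As $b_v$ is adjacent to both $a_v$ and $c_v$ it cannot lie in either component, hence $b_v\in S$.

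Next I would bound $|S|$. Contracting the connected sets $C_a,C_c$ onto $a_v,c_v$ makes $a_v$ and $c_v$ each adjacent to all of $S$ while keeping them non-adjacent and keeping $v$ adjacent to both. If $|S|\ge 3$, then three vertices of the clique $S$ together with $a_v,c_v$ and the contraction of the edge $\{v,a_v\}$ yield a $K_5$-minor, contradicting planarity; thus $|S|\le 2$. To exclude $|S|=1$, note that $S=\{b_v\}$ would give $N_G(C_a)=\{b_v,v\}$: in Type~2 this is an edge, hence a clique separator of $G$, impossible; in Type~3 it forces $C_a=\{a_v\}$ and symmetrically $C_c=\{c_v\}$ (otherwise $\{a_v,b_v\}$ would be a separating edge), so $V=\Pi_v\cup\{v\}$ would be a $C_4$, excluded since $G$ has enough vertices. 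Therefore $S=\{b_v,u_v\}$ with $u_v\neq b_v$, and $u_v\notin\{a_v,c_v\}\subseteq C_a\cup C_c$ gives $u_v\notin\Pi_v$. This also records that $N_G(C_a)=\{b_v,u_v,v\}$, and symmetrically for $C_c$.

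For the first bullet I would show $W$ is a single full component of $G\setminus\Pi_v$. Every component $W_i$ of $G\setminus\Pi_v$ distinct from $\{v\}$ has $N(W_i)\subseteq\Pi_v$ a non-clique separator of $G$, so $N(W_i)\in\{\{a_v,c_v\},\Pi_v\}$; in particular $\{a_v,c_v\}\subseteq N(W_i)$. Since $S=\{b_v,u_v\}$ separates $C_a$ from $C_c$ in $G\setminus v$, in $G\setminus\Pi_v$ the sets $C_a\setminus\{a_v\}$ and $C_c\setminus\{c_v\}$ can meet only through $u_v$; hence any $W_i$ not containing $u_v$ lies entirely inside $C_a\setminus\{a_v\}$ or $C_c\setminus\{c_v\}$. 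But if $W_i\subseteq C_a\setminus\{a_v\}$ then its neighbours outside $W_i$ lie in $(C_a\cup\{b_v,u_v,v\})\cap\Pi_v=\{a_v,b_v\}$, which omits $c_v$, contradicting $\{a_v,c_v\}\subseteq N(W_i)$. Thus the only component of $W$ is the one containing $u_v$, so $W$ is connected; and since $u_v\sim b_v$ while $a_v$ and $c_v$ each have a neighbour in $W$, we get $N(W)=\Pi_v$, i.e.\ $W$ is a full component.

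Finally, for the simplicial statement, suppose $u_v\in N(a_v)$. Then $\{a_v,b_v,u_v\}$ is a triangle, and if $C_a\setminus\{a_v\}\neq\emptyset$ its external neighbourhood is contained in this triangle (using $N_G(C_a)=\{b_v,u_v,v\}$ and that $v$ meets $C_a$ only at $a_v$), making it a separating clique of $G$ — impossible. Hence $C_a=\{a_v\}$ and $N_{G\setminus v}(a_v)=S=\{b_v,u_v\}$ is a clique, so $a_v$ is simplicial in $G\setminus v$; the case $u_v\in N(c_v)$ is symmetric. Combining both yields the last bullet: if $u_v\in N(a_v)\cap N(c_v)$ then $C_a=\{a_v\}$ and $C_c=\{c_v\}$, whence $V\setminus\{v\}=\{a_v,b_v,c_v,u_v\}$ and $V=\Pi_v\cup\{u_v,v\}$. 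The main obstacle is the size bound on $S$ — choosing the right contraction to expose a forbidden minor — together with the bookkeeping in the first bullet that ties the components of $G\setminus\Pi_v$ back to the pieces $C_a$, $C_c$ and $u_v$; ruling out $|S|=1$ in Type~3 is the single place where the cardinality hypothesis on $G$ is genuinely used.
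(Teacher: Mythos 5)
Your proof is correct in substance and follows the same skeleton as the paper's: both apply Lemma~\ref{lem:separatorInCompo} with $C = V \setminus (\Pi_v \cup \{v\})$ to convert ``$S$ is not a separator of $G$ because $G$ is prime'' into ``$S$ is an $a_vc_v$-separator of $G\setminus v$ with exactly two components $C_a, C_c$'', which immediately gives $b_v \in S$, and both derive the three bullets from clique-separator arguments essentially identical to yours (for instance, the paper also shows $C_a\setminus a_v=\emptyset$ when $u_v\in N(a_v)$ because $(a_v,u_v,b_v)$ would otherwise be a clique-separator). The one genuinely different step is the bound $|S|\le 2$: the paper invokes Corollary~\ref{cor:sep-planar} twice --- $S$, being a clique and a minimal separator of the planar graph $G\setminus v$, induces a singleton, an edge or a triangle, and the triangle is excluded because $S\cup\{v\}$ is a minimal separator of $G$ whose induced subgraph (again a cycle or a forest of paths) cannot strictly contain a triangle --- whereas you contract the two full components onto $a_v,c_v$ and then the edge $\{v,a_v\}$ to exhibit a $K_5$-minor whenever $|S|\ge 3$. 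Your argument is self-contained and equally valid; the paper's simply reuses machinery it needs elsewhere. Your proof of the first bullet (uniqueness of the component of $G\setminus\Pi_v$ via $u_v$) is also a correct, slightly more detailed variant of the paper's ``every component must contain an $a_vc_v$-path, hence must contain $u_v$''.

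The one step that deserves scrutiny is your dismissal of $|S|=1$ in Type~3, where you exclude $C_4$ ``since $G$ has enough vertices'': the statement of Theorem~\ref{th:GminusVprime?} contains \emph{no} cardinality hypothesis, so within a blind proof of this statement that appeal has nothing to lean on. In fact you have put your finger on a genuine defect of the statement itself: for $G = C_4$ (which is prime and planar, with $v$ of Type 3) the set $S=\{b_v\}$ is a clique minimal separator of $G\setminus v$, yet the conclusion $S=\{u_v,b_v\}$ fails, so the theorem is false as literally stated. The paper's own proof glosses over exactly this point --- it asserts that $S=\{b_v\}$ would make $\{a_v,b_v\}$ and $\{b_v,c_v\}$ edge-separators of $G$, which is untrue when $V=\Pi_v\cup\{v\}$, as there is then nothing left to separate. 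In every application the theorem is only invoked under the standing assumption of Algorithm \texttt{Leaf-BottomUp} that $G$ has at least seven vertices, which silently repairs both proofs; strictly, the hypothesis $V\neq\Pi_v\cup\{v\}$ (equivalently $G\neq C_4$) should be added to the statement, after which your argument, like the paper's, goes through.
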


\begin{figure}[h!]
 \centering
 \includegraphics[width=0.75\textwidth]{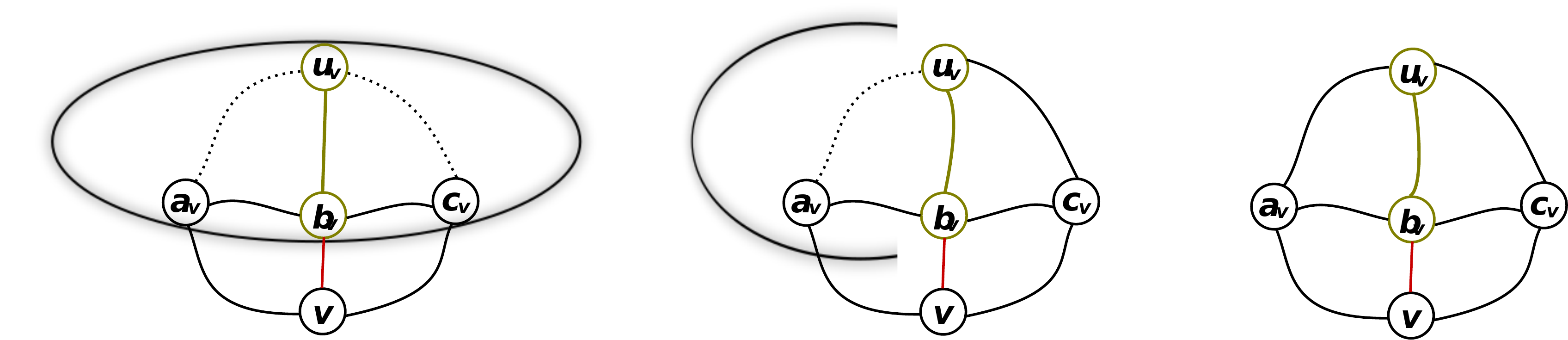}
 \caption{Existence of a clique-separator in $G \setminus v$.}
 \label{fig:clique-reappear}
\end{figure}

\begin{proof}
Note that the subgraph $G \setminus v$ is planar and $S$ is a minimal separator of $G \setminus v$ by the hypothesis, therefore by Corollary~\ref{cor:sep-planar} either $S$ induces a cycle or a forest of path.
Since in addition $S$ is a clique by the hypothesis, it follows that $S$ either induces a singleton, an edge or a triangle.
Since $S$ is a clique and $G$ is prime, $S$ is not a separator of $G$, so by Lemma~\ref{lem:separatorInCompo} with $C = V \setminus (\Pi_v \cup \{v\})$, $S$ is an $a_vc_v$-separator of $G \setminus v$.
This both implies that $b_v \in S$ and $S' := S \cup \{v\}$ is a minimal $a_vc_v$-separator of $G$.
In particular, $S$ being a strict subset of some minimal separator of $G$ it cannot induce a cycle (by Corollary~\ref{cor:sep-planar}), hence it must induce either a singleton or an edge.
Furthermore, still by Lemma~\ref{lem:separatorInCompo} with $C = V \setminus (\Pi_v \cup \{v\})$ there exist exactly two components $C_a,C_c$ of $G \setminus (S \cup \{v\})$, with $a_v \in C_a, c_v \in C_c$.
As a result, $S \setminus b_v \neq \emptyset$, or else $\{a_v,b_v\},\{b_v,c_v\}$ would be edge-separators of $G$, thus contradicting the hypothesis.
Let $S = \{u_v,b_v\}, u_v \notin \Pi_v \cup \{v\}$.
If $u_v \in N(a_v)$, then $C_a \setminus a_v = \emptyset$ (and so, $a_v$ is simplicial in $G \setminus v$), for otherwise $(a_v,u_v,b_v)$ would be a clique-separator of $G$.
Similarly, if $u_v \in N(c_v)$ then $C_c \setminus c_v = \emptyset$ (and so, $c_v$ is simplicial in $G \setminus v$).
In particular if $u_v \in N(a_v) \cap N(c_v)$ then $\Pi_v \cup \{u_v,v\} = V$.
Last, as there exists an $a_vc_v$-path in every component $C'$ of $G \setminus (\Pi_v \cup \{v\})$ because $G$ has no clique-separator by the hypothesis, therefore $u_v \in C'$.
This implies that $V \setminus (\Pi_v \cup \{v\})$ is a full component of $G \setminus \Pi_v$.
\end{proof}

\subsubsection{Constrained star-decompositions}

In the following, it will be useful to impose additional structure on the star-decompositions.
In order to do that, we will prove properties on some pairs of vertices in the graph.
Namely, we will prove that when $x,y \in V$ satisfy a few technical conditions, then it can be assumed that $T_x \cup T_y$ is a subtree of the star-decomposition $(T,{\cal X})$.

\begin{lemma}\label{lem:strong-pair}
Let $G$ be a connected graph with $tb(G) = 1$, let $x,y \in V(G)$ be non-adjacent (and $x \neq y$).

Suppose the pair $(x,y)$ satisfies that for every $xy$-separator $S$ of $G$, if there is $z \notin \{x,y\}$ that dominates $S$ then $z \in N_G(x) \cap N_G(y)$.

Then, there is a star-decomposition $(T,{\cal X})$ of $G$ with $B_x,B_y \in {\cal X}$, $x \in B_x, y \in B_y$ and either $B_x = B_y$ or $B_x,B_y$ are adjacent in $T$.
Moreover, in the latter case, $B_x \subseteq N[x], B_y \subseteq N[y]$.
\end{lemma}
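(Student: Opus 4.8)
The plan is to start from an arbitrary reduced star-decomposition of $G$ (one exists by Corollary~\ref{cor:equiv-tb-star} together with Lemma~\ref{lem:dominator-in-bag}, since $tb(G)=1$) and to modify it, guided by the separator hypothesis, until the interface between the $x$-region and the $y$-region becomes clean. Concretely, among all star-decompositions $(T,{\cal X})$ of $G$ I would pick one minimizing $d := dist_T(T_x,T_y)$, the distance in $T$ between the two subtrees $T_x$ and $T_y$; note $T_x\cap T_y\neq\emptyset$ exactly when some bag contains both $x$ and $y$, in which case $d=0$. If $d=0$, set $B_x=B_y$ equal to a common bag: this realizes the first alternative of the conclusion, and we are done.

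So assume $d\ge 1$. Let $B_x\in T_x$, $B_y\in T_y$ be the closest bags, and let $B$ be the neighbour of $B_x$ on the $B_x$–$B_y$ path, so that $S_B:=B_x\cap B$ is an $xy$-separator with $x,y\notin S_B$ (because $x\notin B$ and $y\notin B_x$). Since $(T,{\cal X})$ is a star-decomposition, $B\subseteq N[z]$ for some $z$, and as $S_B\subseteq B$ this $z$ dominates the $xy$-separator $S_B$; the hypothesis then forces $z\in\{x,y\}\cup(N_G(x)\cap N_G(y))$. I first rule out $z=x$: since $x\notin B$ this would give $B\subseteq N_G(x)$, so adding $x$ to $B$ yields a valid star-decomposition (its dominator becomes $x$, and $T_x$ stays a subtree because $B$ is adjacent to $B_x\in T_x$) with strictly smaller $d$, contradicting minimality. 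The symmetric argument at the $B_y$-end rules out the analogous dominator being $y$, and the same augmentation handles the case where an interface bag dominated by $y$ is already adjacent to $T_y$. Repeatedly augmenting every interface bag that is dominated by $x$ or by $y$, I reduce to the situation $d=1$ or to the situation in which each remaining interface bag is dominated by a \emph{common neighbour} of $x$ and $y$. When $d=1$, the set $S:=B_x\cap B_y$ is a genuine $xy$-separator, and the dominators $u,v$ of $B_x,B_y$ both dominate $S$; the hypothesis gives $u,v\in\{x,y\}\cup(N_G(x)\cap N_G(y))$, while $x\in B_x\subseteq N[u]$ forbids $u=y$ (else $x\in N[y]$), so $u\in\{x\}\cup(N_G(x)\cap N_G(y))$ and symmetrically $v\in\{y\}\cup(N_G(x)\cap N_G(y))$. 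If $u=x$ and $v=y$, then $B_x\subseteq N[x]$ and $B_y\subseteq N[y]$ and we are in the adjacent alternative, as required.

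The main obstacle is the residual case where an interface bag is dominated by a common neighbour $w\in N_G(x)\cap N_G(y)$ rather than by $x$ (or $y$). The key structural observation I would exploit is that the edges $wx$ and $wy$ force the subtree $T_w$ to meet both of the disjoint subtrees $T_x$ and $T_y$, so $T_w$ must contain the entire $T_x$–$T_y$ path; in particular $w$ lies in \emph{every} bag along this path, hence in both interface bags and in $S$. Using this, I would perform surgery to replace the $w$-dominated interface bag by one dominated by $x$ (respectively $y$): invoking Lemma~\ref{lem:sep-keep} on the two sides of $S$ to obtain star-decompositions of $G[C_x\cup S]$ and $G[(V\setminus C_x)]$, and the Helly property (Lemma~\ref{lem:helly}) to locate bags carrying the relevant neighbourhoods of $x$ and $y$, I would reattach the two halves along $S$ so that the two bags meeting at the interface are exactly $\{x\}\cup(\,\cdot\,)\subseteq N[x]$ and $\{y\}\cup(\,\cdot\,)\subseteq N[y]$. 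Checking that this reassembly remains a tree decomposition — that every edge is still covered and the subtree condition $T_v$ survives for each $v$ — and that the two new interface bags are genuinely dominated by $x$ and $y$ is the delicate, case-heavy heart of the argument, and I expect that this is where essentially all of the work lies.
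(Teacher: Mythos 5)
Your setup (minimize $d=dist_T(T_x,T_y)$, then augment bags to contradict minimality) is sound, and your handling of the cases where a bag on the $T_x$--$T_y$ path is dominated by $x$ or by $y$ is fine. But the proof has a genuine gap at exactly the place you defer to ``surgery'': the case where a bag $B$ on the path is dominated by a common neighbour $w\in N_G(x)\cap N_G(y)$. You try there to rebuild the decomposition (via Lemma~\ref{lem:sep-keep} and Helly) so that the two interface bags become dominated by $x$ and by $y$ --- i.e.\ you try to force the \emph{second} alternative of the conclusion --- and you leave all of that reassembly unverified. This is misdirected: the conclusion is a disjunction precisely because this case collapses to the \emph{first} alternative, and it does so in one line. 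Since $w$ is adjacent to both $x$ and $y$, you may simply add $x$ and $y$ to $B$: the enlarged bag is still dominated by $w$, the subtrees $T_x$ and $T_y$ remain connected (the path of augmented bags starts at $B_x\in T_x$ and ends at $B_y\in T_y$), and every edge stays covered; now some bag contains both $x$ and $y$, i.e.\ $B_x=B_y$. Equivalently, under your own minimality of $d$: if $d\ge 1$, the bag adjacent to $B_x$ on the path is either dominated by $x$/a common neighbour (add $x$ to it and $d$ strictly decreases, contradiction) or, when $d=1$, by $y$ itself --- so minimality already forces either $d=0$ or $d=1$ with the two dominators being $x$ and $y$, which is the whole lemma. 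Note also that your own key observation --- that $w$ lies in \emph{every} bag of the path --- is the paper's central fact ($N(x)\cap N(y)$ is contained in all bags between $B_x$ and $B_y$), but it is used there to justify the augmentation just described, not to support a reattachment argument.

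This is essentially the paper's proof: it takes an arbitrary star-decomposition with $B_x,B_y$ as close as possible, and in every case where a bag on the path (including $B_x$ or $B_y$ themselves) is dominated by a common neighbour, it inserts $x$ and $y$ into that bag to land in the alternative $B_x=B_y$; only when both end-bags are dominated by $x$ and $y$ themselves does it output the adjacent-bags alternative, with $B_x\subseteq N[x]$, $B_y\subseteq N[y]$ and $B_x\cap B_y=N(x)\cap N(y)$. So the missing step in your write-up is not ``the case-heavy heart of the argument'' --- it is the easiest case, and recognizing that the disjunctive conclusion absorbs it is the one idea your proposal lacks.
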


\begin{proof}
Consider a star-decomposition $(T,{\cal X})$ of $G$, that exists by Lemma~\ref{lem:dominator-in-bag}. 
If $x$ and $y$ are not in a same bag, let $B_x$ and $B_y$ be the bags containing respectively $x$ and $y$ and as close as possible in $T$. 
By the properties of a tree decomposition, $N(x) \cap N(y) \subseteq B_x \cap B_y$. 
Hence, for any bag $B$ between $B_x$ and $B_y$ in $T$, $N(x) \cap N(y) \subseteq B$.

\begin{itemize}
\item
 {\bf Case 1:} If $B_x$ and $B_y$ are not adjacent in $T$, let $B$ be any bag in the path between $B_x$ and $B_y$ in $T$.
 By the properties of a tree decomposition, $B$ is an $xy$-separator.
 Moreover, let $z \in B$ dominate the bag, by the hypothesis $z \in N(x) \cap N(y)$ because $x,y \notin B$.
 As a result, adding $x$ and $y$ in each bag $B$ between $B_x$ and $B_y$ achieves a star-decomposition of $G$ that has a bag containing both $x,y$.

\medskip
\noindent 
 \item 
{\bf Case 2:} Now, let us assume that $B_x$ and $B_y$ are adjacent in $T$. 
Note that, if $B_x \subseteq N[z]$ for some $z \in N(x) \cap N(y)$ (resp., if $B_y \subseteq N[z]$ for some $z \in N(x) \cap N(y)$) the result holds. 
Indeed, adding $y$ in $B_x$ (resp., $x$ in $B_y$) achieves a star-decomposition of $G$ that has a bag containing both $x,y$.

So, let us consider the case when none of the two bags $B_x, B_y$ is dominated by a vertex of $N(x) \cap N(y)$.
Then, $B_x \setminus x$ and $B_y \setminus y$ are $xy$-separators by the properties of a tree decomposition.
Let $z_x \in B_x, z_y \in B_y$ satisfy $B_x \subseteq N[z_x]$ and $B_y \subseteq N[z_y]$.
By the hypothesis, $z_x \in \{x\} \cup (N(x) \cap N(y))$ and $z_y \in \{y\} \cup (N(x) \cap N(y))$.
Thus it follows that $z_x = x$ and $z_y = y$ (or else, we are back to Case 1).
Note that $B_x \cap B_y = N(x) \cap N(y)$ in such a case.
\end{itemize}
\end{proof}

We will mostly use the following two weaker versions of Lemma~\ref{lem:strong-pair} in our proofs.

\begin{corollary}
\label{cor:strong-sep}
Let $G$ be a connected graph with $tb(G) = 1$, let $x,y \in V(G)$ be non-adjacent (and $x \neq y$).

Suppose there exists a minimal separator $S\subseteq (N(x) \cap N(y)) \cup \{x,y\}$ in $G$ and $\{x,y\} \subseteq S$.

Then, there is a star-decomposition $(T,{\cal X})$ of $G$ with $B_x,B_y \in {\cal X}$, $x \in B_x, y \in B_y$ and either $B_x = B_y$ or $B_x,B_y$ are adjacent in $T$.
Moreover, in the latter case, $B_x \subseteq N[x], B_y \subseteq N[y]$.
\end{corollary}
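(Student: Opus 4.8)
The plan is to derive Corollary~\ref{cor:strong-sep} directly from Lemma~\ref{lem:strong-pair} by verifying that its hypothesis is implied by the hypothesis of the corollary. Both statements have identical conclusions, so it suffices to check that whenever there is a minimal separator $S \subseteq (N(x) \cap N(y)) \cup \{x,y\}$ with $\{x,y\} \subseteq S$, then the pair $(x,y)$ satisfies the separator-domination condition required by Lemma~\ref{lem:strong-pair}: namely, that for every $xy$-separator $S'$ of $G$, any vertex $z \notin \{x,y\}$ dominating $S'$ must lie in $N(x) \cap N(y)$.

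First I would fix an arbitrary $xy$-separator $S'$ and a vertex $z \notin \{x,y\}$ with $S' \subseteq N[z]$, and aim to show $z \in N(x) \cap N(y)$. The key observation is that the hypothesized minimal separator $S$, having both $x$ and $y$ in it while being contained in $(N(x)\cap N(y)) \cup \{x,y\}$, forces $x$ and $y$ to have a rich common neighbourhood structure. Since $S$ is a minimal separator, it has two full components $A$ and $B$; both $x$ and $y$ belong to $S$, and every vertex of $S \setminus \{x,y\}$ lies in $N(x) \cap N(y)$. The separator $S'$ separates $x$ from $y$, so on a shortest $xy$-path through the components of $G \setminus S$ one meets $S'$; I would use the full components $A,B$ of $S$ to produce short detours forcing that any dominator $z$ of $S'$ is adjacent to both $x$ and $y$. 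Concretely, since $x,y \in S$ and $A,B$ are full components, both $x$ and $y$ have neighbours in $A$ and in $B$; any $xy$-separator $S'$ must hit both the $A$-side and the $B$-side paths, and a single vertex $z$ dominating all of $S'$ must then be close to both $x$ and $y$, yielding $z \in N(x) \cap N(y)$.

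The main obstacle will be making precise the step that a dominator $z$ of the arbitrary separator $S'$ is forced into $N(x) \cap N(y)$ rather than merely close to $x$ and $y$; this is where the specific arrangement $\{x,y\} \subseteq S \subseteq (N(x)\cap N(y)) \cup \{x,y\}$ must be exploited fully. I expect the cleanest route is to argue that $z$ cannot dominate $S'$ while being non-adjacent to, say, $x$: if $z \notin N(x)$, then since $x \notin S'$ (as $z \neq x$ and $x$ would have to be dominated differently) one builds a path from $x$ avoiding $N[z]$ through a full component of $S$, contradicting that $S'$ separates $x$ from $y$ inside $N[z]$. Symmetrically $z \in N(y)$, giving $z \in N(x)\cap N(y)$ as required.

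Once the hypothesis of Lemma~\ref{lem:strong-pair} is established, the conclusion of the corollary follows immediately by invoking that lemma, since the two conclusions coincide verbatim. I would close by simply writing that the existence of the desired star-decomposition $(T,\mathcal{X})$ with adjacent (or equal) bags $B_x \ni x$, $B_y \ni y$ and $B_x \subseteq N[x]$, $B_y \subseteq N[y]$ in the adjacent case is exactly Lemma~\ref{lem:strong-pair} applied to the pair $(x,y)$.
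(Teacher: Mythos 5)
Your proposal is correct and follows essentially the same route as the paper's proof: both reduce the corollary to Lemma~\ref{lem:strong-pair} by showing that any $z \notin \{x,y\}$ dominating an arbitrary $xy$-separator $S'$ must lie in $N(x) \cap N(y)$, using the fact that $S'$ must meet each of the (at least two) full components of $G \setminus S$ because $x,y \in S$ send paths through each of them. The paper's finishing step is marginally more direct than your contradiction argument --- since $S'$ contains vertices of two distinct full components, a vertex dominating $S'$ cannot lie in any component and hence lies in $S \setminus \{x,y\} \subseteq N(x) \cap N(y)$ --- but this is the same underlying observation as your path-through-a-full-component detour (which, to be complete, needs the small remark that $z \notin N(x)$ forces $z \notin S$, so a full component avoiding $z$ can be chosen).
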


\begin{proof}
We claim that for every $xy$-separator $S'$ of $G$, if there is $z \notin \{x,y\}$ such that $S' \subseteq N[z]$ then $z \in N(x) \cap N(y)$.
Observe that if the claim holds, then the corollary follows from Lemma~\ref{lem:strong-pair}.
To prove the claim, let $S \subseteq (N(x) \cap N(y)) \cup \{x,y\}$ be a separator of $G$ and $\{x,y\} \subseteq S$, that exists by the hypothesis.
Note that for any full component $C$ of $G \setminus S$, the $xy$-separator $S'$ must contain some vertex in $C$.
Since there are at least two full components of $G \setminus S$, then $z \in S \setminus (x,y) \subseteq N(x) \cap N(y)$, that finally proves the claim.
\end{proof}

So far, the two above results in this section (Lemma~\ref{lem:strong-pair} and Corollary~\ref{cor:strong-sep}) apply to \emph{general} graphs with tree-breadth one.
However, we will need the fact that the graph is planar for the following corollary. 

\begin{corollary}
\label{cor:strong-three}
Let $G$ be a connected graph with $tb(G) = 1$, let $x,y \in V(G)$ be non-adjacent (and $x \neq y$).

Suppose $G$ is $K_{3,3}$-minor-free and $|N_G(x) \cap N_G(y)| \geq 3$.

Then, there is a star-decomposition $(T,{\cal X})$ of $G$ with $B_x,B_y \in {\cal X}$, $x \in B_x, y \in B_y$ and either $B_x = B_y$ or $B_x,B_y$ are adjacent in $T$.
Moreover, in the latter case, $B_x \subseteq N[x], B_y \subseteq N[y]$.
\end{corollary}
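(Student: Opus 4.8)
The plan is to reduce the statement to Lemma~\ref{lem:strong-pair}. It suffices to verify that the pair $(x,y)$ satisfies the separator hypothesis of that lemma: for every $xy$-separator $S$ of $G$ and every $z \notin \{x,y\}$ with $S \subseteq N[z]$, one has $z \in N_G(x) \cap N_G(y)$. Once this is established, the conclusion of Corollary~\ref{cor:strong-three} is exactly the conclusion delivered by Lemma~\ref{lem:strong-pair}, so no further work is needed.

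First I would fix three distinct common neighbours $w_1, w_2, w_3 \in N_G(x) \cap N_G(y)$, which exist by the hypothesis $|N_G(x) \cap N_G(y)| \geq 3$. Let $S$ be an arbitrary $xy$-separator and let $z \notin \{x,y\}$ dominate $S$, i.e. $S \subseteq N[z]$. Since $S$ separates $x$ from $y$, we have $x,y \notin S$ and $x,y$ lie in distinct components of $G \setminus S$. For each $i$, the vertex $w_i$ is adjacent to both $x$ and $y$; were $w_i \notin S$, then $w_i$ would lie in the component of $x$ (being adjacent to $x$) and also in that of $y$, forcing $x$ and $y$ into a common component of $G \setminus S$, a contradiction. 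Hence $w_1, w_2, w_3 \in S \subseteq N[z]$.

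Then I would argue that $z$ must coincide with one of the $w_i$. Suppose not; then each $w_i \in N[z] \setminus \{z\}$, so $z$ is adjacent to every $w_i$. Because $x \neq y$, $z \notin \{x,y\}$, each $w_i$ differs from $x$ and $y$ (they are neighbours in a simple graph), the $w_i$ are pairwise distinct, and $z \neq w_i$ in this case, the six vertices $w_1, w_2, w_3$ and $x, y, z$ are pairwise distinct. Moreover every $w_i$ is adjacent to each of $x, y, z$, so $\{w_1,w_2,w_3\}$ and $\{x,y,z\}$ are the two sides of a $K_{3,3}$ \emph{subgraph}, hence a $K_{3,3}$ minor, contradicting the hypothesis that $G$ is $K_{3,3}$-minor-free. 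Therefore $z = w_i$ for some $i$, which gives $z \in N_G(x) \cap N_G(y)$ and completes the verification of the hypothesis of Lemma~\ref{lem:strong-pair}.

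The only delicate point is the bookkeeping guaranteeing that the six vertices forming the forbidden $K_{3,3}$ are genuinely distinct, so that the nine incidences constitute an honest $K_{3,3}$; everything else is a direct appeal to Lemma~\ref{lem:strong-pair}. It is worth noting that planarity is invoked only through $K_{3,3}$-minor-freeness, which matches the remark preceding the statement that this is precisely where the planarity assumption enters (in contrast with Lemma~\ref{lem:strong-pair} and Corollary~\ref{cor:strong-sep}, which hold for general graphs of tree-breadth one).
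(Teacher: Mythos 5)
Your proposal is correct and follows essentially the same route as the paper: verify the separator hypothesis of Lemma~\ref{lem:strong-pair} by noting that every $xy$-separator contains $N_G(x) \cap N_G(y)$, and then rule out a dominator $z \notin N_G(x) \cap N_G(y)$ via a forbidden $K_{3,3}$ with sides $\{x,y,z\}$ and (three) common neighbours. The only cosmetic difference is that the paper exhibits a $K_{3,|N(x)\cap N(y)|}$-minor on all of $N_G(x)\cap N_G(y)$ while you restrict to three common neighbours, which changes nothing of substance.
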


\begin{proof}
We claim that for every $xy$-separator $S$ of $G$, if there is $z \notin \{x,y\}$ such that $S \subseteq N[z]$ then $z \in N(x) \cap N(y)$.
Observe that if the claim holds, then the corollary follows from Lemma~\ref{lem:strong-pair}.
To prove the claim, first recall that $|N(x) \cap N(y)| \geq 3$.
Since vertex $z$ dominates $S$ and $S$ is an $xy$-separator, therefore, $z$ dominates $N(x) \cap N(y)$ because $N(x) \cap N(y) \subseteq S$. 
In such case, $z \in N(x) \cap N(y)$, or else, $G$ admits a $K_{3,|N(x) \cap N(y)|}$-minor with $\{x,y,z\}$ and $N(x) \cap N(y)$ being the respective of the bipartition, which contradicts the hypothesis.
\end{proof}

Before we conclude this section, let us emphasize a useful consequence of Corollary~\ref{cor:strong-sep} regarding minimal $2$-separators.

\begin{lemma}\label{lem:2-separator}
Let $G=(V,E)$ with $tb(G) = 1$, let $x,y \in V$ be non-adjacent such that $S = \{x,y\}$ is a minimal separator of $G$ ($x \neq y$).
For every full component $C$ of $G \setminus S$, we have that $N(x) \cap N(y) \cap C \neq \emptyset$.
\end{lemma}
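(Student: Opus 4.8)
The plan is to apply Corollary~\ref{cor:strong-sep} with the separator $S=\{x,y\}$ itself, and then, for each fixed full component $C$, to pin a common neighbour of $x$ and $y$ \emph{inside} $C$ by exploiting the fact that $C$ being full forces an $x$--$y$ path through $C$.

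First I would observe that the hypotheses of Corollary~\ref{cor:strong-sep} hold verbatim: $S=\{x,y\}$ is a minimal separator of $G$, it trivially satisfies $S\subseteq (N(x)\cap N(y))\cup\{x,y\}$, and $\{x,y\}\subseteq S$. Hence there is a star-decomposition $(T,{\cal X})$ of $G$ with bags $B_x\ni x$ and $B_y\ni y$ that are either equal or adjacent, and in the adjacent case $B_x\subseteq N[x]$, $B_y\subseteq N[y]$. Now fix a full component $C$. Since $N_G(C)=S=\{x,y\}$, both $x$ and $y$ have a neighbour in $C$, so the subtrees $T_x$ and $T_y$ each meet the (connected) subtree $T_C$ spanned by the bags intersecting $C$; moreover, as $C$ is connected, there is an $x$--$y$ path whose internal vertices all lie in $C$.

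Then I would split according to the outcome of the corollary. If $B_x=B_y$, then $T_x\cap T_y\neq\emptyset$, so $T_x,T_y,T_C$ pairwise intersect and the Helly property (Lemma~\ref{lem:helly}) gives a bag $B$ containing both $x$ and $y$ and meeting $C$. Its dominating vertex $z$ (which exists since $(T,{\cal X})$ is a star-decomposition) is adjacent to the non-adjacent pair $x,y$, hence $z\in N(x)\cap N(y)$; and since $z$ dominates a vertex of $C$, a short component-connectivity argument places $z$ itself in $C$. If instead $B_x,B_y$ are adjacent, then $D:=B_x\cap B_y\subseteq N(x)\cap N(y)$ (using non-adjacency of $x,y$ to discard $x,y$ from $D$), and $D$ is an $xy$-separator; the $x$--$y$ path through $C$ must cross $D$, yielding a vertex of $D\cap C\subseteq N(x)\cap N(y)\cap C$.

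The main obstacle --- and the reason one cannot simply restrict to $G[C\cup\{x,y\}]$ --- is ensuring the common neighbour lands inside the \emph{prescribed} component $C$ rather than elsewhere: deleting the other components turns $\{x,y\}$ into a non-separator (e.g.\ the path $x,c_1,c_2,c_3,y$ has tree-breadth one yet no common neighbour of its endpoints), so the conclusion genuinely uses that $C$ is a full component of the \emph{global} separator. Both cases above overcome this by routing an $x$--$y$ path through $C$ and forcing either the separator $D$ or the dominator $z$ to meet $C$; the only mildly delicate point is verifying that a dominator adjacent to a vertex of $C$ must itself belong to $C$, which holds because it is distinct from $x,y$ and hence lies in the same connected component of $G\setminus\{x,y\}$ as its neighbour in $C$.
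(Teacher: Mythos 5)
Your proof is correct and follows essentially the same route as the paper's: both split into the case where $x,y$ share a bag (handled via the Helly property applied to $T_x,T_y,T_C$, with the bag's dominator forced into $C$) and the case of two adjacent dominated bags (where $B_x\cap B_y\subseteq N(x)\cap N(y)$ is an $xy$-separator that must meet the full component $C$). The only difference is organizational — you invoke Corollary~\ref{cor:strong-sep} up front and case-split on its outcome, whereas the paper first fixes a distance-minimizing star-decomposition and calls the corollary only in the disjoint-subtree case — which does not change the substance of the argument.
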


\begin{proof}
Let $(T,{\cal X})$ be a star-decomposition of $G$, that exists by Lemma~\ref{lem:dominator-in-bag}, minimizing the distance in $T$ between the subtrees $T_x$ and $T_y$ (respectively induced by the bags containing $x$ and $y$).
There are two cases.
\begin{itemize}
\item First, suppose that $T_x \cap T_y \neq \emptyset$.
For any full component $C$ of $G \setminus S$, let $T_C$ be the subtree that is induced by all bags intersecting $C$.
Because $C$ is a full component, there must be an edge between $x$ and a vertex of $C$, and this edge is in a bag of $T_x \cap T_C$.
Similarly, there must be an edge between $y$ and a vertex of $C$, and this edge is in a bag of $T_y \cap T_C$.
As a result, the subtrees $T_x,T_y,T_C$ are pairwise intersecting, and so by the Helly property (Lemma~\ref{lem:helly}) $T_x \cap T_y \cap T_C \neq \emptyset$ {\it i.e.}, there exists a bag $X_t$ which contains $S$ and it intersects $C$.
Let $z \in X_t$ dominate the bag.
Note that $z \in C \cup S$ because it has to dominate some vertices in $C$ and so, it cannot be in $V \setminus (C \cup S)$.
Furthermore, recall that $x,y$ are non-adjacent by the hypothesis.
Therefore, $z \in C \cap N(x) \cap N(y)$, and the result holds for any full component $C$ of $G \setminus S$.
\item Else, since $S = \{x,y\}$ is a minimal separator and we assume $(T,{\cal X})$ to minimize the distance in $T$ between $T_x$ and $T_y$, by Corollary~\ref{cor:strong-sep} there are two adjacent bags $B_x,B_y$ such that $x \in B_x \setminus B_y$ dominates $B_x$, $y \in B_y \setminus B_x$ dominates $B_y$.
Since $B_x \cap B_y$ is an $xy$-separator by the properties of a tree-decomposition, then $B_x \cap B_y \cap C \neq \emptyset$ for every full component $C$ of $G  \setminus S$, that is $N(x) \cap N(y) \cap C \neq \emptyset$.
\end{itemize}\end{proof}

\subsubsection{Bounded Treewidth}

Independently from Algorithm \texttt{Leaf-BottomUp}, let us introduce in this section another property of (not necessarily prime) planar graphs with tree-breadth one.
More precisely, we prove these graphs have bounded treewidth.
To prove this property, we will use the same terminology as for the previous subsections. 

\begin{lemma}
\label{lem:tw-planar-tb-1}
Let $G$ be planar with $tb(G) \leq 1$.
Then, $tw(G) \leq 4$ and the upper-bound is sharp.
\end{lemma}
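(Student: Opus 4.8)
The plan is to prove that every planar graph $G$ with $tb(G) \le 1$ has $tw(G) \le 4$ by exhibiting a tree decomposition of bounded width, starting from a star-decomposition. By Lemma~\ref{lem:dominator-in-bag} and Corollary~\ref{cor:equiv-tb-star}, we may fix a \emph{reduced} star-decomposition $(T,{\cal X})$ of $G$, so that each bag $X_t$ is dominated by some vertex $z_t \in X_t$, i.e. $X_t \subseteq N_G[z_t]$. The key observation I would exploit is that each bag is a closed neighbourhood of a single vertex in a \emph{planar} graph, and closed neighbourhoods in planar graphs are highly constrained: the neighbours of $z_t$ together with the edges among them cannot contain large dense patterns. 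The idea is therefore \emph{not} that the bags are small (a high-degree vertex has a large closed neighbourhood), but rather that within each bag the graph $G[X_t]$ is outerplanar-like, so the bag itself can be re-decomposed into a path (or tree) decomposition of bounded width, and these local re-decompositions can then be glued together along the tree $T$.

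First I would make precise the local structure. Fix a bag $X_t$ with dominator $z = z_t$. In a plane embedding of $G$, the neighbours of $z$ lie around $z$ in a cyclic rotation order; any edges of $G$ between two neighbours of $z$ must, by planarity, be ``non-crossing'' chords of this cyclic arrangement. Consequently $G[N(z)]$ is an outerplanar graph (its vertices can be placed on a circle with all edges drawn inside without crossings), and outerplanar graphs have treewidth at most $2$. Adding back the universal-in-the-bag vertex $z$ raises the treewidth of $G[X_t]$ by at most one, giving $tw(G[X_t]) \le 3$. Hence each bag, \emph{viewed as an induced subgraph}, admits its own tree decomposition $(T^t,{\cal Y}^t)$ of width at most $3$, i.e. with bags of size at most $4$.

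Next I would assemble a global decomposition. The cleanest route is to route the adjacency information across bag boundaries through the separators $X_t \cap X_{t'}$ on edges of $T$. For each edge $\{t,t'\}\in E(T)$ the intersection $X_t \cap X_{t'}$ is a minimal separator of $G$, and by Corollary~\ref{cor:sep-planar} it induces a cycle or a forest of paths, so it is ``path-like'' and in particular its induced graph has treewidth at most $2$. I would ensure that in each local decomposition $(T^t,{\cal Y}^t)$ the separator $X_t\cap X_{t'}$ is contained in a single bag (possible since it is a clique-like/path-like set whose subtrees can be forced to meet, via the Helly property, Lemma~\ref{lem:helly}, after enlarging that bag by a bounded amount). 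Then I would glue the $(T^t,{\cal Y}^t)$ along $T$ by identifying the chosen separator-carrying bags. The verification that the result is a valid tree decomposition is routine (each vertex's bags form a subtree because they did so in $(T,{\cal X})$ and within each local decomposition), and a careful accounting of how many separator vertices must be forced together yields the uniform bound $4$ on width, i.e. bags of size at most $5$. For sharpness I would exhibit an explicit planar graph of tree-breadth one with treewidth exactly $4$ (a suitably stacked/triangulated example built around a dominating vertex whose neighbourhood realises an outerplanar graph of treewidth $2$).

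The main obstacle will be the gluing step and the bookkeeping of the constant. Proving $tw(G[X_t]) \le 3$ from planarity plus domination is clean, but ensuring that the \emph{global} width does not blow up when combining local decompositions across shared separators is delicate: one must simultaneously control (i) that separators lie in single bags of each local piece, (ii) that forcing this does not raise any local width beyond the target, and (iii) that the subtree property survives the identification. The constant $4$ (rather than a larger value) presumably comes from a tight interaction between the outerplanar bound $2$, the added dominator, and the path-likeness of the separators from Corollary~\ref{cor:sep-planar}; pinning that down is where the real work lies, and it is also what must be matched by the sharpness example.
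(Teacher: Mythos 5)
Your opening step is correct and clean: in a planar graph the neighbourhood of any vertex induces an outerplanar graph, so every bag of a star-decomposition induces a subgraph of treewidth at most $3$. The fatal gap is in the gluing step. The adhesion sets $X_t \cap X_{t'}$ of a star-decomposition of a planar graph need \emph{not} have bounded size: $K_{2,n}$ (sides $\{z_1,z_2\}$ and $B$, $|B|=n$) is planar with tree-breadth one, and its reduced star-decomposition has the two bags $N[z_1]$ and $N[z_2]$ whose intersection is the independent set $B$. Corollary~\ref{cor:sep-planar} constrains only the \emph{induced structure} of a minimal separator (cycle or forest of paths), not its size, so your plan to force each adhesion into a single bag of each local decomposition ``after enlarging that bag by a bounded amount'' is impossible: that bag would need $n$ vertices. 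Worse, the principle you implicitly rely on --- that a tree decomposition whose bags induce bounded-treewidth subgraphs can be reassembled into a bounded-width decomposition of $G$ --- is false in general: $K_{n,n}$ admits a path decomposition whose every bag induces a star (treewidth $1$), yet $tw(K_{n,n}) = n$. What bounds global treewidth is the treewidth of the \emph{torsos} (bags with adhesions completed into cliques), and in your setting the torsos can have unbounded treewidth precisely because the adhesions can be large independent sets. So the gluing is not ``routine''; it is the entire difficulty, and your plan supplies no mechanism by which planarity controls the interaction across a large adhesion.

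The paper's proof avoids ever gluing along the adhesions of the star-decomposition. It first reduces to prime graphs (treewidth is the maximum over atoms) and then inducts on $|V(G)|$. If the star-decomposition has at least three bags, it takes an internal bag $X_t$ with dominator $b$ and uses Lemma~\ref{lem:P3exists} to extract two non-adjacent vertices $a,c \in X_t$ such that $\Pi=(a,b,c)$ is a separator of $G$ with only \emph{three} vertices; each piece $G_i$ (a component of $G \setminus \Pi$ together with $\Pi$, with $a,c$ made adjacent) is a contraction of $G$, hence planar with $tb(G_i) \le 1$ by Lemma~\ref{lem:contraction-closed}, and $tw(G) \le \max_i tw(G_i)$ because $\Pi$ is a triangle, i.e., a clique, in every $G_i$ --- gluing along a constant-size clique is the trivial step that replaces your problematic one. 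The base cases are handled directly: one bag gives a universal vertex, hence an outerplanar complement of it and $tw \le 3$ (this is where your outerplanarity observation appears in the paper); two bags $B,B'$ give $tw \le 2 + |B \,\Delta\, B'| \le 4$ via Corollary~\ref{cor:make-sep-cyc} and a $K_{3,3}$-minor argument showing $|B \,\Delta\, B'| \le 2$. For sharpness the paper exhibits the graph of Figure~\ref{fig:tb-1-tw-4}: a $4$-cycle $(u,v,x,y)$ plus two vertices $a,b$ adjacent to all four; it has a star-decomposition with two bags, yet is $4$-regular, so $tw \ge 4$. Any repair of your approach would have to split along such constant-size separators rather than along the adhesions of the star-decomposition.
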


\begin{figure}[h!]
 \centering
 \includegraphics[width=0.25\textwidth]{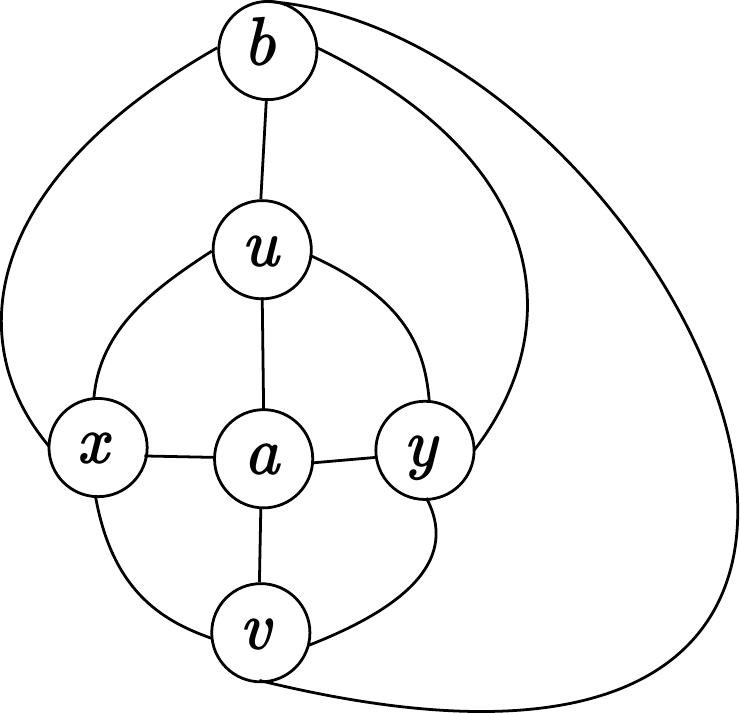}
 \caption{A planar graph $G$ with $tb(G) = 1$ and $tw(G) = 4$.}
 \label{fig:tb-1-tw-4}
\end{figure}

\begin{proof}
The treewidth of $G$ is the maximum treewidth of its atoms~\cite{Bodlaender2005}, so, let us assume $G$ to be a prime planar graph.
Let $(T,{\cal X})$ be any star-decomposition of $G$, the graph $H = (V,\{ \{u,v\} \mid T_u \cap T_v \neq \emptyset  \})$ is chordal.
Furthermore, if $H'$ is a chordal graph with same vertex-set $V$ and such that $E(G) \subseteq E(H') \subseteq E(H)$ then any clique-tree of $H'$ is still a star-decomposition of $G$.
Therefore, we will assume w.l.o.g. that $H$ is a minimal triangulation of $G$ and $(T,{\cal X})$ is a clique-tree of $H$ (in particular, $(T,{\cal X})$ is reduced). 
Additional properties of $(T,{\cal X})$ will be deduced from the latter assumption about $H$ by using the results from~\cite{Bouchitte2002}.
Let us now prove the lemma by induction on $|V(G)|$ (the base-case of the graph with a single vertex is trivial).
\begin{itemize}
\item If $|{\cal X}| = 1$, then $G$ has some universal vertex $u$.
Furthermore, since $G$ is planar therefore, $G \setminus u$ is outerplanar~\cite{Syslo1979}.
Consequently, $tw(G \setminus u) \leq 2$~\cite{Bodlaender1998}, so, $tw(G) \leq 3$.
\item Suppose $|{\cal X}| = 2$. 
Let ${\cal X} = \{B,B'\}$.
Since $(T,{\cal X})$ is assumed to be a clique-tree of some minimal triangulation $H$ of $G$, therefore, $B \cap B'$ is a minimal separator~\cite{Bouchitte2002}.
Let us remind that by Corollary~\ref{cor:make-sep-cyc} there is a planar supergraph $G'$ of $G$ with same vertex-set so that $B \cap B'$ induces either an edge or a cycle of $G'$.
Furthermore $(T,{\cal X})$ is also a star-decomposition of $G'$, so, $tb(G') \leq 1$.
In addition, $tw(G) \leq tw(G')$.
Recall that we can further assume $G'$ to be prime (or else, we apply the induction hypothesis on the atoms of $G'$), hence $B  \cap B'$ induces a cycle of $G'$ of length at least four.
Let $B \Delta B' = (B \setminus B') \cup (B' \setminus B)$.
Since $H$ is assumed to be a minimal triangulation and $B,B'$ are leaves of a clique-tree of $H$, therefore, $B \setminus B'$ is a (nonempty) dominating clique of the subgraph $G[B]$, and similarly $B' \setminus B$ is a (nonempty) dominating clique of the subgraph $G[B']$~\cite{Bouchitte2002}.
Thus, every vertex $u \in B \Delta B'$ satisfies $B \cap B' \in N(u)$.
Since $|B \cap B'| \geq 4$ because $B \cap B'$ induces a cycle of $G'$ of length at least four, therefore, $|B \Delta B'| \leq 2$ or else there would be a $K_{3,3}$-minor of $G'$ with any three vertices of $B \cap B'$ being one part of the bipartition.
As a result, since $B \cap B'$ induces a cycle, $tw(G) \leq tw(G') \leq 2 + |B \Delta B'| = 4$.
\item Finally, suppose $|{\cal X}| \geq 3$. 
Let $t \in V(T)$ be an internal node, by the properties of a tree decomposition the bag $X_t$ is a separator of $G$.
Let $b \in X_t$ satisfy $X_t \subseteq N_G[b]$.
Since $G$ is prime and so, biconnected, therefore $X_t \setminus b$ is a separator of $G \setminus b$.
In such case, let us remind by Lemma~\ref{lem:P3exists} that there exist $a,c \in X_t \setminus b$ non-adjacent such that $\{a,c\}$ is a minimal separator of $G \setminus b$.
In particular, the path $\Pi = (a,b,c)$ is a separator of $G$.
Let $C_1, C_2, \ldots, C_l$ be the components of $G \setminus \Pi$.
For every $1 \leq i \leq l$, let $G_i$ be obtained from $G[C_i \cup \Pi]$ by making the two endpoints $a,c$ of $\Pi$ adjacent.
Note that $G_i$ can be obtained from $G$ by edge-contractions (because $G$ is prime and so, $a,c \in N(C_j)$ for every $1 \leq j \leq l$), therefore, $tb(G_i) \leq 1$ because tree-breadth is stable under edge-contractions (Lemma~\ref{lem:contraction-closed}).
In addition, $tw(G) \leq \max_i tw(G_i)$ because $\Pi$ induces a triangle in every graph $G_i$ by construction.
As a result, for every $1 \leq i \leq l$, since $|V(G_i)| < |V(G)|$ by construction, therefore $tw(G_i) \leq 4$ by the induction hypothesis, whence $tw(G) \leq 4$.
\end{itemize}
Let $G$ be constructed from the cycle $(u,v,x,y)$ of length four by adding two vertices $a,b$ such that $N_G(a) = N_G(b) = \{u,x,v,y\}$ (see Figure~\ref{fig:tb-1-tw-4} for an illustration).
Since there exists a star-decomposition of $G$ with two bags (respectively dominated by $a,b$), $tb(G) \leq 1$.
Moreover, $G$ is $4$-regular by construction, therefore $tw(G) \geq 4$~\cite{Bodlaender2005}.
This proves the sharpness of the upper-bound. 
\end{proof}

Note that since it is well-known that many difficult problems can be solved on bounded-treewidth graphs in linear-time, therefore, it may be the case that the recognition of planar graphs with tree-breadth at most one can be simplified by using Lemma~\ref{lem:tw-planar-tb-1}.
However, we were unable to find a way to use it in our proofs (actually, the star-decomposition that can be computed using our algorithm may have unbounded width --- because of leaf-vertices of Type 1).

\subsection{Correctness of Algorithm \texttt{Leaf-BottomUp} }

\subsubsection{Existence of a $P_3$-separator}

As a first step to prove correctness of Algorithm \texttt{Leaf-BottomUp}, let us prove correctness of~\ref{step:find-leaf-vertex}.
That is, we will prove that for every planar graph $G$ with $tb(G) = 1$, $G$ contains a leaf-vertex or $G$ admits a star-decomposition with at most two bags. 

To prove this step, we will prove additional properties of the minimal separators of prime planar graphs with tree-breadth one.
In the following, let ${\cal P}_3(G)$ be the set of (not necessarily minimal) separators of $G$ that induce paths of length $2$ (we will call them $P_3$-separators since they have three vertices).
We will distinguish the case when ${\cal P}_3(G) \neq \emptyset$ from the case when ${\cal P}_3(G) = \emptyset$.

\begin{theorem}\label{th:noP3}
Let $G$ be a prime planar graph with $tb(G) = 1$. If ${\cal P}_3(G) \neq \emptyset$, then $G$ has a leaf-vertex.
\end{theorem}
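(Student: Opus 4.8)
The plan is to pick an extremal $P_3$-separator and read off a leaf-vertex from the smallest component it cuts off. Among all pairs $(\Pi,C)$ with $\Pi=(a,b,c)\in{\cal P}_3(G)$ and $C$ a connected component of $G\setminus\Pi$, I would fix one minimizing $|C|$. Since $G$ is prime it has no clique-separator, so $N(C)$ cannot be a clique; as $a\sim b$, $b\sim c$ but $a\not\sim c$, this forces $\{a,c\}\subseteq N(C)$ and that $\Pi$ is an induced path. I would then split on $|C|$.

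First I would dispose of the case $|C|=1$, say $C=\{v\}$, so that $N(v)=N(C)$ is one of $\{a,c\}$ or $\{a,b,c\}$. In the former case $a,c$ are non-adjacent and $b\in (N(a)\cap N(c))\setminus\{v\}$, so $v$ is a leaf-vertex of Type~3; in the latter case $N(v)$ induces the path $(a,b,c)$ of length two, so $v$ is a leaf-vertex of Type~2 (see Definition~\ref{def:leafVertex}). Either way we are done, so it remains to treat $|C|\geq 2$.

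Next I would handle two reducible configurations, each contradicting the minimality of $|C|$. If $b\notin N(C)$, then $N(C)=\{a,c\}$ and $\{a,c\}$ is a minimal separator (both $C$ and the component containing $b$ are full components), so Lemma~\ref{lem:2-separator} yields $v\in N(a)\cap N(c)\cap C$; then $(a,v,c)$ is an induced $P_3$-separator isolating a nonempty part of $C\setminus v$ from $b$, hence a $P_3$-separator with a strictly smaller component---impossible. Likewise, if some $v\in C$ is adjacent to all of $a,b,c$, then by Lemma~\ref{lem:common-neighbour} (since $|C|\geq 2$) the path $(a,v,c)$ separates $b$ from the nonempty set $C\setminus v$, again producing a smaller $P_3$-separator component and a contradiction.

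The remaining, and genuinely hardest, case is when $N(C)=\{a,b,c\}$ while no vertex of $C$ sees all three of $a,b,c$. Here the plan is to locate a leaf-vertex directly inside $C$ rather than to reduce. Fixing a plane embedding and looking at the bounded face incident with the edge $\{a,b\}$ on the side of $C$, one expects its third corner to be a common neighbour $p\in C$ of $a$ and $b$, which is unique by Lemma~\ref{lem:unique-neighbour}; the minimality of $|C|$ together with Lemma~\ref{lem:unique-neighbour} should forbid extra chords, so that $N(p)$ induces exactly a path of length two, making $p$ a leaf-vertex of Type~2 (symmetrically on the edge $\{b,c\}$). The main obstacle is precisely this last step: controlling the innermost face of the embedding and proving that no additional adjacency spoils the $P_3$---or, failing that, that the configuration is instead a dominated induced path yielding a Type~1 leaf-vertex. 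This is where planarity (through the $K_{3,3}$- and $K_5$-minor arguments already used in Lemmas~\ref{lem:unique-neighbour} and~\ref{lem:common-neighbour}) and the extremal choice of $C$ do the essential work.
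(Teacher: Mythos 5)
Your extremal setup (minimizing $|C|$ over pairs $(\Pi,C)$, which is the paper's Claim~\ref{claim:path-min-pty}), your treatment of $|C|=1$, and your two reducible configurations are all sound: the case $N(C)=\{a,c\}$ correctly invokes Lemma~\ref{lem:2-separator} (this is where you use $tb(G)=1$), and the case of a common neighbour of $a,b,c$ matches the paper's Case~1 via Lemma~\ref{lem:common-neighbour}. But the case you defer --- $N(C)=\{a,b,c\}$ with no vertex of $C$ seeing all of $a,b,c$ --- is not a technical remainder; it is where essentially the entire proof lives, and your sketch for it does not work. Concretely: (i) the face incident with the edge $\{a,b\}$ on the $C$-side need not be a triangle, so there may be \emph{no} common neighbour $p\in N(a)\cap N(b)\cap C$ at all (Lemma~\ref{lem:unique-neighbour} gives uniqueness, not existence); (ii) even when such a $p$ exists, nothing forces $N(p)$ to induce a $P_3$; and (iii) the leaf-vertex one must exhibit in this case is in general \emph{not} adjacent to $b$ and not of Type~2 --- in the paper's argument it is a vertex $v\in C$ with $b\notin N(v)$ that dominates all of $C$, whose neighbourhood $B\cap B'$ is either $\{a,c\}$ (Type~3) or a long path dominated by $b$ (Type~1). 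So searching triangular faces along $\{a,b\}$ looks in the wrong place.

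The deeper issue is that your hard case makes no use of the hypothesis $tb(G)=1$, whereas the paper's proof of precisely this case is driven entirely by a star-decomposition: it takes a star-decomposition minimizing the distance between $T_a$ and $T_c$, restricts it to $G[C\cup\Pi]$ (Lemma~\ref{lem:sep-keep}), and then uses the Helly property, Corollary~\ref{cor:strong-sep}, and a sequence of claims about two adjacent bags $B,B'$ to show that $C\cup\Pi=\{v,b\}\cup(B\cap B')$ and that $B\cap B'$ is either $\{a,c\}$ or an induced path dominated by $b$ --- which is what identifies $v$ as a leaf-vertex and which Type it has. A purely planarity-plus-minimality argument, with no input from the star-decomposition, has no mechanism to force such structure on $C$; planarity alone (via the $K_5$/$K_{3,3}$ lemmas) only yields the local facts you already used in the easy cases. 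As it stands, the proposal proves the easy quarter of the theorem and leaves the hard three-quarters as a conjecture.
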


\begin{proof}
Let $\Pi =(a,b,c) \in {\cal P}_3(G)$ minimize the size of a smallest component of $G \setminus \Pi$.
We recall that $\{a,c\} \notin E(G)$ because $G$ is assumed to be prime by the hypothesis (the latter fact will be used in the following).
Let $C$ be any component of $G \setminus \Pi$ of minimum size.
Our aim is to prove the existence of some leaf-vertex $v \in C$ (the latter dominating the component $C$), that will prove Theorem~\ref{th:noP3}.

\begin{claim}
\label{claim:path-min-pty}
There do not exist $\Pi' \subseteq \Pi \cup C, \ C' \subset C$ such that $\Pi' \in {\cal P}_3(G)$ and $C'$ is a component of $G \setminus \Pi'$.	
\end{claim}

\begin{proofclaim}
The claim follows from the minimality of $C$.	
\end{proofclaim}

We will often use Claim~\ref{claim:path-min-pty} in the remaining of the proof.

Let $(T,{\cal X})$ be a star-decomposition of $G$, that exists by Lemma~\ref{lem:dominator-in-bag}.
In particular, let $T_a,T_c$ be the subtrees that are respectively induced by the bags containing $a$ or $c$.
Assume w.l.o.g. that $(T,{\cal X})$ minimizes the distance in $T$ between the subtrees $T_a$ and $T_c$.
We will distinguish the case $T_a \cap T_c \neq \emptyset$ from the case $T_a \cap T_c = \emptyset$.

\paragraph{Case $T_a \cap T_c \neq \emptyset$.}
In such case, the subtrees $T_a,T_b,T_c$ are pairwise intersecting and so, by the Helly property (Lemma~\ref{lem:helly}) $T_a \cap T_b \cap T_c \neq \emptyset$.
Let us remove all vertices in $V \setminus (\Pi \cup C)$ from bags in $(T,{\cal X})$.
Let us call $(T,{\cal X}^C)$ the resulting tree decomposition of $G[\Pi \cup C]$.

\begin{claim}
\label{claim:induced-star-dec}
$(T,{\cal X}^C)$ has breadth one.
\end{claim}

\begin{proofclaim}
There are two cases to be considered.
\begin{itemize}
	\item If $b$ has some neighbour in $C$, then $C$ must be a full component of $G \setminus \Pi$, or else one of $\{a,b\},\{b,c\}$ should be a clique-separator thus contradicting the fact that $G$ is prime by the hypothesis.
	In such case, the claim follows from Lemma~\ref{lem:sep-keep}.
	\item Else, $b$ has no neighbour in $C$, and let $D$ be the connected component of $b$ in $G \setminus (a,c)$.
	Let $H$ be obtained from $G$ by contracting $D$ to $b$.
	By Lemma~\ref{lem:contraction-closed}, $tb(H) = 1$.
	Let $(T,{\cal X}^H)$ be the tree decomposition of breadth one of $H$ where for every $t \in V(T), \ X^H_t = X_t$ if $X_t \cap D = \emptyset, \ X^H_t = (X_t \setminus D) \cup \{b\}$ else.
	Moreover, since $b$ has no neighbour in $C$, $D \cap N_G[C] = \emptyset$ and so, $H[C \cup \Pi] = G[C \cup \Pi]$ by construction.
	Finally, since $\{b\}$ is a full component of $H \setminus (a,c)$, therefore, by Lemma~\ref{lem:sep-keep} applied to $H$, the tree decomposition $(T,{\cal X}^C)$ is indeed a tree decomposition of breadth one of $G[C \cup \Pi]$.
\end{itemize}
\end{proofclaim}

Let $(T',{\cal X}')$ be any reduced tree decomposition obtained from  $(T,{\cal X}^C)$.
We point out that $T'_a \cap T'_b \cap T'_c \neq \emptyset$ by construction (because $T_a \cap T_b \cap T_c \neq \emptyset$). 
Furthermore, since by Claim~\ref{claim:induced-star-dec}  $(T,{\cal X}^C)$ has breadth one, therefore $(T',{\cal X}')$ is a star-decomposition of $G[C \cup \Pi]$ by Lemma~\ref{lem:dominator-in-bag}.

We will prove that $C$ contains a leaf-vertex by contradiction. 
Informally, we will show, using the properties of the star-decomposition $(T',{\cal X}')$, that if it is not the case that $C$ contains a leaf-vertex, then ${\cal P}_3(G[C \cup \Pi]) \cap {\cal P}_3(G) \neq \emptyset$ and the latter contradicts Claim~\ref{claim:path-min-pty}.

\medskip
\noindent
In order to prove this, first note that $a$ has at least one neighbour in $C$ because $G$ is prime by the hypothesis (indeed, $(b,c)$ cannot be an edge-separator of $G$).
We now distinguish between several subcases.
\begin{itemize}
\item {\bf Case 1.} There is $u \in C$ such that $u \in N(a) \cap N(b) \cap N(c)$ ({\it e.g.}, see Figure~\ref{fig:fig:leaf-1}).
By Lemma~\ref{lem:common-neighbour}, either $C$ is reduced to $u$ or there exist $\Pi' = (a,u,c) \in {\cal P}_3(G), \ C' \subseteq C \setminus u$ and $C'$ is a component of $G \setminus \Pi'$.
The latter case contradicts Claim~\ref{claim:path-min-pty}, therefore, $C$ is reduced to $u$ and so $u$ is a leaf-vertex of Type 2.

\begin{figure}[h!]
 \centering
 \includegraphics[width=0.125\textwidth]{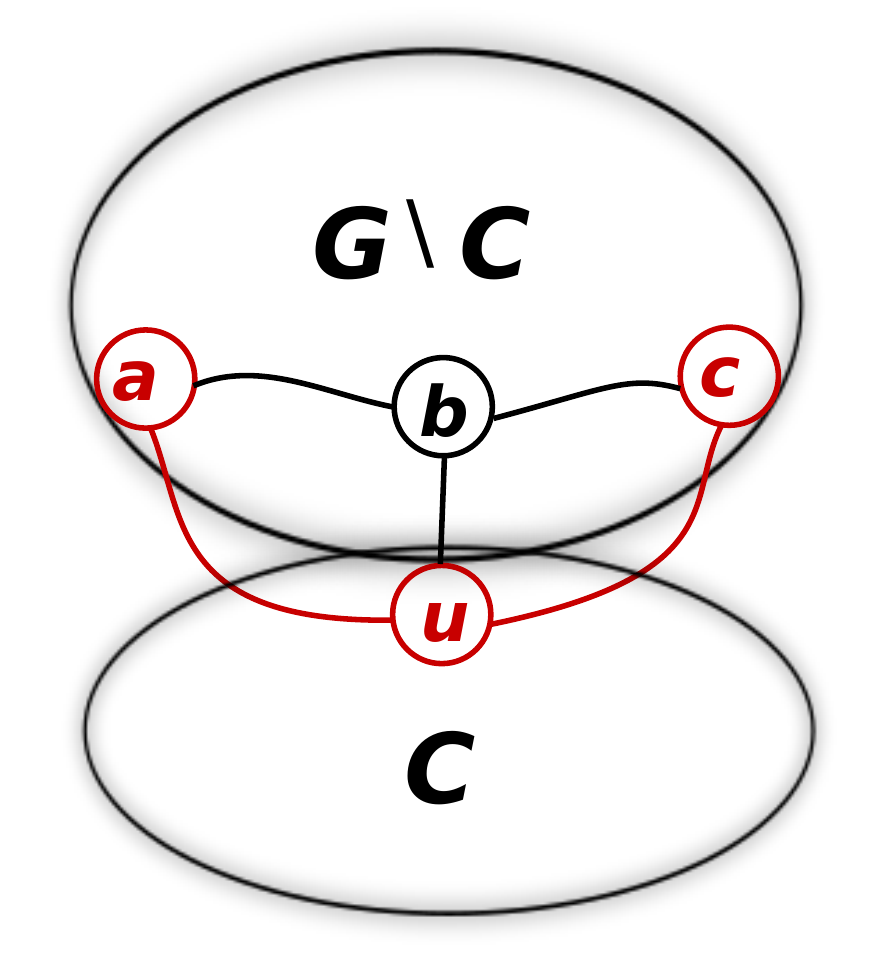}
 \caption{Case 1}
 \label{fig:fig:leaf-1}
\end{figure}

Thus, from now on let us assume that no such vertex $u$ exists.

\item {\bf Case 2.} By contradiction, assume $N(a) \cap C \subseteq N(b) \cap C$.
By Lemma~\ref{lem:unique-neighbour}, $| N(a) \cap N(b) \cap C | \leq 1$, so, $|N(a) \cap C| = 1$. 
Let $u \in N(a) \cap N(b) \cap C$ be the unique neighbour of vertex $a$ in $C$ (see Figure~\ref{fig:fig:leaf-2}).
Since in such case we can assume that $u \notin N(c)$ (for otherwise, we are back to Case 1), and vertex $c$ has some neighbour in $C$ because $G$ is prime (and so, $(a,b)$ cannot be an edge-separator of $G$), therefore, $C$ is not reduced to vertex $u$.
Then, $\Pi' = (u,b,c) \in \mathcal{P}_3(G)$ because it separates $a$ from $C \setminus u$, and so there is at least one component of $G \setminus \Pi'$ that is strictly contained into $C$ by construction.
This contradicts Claim~\ref{claim:path-min-pty}, so, Case 2 cannot occur.

\begin{figure}[h!]
 \centering
 \includegraphics[width=0.15\textwidth]{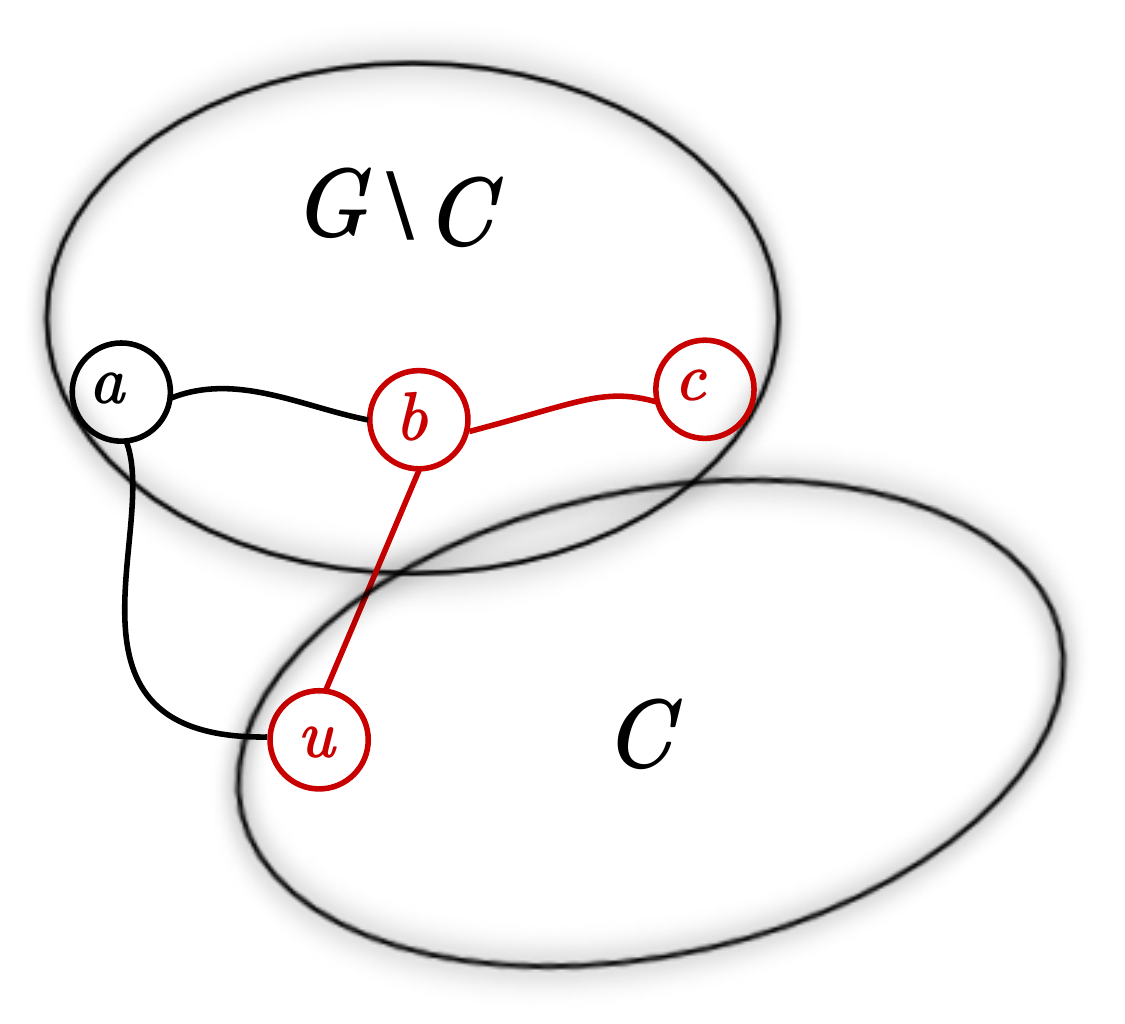}
 \caption{Case 2}
 \label{fig:fig:leaf-2}
\end{figure}

\item {\bf Case 3.} There is $u \in C$ satisfying $u \in N(a) \setminus N(b)$.
By the properties of a tree decomposition, there is some bag $B' \in T'_a \cap T'_u$.
Let $v \in B'$ dominate the bag $B'$.
By construction, $v \neq b$ because $u \in B' \setminus N(b)$, similarly $v \neq c$ because $a \in B' \setminus N(c)$.
We will also prove later that $v \neq a$.
Moreover, $\Pi \setminus N[v] \neq \emptyset$ (or else, we are back to Case 1), hence $\Pi \setminus B' \neq \emptyset$.
So let $B$ be the bag adjacent to $B'$ onto the unique path in $T'$ from $B'$ to $T'_a \cap T'_b \cap T'_c$ (we remind that the latter subtree is nonempty by construction).
By the properties of the tree decomposition $(T',{\cal X}')$, $B \cap B'$ is a separator of $G[C \cup \Pi]$.
Furthermore, $a \in B \cap B'$.
More generally $\Pi \cap B' \subseteq B \cap B'$ by construction, therefore $B \cap B'$ is also a separator of $G$ by Lemma~\ref{lem:separatorInCompo}.
Let $w \in B$ dominate this bag. 
Observe that $w \neq c$ because $a \in B \cap B'$.

We will prove that $v \in C$  and $v$ is a leaf-vertex.
In order to prove these two results, we will need to prove that $C \cup \Pi$ is fully contained into the two adjacent bags $B,B'$ (Claim~\ref{claim:subgraph-two-bags}).
The latter will require intermediate claims.

\begin{claim}
\label{claim:c-is-in-intersection}
$c \in B \cap B'$.
\end{claim}

\begin{proofclaim}
Assume for the sake of contradiction that $c \notin B \cap B'$ (see Figure~\ref{fig:fig:leaf-3}).
Then, $c \notin B'$ because $\Pi \cap B' \subseteq B \cap B'$ by construction.
We will prove that the latter contradicts Claim~\ref{claim:path-min-pty}.

Indeed, first observe that $G \setminus w$ is connected because $G$ is prime and so, biconnected, by the hypothesis.
In addition $(B \cap B') \setminus w$ is a (not necessarily minimal) separator of $G \setminus w$ because it separates $B' \setminus B$ from $c$.
Let $S \subseteq (B \cap B') \setminus w$ be a minimal separator of $G \setminus w$.
By Lemma~\ref{lem:P3exists}, there exist $x,y \in (B \cap B') \setminus w$ non-adjacent such that $S = \{x,y\}$, and so, $\Pi'=(x,w,y) \in \mathcal{P}_3(G)$.
Note that $\Pi' \neq \Pi$, because we assume that $c \notin B \cap B'$ and so $c \notin \{x,y\}$. 
Moreover, since $(T',{\cal X}')$ is a star-decomposition of $G[C\cup \Pi]$ by construction we have that $\Pi' \subseteq \Pi \cup C$, therefore $x \in C$ or $y \in C$, because $c \notin \{x,y\}$ and $a,b$ are adjacent whereas $x,y$ are non-adjacent.
W.l.o.g. let $x \in C$.

\begin{figure}[h!]
	\centering
	\includegraphics[width=0.25\textwidth]{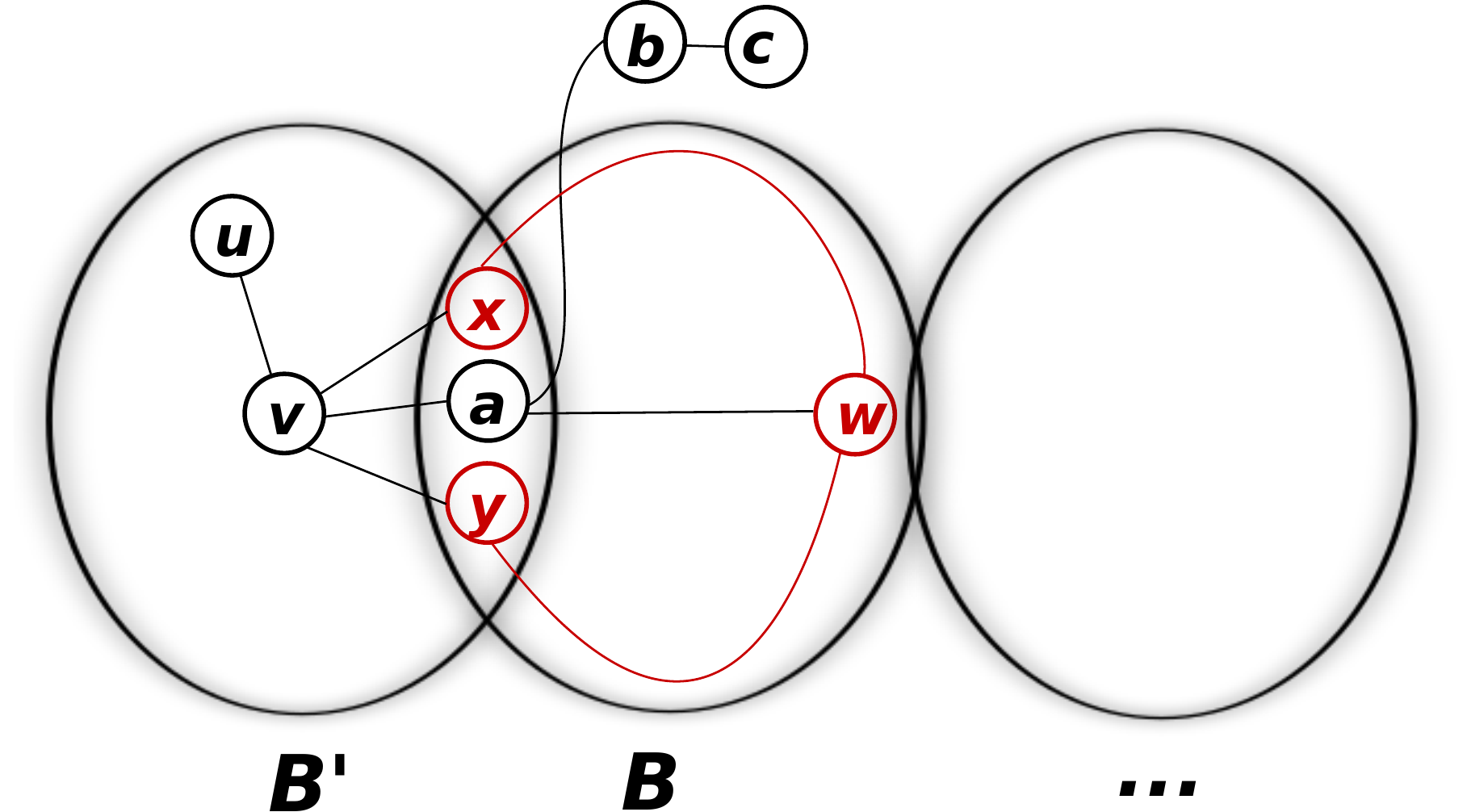}
	\caption{Case $c \notin B \cap B'$}
	\label{fig:fig:leaf-3}
\end{figure}

\begin{subclaim}
	\label{claim:new-path-separator}
	$\Pi'$ is not an $ac$-separator.
\end{subclaim}

\begin{proofsubclaim}	
We refer to Figure~\ref{fig:fig:leaf-3prime} for an illustration of the proof.	
Let $C'$ be any component of $G \setminus (\Pi \cup C)$.
Observe that $C'$ is fully contained into some component $D$ of $G \setminus \Pi'$, because $\Pi' \subseteq \Pi \cup C$.
In addition, $a,c \in N(C')$ because $G$ is prime by the hypothesis (and so, neither $a$ nor $b$ nor $c$ nor $(a,b)$ nor $(b,c)$ can be a separator of $G$).
In particular, since we assume $c \notin B \cap B'$ and so, $c \notin \Pi'$, therefore, $c \in D$.
As a result, either $a \in \Pi'$ or $a,c \in D$, that finally proves the subclaim. 
\end{proofsubclaim}

\begin{figure}[h!]
	\centering
	\includegraphics[width=0.2\textwidth]{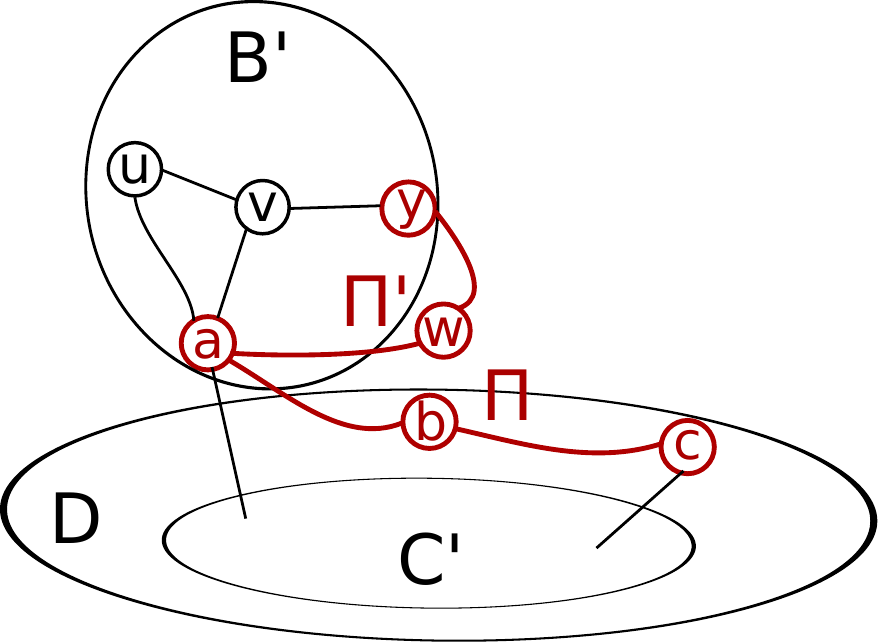}
	\caption{$\Pi'$ is not an $ac$-separator.}
	\label{fig:fig:leaf-3prime}
\end{figure}

Let $D$ be the component of $G \setminus \Pi'$ such that $c \in D$, that exists because we assume $c \notin B \cap B'$ and so, $c \notin \Pi'$.
Since $b,c$ are adjacent and $\Pi'$ is not an $ac$-separator by Claim~\ref{claim:new-path-separator}, therefore, $\Pi \subseteq \Pi' \cup D$.

Moreover, let us show that Claim~\ref{claim:new-path-separator} implies the existence of some $D' \subset C$ being a component of $G \setminus \Pi'$, thus contradicting Claim~\ref{claim:path-min-pty}.
Indeed, let $D'$ be any component of $G \setminus (\Pi' \cup D)$.
Since $G$ is prime by the hypothesis, $x$ has some neighbour in $D'$ and so, $D' \cap C \neq \emptyset$ because $x \in C$ and $\Pi \cap D' = \emptyset$ by construction.
But then, $D' \subseteq C \setminus x$, for the existence of some $z \in D' \setminus C$ would imply that $D' \cap \Pi \neq \emptyset$.

To sum up, we conclude that it must be the case that $c \in B \cap B'$.
\end{proofclaim}

We will use Claim~\ref{claim:c-is-in-intersection} to prove that $v \in C$, as follows:

\begin{claim}
\label{claim:v-in-c}
$v \in C$.
Furthermore, the two vertices $b,v$ are non-adjacent.
\end{claim}

\begin{proofclaim}
Recall that $v \in C \cup \Pi$ because $(T',{\cal X}')$ is a star-decomposition of $G[C \cup \Pi]$ by construction.	
So we will only need to prove that $v \notin \Pi$.
First, since $a \in B \cap B'$ by construction and  $c \in B \cap B'$ by Claim~\ref{claim:c-is-in-intersection}, therefore, $a,c \in B' \subseteq N[v]$.
The latter implies that $v \notin \{a,c\}$ because $a,c \in N[v]$ whereas $a,c$ are non-adjacent.
Furthermore, this implies $b \notin N[v]$ because we assume that $\Pi \not\subseteq N[v]$ (for otherwise, we are back to Case 1).
As a result, $v \notin \Pi$, whence $v \in C$. 
\end{proofclaim}

Then, we will need the following technical claim in order to prove that $w=b$ (Claim~\ref{claim:w-equals-b}).

\begin{claim}
\label{claim:subgraph-prime}
$G[C \cup \Pi]$ is prime.
\end{claim}

\begin{proofclaim}
Suppose by contradiction there exists a clique-separator $S$ of $G[C \cup \Pi]$.
Then, $S$ could not be a separator of $G$ because $G$ is prime by the hypothesis.
By Lemma~\ref{lem:separatorInCompo}, the latter implies that $S$ is an $ac$-separator of $G[C \cup \Pi]$.
Therefore, the two vertices $b,v \in N(a) \cap N(c)$ must be in $S$, and so, since $b,v$ are non-adjacent by Claim~\ref{claim:v-in-c}, the latter contradicts the fact that $S$ is a clique.
\end{proofclaim} 

\begin{claim}
\label{claim:w-equals-b}
$w = b$.
\end{claim}

\begin{proofclaim}
Assume for the sake of contradiction $w \neq b$ (see Figure~\ref{fig:fig:leaf-4}).
We will prove that it contradicts Claim~\ref{claim:path-min-pty}.

Indeed, the graph $G[C \cup \Pi] \setminus w$ is connected because $G[C \cup \Pi]$ is prime by Claim~\ref{claim:subgraph-prime} and so, biconnected.
In addition, $(B \cap B') \setminus w$ is a (not necessarily minimal) separator of $G[C \cup \Pi] \setminus w$ because it separates $b$ from $B' \setminus B$ (recall that $b,v$ are non-adjacent by Claim~\ref{claim:c-is-in-intersection}, and so, $b \notin B' \subseteq N[v]$). 
Let $S \subseteq (B \cap B') \setminus w$ be a minimal separator of $G[C \cup \Pi] \setminus w$. 
By Lemma~\ref{lem:P3exists}, there exist $x,y \in (B \cap B') \setminus w$ non-adjacent such that $S = \{x,y\}$, and so, $\Pi'=(x,w,y) \in \mathcal{P}_3(G[C \cup \Pi])$.
Furthermore, $b \notin \Pi' \subseteq (B \cap B') \cup \{w\} \subseteq N[v] \cup \{w\}$ and so, $\Pi'$ cannot be an $ac$-separator of $G[C \cup \Pi]$, whence by Lemma~\ref{lem:separatorInCompo} $\Pi'$ is a separator of $G$, and so, $\Pi' \in \mathcal{P}_3(G)$.

\begin{figure}[h!]
	\centering
	\includegraphics[width=0.25\textwidth]{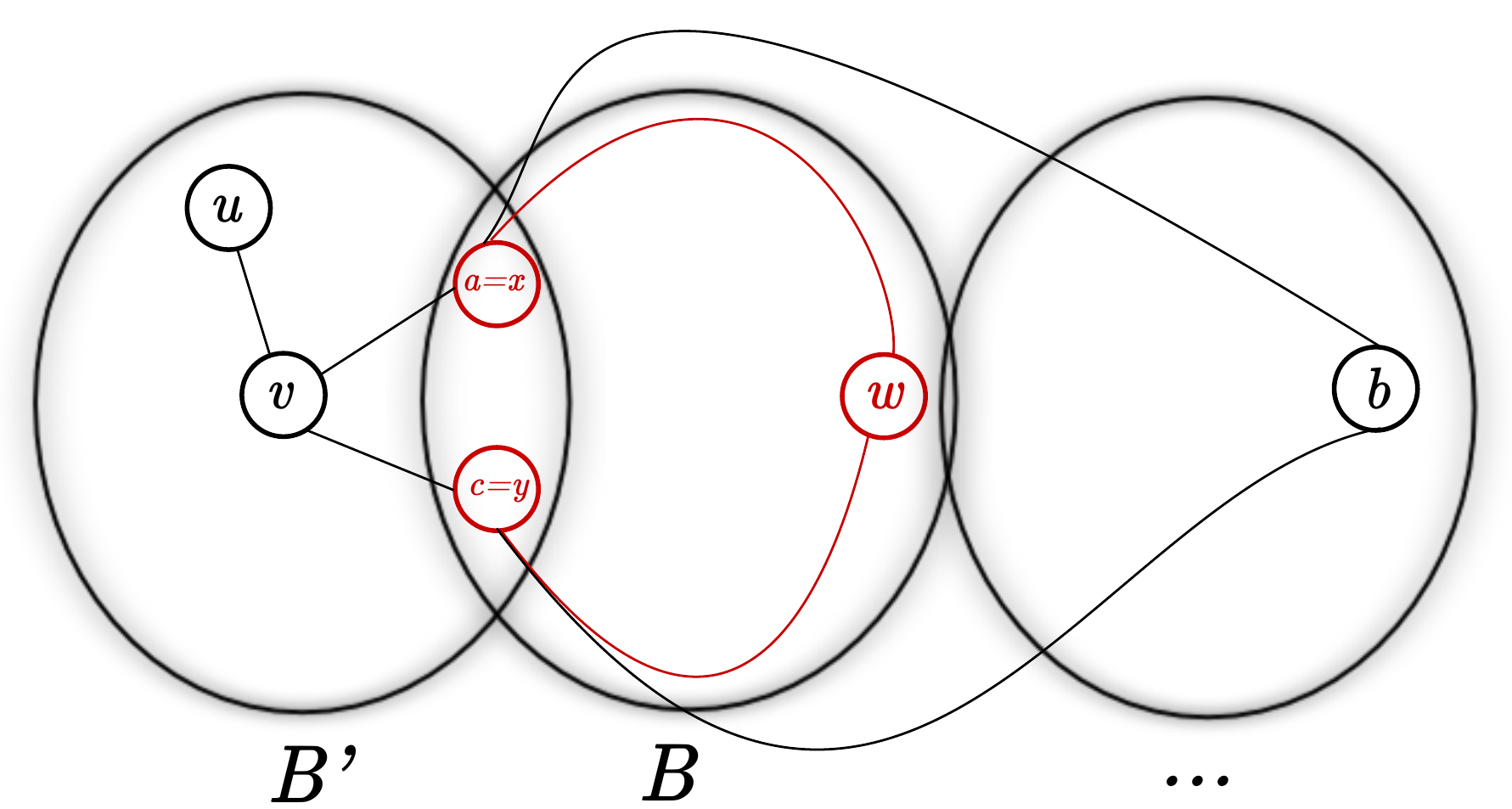}
	\caption{Case $c \in B \cap B'$ and $w \neq b$.}
	\label{fig:fig:leaf-4}
\end{figure}

Let $D \subseteq C \cup \Pi$ be the component of $G[C \cup \Pi] \setminus \Pi'$ containing vertex $b$.
Note that $\Pi \subseteq D \cup \Pi'$ because $\Pi'$ is not an $ac$-separator of $G[C \cup \Pi]$.
Let $D'$ be any component of $G[C \cup \Pi] \setminus (\Pi' \cup D)$, that exists because $\Pi' \in \mathcal{P}_3(G[C \cup \Pi])$.
Since $N(D') \subseteq C \cup \Pi$ by construction, $D' \cap \Pi = \emptyset$ by construction and $D'$ is a component of $G[C \cup \Pi] \setminus \Pi'$, therefore, $D'$ is also a component of $G \setminus \Pi'$.
The latter contradicts Claim~\ref{claim:path-min-pty} because $D' \subset C$.
\end{proofclaim}

Let $S = \{v,b\} \cup (B \cap B')$.
We are now able to prove that $S = C \cup \Pi$ (Claim~\ref{claim:subgraph-two-bags}).
That is, $C \cup \Pi$ is fully contained in the two adjacent bags $B,B'$ (respectively dominated by $b,v$).

\begin{claim}
\label{claim:subgraph-two-bags}
$S = C \cup \Pi$.
\end{claim}

\begin{proofclaim}
Assume by contradiction $S \neq C \cup \Pi$, let $D$ be a component of $G[C \cup \Pi] \setminus S$ (see Figure~\ref{fig:fig:leaf-5}).
Note that $D \subset C$ because $\Pi \subset S$ by construction.
Furthermore, $v,b \notin B \cap B'$ because $w = b$ by Claim~\ref{claim:w-equals-b} and $b \notin N(v)$ by Claim~\ref{claim:c-is-in-intersection}, so, $B \cap B'$ is a (minimal) $vb$-separator of $G[C \cup \Pi]$.
The latter implies $v \notin N(D)$ or $b \notin N(D)$ because $D$ induces a connected subgraph, $D \cap B \cap B' = \emptyset$ by construction, and $B \cap B'$ is a $bv$-separator of $G[C \cup \Pi]$. 
As a result, there exists $z \in \{v,b\}$ such that $N(D) \setminus z \subseteq B \cap B'$.

\begin{figure}[h!]
	\centering
	\includegraphics[width=0.25\textwidth]{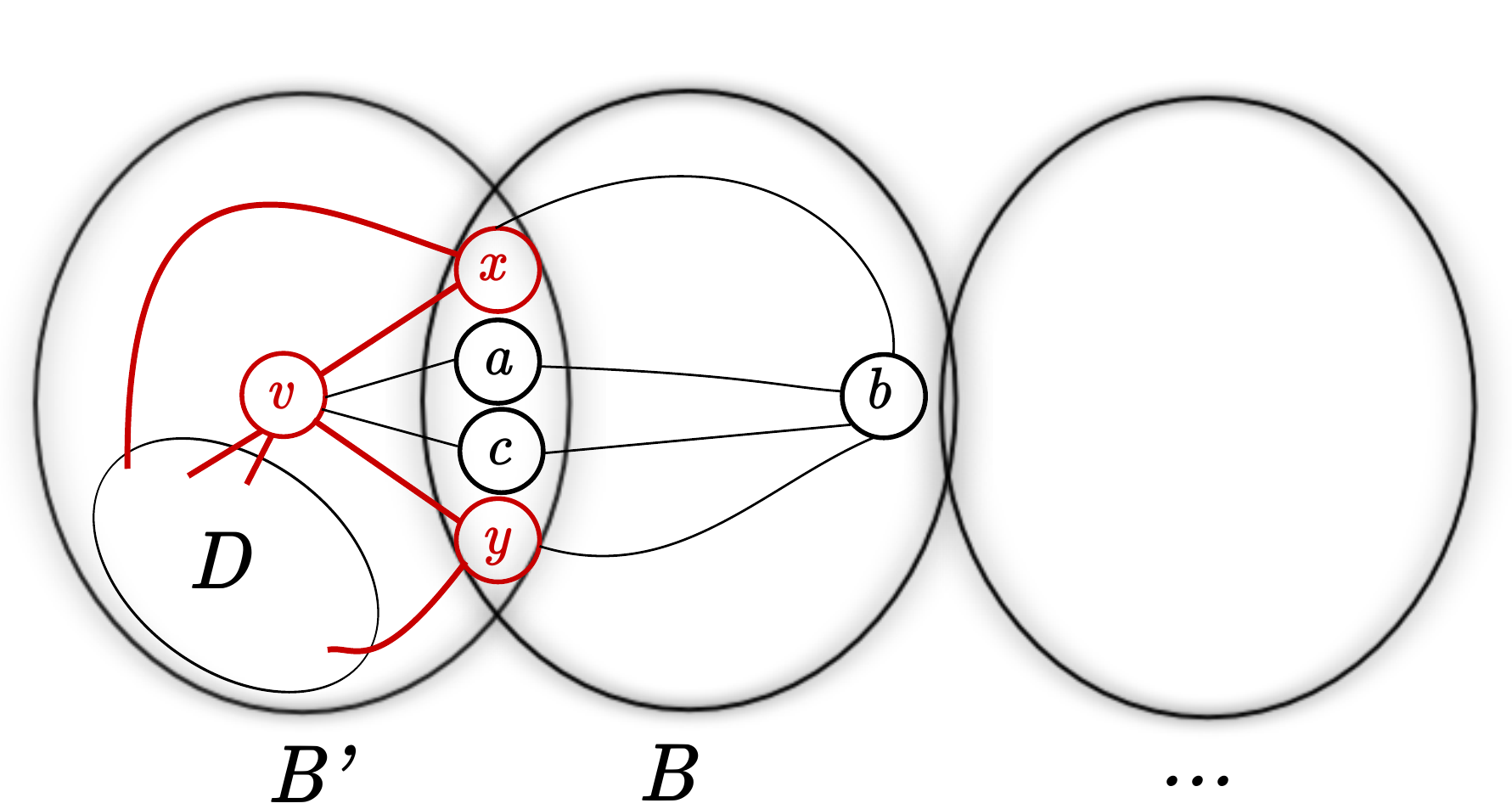}
	\caption{Case $z=v$.}
	\label{fig:fig:leaf-5}
\end{figure}

Moreover let $\{z,z'\} = \{v,b\}$.
$G[C \cup \Pi] \setminus z$ is connected because $G|C\cup\Pi]$ is prime by Claim~\ref{claim:subgraph-prime}, and so, biconnected.
In addition, $N(D) \setminus z$ is a minimal separator of $G[C \cup \Pi] \setminus z$ because it separates $D$ from $z'$ and $N(D) \setminus z \subseteq B \cap B' \subseteq N(z')$ by construction.
By Lemma~\ref{lem:P3exists}, one obtains the existence of two non-adjacent vertices $x,y \in B \cap B'$ such that $N(D) \setminus z = \{x,y\}$, whence $N(D) \subseteq \{x,y,z\}$.
Then, by construction $\Pi' = (x,z,y) \in \mathcal{P}_3(G)$ with $D \subset C$ being a component of $G \setminus \Pi'$, that contradicts Claim~\ref{claim:path-min-pty}.
\end{proofclaim}

By Claim~\ref{claim:subgraph-two-bags}, $C \cup \Pi = S$ (see Figure~\ref{fig:fig:leaf-6}).
Note that it implies that $C \subseteq N[v]$ because $C \setminus v = (B \cap B') \setminus (a,c)$.
In order to conclude that $v$ is a leaf-vertex, we will finally prove in Claim~\ref{claim:intersection-is-a-path} that either $B \cap B' = \{a,c\}$ or $B \cap B'$ induces a path.

\begin{figure}[h!]
	\centering
	\includegraphics[width=0.15\textwidth]{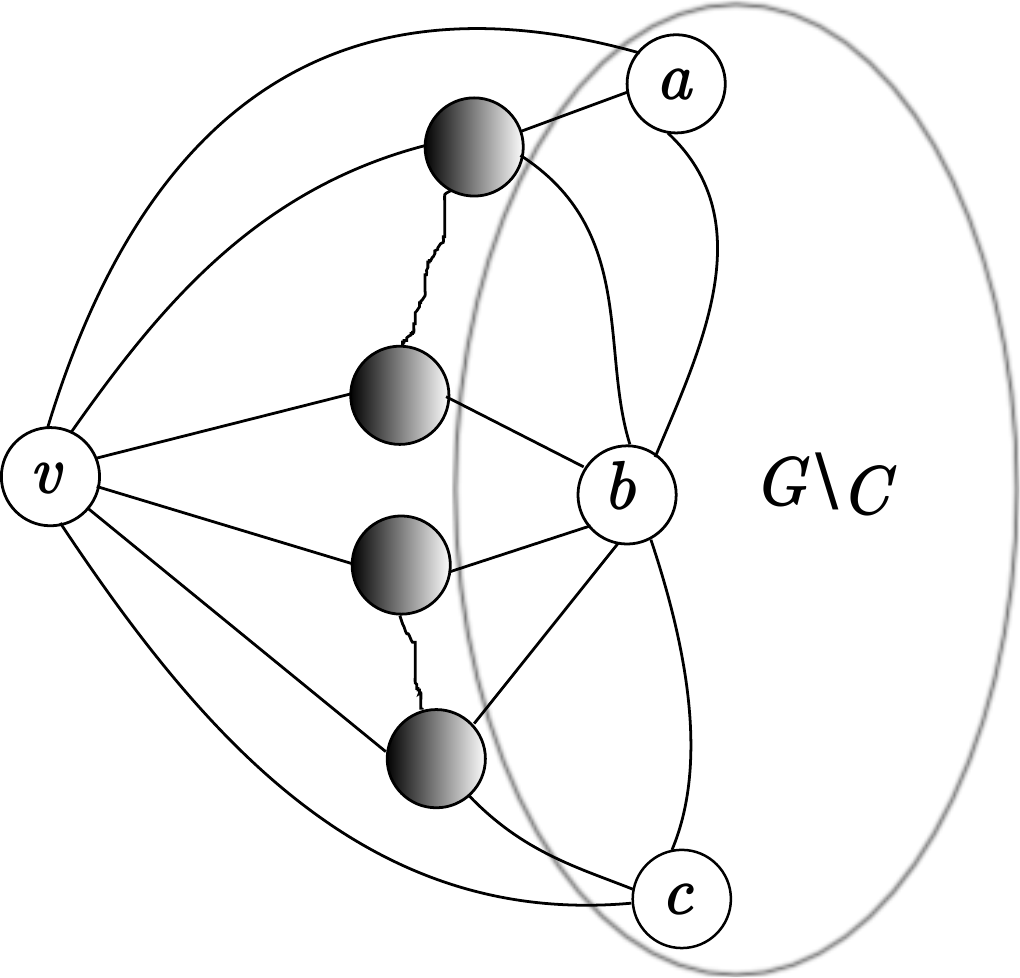}
	\caption{Case $c \in B \cap B'$, $w = b$ and $C \cup \Pi = \{v,b\} \cup (B \cap B')$.}
	\label{fig:fig:leaf-6}
\end{figure} 
 
\begin{claim}
\label{claim:intersection-is-a-path}
If $B \cap B' \neq \{a,c\}$, then $G[B \cap B']$ is a path.
\end{claim} 
 
\begin{proofclaim}
Recall that $b,v \notin B \cap B'$ because $w = b$ by Claim~\ref{claim:w-equals-b} and $b \notin N(v)$ by Claim~\ref{claim:c-is-in-intersection}.	
Hence by the properties of a tree decomposition, $B \cap B'$ is a $bv$-separator of $G[C \cup \Pi]$.	
Since  $v \in C$ by Claim~\ref{claim:v-in-c}, $a \in B \cap B'$ by construction and $c \in B \cap B'$ by Claim~\ref{claim:c-is-in-intersection}, therefore $B \cap B'$ is also a $vb$-separator of $G$.
In particular, $B \cap B'$ is a minimal $bv$-separator of $G$ because $B \cap B' \subseteq N(v) \cap N(w) = N(v) \cap N(b)$ (indeed, recall that $w=b$ by Claim~\ref{claim:w-equals-b}).
By Corollary~\ref{cor:sep-planar}, $B \cap B'$ either induces a cycle or it induces a forest of paths. 

\begin{subclaim}
\label{subclaim:not-a-cycle}
$B \cap B'$ does not induce a cycle.
\end{subclaim}

\begin{proofsubclaim}
By contradiction, let $B \cap B'$ induce a cycle.
Recall that $B \cap B'$ contains the pair of non-adjacent vertices $a,c$ (because $a \in B \cap B'$ by construction and $c \in B \cap B'$ by Claim~\ref{claim:c-is-in-intersection}).
Therefore, one can contract $B \cap B'$ until one obtains an induced quadrangle $(a,x,c,y)$.
Let us contract an arbitrary component of $G \setminus (\Pi \cup C)$ so as to obtain a vertex $z$.
Note that $a,c \in N(z)$ because $G$ is prime by the hypothesis (indeed, neither $a$ nor $b$ nor $c$ nor $(a,b)$ nor $(a,c)$ can be a separator of $G$).
Then, let us contract the edge $\{a,z\}$ to $a$.
By doing so, one obtains a $K_{3,3}$-minor with $\{ a, b, v \}$ being one part of the bipartition and $\{x,y,c\}$ being the other part.
This contradicts the fact that $G$ is planar by the hypothesis, therefore $B \cap B'$ does not induce a cycle.	
\end{proofsubclaim}

It follows from Claim~\ref{subclaim:not-a-cycle} that $B \cap B'$ induces a forest of paths.
Suppose for the sake of contradiction that $B \cap B'$ induces a forest of at least two paths.
Let $x \notin \{a,c\}$ be the endpoint of some path in the forest, that exists because we assume that $B \neq \{a,c\}$.
Observe that $|N(x)| \geq 2$ because $b,v \in N(x)$, and $|N(x)| = |N(x) \cap (C \cup \Pi)| \leq 3$ because $x$ is the endpoint of some path of $B \cap B'$ and $x \in C$.
Furthermore, $N(x) \setminus (b,v) \subseteq B \cap B' \subseteq N(b) \cap N(v)$, and so, if $|N(x)|=3$ then $N(x)$ induces a path. 
Let $\Pi' = N(x)$ if $|N(x)| = 3$, else $\Pi' = (b,a,v)$.
By construction, $\Pi' \subseteq \Pi \cup C$ is a separator of $G$ with $\{x\} \subset C$ being a component of $G \setminus \Pi'$, thus contradicting Claim~\ref{claim:path-min-pty}.
Consequently, $B \cap B'$ induces a path.
\end{proofclaim} 
 
By Claim~\ref{claim:intersection-is-a-path}, either $B\cap B' = \{a,c\}$ or $B \cap B'$ induces a path.
Furthermore, $B \cap B' = N(v)$ because $v \in C$ (Claim~\ref{claim:v-in-c}) and $C \cup \Pi = \{v,b\} \cup (B \cap B')$ (Claim~\ref{claim:subgraph-two-bags}). 
In particular, if $B \cap B' = \{a,c\}$ then $v$ is a leaf-vertex of Type 3.
Else, $B \cap B'$ induces a path and the latter implies that $|B \cap B'| \geq 4$ or else the path $B \cap B'$ would be a separator of $G$ with $\{v\}$ being a component of $G \setminus (B \cap B')$, thus contradicting Claim~\ref{claim:path-min-pty}. 
As a result, since we also have that $B \cap B' \subseteq N(b)$ and $b,v$ are non-adjacent by Claim~\ref{claim:v-in-c}, therefore, $v$ is a leaf-vertex of Type 1.
\end{itemize}

\paragraph{Case $T_a \cap T_c = \emptyset$.}
Since $\Pi$ is a separator of $G$ and $G$ is prime by the hypothesis, one of $\Pi$ or $\Pi \setminus b$ must be a minimal separator of $G$.
Therefore, since $(T,{\cal X})$ is assumed to minimize the distance in $T$ between $T_a$ and $T_c$, by Corollary~\ref{cor:strong-sep} there exist two bags $B_a,B_c$ that are adjacent in $T$ and such that $a \in B_a \setminus B_c$ and $c \in B_c \setminus B_a$.
Furthermore, $a$ dominates $B_a$ while $c$ dominates $B_c$.
Note that $B_a \cap B_c = N(a) \cap N(c)$, so, $b \in B_a \cap B_c$.
In particular, by the properties of a tree decomposition this implies that $S = N(a) \cap N(c)$ is a minimal $ac$-separator of $G$.

We will prove that $C$ is reduced to a vertex (Claim~\ref{claim:component-is-singleton}), the latter being a leaf-vertex.

\begin{claim}
\label{claim:c-is-in-sep}
$C \subseteq S$.
\end{claim}
 
\begin{proofclaim}
Assume for the sake of contradiction that $C \not\subseteq B \cap B'$.
By the properties of a tree decomposition it comes that some vertex $y \in C$ is separated from $a$ or $c$ by the set $S = B \cap B' = N(a) \cap N(c)$.
Say w.l.o.g. that $S$ is an $yc$-separator.
Let $C' \subset C$ be the connected component containing $y$ in $G \setminus (S \cup \{a\})$.
Since we have that $G \setminus a$ is connected because $G$ is prime by the hypothesis (and so, biconnected), that $c \notin C'$ and $N(C') \setminus a \subseteq S \cap (C \cup \Pi) \subseteq N(c) \cap (C \cup \Pi)$, then it comes that $N(C') \setminus a$ is a minimal $yc$-separator of $G \setminus a$.
So, by Lemma~\ref{lem:P3exists} there exist $x',y' \in S$ such that $N(C') \setminus a = \{x',y'\}$.
Therefore, $\Pi' = (x',a,y') \in {\cal P}_3(G)$ and $C' \subset C$ is a component of $G \setminus \Pi'$, that contradicts Claim~\ref{claim:path-min-pty}.
\end{proofclaim} 

By Claim~\ref{claim:c-is-in-sep}, $C \subseteq S$ (see Figure~\ref{fig:fig:leaf-7} for an illustration).
Since $S$ is an $ac$-separator and for any component $C'$ of $G \setminus (\Pi \cup C)$, $a,c \in N(C')$ because $G$ is prime, therefore $S \cap C' \neq \emptyset$.
One thus obtains the following chain of strict subset containment relations $C \subset C \cup \{b\} \subset S$.
Furthermore, by Corollary~\ref{cor:sep-planar}, $S$ either induces a cycle or a forest of paths, so, $C$ being a strict connected subset of $S$, it must induce a path.
In particular, $C \cup \{b\}$ also being a strict subset of $S$, either it induces a path or it is the union of the path induced by $C$ with the isolated vertex $b$. 

\begin{figure}[h!]
	\centering
	\includegraphics[width=0.35\textwidth]{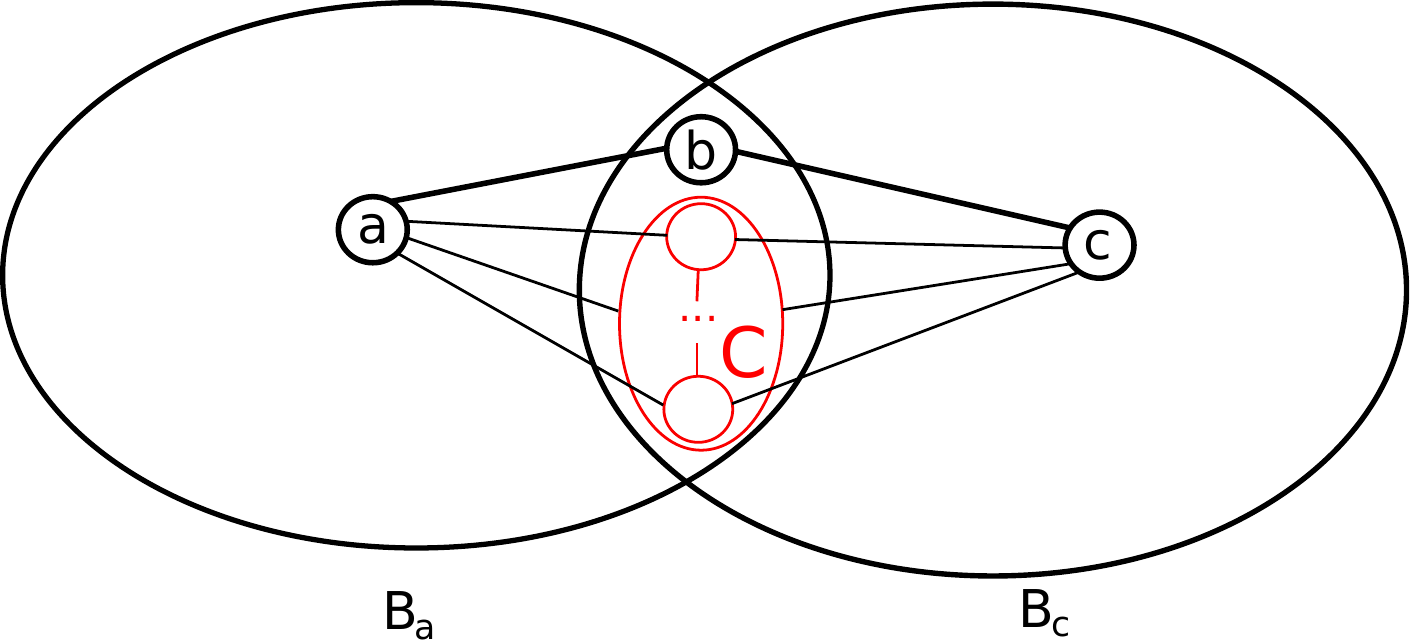}
	\caption{Case $T_a \cap T_c = \emptyset$.}
	\label{fig:fig:leaf-7}
\end{figure} 

\begin{claim}
	\label{claim:component-is-singleton}
	$|C|=1$.
\end{claim}

\begin{proofclaim}
	Assume for the sake of contradiction that $|C| \geq 2$.
	Since $C$ induces a path, let us pick an endpoint $v \in C$ that is not adjacent to vertex $b$ (recall that $C \cup \{b\}$ being a strict subset of $S$, it does not induce a cycle).
	In such a case, $N(v)$ induces a path $\Pi' \in {\cal P}_3(G)$, with $a,c \in \Pi'$ and $\{v\} \subset C$ is a component of $G \setminus \Pi'$, thus contradicting Claim~\ref{claim:path-min-pty}.
\end{proofclaim}

By Claim~\ref{claim:component-is-singleton}, $C$ is reduced to a vertex $v$, that is either a leaf-vertex of Type 2 (if $v \in N(b)$) or of Type 3 (if $v \notin N(b)$).
\end{proof}

Note that in some cases, there may only exist leaf-vertices of only one Type ({\it i.e.}, see respectively Figure~\ref{fig:type-1-only},~\ref{fig:type-2-only} and~\ref{fig:type-3-only} for Types 1,2 and 3).
Therefore, there is none of the three Types of leaf-vertices that can be avoided in our algorithm.

\begin{figure}[h]
	\begin{minipage}[h]{.30\linewidth}
		\centering
		\includegraphics[width=0.75\textwidth]{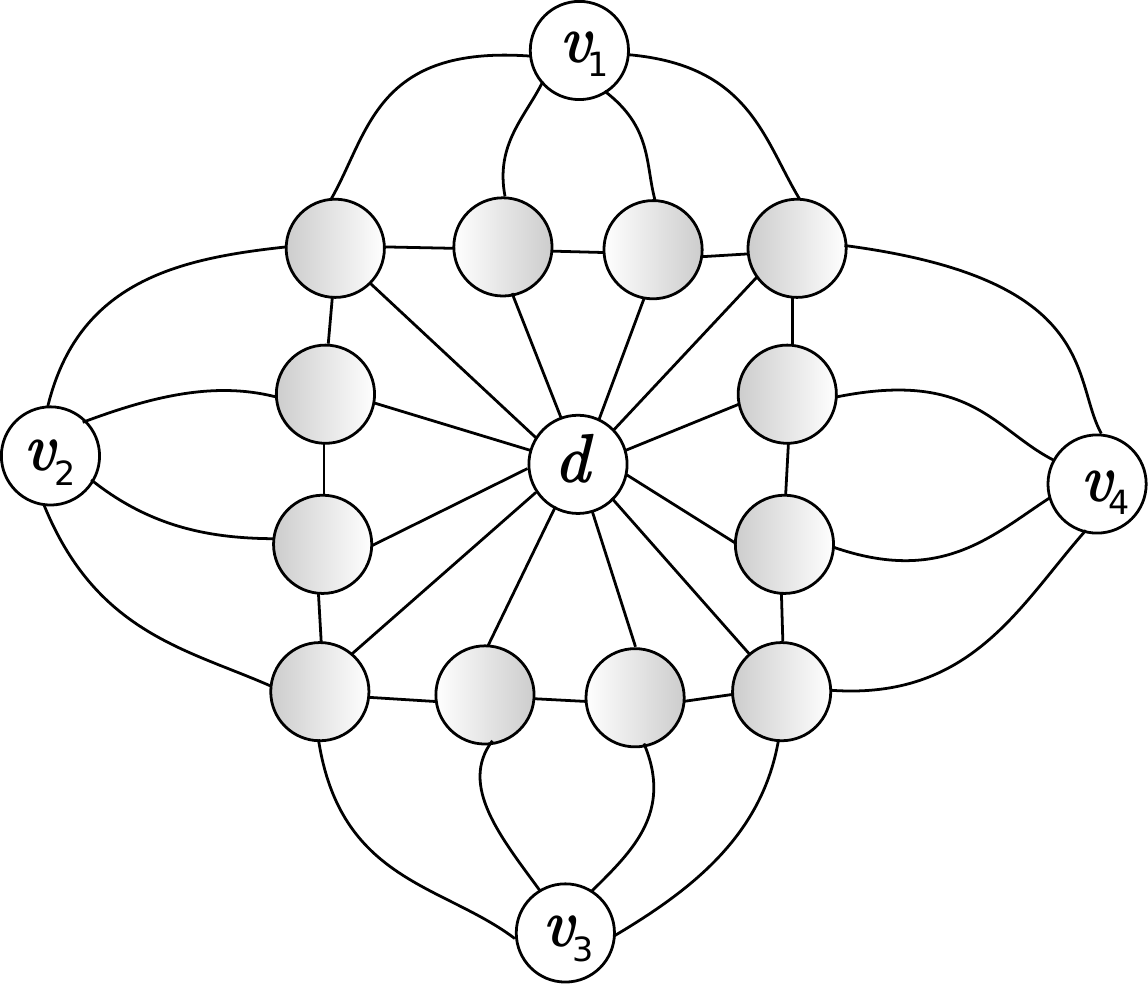}
		\caption{A planar graph $G$ with $tb(G) = 1$ and all of its four leaf-vertices $v_1,v_2,v_3,v_4$ of Type 1.}
		\label{fig:type-1-only}
	\end{minipage} \hfill
	\begin{minipage}[c]{.30\linewidth}
		\centering
		\includegraphics[width=0.75\textwidth]{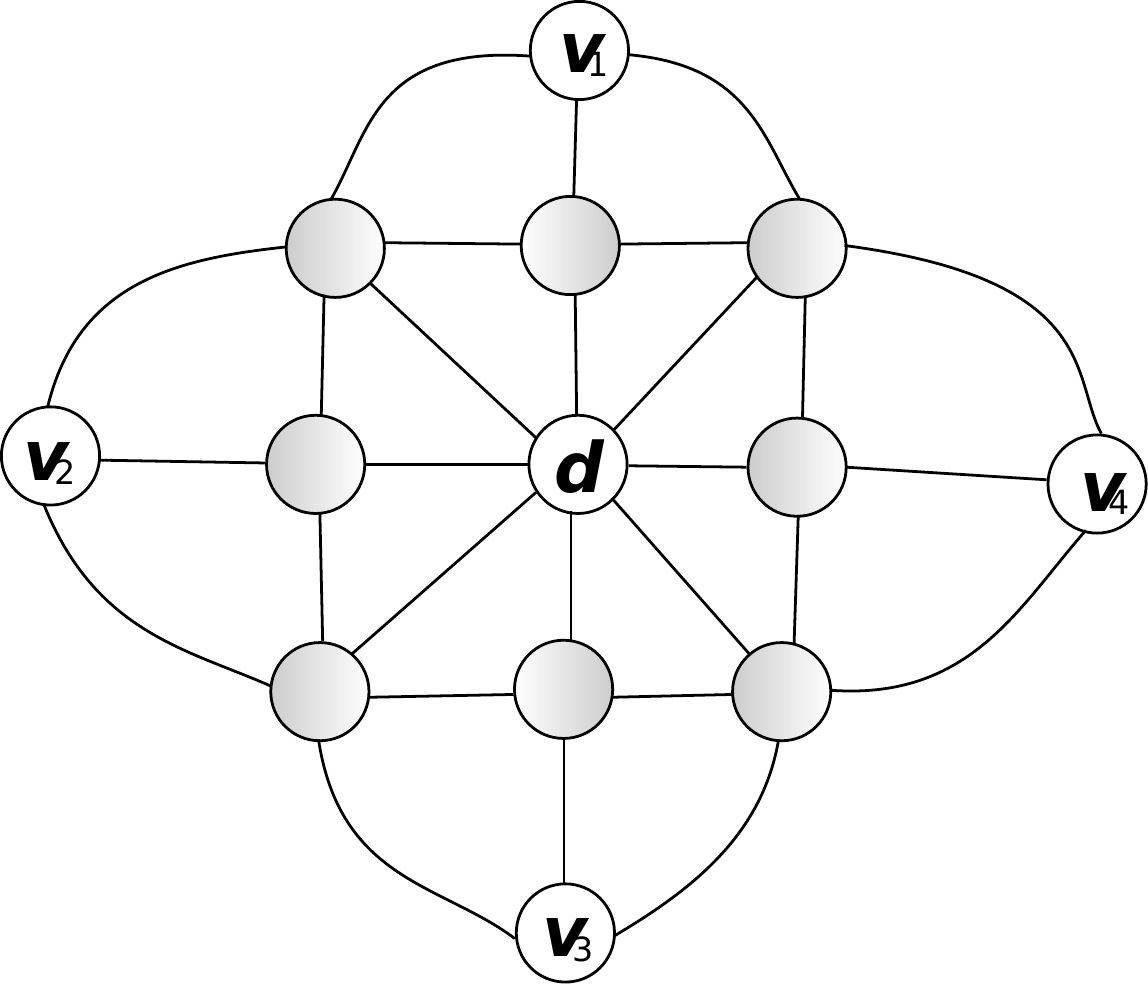}
		\caption{A planar graph $G$ with $tb(G) = 1$ and all of its four leaf-vertices $v_1,v_2,v_3,v_4$ of Type 2.}
		\label{fig:type-2-only}
	\end{minipage} \hfill
	\begin{minipage}[h]{.30\linewidth}
		\centering
		\includegraphics[width=0.75\textwidth]{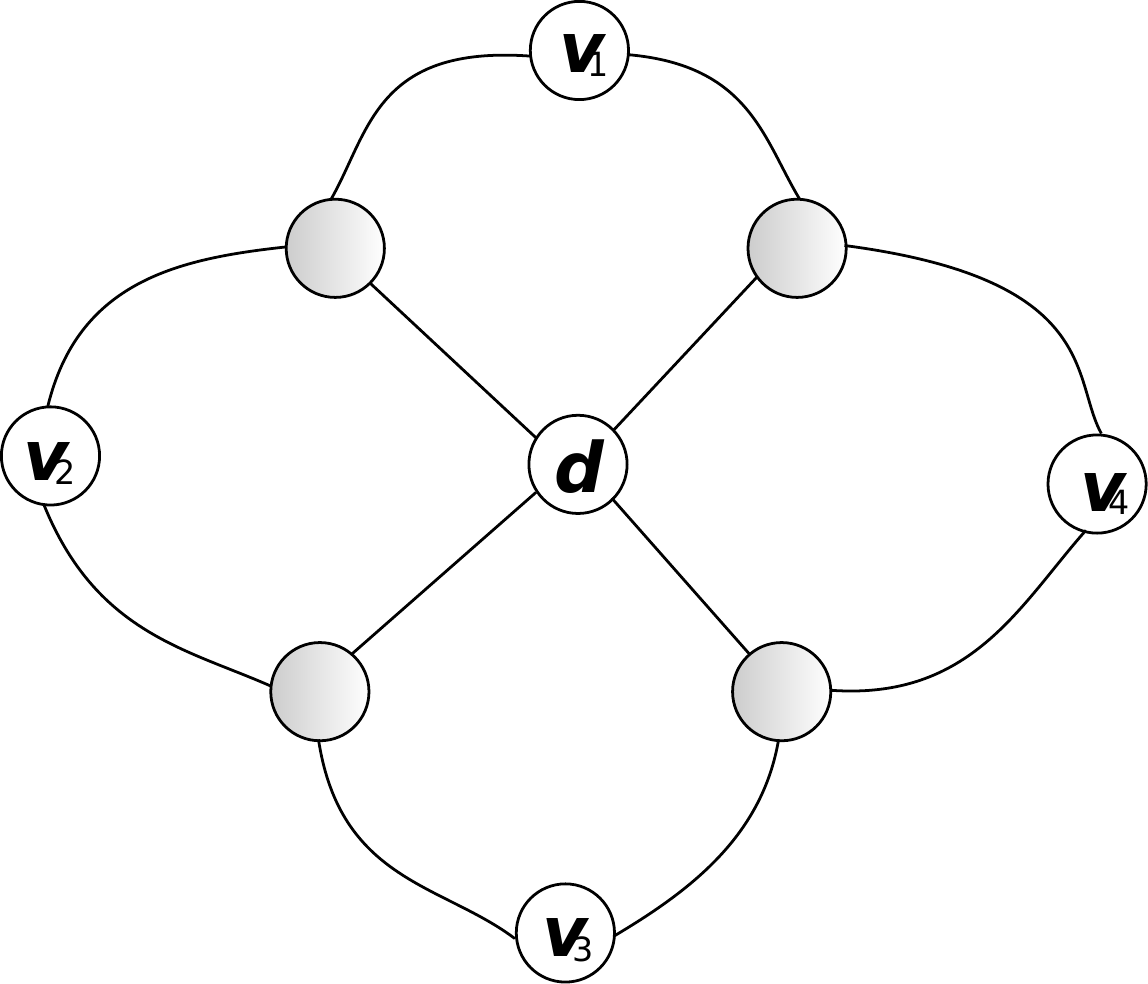}
		\caption{A planar graph $G$ with $tb(G) = 1$ and all of its four leaf-vertices $v_1,v_2,v_3,v_4$ of Type 3.}
		\label{fig:type-3-only}	
	\end{minipage}
\end{figure}

Examples of planar graphs $G$ with $tb(G) = 1$ and ${\cal P}_3(G) = \emptyset$ include $C_4$, the cycle with four vertices.
To prove correctness of~\ref{step:find-leaf-vertex}, it now suffices to prove that all these graphs (with ${\cal P}_3(G) = \emptyset$) admit a star-decomposition with at most two bags.

\begin{lemma}\label{lem:2bags}
For any prime planar graph $G$, if $tb(G)=1$ and ${\cal P}_3(G) = \emptyset$, then $G$ admits a star-decomposition with at most $2$ bags.
\end{lemma}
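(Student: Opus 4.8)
The plan is to argue by contradiction that no \emph{reduced} star-decomposition of $G$ can have three or more bags. Since $tb(G)=1$, a reduced star-decomposition exists by Lemma~\ref{lem:dominator-in-bag}, and any such decomposition with at most two bags is exactly the conclusion we want. So suppose $(T,{\cal X})$ is a reduced star-decomposition with $|{\cal X}| \geq 3$. Then $T$ has at least three nodes, hence it contains an internal node $t$ (a node of degree at least two). I will show that $X_t$ gives rise to a separator of $G$ inducing a path of length two, contradicting the standing hypothesis ${\cal P}_3(G) = \emptyset$.

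First I would verify that $X_t$ is a separator of $G$. Because $t$ has degree at least two and $(T,{\cal X})$ is reduced, for two distinct neighbours $t_1,t_2$ of $t$ there are vertices $u_1 \in X_{t_1} \setminus X_t$ and $u_2 \in X_{t_2} \setminus X_t$; by the subtree property these vertices lie in the parts of $T \setminus t$ containing $t_1$ and $t_2$ respectively, and the standard separator property of tree decompositions then places $u_1$ and $u_2$ in distinct connected components of $G \setminus X_t$. Next, let $b$ be a dominator of the bag, so that $X_t \subseteq N[b]$ and hence $X_t \setminus b \subseteq N(b)$. Since $X_t$ separates $u_1$ from $u_2$ and $b \in X_t$, the set $X_t \setminus b$ separates $u_1$ from $u_2$ in $G \setminus b$; indeed $(G \setminus b) \setminus (X_t \setminus b) = G \setminus X_t$. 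I then extract a minimal $u_1u_2$-separator $S \subseteq X_t \setminus b$ of $G \setminus b$; this $S$ is a minimal separator of $G \setminus b$ in the sense of Definition~\ref{def:min-sep} (its two sides, the components of $u_1$ and of $u_2$, are full), and it lies inside $N(b)$.

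The crux is then to invoke Lemma~\ref{lem:P3exists}: $G$ is prime by hypothesis and $K_{3,3}$-minor-free because it is planar, so the minimal separator $S \subseteq N(b)$ of $G \setminus b$ consists of two non-adjacent vertices $a,c$. Consequently $\{a,b,c\}$ induces a path $(a,b,c)$ of length two, since $a,c \in N(b)$ while $a$ and $c$ are non-adjacent. Moreover $\{a,b,c\}$ is a separator of $G$: indeed $G \setminus \{a,b,c\} = (G \setminus b) \setminus S$ has at least two connected components. Hence $(a,b,c) \in {\cal P}_3(G)$, contradicting ${\cal P}_3(G) = \emptyset$. This forces $|{\cal X}| \leq 2$, which proves the lemma.

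I expect the only delicate points to be essentially bookkeeping: checking that reducedness really yields nonempty ``private'' vertices $u_1,u_2$ on either side of the internal node $t$, so that $X_t$ separates them, and confirming that passing to $G \setminus b$ preserves this separation (so that the extracted $S$ is genuinely a minimal separator of $G \setminus b$ contained in $N(b)$). Once the hypotheses of Lemma~\ref{lem:P3exists} are in place, the contradiction with ${\cal P}_3(G)=\emptyset$ is immediate, so I do not anticipate any genuinely hard step beyond this reduction of ``at least three bags'' to the existence of a $P_3$-separator.
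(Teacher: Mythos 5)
Your proof is correct and takes essentially the same route as the paper's: pick an internal bag $X_t$ of a star-decomposition with at least three bags, let $b$ dominate it, observe that $X_t \setminus b \subseteq N(b)$ yields a (minimal) separator of $G \setminus b$, and apply Lemma~\ref{lem:P3exists} to obtain two non-adjacent vertices that together with $b$ form a $P_3$-separator, contradicting ${\cal P}_3(G) = \emptyset$. The only difference is cosmetic: you spell out the role of reducedness (to get private vertices on both sides of $t$) and the full-component minimality of the extracted separator, details the paper compresses into "$X_t$ is a separator" and the biconnectivity of $G$.
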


\begin{proof}
By contradiction, let $(T,{\cal X})$ be a star-decomposition of $G$ with at least three bags.
Let $t \in V(T)$ be an internal node, by the properties of a tree decomposition the bag $X_t$ is a separator of $G$.
Let $u \in X_t$ satisfy $X_t \subseteq N_G[u]$.
Since $G$ is biconnected, therefore $X_t \setminus u$ is a separator of $G \setminus u$.
By Lemma~\ref{lem:P3exists}, there exist $x,y \in X_t \setminus u$ non-adjacent such that $\{x,y\}$ is a minimal separator of $G \setminus u$.
In such case, $(x,u,y) \in {\cal P}_3(G)$, which contradicts the fact that ${\cal P}_3(G) = \emptyset$. 
\end{proof}

\subsubsection{Case of leaf-vertex $v$ of Type $1$}

\begin{lemma}\label{lem:ends-separate}
Let $G$ be a prime planar graph and $v$ be a leaf-vertex of Type $1$. Let $\Pi_v$ be the path induced by $N(v)$ and let $a_v,c_v$ be the ends of $\Pi_v$. 
Suppose $V(G) \neq N[v] \cup \{d_v\}$.

Then $\Pi'=(a_v,d_v,c_v) \in {\cal P}_3(G)$ and $N[v] \setminus \{a_v,c_v\}$ is a component of $G\setminus \Pi'$.
\end{lemma}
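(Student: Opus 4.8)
The plan is to exhibit the partition $V(G) = W \sqcup \Pi' \sqcup R$, where $W = N[v] \setminus \{a_v,c_v\}$, $\Pi' = \{a_v,d_v,c_v\}$ and $R = V(G) \setminus (N[v] \cup \{d_v\})$, and then to prove that $\Pi'$ separates the connected set $W$ from the set $R$, which is nonempty by the hypothesis $V(G) \neq N[v] \cup \{d_v\}$. Write $\Pi_v = (a_v = p_1, p_2, \ldots, p_k = c_v)$; since $\Pi_v$ has length at least $3$ we have $k \geq 4$, so $W = \{v,p_2,\dots,p_{k-1}\}$ contains at least one internal path-vertex. I would begin with two easy structural facts. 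First, $a_v$ and $c_v$ are non-adjacent, being the endpoints of the \emph{induced} path $\Pi_v$ of length at least $3$; together with $d_v \sim a_v$ and $d_v \sim c_v$ (as $d_v$ dominates $\Pi_v$) this shows $\Pi'$ induces the path $(a_v,d_v,c_v)$. Second, $d_v \notin N[v]$: indeed $d_v \neq v$, and if $d_v$ were a vertex of the induced path $\Pi_v$ it would be adjacent to at most two path-vertices, contradicting that it dominates the $\geq 4$ vertices of $\Pi_v$. Hence $W, \Pi', R$ are pairwise disjoint and cover $V(G)$, and $W$ is connected since $v$ is adjacent to every $p_i$.

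The core of the argument is to show $N(W) \setminus W \subseteq \Pi'$, equivalently that no vertex of $W$ has a neighbour in $R$. For $v$ this is immediate, since $N(v) \setminus W = \{a_v,c_v\} \subseteq \Pi'$. So the real work concerns an internal path-vertex $p_i$ (with $2 \le i \le k-1$); using that $\Pi_v$ is induced, its neighbours inside $N[v]\cup\{d_v\}$ are exactly $v, d_v, p_{i-1}, p_{i+1}$, so it suffices to rule out a neighbour $r \in R$. Here I would use planarity through the $4$-cycle $C_i = (v, p_{i-1}, d_v, p_{i+1})$, all four of whose vertices are adjacent to $p_i$. In a fixed plane embedding, $C_i \cup \{p_i\}$ is a $4$-wheel (a $4$-cycle together with a hub adjacent to all its vertices), so $p_i$ lies in one of the two regions bounded by $C_i$; the four edges from $p_i$ to the corners split that region into four triangular faces with boundaries $\{p_i,v,p_{i-1}\}$, $\{p_i,p_{i-1},d_v\}$, $\{p_i,d_v,p_{i+1}\}$, $\{p_i,p_{i+1},v\}$, each of which is a triangle of $G$. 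Any further neighbour $r$ of $p_i$ (in particular any $r \in R$, which differs from the four corners) cannot cross $C_i$ and hence lies strictly inside one of these four triangular faces.

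The punch line then comes from primeness. The boundary of the triangular face containing $r$ is a $3$-cycle, that is, a clique of $G$, and it separates $r$ from the corner of $C_i$ not incident to that face (for instance, if $r$ lies in the face $\{v,p_{i-1},p_i\}$ then the triangle $\{v,p_{i-1},p_i\}$ separates $r$ from $d_v$). Thus $G$ would have a clique-separator, contradicting that $G$ is prime. This forces $N(p_i) = \{v, d_v, p_{i-1}, p_{i+1}\}$, whence $N(p_i) \setminus W \subseteq \{a_v, d_v, c_v\} = \Pi'$ (noting $p_{i-1}=a_v$ when $i=2$ and $p_{i+1}=c_v$ when $i=k-1$). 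Combining the two cases gives $N(W) \setminus W \subseteq \Pi'$, so $W$ is a union of connected components of $G \setminus \Pi'$; being connected, it is a single component. Since $R \neq \emptyset$, the set $\Pi'$ is indeed a separator, and therefore $\Pi' \in {\cal P}_3(G)$ with $W = N[v] \setminus \{a_v,c_v\}$ as one of its components, as required.

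I expect the main obstacle to be the careful planarity bookkeeping in the middle step: arguing cleanly that an extra neighbour $r$ of $p_i$ is trapped inside one of the four triangular faces of the wheel $C_i \cup \{p_i\}$ (rather than merely inside $C_i$), and that the bounding triangle genuinely separates $r$ from a vertex lying outside that face. A minor point to address is the possible chord $v d_v$ of $C_i$; one checks it does not affect the face structure, since $p_i$ is adjacent to all four corners of $C_i$ and so this chord, if present, must be routed through the region bounded by $C_i$ that does not contain $p_i$.
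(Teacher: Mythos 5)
Your proof is correct, but it takes a genuinely different route from the paper's. The paper argues from the outside: it takes an arbitrary component $C$ of $G \setminus (N[v] \cup \{d_v\})$, uses primeness to extract two non-adjacent vertices $x,y \in N(C)$ (noting $d_v \notin \{x,y\}$ since $d_v$ dominates $N(v)$), and shows $\{x,y\} = \{a_v,c_v\}$ by exhibiting a $K_5$-minor built from the five connected sets $\{v\} \cup P$, $\{d_v\}$, $\{x\}$, $Q$, $\{y\} \cup C$, where $\Pi_v = (P,x,Q,y,R)$; repeating this for any further attachment $x' \in N(C) \cap N(v)$ forces $N(C) \subseteq \{a_v,d_v,c_v\}$, and the lemma follows. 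You instead argue from the inside, vertex by vertex: fixing a plane embedding, you trap any extra neighbour $r$ of an internal path vertex $p_i$ strictly inside one of the four triangular sub-regions of the wheel with hub $p_i$ and rim $(v,p_{i-1},d_v,p_{i+1})$, and observe that the bounding triangle would then be a clique-separator of $G$, contradicting primeness. Both proofs combine planarity with primeness, but through different mechanisms: the paper's is embedding-free, invoking planarity only via $K_5$-minor-freeness, in the same style as the neighbouring Lemmas~\ref{lem:unique-neighbour} and~\ref{lem:common-neighbour}; yours works with an explicit embedding and Jordan-curve reasoning, which is more elementary in its tools but carries exactly the topological bookkeeping obligations you flag (that $r$ cannot escape the rim, that the opposite corner lies outside the closed sub-region, and that the only possible chord $vd_v$ of the rim must be drawn in the region avoiding $p_i$) --- your handling of these points is sound, since the rim has no other chords ($p_{i-1}p_{i+1} \notin E$ and $a_vc_v \notin E$ because $\Pi_v$ is induced). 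As a small bonus, your argument establishes the stronger local fact that every internal vertex of $\Pi_v$ has neighbourhood exactly $\{v,d_v,p_{i-1},p_{i+1}\}$, from which the component statement follows, whereas the paper derives the component statement directly from the attachments of the outside components.
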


\begin{proof}
Let $C$ be a component of $G \setminus (N[v] \cup \{d_v\})$, that exists by the hypothesis.
By construction, $v \notin N[C]$, so, $N(C) \subseteq N(v) \cup \{d_v\}$ separates $v$ from $C$.
Furthermore, since $G$ is prime by the hypothesis, there exist $x,y \in N(C)$ non-adjacent.
Note that $d_v \notin \{x,y\}$ because $N(v) \subseteq N(d_v)$ by the hypothesis, hence $x,y \in N(v)$.

We claim that $\{x,y\} = \{a_v,c_v\}$.
By contradiction, suppose $x \notin \{a_v,c_v\}$.
Let us write $\Pi_v = (P,x,Q,y,R)$ with $P,Q$ non-empty subpaths of $\Pi_v$ and $R$ a (possibly empty) subpath of $\Pi_v$.
In such a case, the connected subsets $S_1 := \{v\} \cup P, \ S_2 := \{d_v\}, \ S_3 := \{x\}, \ S_4 := Q$ and $S_5 := \{y\} \cup C$ induce a $K_5$-minor of $G$, that contradicts the hypothesis that $G$ is planar. 
Therefore, the claim is proved, that is, $\{x,y\} = \{a_v,c_v\}$.

To prove the lemma, it now suffices to prove that $N(C) \cap N(v) = \{a_v,c_v\}$ for in such a case the result will hold for any component $C'$ of $G \setminus (N[v] \cup \{d_v\})$.
By contradiction, let $x' \in (N(C) \cap N(v)) \setminus (a_v,c_v)$.
Since $|N(v)| \geq 4$ because $v$ is a leaf-vertex of Type 1 by the hypothesis, therefore, $x'$ and $a_v$ are non-adjacent or $x'$ and $c_v$ are non-adjacent.
Let $y' \in \{a_v,c_v\}$ be non-adjacent to $x'$.
Since $x',y' \in N(C) \cap N(v)$ are non-adjacent, therefore, by the same proof as for the above claim $\{x',y'\} = \{a_v,c_v\}$, that would contradict the assumption that $x' \notin \{a_v,c_v\}$.
As a result, $N(C) \subseteq (a_v,d_v,c_v)$ and so, since the result holds for any component $C'$ of $G \setminus (N[v] \cup \{d_v\})$, $\Pi' = (a_v,d_v,c_v) \in {\cal P}_3(G)$ with $N[v] \setminus \{a_v,c_v\}$ being a full component of $G\setminus \Pi'$.
\end{proof}

\begin{theorem}\label{th:Type1}
Let $G$ be a prime planar graph and $v$ be a leaf-vertex of Type $1$. Let $\Pi_v$ be the path induced by $N(v)$ and let $a_v,c_v$ be the ends of $\Pi_v$.
Suppose $V(G) \neq N[v] \cup \{d_v\}$.
 
Then, the graph $G'$, obtained from $G \setminus v$ by contracting the internal vertices of $\Pi_v$ to a single edge, is prime and planar, and $tb(G)=1$ if and only if $tb(G')=1$.
\end{theorem}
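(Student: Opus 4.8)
The plan is to set $A=\{v\}\cup\bigl(N(v)\setminus\{a_v,c_v\}\bigr)$, i.e. $v$ together with the internal vertices of $\Pi_v$. By Lemma~\ref{lem:ends-separate}, $\Pi'=(a_v,d_v,c_v)\in{\cal P}_3(G)$ and $A=N[v]\setminus\{a_v,c_v\}$ is a full component of $G\setminus\Pi'$, so the only neighbours of $A$ outside $A$ are $a_v,d_v,c_v$. The key reformulation I would record first is that the stated graph $G'$ is exactly $G[V\setminus A]$ with the extra edge $\{a_v,c_v\}$: contracting $\Pi_v$ onto the edge $\{a_v,c_v\}$ after deleting $v$ amounts to contracting the connected set $A\cup\{a_v\}$ to a single vertex, which then inherits precisely $c_v$, $d_v$ and the old neighbours of $a_v$ in $V\setminus A$, since every internal vertex of $\Pi_v$ has all its remaining neighbours inside $A\cup\{a_v,d_v,c_v\}$. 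This description drives the whole argument.

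Planarity and the forward implication are then immediate. As $G'$ is obtained from $G$ by edge-contractions (of $A\cup\{a_v\}$) followed by a deletion, it is a minor of $G$ and hence planar; and being obtained by a sequence of contractions, Lemma~\ref{lem:contraction-closed} gives $tb(G')\le tb(G)$, so $tb(G)=1$ forces $tb(G')=1$ (the reverse inequality $tb(G')\ge 1$ being trivial).

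For the converse I would route through $G^{+}:=G+\{a_v,c_v\}$. In $G^{+}$ the set $\{a_v,c_v,d_v\}$ is a triangle separating $A$ from the rest, and the gadget $G^{+}[A\cup\Pi']$ has the obvious two-bag star-decomposition with bags $N[v]$ (dominated by $v$) and $\Pi_v\cup\{d_v\}$ (dominated by $d_v$), so $tb(G^{+}[A\cup\Pi'])=1$. Since $G^{+}[V\setminus A]=G'$, applying Lemma~\ref{lem:cliqueSep} with the clique-separator $\{a_v,c_v,d_v\}$ (or, in the degenerate case where no far component of $\Pi'$ sees $d_v$, with $\{a_v,c_v\}$ and full component $A\cup\{d_v\}$, using that $d_v$ is then simplicial in $G'$ so $tb(G'\setminus d_v)=1$) yields $tb(G')=1\Rightarrow tb(G^{+})=1$. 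It then remains to pass from $G^{+}$ back to $G$, i.e. to delete the single edge $\{a_v,c_v\}$ without raising the tree-breadth. Starting from the glued star-decomposition of $G^{+}$ given by Lemma~\ref{lem:cliqueSep}, the gadget bags $N[v]$ and $\Pi_v\cup\{d_v\}$ stay dominated (by $v$, resp. $d_v$) in $G$ regardless of the edge, and every $G'$-side bag dominated by a vertex other than $a_v,c_v$ keeps that dominator in $G$. The obstruction is exactly the bags $X$ containing both $a_v$ and $c_v$ that in $G'$ are dominated only by $a_v$ (or only by $c_v$): since $a_v,c_v$ are non-adjacent in $G$, such a bag loses its dominator. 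I expect to repair them by bag surgery, splitting $c_v$ off of $X$ (so that $X\setminus\{c_v\}\subseteq N_G[a_v]$ is again $a_v$-dominated) and re-covering the lost edges $\{c_v,y\}$, which arise only for common neighbours $y\in N_G(a_v)\cap N_G(c_v)$, through the interface bag $\Pi_v\cup\{d_v\}\ni a_v,c_v,d_v$ and the structural control on common neighbours furnished by Lemmas~\ref{lem:unique-neighbour} and~\ref{lem:common-neighbour}.

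Finally, primeness of $G'$ I would establish by contradiction: suppose $S$ is a clique-separator of $G'$. Because the only edge of $G'$ missing from $G$ is $\{a_v,c_v\}$, either $\{a_v,c_v\}\not\subseteq S$, so $S$ is already a clique of $G[V\setminus A]=G[W\cup\Pi']$ and, by Lemma~\ref{lem:separatorInCompo} applied with the path-separator $\Pi'$, is either a clique-separator of $G$ (contradicting that $G$ is prime) or an $a_vc_v$-separator forced to contain $d_v$, which I rule out using that $a_v,c_v$ are reconnected through $A$ in $G$; or $\{a_v,c_v\}\subseteq S$, in which case Lemma~\ref{lem:separatorInCompo} again yields that $S$ is a separator of $G$, and replacing the non-edge $\{a_v,c_v\}$ inside $S$ by the common neighbour $d_v$ produces a clique-separator of $G$, the desired contradiction. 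Overall I anticipate the edge-removal step of the converse (re-dominating the endpoint-dominated bags) to be the genuine difficulty, and the place where planarity of $G$ is really used; the planarity and forward directions are essentially free from the contraction viewpoint.
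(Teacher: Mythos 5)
Your proposal founders on a misreading of the construction of $G'$. ``Contracting the internal vertices of $\Pi_v$ to a single edge'' does not mean contracting $\Pi_v$ onto a new edge $\{a_v,c_v\}$: it means the internal vertices are contracted down to \emph{two} adjacent vertices $x,y$, so that in $G'$ the path $\Pi_v$ becomes the induced path $\Pi'_v=(a_v,x,y,c_v)$, with $N_{G'}(x)=\{a_v,d_v,y\}$, $N_{G'}(y)=\{c_v,d_v,x\}$, and $a_v,c_v$ still \emph{non-adjacent}. This is unambiguous in the paper: its proof of Theorem~\ref{th:Type1} defines $\Pi'_v=(a_v,x,y,c_v)$ explicitly, and the vertex count $n'=n-\deg(v)+3$ in Lemma~\ref{lem:ub-steps} only matches the two-vertex contraction, not yours ($n'=n-\deg(v)+1$). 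The difference is not cosmetic; for your graph $G'_{\mathrm{student}}=G[V\setminus A]+\{a_v,c_v\}$ the primality assertion of the theorem is simply false in general. Concretely, take $\Pi_v=(a,p_1,p_2,c)$, let $v$ and $d$ both be adjacent to all of $\Pi_v$, add $w_1$ adjacent to $a,d,c$ and $w_2$ adjacent to $a,c$: this $G$ is prime and planar with $v$ of Type 1, and the paper's $G'$ is prime, but in your $G'_{\mathrm{student}}$ the pair $\{a,c\}$ has become a clique separating $w_2$ from $\{d,w_1\}$. Keeping $a_v,c_v$ non-adjacent while retaining the $a_vc_v$-path $(a_v,x,y,c_v)$ is precisely what the paper's converse exploits: since that path avoids $N(a_v)\cap N(c_v)$, Corollary~\ref{cor:strong-sep} forces $T'_{a_v}\cap T'_{c_v}\neq\emptyset$ in any star-decomposition of $G'$, hence (Helly) a bag containing $a_v,d_v,c_v$ to which the two gadget bags $N[v]$ and $N(v)\cup\{d_v\}$ can be attached; and because $G'$ restricted to the original vertices equals $G[V\setminus A]$, no bag of $G'$ can use a domination that is unavailable in $G$.

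Your version, by contrast, creates exactly such spurious dominations, and this is where your sketch has a genuine hole even as a proof of your modified statement. In the step from $tb(G+\{a_v,c_v\})=1$ back to $tb(G)=1$ you must handle bags $X\ni a_v,c_v$ dominated \emph{only} by $a_v$ (via the phantom edge). Deleting $c_v$ from $X$ can disconnect $T_{c_v}$ and uncovers every edge $\{c_v,y\}$ with $y\in X\cap N_G(c_v)$; such $y$ are common neighbours of $a_v,c_v$ in $G$, but they need not be adjacent to $d_v$ (nor to $v$), so they cannot be ``re-covered through the interface bag $\Pi_v\cup\{d_v\}$'' as you propose. Moreover, the lemmas you invoke do not supply the needed control: Lemma~\ref{lem:unique-neighbour} bounds common neighbours of an \emph{adjacent} pair $a,b$ inside a component, and Lemma~\ref{lem:common-neighbour} concerns a common neighbour of all three vertices of a $P_3$; neither bounds $N_G(a_v)\cap N_G(c_v)$ or locates its members relative to $d_v$. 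So the ``bag surgery'' is not a repairable detail but the whole difficulty, and it is the difficulty the paper's two-vertex contraction is designed to avoid. Tree-breadth is contraction-closed (Lemma~\ref{lem:contraction-closed}) but not closed under edge deletion, so no generic argument rescues this step either.
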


\begin{figure}[h!]
 \centering
 \includegraphics[width=0.5\textwidth]{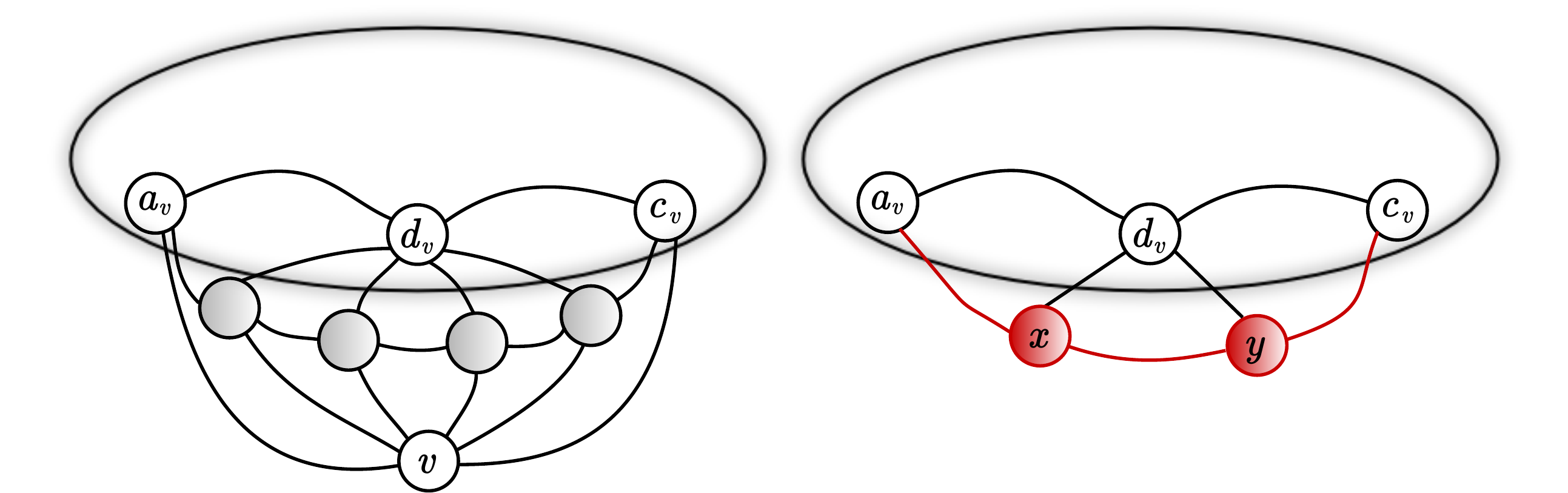}
 \caption{Contraction of the internal vertices of $\Pi_v$ to a single edge and removal of $v$.}
 \label{fig:planar-type1}
\end{figure}

\begin{proof}
For the remaining of the proof, let $\Pi'_v = (a_v,x,y,c_v)$ be the path resulting from the contraction of the internal vertices of $\Pi_v$ to the edge $\{x,y\}$ in $G'$.
By Lemma~\ref{lem:ends-separate} $(a_v,d_v,c_v) \in {\cal P}_3(G)$ with $(N[v] \setminus (a_v,c_v))$ being a full component of $G \setminus (a_v,d_v,c_v)$.
Consequently, $N_{G'}(x) = \{a_v,d_v,y\}$ and $N_{G'}(y) = \{c_v,d_v,x\}$.

The graph $G'$ is a minor of $G$, that is a planar graph by the hypothesis, so, $G'$ is also planar.
In order to prove that $G'$ is prime, by contradiction, let $S$ be a minimal clique-separator of $G'$.
There are two cases to be considered.
\begin{itemize}
\item Suppose $x \in S$ or $y \in S$.
In such case, $S \subseteq (a_v,x,d_v)$, or $S \subseteq (x,d_v,y)$, or $S \subseteq (y,d_v,c_v)$.
By Lemma~\ref{lem:ends-separate} $(a_v,d_v,c_v) \in {\cal P}_3(G)$ with$(N[v] \setminus (a_v,c_v))$ being a full component of $G \setminus (a_v,d_v,c_v)$, and so, for every component $C$ of $G' \setminus (\Pi'_v \cup \{d_v\}) = G \setminus (N[v] \cup \{d_v\})$ $a_v,c_v \in N(C)$ because $G$ is prime by the hypothesis.
In such case, since $a_v \notin S$ or $c_v \notin S$, therefore, $G' \setminus S$ is connected, that contradicts the assumption that $S$ is a clique-separator of $G'$.
\item Else, $x,y \notin S$.
Since $a_v \notin S$ or $c_v \notin S$ because $a_v$ and $c_v$ are non-adjacent in $G'$, therefore, $S$ must be a separator of $G' \setminus (x,y)$ or else $G' \setminus S$ would be connected because $\Pi'_v$ induces a path of $G'$ (thus contradicting the assumption that $S$ is a separator of $G'$).
In such a case, since by Lemma~\ref{lem:ends-separate} $(a_v,d_v,c_v) \in {\cal P}_3(G)$ with$(N[v] \setminus (a_v,c_v))$ being a full component of $G \setminus (a_v,d_v,c_v)$, since $S$ is a separator of $G' \setminus (x,y) = G  \setminus (N[v] \setminus (a_v,c_v))$ and since $S$ is not a separator of $G$ because $G$ is prime by the hypothesis, therefore, by Lemma~\ref{lem:separatorInCompo} there are exactly two components $C_a,C_c$ in $G' \setminus (S \cup \{x,y\})$ with $a_v \in C_a$ and $c_v \in C_c$.
However, $\Pi_v'$ is an $a_vc_v$-path of $G' \setminus S$, thus contradicting the assumption that $S$ is a separator of $G'$. 
\end{itemize}
As a result, $G'$ is a prime planar graph.

\medskip
\noindent
Finally, let us prove $tb(G) = 1$ if and only if $tb(G') = 1$.

\begin{itemize}

\item If $tb(G) = 1$ then $tb(G \setminus v) = 1$ because $N(v) \subseteq N(d_v)$ by the hypothesis, and so, $tb(G') = 1$ because $G'$ is obtained from $G \setminus v$ by edge-contractions and tree-breadth is contraction-closed (Lemma~\ref{lem:contraction-closed}).

\item Conversely, let us prove that $tb(G) = 1$ if $tb(G') = 1$.
To prove this, let $(T',{\cal X}')$ be a reduced star-decomposition of $G'$, that exists by Lemma~\ref{lem:dominator-in-bag}, minimizing the distance in $T'$ between the two subtrees $T'_{a_v}$ and $T'_{c_v}$.
In order to prove $tb(G) = 1$, it suffices to show how to construct a star-decomposition of $G$ from $(T',{\cal X}')$.

We will prove as an intermediate claim that $T'_{a_v} \cap T'_{c_v} \neq \emptyset$.
By contradiction, suppose $T'_{a_v} \cap T'_{c_v} = \emptyset$.
Since by Lemma~\ref{lem:ends-separate} $(a_v,d_v,c_v) \in {\cal P}_3(G)$ with $(N[v] \setminus (a_v,c_v))$ being a full component of $G \setminus (a_v,d_v,c_v)$, therefore, $(a_v,d_v,c_v) \in {\cal P}_3(G')$ with $\{x,y\}$ being a full component of $G' \setminus (a_v,d_v,c_v)$.
Since we proved that $G'$ is prime, it follows that one of $(a_v,c_v)$ or $(a_v,d_v,c_v)$ is a minimal separator of $G'$.
In such a situation, since $(T',{\cal X}')$ is assumed to minimize the distance in $T'$ between $T'_{a_v}$ and $T'_{c_v}$, therefore, by Corollary~\ref{cor:strong-sep} there are two adjacent bags $B'_{a_v}, \ B'_{c_v}$ such that $a_v \in B'_{a_v} \setminus B'_{c_v}$ and $c_v \in B'_{c_v} \setminus B'_{a_v}$ respectively dominate $B'_{a_v}$ and $B'_{c_v}$ in $G'$.
However by the properties of a tree decomposition this implies that $B'_{a_v} \cap B'_{c_v} = N(a_v) \cap N(c_v)$ is an $a_vc_v$-separator of $G'$, thus contradicting the existence of the $a_vc_v$-path $\Pi'_v$.
Therefore, the claim is proved and $T'_{a_v} \cap T'_{c_v} \neq \emptyset$. 

Recall that $T'_{a_v} \cap T'_{d_v} \neq \emptyset$ and similarly $T'_{c_v} \cap T'_{d_v} \neq \emptyset$ by the properties of a tree decomposition.
Hence, the subtrees $T'_{a_v},T'_{c_v}, T'_{d_v}$ are pairwise intersecting, and so, by the Helly property (Lemma~\ref{lem:helly}), $T'_{a_v} \cap T'_{d_v} \cap T'_{c_v} \neq \emptyset$.
Let us now proceed as follows so as to obtain a star-decomposition of $G$.
Let us remove $x,y$ from all bags in ${\cal X}'$, that keeps the property for $(T',{\cal X}')$ to be a star-decomposition because $x$ and $y$ are dominated by $d_v$ in $G'$.
Then, let us add two new bags $B_1 = N[v], \ B_2 = N(v) \cup \{d_v\}$, and finally let us make $B_1,B_2$ pairwise adjacent and let us make $B_2$ adjacent to some bag of $T'_{a_v} \cap T'_{d_v} \cap T'_{c_v}$.
By construction, the resulting tree decomposition is indeed a star-decomposition of $G$, whence $tb(G) = 1$. 
\end{itemize}\end{proof}

\subsubsection{Proof of~\ref{step:prime-case}~\ref{step:many-neighbours}}
\label{sec-step4-a}

In the following three subsections (~\ref{sec-step4-a},~\ref{sec:step-4-b-i} and~\ref{sec:step-4-b-ii}) we will prove correctness of the algorithm for the case of a leaf-vertex $v$ of Type $2$ or $3$ and $G \setminus v$ is prime (~\ref{step:prime-case}).
Our proofs in these subsections will mostly rely on Lemma~\ref{lem:strong-pair}.

Let us first show how we can use Lemma~\ref{lem:strong-pair} in order to prove correctness of~\ref{step:prime-case}~\ref{step:many-neighbours}.
Note that since we are in the case when $G \setminus v$ is prime, we needn't prove it in the following Theorem~\ref{th:primeEasy}.

\begin{theorem}\label{th:primeEasy}
Let $G = (V,E)$ be a prime planar graph, let $v$ be a leaf-vertex of Type 2 or 3, and let $\Pi_v = (a_v,b_v,c_v)$ be as in Definition~\ref{def:leafVertex}.

Suppose that $|N(a_v) \cap N(c_v)| \geq 3$ in $G \setminus v$, or there exists a minimal separator $S\subseteq (N(a_v) \cap N(c_v)) \cup \{a_v,c_v\}$ in $G \setminus v$ and $\{a_v,c_v\} \subseteq S$.

Then, $tb(G) = 1$ if and only if $tb(G \setminus v) = 1$. 
\end{theorem}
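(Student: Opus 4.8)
The plan is to prove the equivalence by two separate implications; the forward one is routine, while the substance lies in how to graft $v$ back for the converse. For the forward direction I would first record the uniform observation that $N_G(v)\subseteq N_G[b_v]$ in both types: for Type~2 the centre $b_v$ of $\Pi_v$ is adjacent to $a_v,c_v$ and to $v$, and for Type~3 we have $N_G(v)=\{a_v,c_v\}\subseteq N_G[b_v]$ since $b_v\in N(a_v)\cap N(c_v)$. Assuming $tb(G)=1$, I would take a reduced star-decomposition $(T,{\cal X})$ of $G$ (Lemma~\ref{lem:dominator-in-bag}) and simply delete $v$ from every bag. Because no $T_u$ with $u\neq v$ is altered, this is still a tree decomposition of $G\setminus v$; and it still has breadth one, since a bag dominated by some $z\neq v$ stays dominated by $z$, while a bag with $X_t\subseteq N_G[v]$ becomes $X_t\setminus v\subseteq N_G(v)\subseteq N_G[b_v]$ and is therefore dominated by $b_v$ in $G\setminus v$. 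Hence $tb(G\setminus v)=1$.

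For the converse I would exploit that the hypothesis is exactly tailored to the two corollaries of Lemma~\ref{lem:strong-pair}. Note that $a_v,c_v$ are non-adjacent in both types, and that $G\setminus v$ is $K_{3,3}$-minor-free since $G$ is planar. Applying Corollary~\ref{cor:strong-three} (when $|N(a_v)\cap N(c_v)|\geq 3$ in $G\setminus v$) or Corollary~\ref{cor:strong-sep} (when the prescribed minimal separator exists) to $G\setminus v$ with $x=a_v$, $y=c_v$, I obtain a star-decomposition $(T,{\cal X})$ of $G\setminus v$ with bags $B_{a_v}\ni a_v$ and $B_{c_v}\ni c_v$ that are either equal or adjacent, and in the adjacent case $B_{a_v}\subseteq N[a_v]$, $B_{c_v}\subseteq N[c_v]$. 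It then suffices to reinsert $v$ to get a star-decomposition of $G$, which together with $tb(G)\geq 1$ (as $G$ has edges) yields $tb(G)=1$.

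I would reinsert $v$ by distinguishing the two cases. If $B_{a_v}=B_{c_v}$, then since $b_v\in N(a_v)\cap N(c_v)$ the subtrees $T_{a_v},T_{b_v},T_{c_v}$ pairwise intersect, so by the Helly property (Lemma~\ref{lem:helly}) some bag $B^\ast$ contains $\{a_v,b_v,c_v\}$; I attach to $B^\ast$ a new leaf bag $N_G[v]$, which is a star (dominated by $b_v$ for Type~2, by $v$ for Type~3) and, being a leaf, disturbs no other subtree. If $B_{a_v}$ and $B_{c_v}$ are adjacent, I first observe that every common neighbour $d\in N(a_v)\cap N(c_v)$ lies in $B_{a_v}\cap B_{c_v}$: indeed $T_d$ meets both $T_{a_v}$ and $T_{c_v}$, hence contains the bridging edge $B_{a_v}B_{c_v}$, so $N(a_v)\cap N(c_v)=B_{a_v}\cap B_{c_v}$. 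I then subdivide the edge $B_{a_v}B_{c_v}$ by two new bags $M_a=(N(a_v)\cap N(c_v))\cup\{a_v,v\}$ and $M_c=(N(a_v)\cap N(c_v))\cup\{c_v,v\}$, dominated by $a_v$ and $c_v$ respectively. These cover the edges $\{v,a_v\}$ and $\{v,c_v\}$ (and $\{v,b_v\}$ in Type~2, as $b_v\in N(a_v)\cap N(c_v)$), give $T_v=\{M_a,M_c\}$, and carry all of $N(a_v)\cap N(c_v)$ across the subdivision so that the subtrees of the common neighbours stay connected.

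The delicate point, and the main obstacle, is precisely the adjacent-bags case. A single inserted copy of $N_G[v]$ placed on the edge $B_{a_v}B_{c_v}$ would sever the subtrees $T_d$ of the common neighbours $d\in N(a_v)\cap N(c_v)$, which straddle that edge, while enlarging that single bag to contain all of $N(a_v)\cap N(c_v)$ would in general destroy the existence of a single dominator. Splitting the insertion into the two stars $M_a,M_c$ resolves both issues at once and is the crux of the construction. The remaining work is bookkeeping: checking that the correct corollary is invoked for each disjunct of the hypothesis, and that planarity is exactly what furnishes the $K_{3,3}$-minor-freeness required by Corollary~\ref{cor:strong-three}.
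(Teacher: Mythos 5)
Your proposal is correct and takes essentially the same route as the paper: the forward direction by deleting $v$ from a breadth-one decomposition (using $N_G(v)\subseteq N_G[b_v]$), and the converse by applying Corollary~\ref{cor:strong-sep} or Corollary~\ref{cor:strong-three} to the non-adjacent pair $(a_v,c_v)$ in the planar (hence $K_{3,3}$-minor-free) graph $G\setminus v$, then using the Helly property (Lemma~\ref{lem:helly}) to attach the bag $N_G[v]$ when $a_v,c_v$ share a bag. The only divergences are mechanical: in the adjacent-bags case the paper simply inserts $v$ into both bags $B_{a_v}$ and $B_{c_v}$ (they stay dominated by $a_v$ and $c_v$ since $v\in N_G(a_v)\cap N_G(c_v)$), which is a bit lighter than your subdivision by $M_a,M_c$; and your bridging-edge justification of $B_{a_v}\cap B_{c_v}=N(a_v)\cap N(c_v)$ tacitly assumes $T_{a_v}\cap T_{c_v}=\emptyset$, which you should secure by routing any decomposition having a bag with both $a_v,c_v$ into your first case (this equality is exactly what the paper records at the end of the proof of Lemma~\ref{lem:strong-pair}).
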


\begin{proof}
First we prove that $tb(G) = 1$ implies that $tb(G \setminus v) = 1$, that is the easy part of the result.
Let $(T,{\cal X})$ be a tree decomposition of $G$ of breadth one, let $(T,{\cal X}')$ be such that for every node $t \in V(T)$, $X_t' = X_t \setminus v$.
Observe that $(T,{\cal X}')$ is a tree decomposition of $G \setminus v$.
Furthermore, we claim that it has breadth one, indeed, for every $t \in V(T)$ such that $X_t \subseteq N_G[v]$, $X_t' \subseteq N_G[b_v]$ because $N_G(v) \subseteq N_G[b_v]$.
As a result, $tb(G \setminus v ) = 1$. 

\smallskip
Conversely, we prove that $tb(G \setminus v) = 1$ implies that  $tb(G) = 1$.
Let $(T',{\cal X}')$ be a star-decomposition of $G \setminus v$ minimizing the distance in $T'$ between the subtrees $T'_{a_v}$ and $T'_{c_v}$.
There are two cases.
If $T'_{a_v} \cap T'_{c_v} \neq \emptyset$, then the subtrees $T'_{a_v},T'_{b_v},T'_{c_v}$ are pairwise intersecting, hence by the Helly property (Lemma~\ref{lem:helly}) $T'_{a_v} \cap T'_{b_v} \cap T'_{c_v} \neq \emptyset$, and so it suffices to make adjacent to any bag of $T'_{a_v} \cap T'_{b_v} \cap T'_{c_v}$ the new bag $N_G[v] \subseteq \{a_v,b_v,c_v,v\}$ so as to obtain a star-decomposition of $G$.
Else $T'_{a_v} \cap T'_{c_v} = \emptyset$ and so, by Corollary~\ref{cor:strong-three} if $|N(a_v) \cap N(c_v)| \geq 3$ in $G \setminus v$ or by Corollary~\ref{cor:strong-sep} else, there are two adjacent bags $B'_{a_v},B'_{c_v}$ such that $a_v \in B'_{a_v} \setminus B'_{c_v}, b_v  \in B'_{a_v} \cap B'_{c_v} \subseteq N(a_v) \cap N(c_v)$ and $c_v \in B'_{c_v} \setminus B'_{a_v}$.
Furthermore, $a_v$ dominates $B'_{a_v}$ while $c_v$ dominates $B'_{c_v}$.
One obtains a star-decomposition of $G$ simply by adding vertex $v$ into bags $B'_{a_v}$ and $B'_{c_v}$.
\end{proof}

\subsubsection{Proof of~\ref{step:prime-case}~\ref{step:few-neighbours}~\ref{step:one-neighbour}}
\label{sec:step-4-b-i}

The proof of this step is more involved than the proof of previous~\ref{step:prime-case}~\ref{step:many-neighbours}.
We will need the following intermediate lemma.

\begin{lemma}\label{lem:disconnection}
Let $G=(V,E)$ be a prime graph with $tb(G) = 1$, let $v$ be a leaf-vertex of Type 2 or 3 and let $\Pi_v=(a_v,b_v,c_v)$ be as in Definition~\ref{def:leafVertex}.
Suppose that $N(a_v) \cap N(c_v) = \{v,b_v\}$ and $V \neq \Pi_v \cup \{v\}$.
Then, $N_G[b_v] \setminus (a_v,c_v,v)$ is an $a_vc_v$-separator of $G \setminus v$.
\end{lemma}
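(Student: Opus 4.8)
The plan is to reformulate the conclusion as the non-existence of a single path and then to rule that path out using a star-decomposition together with the hypothesis $N(a_v)\cap N(c_v)=\{v,b_v\}$. Write $S = N_G[b_v]\setminus\{a_v,c_v,v\}$ and $W = V\setminus(N_G[b_v]\cup\{v\})$. Since $a_v,c_v\notin S$ and $a_v,c_v\neq v$, the vertices surviving in $(G\setminus v)\setminus S$ are exactly $\{a_v,c_v\}\cup W$; hence $S$ fails to be an $a_vc_v$-separator of $G\setminus v$ if and only if there is an $a_vc_v$-path $Q$ all of whose internal vertices lie in $W$. Such a $Q=(a_v,w_1,\dots,w_m,c_v)$ must have $m\ge 2$, since a single internal vertex would be a common neighbour of $a_v,c_v$ lying in $W$, contradicting $N(a_v)\cap N(c_v)=\{v,b_v\}$. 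So it suffices to assume such a $Q$ exists and derive a contradiction. I would record once and for all that for both Type~2 and Type~3 one has $N[v]\subseteq\Pi_v\cup\{v\}$, and that each $w_i\in W$ satisfies $w_i\notin N[b_v]$ and $w_i\notin N[v]$ (the latter because $N(v)\subseteq\Pi_v\cup\{v\}\subseteq N[b_v]$).

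Next I would set up Lemma~\ref{lem:strong-pair} for the non-adjacent pair $(a_v,c_v)$. The key elementary fact is that every $a_vc_v$-separator of $G$ must contain both common neighbours $v$ and $b_v$ (else $a_v$--$v$--$c_v$ or $a_v$--$b_v$--$c_v$ survives). Using this, I verify the hypothesis of Lemma~\ref{lem:strong-pair}: if $S'$ is an $a_vc_v$-separator and $z\notin\{a_v,c_v\}$ dominates $S'$, then $v\in S'\subseteq N[z]$, so $z\in N[v]\subseteq\Pi_v\cup\{v\}$, and as $z\notin\{a_v,c_v\}$ this leaves $z\in\{v,b_v\}=N(a_v)\cap N(c_v)$. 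Thus Lemma~\ref{lem:strong-pair} supplies a star-decomposition $(T,\mathcal{X})$ with bags $B_{a_v}\ni a_v$ and $B_{c_v}\ni c_v$ that are either equal or adjacent, and with $B_{a_v}\subseteq N[a_v]$, $B_{c_v}\subseteq N[c_v]$ in the adjacent case.

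The adjacent case is immediate: there $B_{a_v}\cap B_{c_v}\subseteq N[a_v]\cap N[c_v]=\{v,b_v\}$ is an $a_vc_v$-separator of $G$ (since $a_v\in B_{a_v}\setminus B_{c_v}$ and $c_v\in B_{c_v}\setminus B_{a_v}$), which $Q$ would have to meet; but $Q$ avoids $\{v,b_v\}$ entirely, a contradiction. The heart of the argument is the equal case, where $a_v,c_v$ share a bag, so $T^*:=T_{a_v}\cap T_{c_v}\neq\emptyset$. First I show that no bag contains $\{a_v,c_v,w_i\}$: any such bag is dominated by some $z\in N[a_v]\cap N[c_v]\cap N[w_i]=\{v,b_v\}\cap N[w_i]=\emptyset$, since $w_i\not\sim v,b_v$ and $w_i\neq v,b_v$. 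Consequently each $T_{w_i}$ is disjoint from $T^*$, so $U:=\bigcup_{i=1}^m T_{w_i}$ is a connected subtree disjoint from $T^*$. Choosing $r\in T^*$, a bag $\beta_1\in T_{a_v}\cap T_{w_1}$ and a bag $\beta_2\in T_{c_v}\cap T_{w_m}$ (so $\beta_1,\beta_2\in U$), I take the median $\mu$ of $r,\beta_1,\beta_2$ in the tree $T$: the $r$--$\beta_1$ path lies in $T_{a_v}$ and the $r$--$\beta_2$ path lies in $T_{c_v}$, so $\mu\in T_{a_v}\cap T_{c_v}=T^*$, whereas the $\beta_1$--$\beta_2$ path lies in $U$, so $\mu\in U$; this contradicts $U\cap T^*=\emptyset$.

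The step I expect to be the main obstacle is this final tree-topology argument in the equal case. Its correctness hinges on the two supporting claims—that the precise identity $N[a_v]\cap N[c_v]=\{v,b_v\}$ forces every $T_{w_i}$ to miss $T^*$, and that the median of $r,\beta_1,\beta_2$ must lie simultaneously in $T^*$ and in $U$—so I would state and prove those two facts carefully before invoking them. I also expect the reconciliation of the two leaf-vertex types to require a small amount of bookkeeping, notably checking $w_i\notin N[v]$ in Type~2 (where $v$ is additionally adjacent to $b_v$); I would dispatch this uniformly by reducing everything to the common relations $N[v]\subseteq\Pi_v\cup\{v\}$ and $N[a_v]\cap N[c_v]=\{v,b_v\}$, which hold verbatim in both Type~2 and Type~3.
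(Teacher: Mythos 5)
Your proof is correct, but it takes a genuinely different route from the paper's. The paper fixes a star-decomposition minimizing the distance in $T$ between $T_{a_v}$ and $T_{c_v}$, uses the primality of $G$ together with $V \neq \Pi_v \cup \{v\}$ and Corollary~\ref{cor:strong-sep} to force $T_{a_v} \cap T_{c_v} \neq \emptyset$, and then passes to the chordal supergraph $H$ whose clique-tree is this decomposition: it proves that $S = N_H(a_v) \cap N_H(c_v)$ is an $a_vc_v$-separator of $G$ via a shortest-cycle/chordality argument, and finally uses the Helly property plus star-domination to show $S \setminus v \subseteq N_G[b_v] \setminus \{a_v,c_v,v\}$. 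You instead negate the conclusion to obtain an explicit $a_vc_v$-path $Q$ through $W = V\setminus(N_G[b_v]\cup\{v\})$, verify the hypothesis of Lemma~\ref{lem:strong-pair} directly from the identity $N(a_v)\cap N(c_v)=\{v,b_v\}$ (every $a_vc_v$-separator must contain $v$, hence any dominator of it lies in $N[v]\setminus\{a_v,c_v\}\subseteq\{v,b_v\}$), and then kill both outcomes of that lemma: in the adjacent-bags case the separator $B_{a_v}\cap B_{c_v}\subseteq\{v,b_v\}$ is disjoint from the interior of $Q$, and in the shared-bag case your median argument shows the connected union $\bigcup_i T_{w_i}$ would have to meet $T_{a_v}\cap T_{c_v}$, which your domination claim ($N[a_v]\cap N[c_v]\cap N[w_i]=\emptyset$) forbids. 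Both arguments are sound; yours is more elementary, avoiding the detour through triangulations and chordal graphs, and it is in fact slightly more general, since it nowhere uses that $G$ is prime or that $V \neq \Pi_v \cup \{v\}$ --- those two hypotheses are exactly what the paper consumes to exclude the adjacent-bags case, which you dispose of with the path $Q$ itself. What the paper's route buys in exchange is an explicit separator ($N_H(a_v)\cap N_H(c_v)$, trimmed of $v$) sitting inside $N_G[b_v]$, and an argument template (distance-minimizing decompositions plus Corollary~\ref{cor:strong-sep}, then Helly) that is reused almost verbatim throughout Section~\ref{sec:planar}.
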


\begin{proof}
Let $(T,{\cal X})$ be a star-decomposition of $G$, that exists by Lemma~\ref{lem:dominator-in-bag}, minimizing the distance in $T$ between the subtrees $T_{a_v}$ and $T_{c_v}$.
We claim that $T_{a_v} \cap T_{c_v} \neq \emptyset$, {\it i.e.}, $a_v,c_v$ are in a same bag of the decomposition.
By contradiction, let $T_{a_v} \cap T_{c_v} = \emptyset$.
Since $G$ is prime and $\Pi_v$ is a separator of $G$, therefore, one of $\Pi_v$ or $\Pi_v \setminus b_v$ is a minimal separator of $G$.
Since $(T,{\cal X})$ minimizes the distance in $T$ between $T_{a_v}$ and $T_{c_v}$, therefore, by Corollary~\ref{cor:strong-sep} there exist two adjacent bags $B_{a_v},B_{c_v}$ such that $a_v \in B_{a_v} \setminus B_{c_v}$ and $c_v \in B_{c_v} \setminus B_{a_v}$.
Furthermore, vertices $a_v$ and $c_v$ respectively dominate the bags $B_{a_v}$ and $B_{c_v}$.
This implies $B_{a_v} \cap B_{c_v} = N_G(a_v) \cap N_G(c_v)$ and so, $N_G(a_v) \cap N_G(c_v)$ is a minimal $a_vc_v$-separator of $G$ by the properties of the tree decomposition.
However, let $C$ be any component of $G \setminus (\Pi_v \cup \{v\})$, that exists because $V \neq \Pi_v \cup \{v\}$ by the hypothesis.
Since $G$ is prime, therefore, $a_v,c_v \in N(C)$ (or else, one of the cliques $a_v$ or $b_v$ or $c_v$ or $(a_v,b_v)$ or $(b_v,c_v)$ would be a clique-separator of $G$, thus contradicting the assumption that $G$ is prime).
Then, the $a_vc_v$-separator $N_G(a_v) \cap N_G(c_v)$ must contain some vertex of $C$, which contradicts the fact that $N_G(a_v) \cap N_G(c_v) = \{v,b_v\}$ by the hypothesis. 
As a result, we proved that $T_{a_v} \cap T_{c_v} \neq \emptyset$.

Let $H$ be the chordal supergraph of $G$ such that $(T,{\cal X})$ is a clique-tree of $H$.
Equivalently, every two vertices $x,y \in V$ are adjacent in $H$ if and only if they are in a same bag of ${\cal X}$.
In particular, $a_v,c_v$ are adjacent in $H$.
Let $S := N_H(a_v) \cap N_H(c_v)$.
We claim that $S$ is an $a_vc_v$-separator of $G$.
By contradiction, if it is not an $a_vc_v$-separator of $G$, then there exists an $a_vc_v$-path $P_{a_vc_v}$ of $G$ which does not intersect $S$.
Furthermore, $P_{a_vc_v}$ is a path of $H$ because $H$ is a supergraph of $G$, and it has length at least two because $a_v,c_v$ are non-adjacent in $G$.
So, let $Q_{a_vc_v}$ be taken of minimum length amongst all $a_vc_v$-paths of length at least two in $H$ that do not intersect $S$ (the existence of such a path follows from the existence of $P_{a_vc_v}$).
Observe that $Q_{a_vc_v}$ may be not a path in $G$.
By minimality of $Q_{a_vc_v}$, the vertices of $Q_{a_vc_v}$ induce a cycle of $H$ because $a_v,c_v$ are adjacent in $H$.
Therefore, the vertices of $Q_{a_vc_v}$ induce a triangle because $H$ is chordal.
However, this contradicts the fact that $Q_{a_vc_v}$ does not intersect $S = N_H(a_v) \cap N_H(c_v)$, so, the claim is proved.

Finally, let us prove that $S \setminus v \subseteq N_G[b_v] \setminus (a_v,c_v,v)$, that will conclude the proof that $N_G[b_v] \setminus (a_v,c_v,v)$ is an $a_vc_v$-separator of $G \setminus v$.
For every vertex $x \in S \setminus v$, $x \in N_H(a_v) \cap N_H(c_v)$, therefore, $T_{a_v} \cap T_x \neq \emptyset$ and $T_{c_v} \cap T_x \neq \emptyset$ by construction of $H$.
Since the subtrees $T_{a_v},T_{c_v},T_x$ are pairwise intersecting, by the Helly property (Lemma~\ref{lem:helly}) $T_{a_v} \cap T_{c_v} \cap T_x \neq \emptyset$, or equivalently there is some bag $B \in T_{a_v} \cap T_{c_v} \cap T_x$.
Let $z \in B$ dominate the bag.
Clearly, $x \in N_G[z]$.
Furthermore, $z \in N_G(a_v) \cap N_G(c_v)$ because $a_v,c_v$ are non-adjacent in $G$.
As a result, either $z = b_v$ or $z=v$.
Since $x \neq v$ and $N_G(v) \subseteq N_G[b_v]$, we have that $x \in N_G[b_v]$ in both cases.
\end{proof}

\begin{theorem}\label{thm:force-clique}
Let $G = (V,E)$ be a prime planar graph with $tb(G) = 1$, $v$ be a leaf-vertex of Type 2 or 3, and let $\Pi_v = (a_v,b_v,c_v)$ be as in Definition~\ref{def:leafVertex}.
Suppose $N(a_v) \cap N(c_v) = \{b_v,v\}$, and $G \setminus v$ is prime, and there is no minimal separator $S\subseteq (N(a_v) \cap N(c_v)) \cup \{a_v,c_v\}$ in $G \setminus v$ such that $\{a_v,c_v\} \subseteq S$.
Then, $G = C_4$, a cycle with four vertices.
\end{theorem}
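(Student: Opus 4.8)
The plan is to show that the hypotheses force $V = \Pi_v \cup \{v\}$, from which the conclusion is immediate. Indeed, once $V=\{a_v,b_v,c_v,v\}$: if $v$ were of Type~2 then $v,b_v$ would be two \emph{adjacent} common neighbours of the non-adjacent pair $a_v,c_v$, so $\{v,b_v\}$ would be a clique separating $a_v$ from $c_v$, contradicting that $G$ is prime; hence $v$ is of Type~3, and then the only edges are $va_v,\,vc_v,\,b_va_v,\,b_vc_v$, i.e. $G=C_4$. So I would assume for contradiction that $R := V \setminus (\Pi_v\cup\{v\}) \neq \emptyset$ and derive a forbidden minor.

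First I would extract a convenient $P_3$-separator. Since $V \neq \Pi_v\cup\{v\}$, Lemma~\ref{lem:disconnection} applies and gives that $S_0 := N_G[b_v]\setminus\{a_v,c_v,v\}$ is an $a_vc_v$-separator of $G\setminus v$. Choose a minimal $a_vc_v$-separator $S \subseteq S_0$ of $G\setminus v$, with full components $A \ni a_v$ and $C' \ni c_v$, so that $N_{G\setminus v}(A)=N_{G\setminus v}(C')=S$. Because $b_v$ is adjacent to both $a_v\in A$ and $c_v\in C'$, we get $b_v\in S$; and since $S\subseteq N_G[b_v]$, the vertex $b_v$ is universal in $G[S]$. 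Now $G\setminus v$ is prime, hence biconnected with no clique separator, so by Corollary~\ref{cor:sep-planar} the set $S$ induces either a cycle or a forest of paths, but not a clique. A universal vertex rules out a cycle of length more than $3$ and forces at most two further vertices; together with non-cliqueness this yields $|S|=3$, with $S=\{x,b_v,y\}$ inducing the path $x\,b_v\,y$, where $x\not\sim y$ and $x,y\in N_G(b_v)\setminus\{a_v,c_v,v\}\subseteq R$.

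It remains to contradict planarity. Both $A$ and $C'$ are connected and adjacent to each of $x,b_v,y$. In the Type~2 case $v$ is adjacent to $b_v$, and then the branch sets $A,\,C',\,\{b_v\}$ on one side together with $\{x\},\{y\},\{v\}$ on the other form a $K_{3,3}$-minor ($v$ reaches $A,C'$ through $a_v,c_v$ and reaches $b_v$ directly); this is impossible, so in fact $v$ cannot be of Type~2. In the Type~3 case $v\not\sim b_v$, so the same direct minor is missing exactly the edge $b_vv$. Instead I would use $v$ as a \emph{bridge} joining $A$ and $C'$: exploiting that $A$ is connected with neighbours $x,y,b_v$ and that $a_v\in A$ satisfies $a_v\sim b_v$ and $a_v\sim v$ (and symmetrically for $C'$), one selects disjoint connected pieces realising a $K_{3,3}$-minor whose two sides are essentially $\{x,y,a_v\}$ and $\{b_v,\ \text{an } x,y\text{-connector inside } A,\ C'\cup\{v\}\}$, the last branch set reaching $a_v$ through the path $a_v\,v\,c_v$. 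This again contradicts planarity, so $R=\emptyset$ and $G=C_4$.

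The hard part is precisely this Type~3 minor. The subtlety is that $N_G(a_v)\cap N_G(c_v)=\{b_v,v\}$ forbids any vertex of $R$ from being adjacent to both $a_v$ and $c_v$, so the connections of $A$ (resp.\ $C'$) to $x$ and $y$ need not pass through $a_v$ (resp.\ $c_v$); one must therefore split $A$ and $C'$ into the correct connected pieces and argue that the degenerate cases $A=\{a_v\}$ or $C'=\{c_v\}$ do not both occur. This last point is where the metric constraint is used cleanly: if both held, then $a_v\sim x$ and $c_v\sim x$ (as $x\in N_{G\setminus v}(A)\cap N_{G\setminus v}(C')$), making $x$ a common neighbour of $a_v,c_v$ distinct from $b_v,v$, contrary to $N_G(a_v)\cap N_G(c_v)=\{b_v,v\}$; a single degenerate side is then disposed of by taking the symmetric orientation of the minor (interchanging the roles of $a_v,A$ and $c_v,C'$).
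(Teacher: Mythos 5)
Your reduction to proving $V=\Pi_v\cup\{v\}$, your extraction of the $P_3$-separator $S=\{x,b_v,y\}$ (via Lemma~\ref{lem:disconnection}, minimality inside $N[b_v]$, and Corollary~\ref{cor:sep-planar} --- a clean alternative to the paper's use of Lemma~\ref{lem:P3exists}), and your Type~2 minor $\{A,C',\{b_v\}\}$ versus $\{\{x\},\{y\},\{v\}\}$ are all correct. The gap is in the Type~3 case, and it is not where you locate it. Your middle branch set on the second side must be a connected subgraph of $A\setminus\{a_v\}$ adjacent to all of $x$, $y$ and $a_v$; since every component of $A\setminus\{a_v\}$ is automatically adjacent to $a_v$, this amounts to requiring that some single component of $A\setminus\{a_v\}$ sees \emph{both} $x$ and $y$. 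This can fail for $A$ and for $C'$ simultaneously even when $A\neq\{a_v\}$ and $C'\neq\{c_v\}$, so neither your construction nor its symmetric orientation is available, and no choice of ``correct connected pieces'' can rescue it. The underlying reason is that your proof never invokes the hypothesis that no minimal separator $S\subseteq (N(a_v)\cap N(c_v))\cup\{a_v,c_v\}$ of $G\setminus v$ contains $\{a_v,c_v\}$ --- and the theorem is false without that hypothesis, so no forbidden-minor argument that ignores it can succeed.

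Concretely, take $V=\{v,a_v,b_v,c_v,x,y,w,w'\}$ with edges $va_v,\,vc_v,\,a_vb_v,\,b_vc_v,\,b_vx,\,b_vy,\,a_vx,\,a_vw,\,wy,\,c_vy,\,c_vw',\,w'x$ (two $5$-cycles $(a_v,w,y,c_v,v)$ and $(a_v,x,w',c_v,v)$ glued along the path $a_v\,v\,c_v$, plus $b_v$ joined to $a_v,c_v,x,y$). This graph is planar (only the five vertices $a_v,b_v,c_v,x,y$ have degree at least $3$, so there is no Kuratowski subdivision), it is prime, $G\setminus v$ is prime, $v$ is a Type~3 leaf-vertex with $N(a_v)\cap N(c_v)=\{b_v,v\}$, and $tb(G)=1$: the bags $\{a_v,b_v,c_v,x,y\}$, $\{v,a_v,c_v\}$, $\{a_v,w,y\}$, $\{c_v,x,w'\}$, arranged as a star around the first one, form a star-decomposition. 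Here $S=\{x,b_v,y\}$, $A=\{a_v,w\}$, $C'=\{c_v,w'\}$, and $w$ is not adjacent to $x$ while $w'$ is not adjacent to $y$: no connector exists on either side, and indeed no $K_{3,3}$- or $K_5$-minor exists anywhere in this planar graph, although $G\neq C_4$. The only hypothesis it violates is the one you never use: $\{a_v,b_v,c_v\}$ is a minimal separator of $G\setminus v$ (with full components $\{x,w'\}$ and $\{y,w\}$) contained in $(N(a_v)\cap N(c_v))\cup\{a_v,c_v\}$. The paper deploys that hypothesis exactly at the point where your argument breaks: it first shows $V\setminus(\Pi_v\cup\{v\})$ is connected (otherwise $\Pi_v$ would be such a forbidden minimal separator), then takes an $xy$-path $P$ inside that set and splits into cases --- if the interior of $P$ avoids both $A$ and $C'$, contracting $P$, $A$, $C'$ and $v$ yields a $K_5$; if it lies inside $A$ (or $C'$), one obtains a $K_{3,3}$ on $\{a_v,x,y\}$ versus $\{b_v,c_v,z\}$ for a suitable vertex $z$ of $P$. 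Your Type~3 case needs to be replaced by an argument of this kind; it cannot be completed as proposed.
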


\begin{proof}
By contradiction, assume $G \neq C_4$.
Since $G$ is prime by the hypothesis, $G$ has at least five vertices (the single other graph with four vertices and a leaf-vertex of Type 2 or 3 is the diamond, which is not prime).
Equivalently, $V \neq \Pi_v \cup \{v\}$.
By Lemma~\ref{lem:disconnection}, this implies that $N[b_v] \setminus (a_v,c_v,v)$ is an $a_vc_v$-separator of $G \setminus v$.
Since $G \setminus v$ is prime by the hypothesis, and so, biconnected, therefore, $G \setminus (b_v,v)$ is connected, and so, $N(b_v) \setminus (a_v,c_v,v) \neq \emptyset$ is an $a_vc_v$-separator of $G \setminus (b_v,v)$.
In particular, $a_v,b_v,c_v \in N(V \setminus (\Pi_v \cup \{v\}))$. 

Moreover, we claim that $V \setminus (\Pi_v \cup \{v\})$ induces a connected subgraph (note that the latter implies that $V \setminus (\Pi_v \cup \{v\})$ is a full component of $G \setminus \Pi_v$).
By contradiction, let $C_1,C_2$ be distinct components of $V \setminus (\Pi_v \cup \{v\})$.
Since $G$ is prime, $a_v,c_v \in N_G(C_1) \cap N_G(C_2)$ (or else, one of the cliques $a_v$ or $b_v$ or $c_v$ or $(a_v,b_v)$ or $(b_v,c_v)$ would be a clique-separator of $G$, thus contradicting the assumption that $G$ is prime).
Therefore, $b_v \in N_G(C_1) \cap N_G(C_2)$ because $N[b_v] \setminus (a_v,c_v,v)$ is an $a_vc_v$-separator of $G \setminus v$.
It follows that $\Pi_v$ is a minimal separator of $G \setminus v$, that contradicts the hypothesis that there is no minimal separator $S\subseteq (N(a_v) \cap N(c_v)) \cup \{a_v,c_v\}$ in $G \setminus v$ and $\{a_v,c_v\} \subseteq S$.
Consequently, $V \setminus (\Pi_v \cup \{v\})$ induces a connected subgraph.

Let $S' \subseteq N(b_v) \setminus (a_v,c_v,v)$ be a minimal $a_vc_v$-separator of $G \setminus (b_v,v)$.
By Lemma~\ref{lem:P3exists}, there exist $x,y \in N(b_v) \setminus (a_v,c_v,v)$ non-adjacent such that $S' = \{x,y\}$.
Finally, let $\Pi'=(x,b_v,y)$ and let $A,C$ be the respective components of $a_v,c_v$ in $G \setminus (\Pi' \cup \{v\})$.
Note that $x,y \in N(A) \cap N(C)$ because $G \setminus v$ is prime by the hypothesis (indeed, neither $x$ nor $b_v$ nor $y$ nor $(b_v,x)$ nor $(b_v,y)$ can be a separator of $G \setminus v$).
Let $P$ be an $xy$-path of $V \setminus (\Pi_v \cup \{v\})$, that exists because $V \setminus (\Pi_v \cup \{v\})$ is connected.
Also, let $A' \subseteq A$ and $C' \subseteq C$ be the respective components of $a_v,c_v$ in $G \setminus (P \cup \Pi' \cup \{v\})$.
Note that the subpath $P \setminus (x,y)$ lies onto a unique component of $G \setminus (\Pi' \cup \{v\})$ because it does not intersect $\Pi_v \cup \{v\}$ by construction, so, $A' = A$ or $C' = C$.
By symmetry, assume that $C' = C$.
There are two cases to consider.

\begin{itemize}
\item Assume $A' = A$ (see Figure~\ref{fig:c4-case-1} for an illustration).
Let us contract the internal vertices of $P$ so as to make vertices $x,y$ adjacent.
Then, let us contract the components $A,C$ to the two vertices $a_v,c_v$, respectively.
Finally, let us contract $v$ to either $a_v$ or $c_v$.
By construction, the five vertices $a_v,b_v,c_v,x,y$ now induce a $K_5$, that contradicts the fact that $G$ is planar by the hypothesis.

\begin{figure}[h!]
 \centering
 \includegraphics[width=0.45\textwidth]{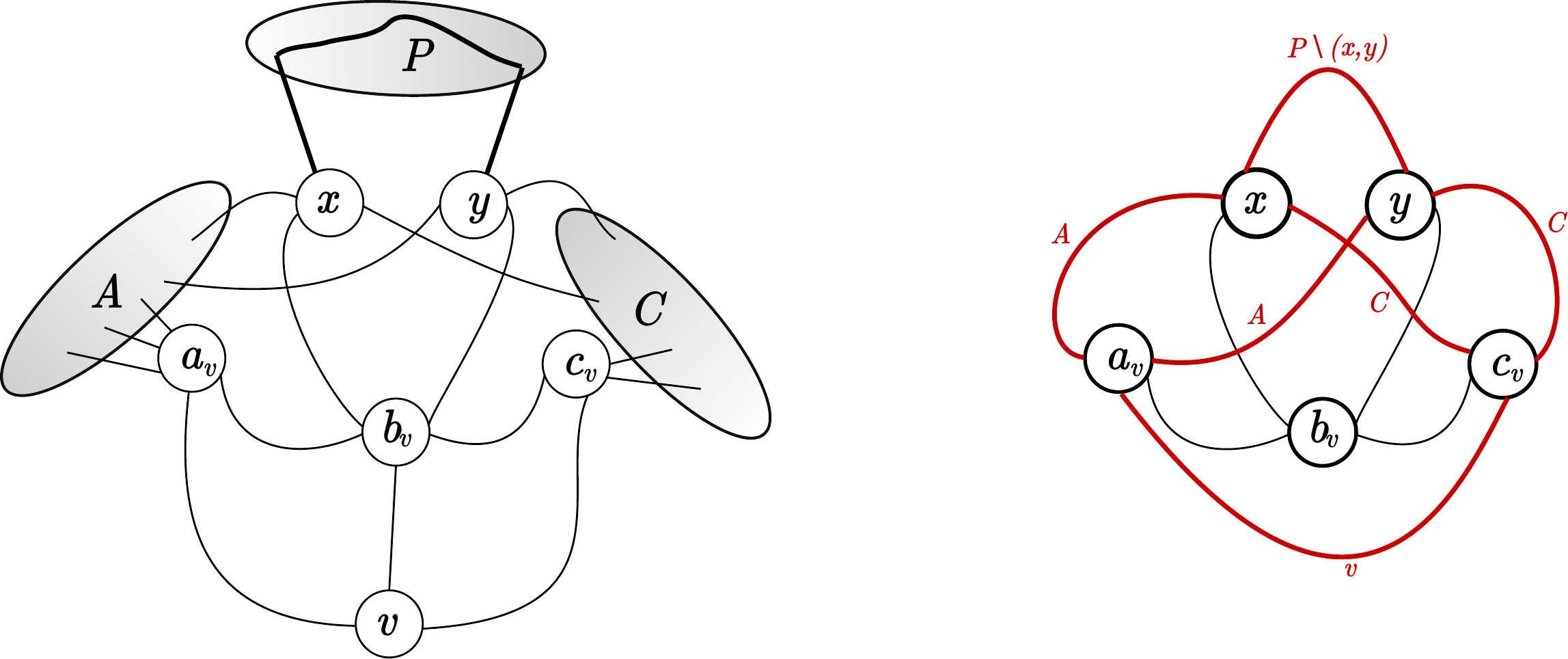}
 \caption{Case $A'=A$ (left). A $K_5$-minor is drawn (right), with edges resulting from contractions labeled in red.}
 \label{fig:c4-case-1}
\end{figure}

\item Else, $A' \neq A$. Equivalently, $P \subseteq A \cup \{x,y\}$ (see Figure~\ref{fig:c4-case-2} for an illustration).
Since $A$ is connected, $N(A') \cap ( P \setminus (x,y) ) \neq \emptyset$.
Let $z \in N(A') \cap P$.
Let us contract the internal vertices of $P$ to vertex $z$.
Then, let us contract the components $A'$ and $C' = C$ to the two vertices $a_v,c_v$, respectively.
Finally let us contract $v$ to either $a_v$ or $c_v$.
By construction, there is a $K_{3,3}$-minor whose sides of the bipartition are $\{a_v,x,y\}$ and $\{b_v,c_v,z\}$, respectively, that contradicts the fact that $G$ is planar by the hypothesis.

\begin{figure}[h!]
 \centering
 \includegraphics[width=0.55\textwidth]{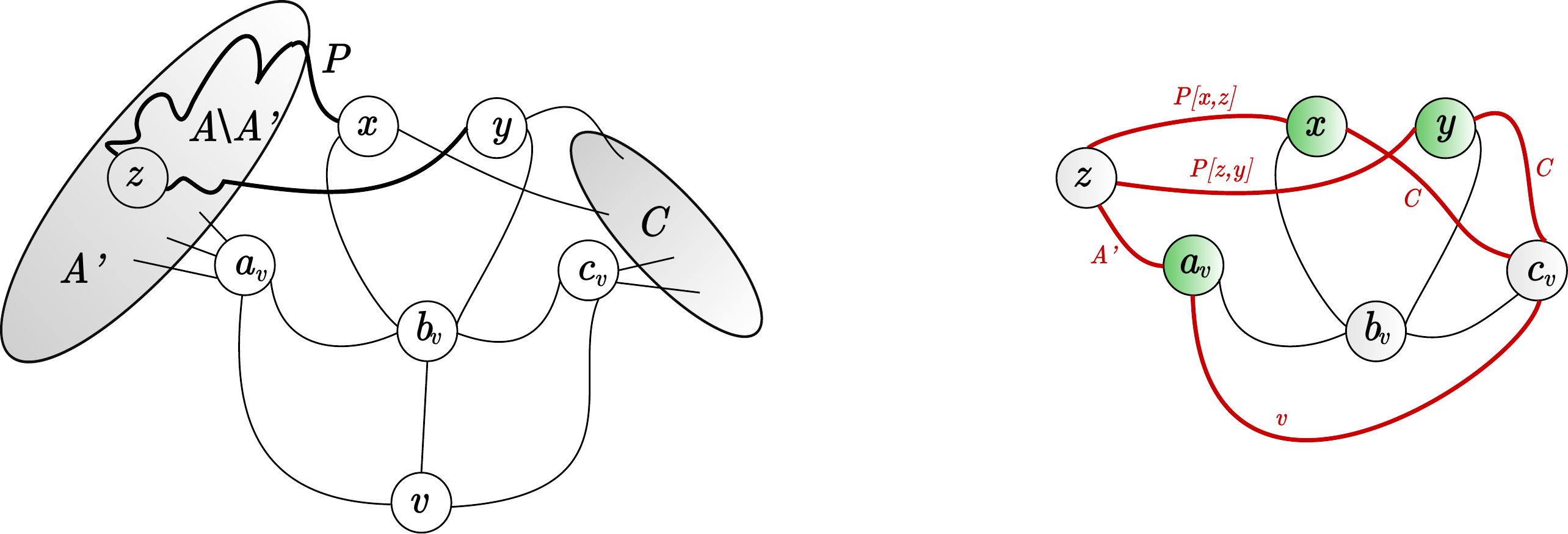}
 \caption{Case $A'\neq A$ (left). A $K_{3,3}$-minor is drawn (right), with each side of the bipartition being coloured differently. Edges resulting from contractions are labeled in red.}
 \label{fig:c4-case-2}
\end{figure}

\end{itemize}

Since both cases contradict the hypothesis that $G$ is planar, therefore, $G = C_4$.
\end{proof}

\subsubsection{Proof of~\ref{step:prime-case}~\ref{step:few-neighbours}~\ref{step:two-neighbours}}
\label{sec:step-4-b-ii}

\begin{theorem}\label{th:primeDifficult}
Let $G = (V,E)$ be a prime planar graph, let $v$ be a leaf-vertex of Type 2 or 3, and let $\Pi_v = (a_v,b_v,c_v)$ be as in Definition~\ref{def:leafVertex}. 

Suppose that all of the following statements hold:
\begin{itemize}
\item $N(a_v) \cap N(c_v) = \{v,b_v,u_v\}$ with $u_v \notin \{v\} \cup \Pi_v$;
\item $V \neq \{a_v,b_v,c_v,u_v,v\}$;
\item there is no minimal separator $S\subseteq (N(a_v) \cap N(c_v)) \cup \{a_v,c_v\}$ in $G \setminus v$ and $\{a_v,c_v\} \subseteq S$.
\end{itemize}

Let $G'$ be the graph obtained from $G$ by adding edges $\{v,u_v\}$ and $\{b_v,v\}$, then $tb(G) = 1$ if and only if $tb(G') = 1$.
Moreover, $G'$ is planar and prime.
\end{theorem}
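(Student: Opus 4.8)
The plan is to establish the three assertions—planarity of $G'$, primality of $G'$, and the equivalence $tb(G)=1 \Leftrightarrow tb(G')=1$—in turn, relying on two elementary observations. First, since $G$ and $G'$ share the same vertex-set and $E(G)\subseteq E(G')$, every separator of $G'$ is also a separator of $G$, because adding edges cannot create new separations. Second, the only edges of $G'$ not already in $G$ are incident to $v$, and in $G'$ the vertex $v$ dominates exactly $N_{G'}[v]=\{v,a_v,b_v,c_v,u_v\}$; in particular $a_v,c_v$ are non-adjacent while $v,b_v,u_v$ are three common neighbours of $a_v$ and $c_v$, so that $\{a_v,c_v\}$ and $\{v,b_v,u_v\}$ span a subdivided $K_{2,3}$.

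For planarity I would fix a plane embedding of $G$ and analyse the three internally disjoint $a_vc_v$-paths $(a_v,v,c_v)$, $(a_v,b_v,c_v)$ and $(a_v,u_v,c_v)$, which bound three regions of the plane containing the rest of $G$. Because $v$ is a leaf-vertex, its only neighbours lie in $\{a_v,b_v,c_v\}$, so I would remove $v$ and re-insert it into the \emph{middle} region—the one whose boundary meets both $b_v$ and $u_v$—thereby obtaining an embedding in which $a_v,b_v,c_v,u_v$ occur on a common face in the cyclic order $a_v,b_v,c_v,u_v$; the two new edges $\{v,u_v\}$ and $\{v,b_v\}$ can then be drawn inside the two faces incident to the path $(a_v,v,c_v)$ without crossings. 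The justification that $v$ may be taken as the middle common neighbour is where I expect to use both that $v$ is a leaf-vertex and that, by $N(a_v)\cap N(c_v)=\{v,b_v,u_v\}$, the pair $a_v,c_v$ has no further common neighbour to obstruct the face; I expect this re-embedding to be the main obstacle.

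For primality, suppose for contradiction that $G'$ has a minimal clique-separator $S$. By the first observation $S$ separates $G$ as well, and as $G$ is prime $S$ cannot be a clique of $G$; hence the only non-edges of $G[S]$ are among the new edges, which forces $v\in S$ and $S\subseteq N_{G'}[v]$. Since $a_v\not\sim c_v$, the clique $S$ contains at most one of $a_v,c_v$, so $S\subseteq\{v,a_v,b_v,u_v\}$ or $S\subseteq\{v,c_v,b_v,u_v\}$. I would then rule out each such small set as a separator of $G$: using that $v$ is a leaf-vertex with $N(v)\subseteq\{a_v,b_v,c_v\}$, together with the connectivity of $V\setminus(\Pi_v\cup\{v\})$ obtained in the analysis behind Theorem~\ref{th:GminusVprime?}, no subset of $N_{G'}[v]$ missing one of $a_v,c_v$ can disconnect $G$, a contradiction.

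Finally, for the tree-breadth equivalence I would argue both directions through star-decompositions (Corollary~\ref{cor:equiv-tb-star}). For $tb(G')=1\Rightarrow tb(G)=1$, start from a reduced star-decomposition of $G'$; a bag can fail to be dominated in $G$ only when it is dominated through a new edge, hence when it contains $v$ together with $u_v$ or $b_v$, and such a bag lies inside $N_{G'}[v]$ (or $N_{G'}[u_v]$), where it can be re-dominated by a local splitting since $a_v,c_v,u_v$ each dominate the relevant $G$-neighbours of $v$. For the converse $tb(G)=1\Rightarrow tb(G')=1$, take a star-decomposition of $G$ minimising the distance in $T$ between $T_{a_v}$ and $T_{c_v}$. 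If $T_{a_v}\cap T_{c_v}=\emptyset$, then as $|N_G(a_v)\cap N_G(c_v)|=3$ and $G$ is $K_{3,3}$-minor-free, Corollary~\ref{cor:strong-three} yields adjacent bags $B_{a_v}\subseteq N_G[a_v]$ and $B_{c_v}\subseteq N_G[c_v]$ with $B_{a_v}\cap B_{c_v}=\{v,b_v,u_v\}$; then $B_{a_v}\supseteq\{a_v,v,b_v,u_v\}$ already covers both new edges, so the \emph{same} decomposition is a star-decomposition of $G'$. If instead $T_{a_v}\cap T_{c_v}\neq\emptyset$, I would use the Helly property (Lemma~\ref{lem:helly}) to locate a bag containing $a_v,c_v,v$ and then route $u_v$ (and $b_v$ for Type~3) into a bag that I enlarge to $N_{G'}[v]$, dominated by $v$ in $G'$, invoking Lemma~\ref{lem:strong-pair} to make the relevant subtrees overlap without destroying domination; the hypothesis that no minimal separator contained in $\{a_v,c_v,b_v,u_v\}$ separates $a_v$ from $c_v$ in $G\setminus v$ is exactly what excludes the obstructing configuration, and I expect this overlapping case to be the delicate part of this direction.
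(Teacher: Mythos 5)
Your skeleton broadly follows the paper's (a clique-separator analysis inside $N_{G'}[v]$ for primality, a Jordan-type re-embedding for planarity, and minimizing star-decompositions plus Helly and Corollary~\ref{cor:strong-three} for the equivalence), and your non-overlapping case of the forward direction is correct. But there are genuine gaps at exactly the points where the third hypothesis must act, and one of your pivotal claims is false. For primality, the claim that ``no subset of $N_{G'}[v]$ missing one of $a_v,c_v$ can disconnect $G$'' fails under the theorem's hypotheses: take $V=\{a_v,b_v,c_v,u_v,v,x,y\}$ with $v\sim a_v,b_v,c_v$, the path $(a_v,b_v,c_v)$, $u_v\sim a_v,c_v$, $N(x)=\{a_v,b_v,u_v\}$ and $N(y)=\{b_v,c_v,u_v\}$; this graph is prime, planar and satisfies all three hypotheses, yet $\{v,a_v,b_v,u_v\}$ disconnects it (it isolates $x$). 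What rescues the argument is the clique condition you dropped: since $S$ is a clique of $G'$ and the new edges are incident to $v$, the set $S\setminus v$ is a clique of $G$ that separates $G\setminus v$; hence what is really needed is that $G\setminus v$ is \emph{prime}. That is the paper's first step, and it cannot be replaced by citing ``the connectivity of $V\setminus(\Pi_v\cup\{v\})$ obtained in the analysis behind Theorem~\ref{th:GminusVprime?}'': that connectivity is a \emph{conclusion} of Theorem~\ref{th:GminusVprime?}, whose hypothesis (a clique minimal separator of $G\setminus v$) is exactly what you are trying to refute. The correct route is that such a minimal clique separator must equal $\{b_v,w\}$ and be an $a_vc_v$-separator (Theorem~\ref{th:GminusVprime?} and Lemma~\ref{lem:separatorInCompo}), hence contain $u_v$ because of the path $(a_v,u_v,c_v)$, and then the last item of Theorem~\ref{th:GminusVprime?} forces $V=\Pi_v\cup\{u_v,v\}$, contradicting the second hypothesis. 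Planarity has the analogous gap: the obstruction is not a ``further common neighbour'' of $a_v,c_v$ (there is none, by hypothesis), but connected components of $G\setminus\{a_v,b_v,c_v,u_v,v\}$ lying in both regions of the quadrilateral $(a_v,b_v,c_v,u_v)$; excluding that configuration is precisely where the third hypothesis enters (a component missing $b_v$ or $u_v$ makes $\Pi_v$ or $(a_v,u_v,c_v)$ a separator of the prime graph $G\setminus v$, producing a forbidden minimal separator containing $\{a_v,c_v\}$, while two components seeing both $b_v$ and $u_v$ contract to a quadrilateral separating $a_v$ from $c_v$, contradicting $v\in N(a_v)\cap N(c_v)$ in the given embedding). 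Your sketch never invokes the third hypothesis for either primality or planarity, so neither part is proved.

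The equivalence has the same problem at the two spots you yourself flag as delicate. In the forward direction with $T_{a_v}\cap T_{c_v}\neq\emptyset$, ``enlarging a bag to $N_{G'}[v]$'' does not work as stated: the Helly bag containing $a_v,c_v,v$ may contain vertices outside $N_{G'}[v]$, so after inserting $u_v$ it has no dominator, and routing $u_v$ through the intermediate bags towards $T_{u_v}$ destroys their domination. The paper's fix is different: delete $v$, reduce, apply Helly twice to get the nonempty subtrees $T'_{a_v}\cap T'_{b_v}\cap T'_{c_v}$ and $T'_{a_v}\cap T'_{u_v}\cap T'_{c_v}$, prove they are disjoint yet \emph{adjacent} (any bag strictly between them would have to be dominated by a vertex of $N(a_v)\cap N(c_v)\setminus\{v\}=\{b_v,u_v\}$), and insert $v$ into one bag of each; this hinges on $b_v$ and $u_v$ being non-adjacent, which is again a consequence of the third hypothesis (via Lemma~\ref{lem:common-neighbour} and primality of $G\setminus v$) that you never establish. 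In the backward direction, your ``local splitting'' is exactly where the third hypothesis is indispensable, yet you invoke it only in the forward direction: the problematic bag is $B=N_{G'}[v]$, and replacing it by $\{a_v,b_v,u_v,v\}\subseteq N_G[a_v]$ and $\{b_v,c_v,u_v,v\}\subseteq N_G[c_v]$ requires reattaching every neighbouring bag $B'$ to one of the two halves; this is legitimate only because $B\cap B'$ is a non-clique separator of the prime graph $G\setminus v$ inside $\{a_v,b_v,c_v,u_v\}$ that cannot contain $\{a_v,c_v\}$ (third hypothesis), hence lies in $\{a_v,b_v,u_v\}$ or in $\{b_v,c_v,u_v\}$. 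Without this reattachment argument, and without the final re-insertion of $v$ (again by the Helly versus Corollary~\ref{cor:strong-three} dichotomy), the backward direction remains a plan rather than a proof.
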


\begin{figure}[h!]
 \centering
 \includegraphics[width=0.25\textwidth]{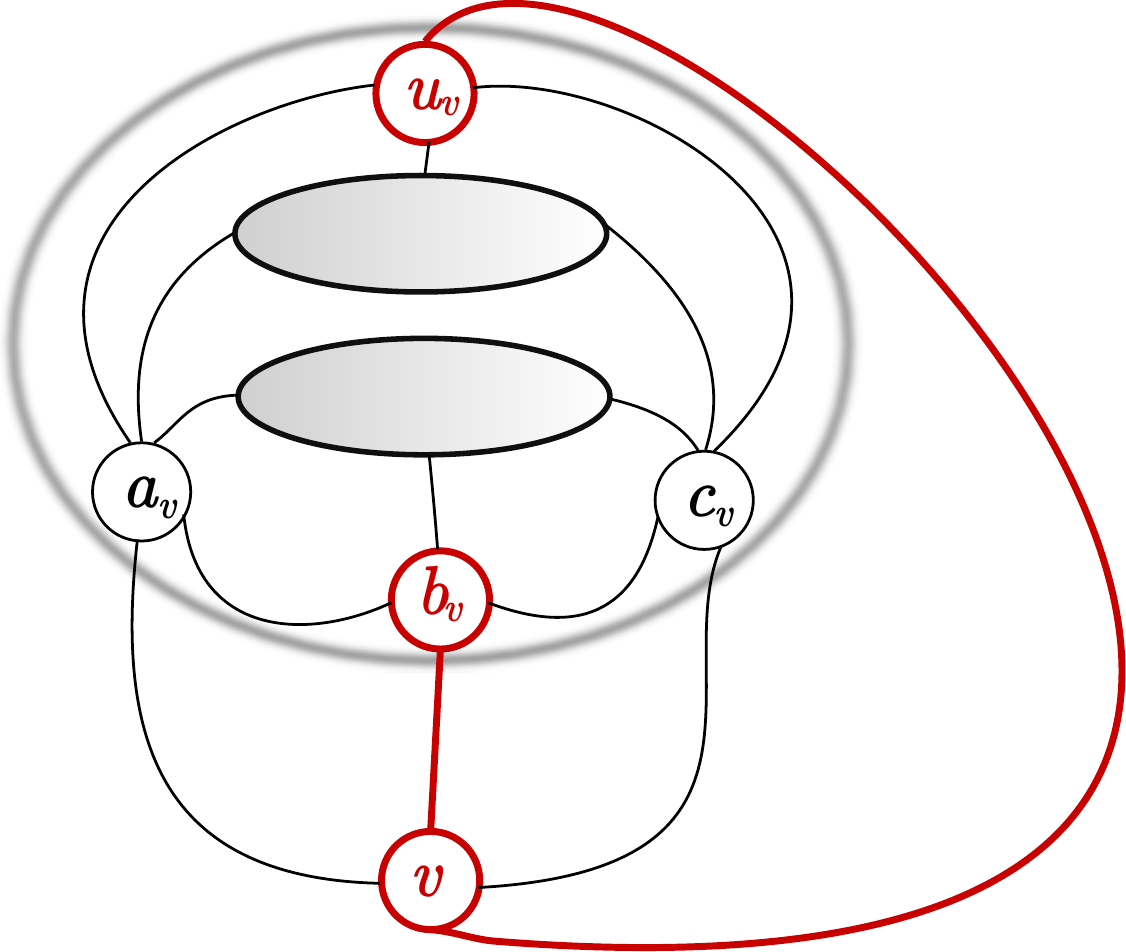}
 \caption{Case of a leaf-vertex $v$ of Type $3$ with $G \setminus v$ is prime and $N(a_v) \cap N(c_v) = \{ u_v, b_v, v \}$. Red edges are those added by Algorithm \texttt{Leaf-BottomUp}.}
 \label{fig:unique-neighbour}
\end{figure}

\begin{proof}
We will first prove that $G \setminus v$ is prime.
By contradiction, let $S'$ be a minimal clique-separator of $G \setminus v$.
By Theorem~\ref{th:GminusVprime?}, there is $w_v \neq v$ such that $S' = \{b_v,w_v\}$, and by Lemma~\ref{lem:separatorInCompo}, $S'$ must be an $a_vc_v$-separator of $G \setminus v$.
Then, it follows that $w_v = u_v \in N(a_v) \cap N(c_v)$, whence $V = \{a_v,b_v,c_v,u_v,v\}$ by Theorem~\ref{th:GminusVprime?}, that contradicts the hypothesis. 
Therefore, $G \setminus v$ is prime.

\medskip
Let us prove that $tb(G) = 1$ implies that $tb(G') = 1$.
Let $(T,{\cal X})$ be a star-decomposition of $G$, which exists by Lemma~\ref{lem:dominator-in-bag}, minimizing the distance in $T$ between the subtrees $T_{a_v}$ and $T_{c_v}$.
Since $N_G(v) \subseteq N_G[b_v]$ then removing $v$ from all bags leaves a tree decomposition of $G \setminus v$ of breadth one.
Up to reducing the tree decomposition, let $(T',{\cal X}')$ be any reduced tree decomposition of $G \setminus v$ that is obtained from $(T,{\cal X})$ by first removing $v$ from the bags.
Note that $(T',{\cal X}')$ is a star-decomposition of $G \setminus v$ by Lemma~\ref{lem:dominator-in-bag}.
Now, there are two cases.
\begin{itemize}

\item Suppose $T'_{a_v} \cap T'_{c_v} \neq \emptyset$.
We will need to prove in this case that the two subtrees $T'_{a_v} \cap T'_{b_v} \cap T'_{c_v}$ and $T'_{a_v} \cap T'_{u_v} \cap T'_{c_v}$ are nonempty and disjoint.

\begin{claim}
	\label{claim:bv-non-adjacent}
	$b_v,u_v$ are non-adjacent in $G$.
\end{claim}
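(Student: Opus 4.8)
The plan is to argue by contradiction: I assume $b_v$ and $u_v$ are adjacent in $G$ and exhibit a minimal separator of $G\setminus v$ that is forbidden by the last hypothesis of the theorem (no minimal separator $S\subseteq (N(a_v)\cap N(c_v))\cup\{a_v,c_v\}$ of $G\setminus v$ satisfies $\{a_v,c_v\}\subseteq S$). The whole argument rests on one simple observation. Recall that at this point $G\setminus v$ has already been shown to be prime, and that in $G\setminus v$ we have $N(a_v)\cap N(c_v)=\{b_v,u_v\}$, so $(N(a_v)\cap N(c_v))\cup\{a_v,c_v\}=\{a_v,b_v,c_v,u_v\}$. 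If $b_v\sim u_v$, then inside $\{a_v,b_v,c_v,u_v\}$ the only non-adjacent pair is $\{a_v,c_v\}$ (indeed $a_v\sim b_v$, $a_v\sim u_v$, $c_v\sim b_v$, $c_v\sim u_v$, and now $b_v\sim u_v$). Consequently any separator of $G\setminus v$ that is contained in $\{a_v,b_v,c_v,u_v\}$ is a non-clique (as $G\setminus v$ is prime, hence has no clique-separator), and therefore must contain both $a_v$ and $c_v$. Extracting a minimal separator inside such a separating set then yields a minimal separator $S$ of $G\setminus v$ with $\{a_v,c_v\}\subseteq S\subseteq\{a_v,b_v,c_v,u_v\}$, contradicting the hypothesis. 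Thus it suffices to produce a single separator of $G\setminus v$ contained in $\{a_v,b_v,c_v,u_v\}$.

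To produce such a separator I would first note that $\Pi_v=(a_v,b_v,c_v)$ is a separator of $G$: since $N(v)\subseteq\Pi_v$ and $V\neq\Pi_v\cup\{v\}$, the vertex $v$ is separated from the rest of $G$ by $\Pi_v$. Let $C$ be the component of $G\setminus\Pi_v$ that contains $u_v$ (well-defined since $u_v\notin\Pi_v\cup\{v\}$). Under the assumption $b_v\sim u_v$, the vertex $u_v\in C$ is a common neighbour of $a_v,b_v,c_v$, so Lemma~\ref{lem:common-neighbour} applies and yields two cases. In the first case, $(a_v,u_v,c_v)$ is a separator of $G$ separating $b_v$ from $C\setminus u_v$; then the same set $\{a_v,u_v,c_v\}\subseteq\{a_v,b_v,c_v,u_v\}$ separates $b_v$ from $C\setminus u_v$ in the subgraph $G\setminus v$ as well, and we conclude by the observation above.

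The remaining (degenerate) case $C=\{u_v\}$ is where I expect the only real care is needed. Here $N_{G\setminus v}(u_v)=\{a_v,b_v,c_v\}$, and since $V\neq\{a_v,b_v,c_v,u_v,v\}$ there is a component $C'$ of $G\setminus\Pi_v$ distinct from $\{v\}$ and $\{u_v\}$; as $u_v$ has no neighbour outside $\Pi_v$ we have $u_v\notin N(C')$, so $\Pi_v$ separates $u_v$ from $C'$ in $G\setminus v$, giving a separator contained in $\{a_v,b_v,c_v\}\subseteq\{a_v,b_v,c_v,u_v\}$, and again we conclude by the observation. The main obstacle, and the point to write out carefully, is the verification that the separating sets produced indeed contain an honest minimal separator of $G\setminus v$ (one with two full components, as in Definition~\ref{def:min-sep}) so that the forbidden-separator hypothesis can be invoked; this follows by taking, on the side of $b_v$ (resp.\ $u_v$), the neighbourhood of its connected component after deletion of the separating set, and using primality of $G\setminus v$ to force $\{a_v,c_v\}$ into it.
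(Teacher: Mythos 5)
Your proof is correct and follows essentially the same route as the paper's: assume $b_v \sim u_v$, apply Lemma~\ref{lem:common-neighbour} to $\Pi_v$ and the component containing $u_v$, and in both resulting cases (either $C=\{u_v\}$, giving the separator $\Pi_v$, or the separator $(a_v,u_v,c_v)$) use primality of $G\setminus v$ to extract a minimal separator containing $\{a_v,c_v\}$ inside $(N(a_v)\cap N(c_v))\cup\{a_v,c_v\}$, contradicting the hypothesis. Your streamlined "only non-adjacent pair" observation is a slightly cleaner packaging of the paper's explicit enumeration of the candidate minimal separators, and incidentally your writing of the second separator as $(a_v,u_v,c_v)$ corrects what appears to be a typo ($(a_v,u_v,b_v)$) in the paper's text.
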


\begin{proofclaim}
	By contradiction, if it were the case that $b_v,u_v$ are adjacent, then by Lemma~\ref{lem:common-neighbour}, either $u_v$ is an isolated vertex of $G \setminus (\Pi_v \cup \{v\})$ --- in which case, $\Pi_v \in {\cal P}_3(G \setminus v)$ because we assume $V \neq \{a_v,b_v,c_v,u_v,v\}$ by the hypothesis ---, or $(a_v,u_v,b_v) \in {\cal P}_3(G \setminus v)$.
	Since $G \setminus v$ is prime, it follows that one of $\Pi_v$ or $\Pi_v \setminus b_v$ must be a minimal separator of $G \setminus v$, similarly one of $(a_v,u_v,c_v)$ or $(a_v,c_v)$ must be a minimal separator of $G \setminus v$.
	Therefore, both cases contradict the hypothesis that there is no minimal separator $S\subseteq (N(a_v) \cap N(c_v)) \cup \{a_v,c_v\}$ in $G \setminus v$ and $\{a_v,c_v\} \subseteq S$, which proves that $b_v,u_v$ are non-adjacent.
\end{proofclaim}

Recall that we are in the case when $T'_{a_v} \cap T'_{c_v} \neq \emptyset$. 
The subtrees $T'_{a_v},T'_{c_v},T'_{u_v}$ are pairwise intersecting, similarly the subtrees $T'_{a_v},T'_{c_v},T'_{b_v}$ are pairwise intersecting.
Therefore, by the Helly property (Lemma~\ref{lem:helly}) $T'_{a_v} \cap T'_{b_v} \cap T'_{c_v} \neq \emptyset$ and $T'_{a_v} \cap T'_{u_v} \cap T'_{c_v} \neq \emptyset$.
Furthermore, $T_{a_v} \cap T_{b_v} \cap T_{c_v} \cap T_{u_v} = \emptyset$, because since $b_v,u_v$ are non-adjacent by Claim~\ref{claim:bv-non-adjacent} no vertex dominates all of $\{a_v,b_v,c_v,u_v\}$ in $G$, and so, $T'_{a_v} \cap T'_{b_v} \cap T'_{c_v} \cap T'_{u_v} = \emptyset$.

\begin{claim}
\label{claim:subtrees-adjacent}
The subtrees $T'_{a_v} \cap T'_{b_v} \cap T'_{c_v}$ and $T'_{a_v} \cap T'_{u_v} \cap T'_{c_v}$ are adjacent in $T'$.
\end{claim}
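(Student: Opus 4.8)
The plan is to prove that the two subtrees sit at distance exactly one in $T'$, by excluding any bag lying strictly between them. First I would set $Z := T'_{a_v} \cap T'_{c_v}$, which is a nonempty subtree since we are in the case $T'_{a_v} \cap T'_{c_v} \neq \emptyset$, and abbreviate $X := T'_{a_v} \cap T'_{b_v} \cap T'_{c_v}$ and $Y := T'_{a_v} \cap T'_{u_v} \cap T'_{c_v}$. Both $X$ and $Y$ are subtrees contained in $Z$; they are nonempty by the Helly property and they are disjoint, since $X \cap Y = T'_{a_v} \cap T'_{b_v} \cap T'_{c_v} \cap T'_{u_v} = \emptyset$, both facts being already established above. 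Because $Z$ is connected, the shortest path of $T'$ joining a node of $X$ to a node of $Y$ lies entirely inside $Z$, so every bag on it contains both $a_v$ and $c_v$, and its internal nodes belong to neither $X$ nor $Y$.

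Next I would argue by contradiction: assume some bag $B$ is an internal node of that path, so that $B \notin X \cup Y$ while $a_v, c_v \in B$. Since $(T', {\cal X}')$ is a star-decomposition, some vertex $z$ dominates $B$, whence $a_v, c_v \in N_G[z]$. As $v$ is a leaf-vertex of Type $2$ or $3$, the vertices $a_v$ and $c_v$ are non-adjacent in $G$; therefore $z \notin \{a_v, c_v\}$ and $z \in N_G(a_v) \cap N_G(c_v)$. By hypothesis $N_G(a_v) \cap N_G(c_v) = \{v, b_v, u_v\}$, so in $G \setminus v$ we obtain $z \in \{b_v, u_v\}$.

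To finish, I would observe that $z = b_v$ forces $b_v \in B$, hence $B \in T'_{a_v} \cap T'_{b_v} \cap T'_{c_v} = X$, whereas $z = u_v$ forces $u_v \in B$, hence $B \in Y$; either way this contradicts $B \notin X \cup Y$. Thus no bag lies strictly between $X$ and $Y$, so the two endpoints of the connecting path, one in $X$ and one in $Y$, are adjacent in $T'$, which is precisely the asserted claim. I do not anticipate a genuine obstacle here: the only delicate point is ensuring the connecting path stays inside $Z$ so that $a_v$ and $c_v$ lie in every intermediate bag, after which the star-decomposition property together with $(N_G(a_v) \cap N_G(c_v)) \setminus \{v\} = \{b_v, u_v\}$ immediately yields the contradiction.
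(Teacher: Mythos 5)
Your proof is correct and follows essentially the same route as the paper: assume an internal bag $B$ on the path between the two subtrees, use the star-decomposition property to get a dominator $z \in B$ with $a_v,c_v \in N_G[z]$, deduce $z \in \{b_v,u_v\}$ from $N(a_v) \cap N(c_v) \setminus \{v\} = \{b_v,u_v\}$ and non-adjacency of $a_v,c_v$, and conclude $B$ lies in one of the two subtrees, a contradiction. The only cosmetic difference is that you spell out explicitly (via connectedness of $T'_{a_v} \cap T'_{c_v}$) why every internal bag contains $a_v$ and $c_v$, which the paper attributes tersely to "the properties of the tree decomposition."
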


\begin{proofclaim}
By contradiction, let $B$ be an internal bag onto the path between both subtrees in $T'$, let $z \in B$ dominate the bag.
Note that $a_v,c_v \in B$ by the properties of the tree decomposition, $z \notin \{a_v,c_v\}$ because $a_v,c_v$ are non-adjacent, and so, $z \in N(a_v) \cap N(c_v) \setminus v = \{ u_v, b_v\}$.
This contradicts the fact that $B \notin T'_{a_v} \cap T'_{b_v} \cap T'_{c_v}$ and $B \notin T'_{a_v} \cap T'_{u_v} \cap T'_{c_v}$, therefore, the subtrees $T'_{a_v} \cap T'_{b_v} \cap T'_{c_v}$ and $T'_{a_v} \cap T'_{u_v} \cap T'_{c_v}$ are adjacent in $T'$.
\end{proofclaim}

Finally, let $B \in T'_{a_v} \cap T'_{b_v} \cap T'_{c_v}, B' \in T'_{a_v} \cap T'_{u_v} \cap T'_{c_v}$ be adjacent, that exist by Claim~\ref{claim:subtrees-adjacent}.
Observe that $b_v$ dominates $B$, $u_v$ dominates $B'$.
To obtain a star-decomposition of $G'$ from $(T',{\cal X}')$, it now suffices to add vertex $v$ in $B$ and $B'$, whence $tb(G') = 1$.

\item Else, $T'_{a_v} \cap T'_{c_v} = \emptyset$.
This implies $T_{a_v} \cap T_{c_v} = \emptyset$.
Since the tree decomposition $(T,{\cal X})$ minimizes the distance in $T$ between $T_{a_v}$ and $T_{c_v}$, $G$ is planar and $|N(a_v) \cap N(c_v)| \geq 3$, therefore by Corollary~\ref{cor:strong-three}, the subtrees $T_{a_v}$ and $T_{c_v}$ are adjacent in $T$, whence the subtrees $T'_{a_v},T'_{c_v}$ are also adjacent in $T'$.
In particular, by Corollary~\ref{cor:strong-three} there exist two adjacent bags $B'_{a_v},B'_{c_v} \in {\cal X}'$ such that $a_v \in B'_{a_v} \setminus B'_{c_v}, B'_{a_v} \cap B'_{c_v} = N_G(a_v) \cap N_G(c_v) \setminus v = \{u_v,b_v\}, c_v \in B'_{c_v} \setminus B'_{a_v}$.
Furthermore, $a_v$ dominates $B'_{a_v}$ while $c_v$ dominates $B'_{c_v}$.
Therefore, in order to obtain a star-decomposition of $G'$ from $(T',{\cal X}')$, it now suffices to add vertex $v$ in $B'_{a_v}$ and $B'_{c_v}$ -- that yields exactly $(T,{\cal X})$ --, whence $tb(G') = 1$.
\end{itemize}

\medskip
Before we can prove the equivalence, {\it i.e.}, $tb(G) = 1$ if and only if $tb(G') = 1$, we need to prove first that $G'$ is prime and planar.

\begin{claim}
\label{claim:gprime-prime}
$G'$ is prime.
\end{claim}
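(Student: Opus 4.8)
The plan is to derive a contradiction from the existence of a clique-separator of $G'$, exploiting that $G'$ is obtained from $G$ only by adding edges incident to the single vertex $v$. The crucial observation is that $G' \setminus v = G \setminus v$ as graphs (the two added edges $\{v,u_v\}$ and $\{b_v,v\}$ both disappear once $v$ is removed), and more generally, for any set $X \subseteq V$ with $v \notin X$, the adjacencies of $G'$ restricted to $X$ coincide with those of $G$ restricted to $X$. I would also use two facts already available: $G$ is prime by hypothesis, and $G \setminus v$ is prime by the first paragraph of this proof; in particular both graphs are connected and admit no clique-separator.

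First I would suppose, for the sake of contradiction, that $G'$ has a clique-separator $S'$, and split into two cases according to whether $v \in S'$. If $v \notin S'$, then $S'$ is a clique of $G$ (no added edge lies inside $S'$, since every added edge is incident to $v$), and since $G$ is a subgraph of $G'$ on the same vertex set, $G \setminus S'$ is obtained from the disconnected graph $G' \setminus S'$ by deleting edges, hence is itself disconnected; thus $S'$ is a clique-separator of $G$, contradicting that $G$ is prime.

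If instead $v \in S'$, then since removing $v$ kills every added edge, $G' \setminus S' = (G \setminus v) \setminus (S' \setminus v)$ as graphs, and $S' \setminus v$ is still a clique of $G \setminus v$. Because $G' \setminus S'$ is disconnected, so is $(G \setminus v) \setminus (S' \setminus v)$: either $S' \setminus v \neq \emptyset$, in which case it is a clique-separator of $G \setminus v$, or $S' \setminus v = \emptyset$, in which case $G \setminus v$ is itself disconnected. Both possibilities contradict the fact that $G \setminus v$ is prime (a prime graph being connected and free of clique-separators). This closes the argument and shows that $G'$ is prime.

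I expect the only delicate points to be bookkeeping ones: verifying that deleting edges cannot reconnect a graph, so that the disconnection of $G' \setminus S'$ transfers to the sparser graphs $G \setminus S'$ and $(G\setminus v) \setminus (S'\setminus v)$, and correctly handling the degenerate subcase $S' \setminus v = \emptyset$ through the convention (Definition~\ref{def:min-sep}) that the empty set separates a graph only when the graph is already disconnected. Neither step requires planarity nor the precise structure of $N(a_v) \cap N(c_v)$; the whole claim rests on the primality of $G$ and of $G \setminus v$ together with the fact that all newly added edges are incident to $v$.
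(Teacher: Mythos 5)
Your proof is correct and takes essentially the same approach as the paper: the paper's proof of Claim~\ref{claim:gprime-prime} observes that any clique-separator $S'$ of $G'$ must contain $v$ (your first case, since otherwise $S'$ would be a clique-separator of the prime graph $G$), and then that $S' \setminus v$ is a clique-separator of $G \setminus v$, contradicting the primality of $G \setminus v$ established at the start of the theorem's proof (your second case). The only difference is that you spell out the degenerate subcase $S' = \{v\}$, which the paper leaves implicit.
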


\begin{proofclaim}
Let $S'$ be a clique-separator of $G'$.
Note that $v \in S'$ by construction of $G'$.
Therefore, $S' \setminus v$ is a clique-separator of $G \setminus v$, that contradicts the fact that $G \setminus v$ is prime.
Consequently, $G'$ is prime.
\end{proofclaim}

\begin{claim}
\label{claim:gprime-planar}
$G'$ is planar.
\end{claim}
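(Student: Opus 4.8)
The plan is to exhibit a planar embedding of $G'$ directly, by routing the two new edges through faces of $G$. Observe first that $a_v,c_v$ are non-adjacent (by Definition~\ref{def:leafVertex}) and that $b_v,u_v$ are non-adjacent (Claim~\ref{claim:bv-non-adjacent}); since $v,b_v,u_v$ are three distinct common neighbours of $a_v$ and $c_v$, the three internally disjoint length-two paths $P_v=(a_v,v,c_v)$, $P_b=(a_v,b_v,c_v)$ and $P_u=(a_v,u_v,c_v)$ form a subdivision of $K_{2,3}$ inside $G$. Fixing a plane embedding of $G$, this theta-subgraph partitions the sphere into three regions, and the two incident to $P_v$ are bounded by the $4$-cycles $R_{vb}:=(a_v,v,c_v,b_v)$ and $R_{vu}:=(a_v,v,c_v,u_v)$. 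My aim is to show that $\mathrm{int}(R_{vb})$ and $\mathrm{int}(R_{vu})$ contain no vertex of $G$, so that $R_{vb}$ and $R_{vu}$ are quadrilateral faces into which $\{v,b_v\}$ and $\{v,u_v\}$ can be drawn without crossings. (For a leaf-vertex of Type $2$ the edge $\{v,b_v\}$ is already present, so only $R_{vu}$ needs to be treated.)

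First I would record that because $G$ is prime it is $2$-connected, so each bounding $4$-cycle encloses a genuine topological disk and any connected set of vertices drawn in its interior has all of its neighbours on the bounding cycle. Since $v$ is a leaf-vertex, $N_G(v)\subseteq\{a_v,b_v,c_v\}$, hence no interior vertex is adjacent to $v$. Consequently, if $D$ is a component of $G\setminus(\Pi_v\cup\{v\})$ lying in $\mathrm{int}(R_{vb})$ then $N_G(D)\subseteq\Pi_v=\{a_v,b_v,c_v\}$, and symmetrically a component in $\mathrm{int}(R_{vu})$ satisfies $N_G(D)\subseteq\{a_v,u_v,c_v\}$.

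The core of the argument — and the step I expect to be the main obstacle — is ruling out such a component $D$, using primality together with the hypothesis that $G\setminus v$ has no minimal separator $S$ with $\{a_v,c_v\}\subseteq S\subseteq\{a_v,b_v,c_v,u_v\}$. If $N_G(D)$ were contained in an edge of $\Pi_v$ or in a single vertex, it would be a clique-separator or a cut vertex of $G$, contradicting primality. If $N_G(D)=\{a_v,c_v\}$ then $\{a_v,c_v\}$ separates $D$ from $u_v$ in $G\setminus v$, and by Definition~\ref{def:min-sep} these are two full components, so $\{a_v,c_v\}$ is a minimal separator lying in the forbidden set — contradiction. In the remaining case $N_G(D)=\Pi_v$ I would split on $b_v$: if $b_v$ has a neighbour outside $D\cup\{a_v,c_v,v\}$ then $\Pi_v$ is a minimal separator of $G\setminus v$ (full components $D$ and the $u_v$-side), again forbidden; otherwise $b_v$ attaches only within $D\cup\{a_v,c_v\}$, so $\{a_v,c_v\}$ separates $\{b_v\}\cup D$ from $u_v$ and is a minimal separator — once more contradicting the hypothesis. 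The region $R_{vu}$ is handled by the identical dichotomy with $u_v$ in the role of $b_v$ (its neighbourhood cases $\{a_v,u_v\}$, $\{u_v,c_v\}$, $\{a_v,c_v\}$, $\{a_v,u_v,c_v\}$ reduce to clique-separators or to a forbidden minimal separator). Throughout I would rely on the primality of $G\setminus v$ established at the start of this proof, which guarantees $2$-connectivity and lets me certify that the separators produced genuinely have two full components.

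With both interiors empty, $R_{vb}$ and $R_{vu}$ are faces of the chosen embedding of $G$ whose boundaries are exactly $(a_v,v,c_v,b_v)$ and $(a_v,v,c_v,u_v)$; drawing $\{v,b_v\}$ inside the first and $\{v,u_v\}$ inside the second, each splitting a quadrilateral into two triangles, produces a plane embedding of $G'$, whence $G'$ is planar. The delicate point is solely the emptiness dichotomy above; everything else is routine bookkeeping about faces of a $2$-connected plane graph. I expect the cleanest write-up to package the whole argument as a single claim, \emph{no component of $G\setminus(\Pi_v\cup\{v\})$ meets $\mathrm{int}(R_{vb})$ or $\mathrm{int}(R_{vu})$}, proved once and then applied by the symmetry between $b_v$ and $u_v$.
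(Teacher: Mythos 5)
Your proof is correct, but it follows a genuinely different route from the paper's. The paper also fixes a plane embedding, but it works with the two sides of the $4$-cycle $(a_v,b_v,c_v,u_v)$ rather than with the three faces of your theta subgraph: it assumes for contradiction that \emph{both} sides contain a component of $V \setminus \{a_v,b_v,c_v,u_v,v\}$, uses the forbidden-separator hypothesis (phrased as $\Pi_v, \Pi'_v \notin {\cal P}_3(G \setminus v)$) to force both components to attach to both $b_v$ and $u_v$, and then contracts the two components to obtain a cycle $(u_v,x_1,b_v,x_2)$ separating $a_v$ from $c_v$ in the plane --- an impossible configuration since $v$ must lie in one of its two regions yet is adjacent to both $a_v$ and $c_v$. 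Your route instead localizes everything around $v$: you rule out vertices in the two quadrilateral faces incident to the path $(a_v,v,c_v)$ by a purely separator-theoretic dichotomy (any component drawn there would yield a clique separator of the prime graph $G$, or a minimal separator of $G \setminus v$ of the forbidden form), so no contraction/minor argument and no re-drawing of $v$ is needed; you also get the slightly stronger conclusion that all remaining vertices lie in the third face, bounded by $P_b \cup P_u$. What each buys: yours is more elementary and self-contained, while the paper's has a lighter case analysis and reuses its ${\cal P}_3$ machinery. One small repair to yours: in the case $N_G(D) = \Pi_v$, the second full component certifying that $\Pi_v$ is a minimal separator of $G \setminus v$ should be the component containing the outside neighbour of $b_v$ (it is full because its neighbourhood, being a non-clique subset of $\Pi_v$ in the prime graph $G \setminus v$, must contain both $a_v$ and $c_v$), not necessarily ``the $u_v$-side'', since the component of $u_v$ need not have $b_v$ in its neighbourhood; the symmetric adjustment applies in the $R_{vu}$ case.
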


\begin{proofclaim}
Let us fix a plane embedding of $G$.
By Jordan Theorem, the cycle induced by $(a_v,b_v,c_v,u_v)$ separates the plane into two regions.
Let $G_1,G_2$ be respectively the subgraphs of $G$ that are induced by all the vertices in each region.	

\begin{figure}[h!]
	\centering
	\includegraphics[width=0.25\textwidth]{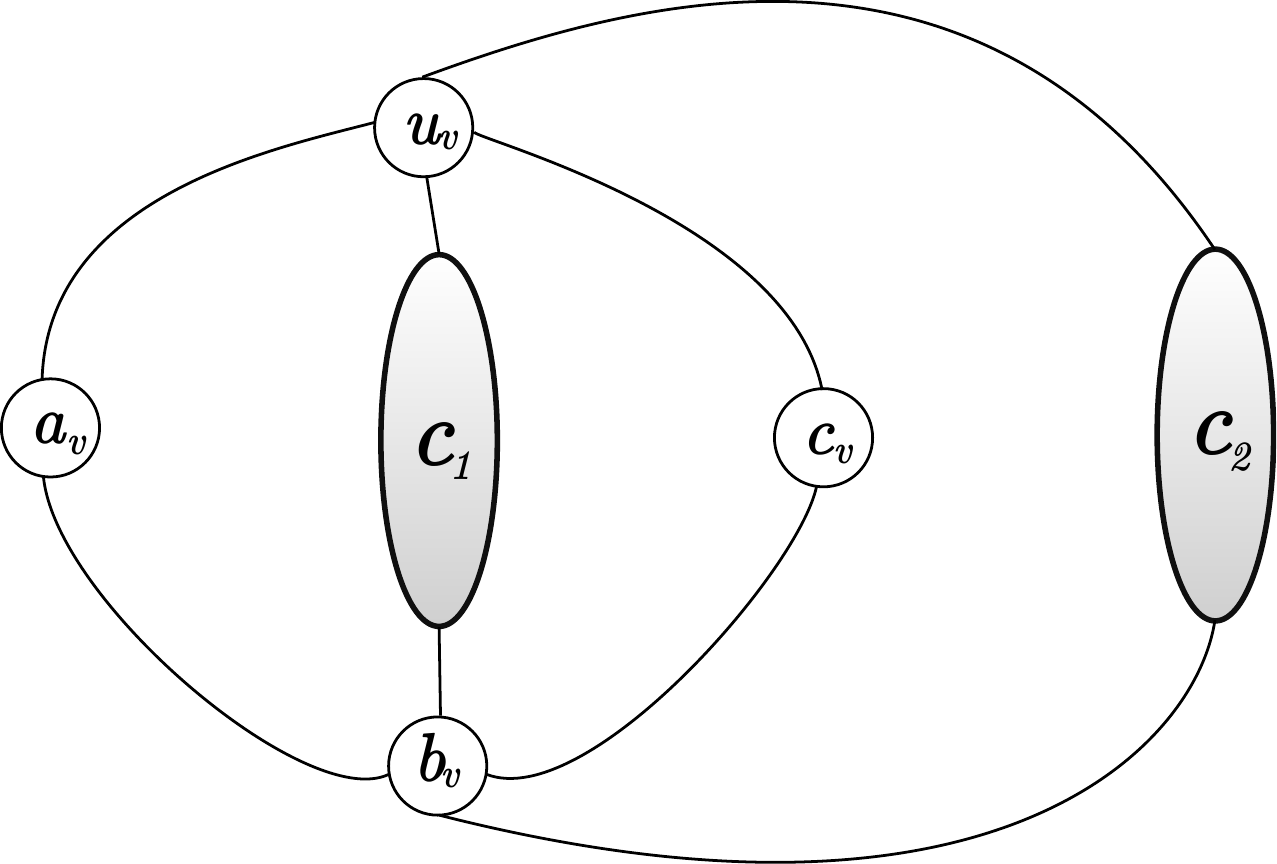}
	\caption{Proof that the graph $G'$ of Theorem~\ref{th:primeDifficult} is planar.}
	\label{fig:planar-one-neighbour}
\end{figure}

We claim that either $V \setminus (a_v,b_v,c_v,u_v,v) \subseteq V(G_1)$, or $V \setminus (a_v,b_v,c_v,u_v,v) \subseteq V(G_2)$.
Note that it will prove that $G'$ is planar, because then drawing vertex $v$ onto the region that does not contain the set $V \setminus (a_v,b_v,c_v,u_v,v)$ yields a planar embedding of $G'$.
By contradiction, let $C_1 \subseteq V(G_1), C_2 \subseteq V(G_2)$ be connected components of $V \setminus (a_v,b_v,c_v,u_v,v)$.
Let $\Pi'_v = (a_v,u_v,c_v)$.
If one of $\Pi_v$ or $\Pi'_v$ belongs to ${\cal P}_3(G \setminus v)$, then, there exists a minimal separator $S\subseteq (N(a_v) \cap N(c_v)) \cup \{a_v,c_v\}$ in $G \setminus v$ and since $G \setminus v$ is prime, $\{a_v,c_v\} \subseteq S$.
This would contradict the hypothesis, so, $\Pi_v,\Pi'_v \notin \mathcal{P}_3(G \setminus v)$. 
As a result, since $(a_v,b_v,c_v,u_v) = \Pi_v \cup \Pi'_v$ separates $C_1$ from $C_2$, therefore, $u_v,b_v \in N(C_1) \cap N(C_2)$ (or else, $\Pi_v \in {\cal P}_3(G \setminus v)$ or $\Pi'_v \in {\cal P}_3(G \setminus v)$).
Let us remove all other components of $V \setminus (a_v,b_v,c_v,u_v,v)$ but $C_1$ and $C_2$, and let us remove all edges between $\{a_v,c_v\}$ and $C_1 \cup C_2$ if any (see Figure~\ref{fig:planar-one-neighbour}).
Finally, let us contract $C_1,C_2$ to the two vertices $x_1,x_2$.
The cycle induced by $(u_v,x_1,b_v,x_2)$ separates the plane into two regions with $a_v,c_v$ being into different regions by construction.
Vertex $v$ must belong to one of the regions, but then it is a contradiction because $v \in N(a_v) \cap N(c_v)$ by the hypothesis.
\end{proofclaim}

To conclude the proof, let us prove that conversely, $tb(G') = 1$ implies that $tb(G) = 1$.
Let $(T',{\cal X}')$ be a star-decomposition of $G'$ minimizing the distance in $T'$ between the subtrees $T'_{a_v}$ and $T'_{c_v}$.
As an intermediate step, we claim that if removing vertex $v$ from all bags of ${\cal X}'$ leaves a tree decomposition of $G \setminus v$ of breadth one, then it implies that $tb(G) = 1$.
To prove the claim, there are two cases to be considered.
\begin{itemize}
\item If $T'_{a_v} \cap T'_{c_v} \neq \emptyset$, then the subtrees $T'_{a_v}, T'_{b_v}, T'_{c_v}$ are pairwise intersecting, hence by the Helly property (Lemma~\ref{lem:helly}) $T'_{a_v} \cap T'_{b_v} \cap T'_{c_v} \neq \emptyset$.
Equivalently there is bag containing $\Pi_v$, and so it suffices to remove $v$ from all bags and then to make any bag containing $\Pi_v$ adjacent to the new bag $N_G[v]$ in order to obtain a tree decomposition of $G$ of breadth one.

\item Else, $T'_{a_v} \cap T'_{c_v} = \emptyset$.
Since $(T',{\cal X}')$ minimizes the distance in $T'$ between the subtrees $T'_{a_v}$ and $T'_{c_v}$, $G'$ is planar by Claim~\ref{claim:gprime-planar} and $a_v,c_v$ have three common neighbours in $G'$, therefore, by Corollary~\ref{cor:strong-three} there must exist two adjacents bags $B'_{a_v},B'_{c_v}$ such that $a_v \in B'_{a_v} \setminus B'_{c_v}, B'_{a_v} \cap B'_{c_v} = N(a_v) \cap N(c_v)$ and $c_v \in B'_{c_v} \setminus B'_{a_v}$.
Furthermore, vertex $a_v$ dominates the bag $B'_{a_v}$, while vertex $c_v$ dominates the bag $B'_{c_v}$.
As a result, removing vertex $v$ from all bags but $B'_{a_v},B'_{c_v}$ leads to a tree decomposition of $G$ of breadth one.
\end{itemize}

\medskip
\noindent
Consequently, we are left to modify the tree decomposition $(T',{\cal X}')$ so as to ensure that none of the bags is only dominated by vertex $v$ in $G'$, for if it is the case then removing $v$ from all bags does leave a tree decomposition of $G \setminus v$ of breadth one.
We will call the latter property the \emph{removal property}.
Observe that if it is the case that $(T',{\cal X}')$ does not satisfy the removal property, then there must be a bag $B$ fully containing $N_{G'}(v)$ because any strict subset of $N_{G'}(v)$ is dominated by some vertex of $G \setminus v$.
In particular, $B = N_{G'}[v]$ because only vertex $v$ dominates $N_{G'}(v)$ in $G'$, and so we can further assume that $T'_v = \{B\}$ without violating the property for $(T',{\cal X}')$ to be a tree decomposition of $G'$ of breadth one.
Therefore in the following, assume that $(T',{\cal X}')$ is a reduced star-decomposition of $G'$ and $T'_v = \{B\}$, that is always possible to achieve by Lemma~\ref{lem:dominator-in-bag} and above remarks.

Since $V \neq \{a_v,b_v,c_v,u_v,v\} = N_{G'}[v]$ by the hypothesis, therefore, ${\cal X}' \setminus B \neq \emptyset$.
Let $B'$ be adjacent to $B$ in $T'$. 
Note that $B \cap B' \neq \{a_v,b_v,c_v,u_v\}$ because no other vertex than $v$ dominates the subset $\{a_v,b_v,c_v,u_v\}$ in $G'$.
By the properties of a tree decomposition, $B \cap B'$ is a separator of $G'$.
Consequently, $B \cap B'$ is not a clique because $G'$ is prime by Claim~\ref{claim:gprime-prime}.
Furthermore, since $B \cap B' \neq \{a_v,b_v,c_v,u_v\}$ it holds that $B \setminus (B' \cup \{v\}) \neq \emptyset$, consequently $B \cap B'$ is also a separator of $G \setminus v$.
Since $G \setminus v$ is prime, $B \cap B'$ cannot be any of $(a_v,c_v)$, $\Pi_v$ or $\Pi'_v$ because by the hypothesis there is no minimal separator $S\subseteq (N(a_v) \cap N(c_v)) \cup \{a_v,c_v\}$ in $G \setminus v$ and $\{a_v,c_v\} \subseteq S$. 
It follows that $B \cap B' \subseteq \{a_v,b_v,u_v\}$ or $B \cap B' \subseteq \{b_v,c_v,u_v\}$.
Let us substitute the bag $B$ with the two adjacent bags $B_1 = \{a_v,u_v,b_v,v\}, B_2 = \{b_v,c_v,u_v,v\}$, then we make adjacent all bags $B''$ that were formerly adjacent to $B$ to some bag amongst $B_1,B_2$ containing $B \cap B''$.
Note that $B_1 \subseteq N[a_v]$ and that $B_2 \subseteq N[c_v]$.
Therefore, the resulting tree decomposition is a tree decomposition of $G$ of breadth one such that $v$ dominates no bag.
\end{proof}

\subsubsection{Case of leaf-vertex $v$ of Type $2$ or $3$ and $G \setminus v$ not prime}

The remaining subsections will be devoted to the proof of correctness of~\ref{step:new-clique-sep}.
In particular, this subsection is devoted to the proof that when $G \setminus v$ is not prime one can only consider the case when the leaf-vertex $v$ is of Type 2, {\it i.e.}, $v$ and $b_v$ are adjacent in $G$.
Note that when $v$ is of Type 3, then in general one cannot add an edge between $v$ and $b_v$ without violating the property for the graph $G$ to be planar, as shown in Figure~\ref{fig:planar-unique-neighbour}.
We will now prove that whenever we are in the conditions of~\ref{step:new-clique-sep}, it is always possible to do so while preserving the planarity of the graph $G$ and the property to be of tree-breadth one.

\begin{figure}[h!]
 \centering
 \includegraphics[width=0.35\textwidth]{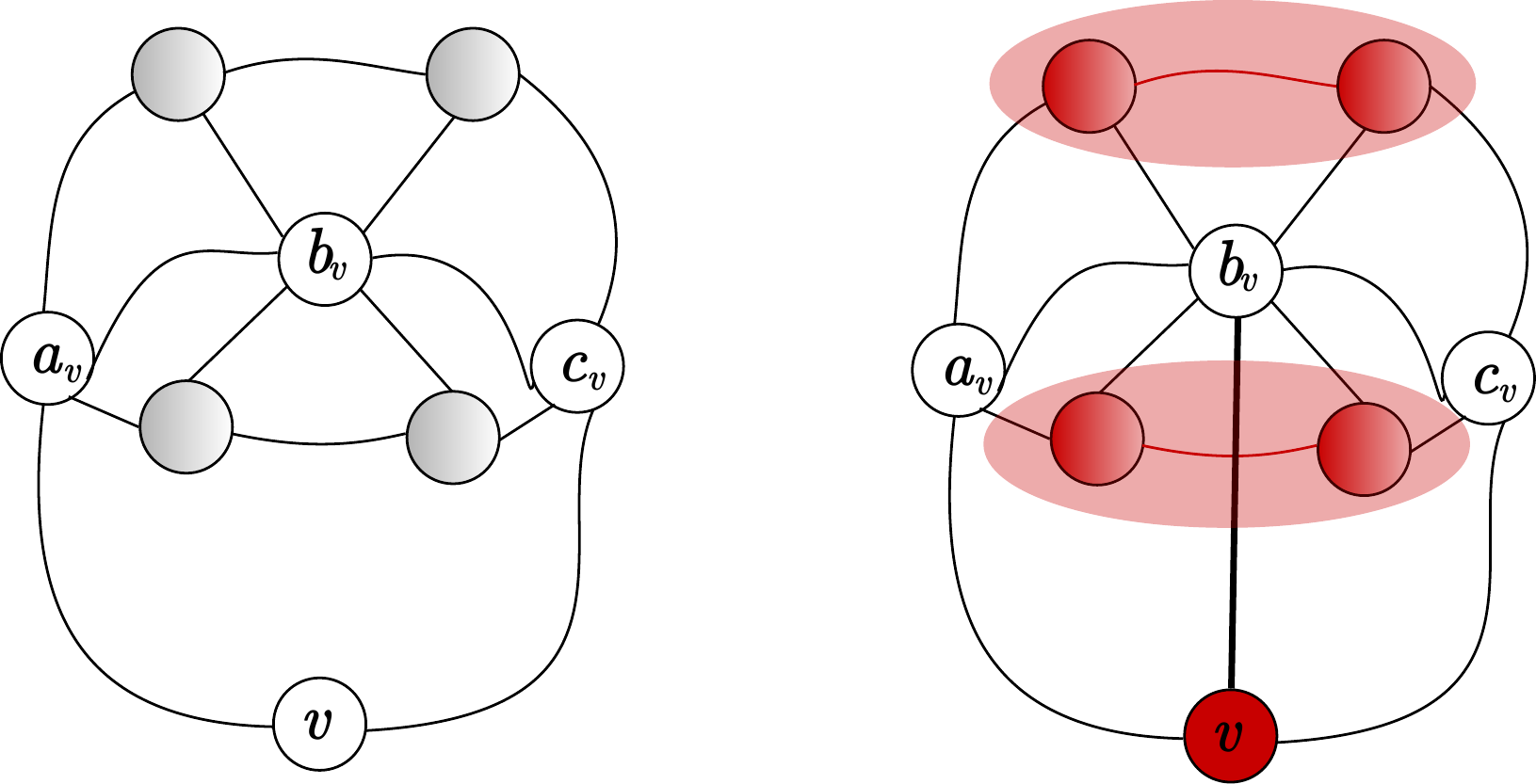}
 \caption{A planar graph $G$ with $tb(G) = 1$ (left), and a leaf-vertex $v$ of Type 3 so that adding an edge between $v$ and $b_v$ violates the property for the graph to be planar (right). In the latter case, one side of the bipartition of the $K_{3,3}$-minor is coloured red.}
 \label{fig:planar-unique-neighbour}
\end{figure}

\begin{theorem}\label{th:addEdgebv}
Let $G$ be a prime planar graph. Let $v$ be a leaf-vertex of Type $3$ such that $G \setminus v$ is not prime. Finally, let $\Pi_v=(a_v,b_v,c_v)$ be as in Definition~\ref{def:leafVertex}. Let $G'$ be obtained from $G$ by adding the edge $\{v,b_v\}$.

Then, $G'$ is prime and planar, and $tb(G)=1$ if and only if $tb(G')=1$.
\end{theorem}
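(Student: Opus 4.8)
The plan is to establish the three assertions in order: first that $G'$ is planar, then that $G'$ is prime, and finally the tree-breadth equivalence. Throughout I would rely on the structure provided by Theorem~\ref{th:GminusVprime?}: since $v$ is a Type~3 leaf-vertex and $G\setminus v$ is not prime, there is a clique minimal separator $\{u_v,b_v\}$ of $G\setminus v$ with $u_v\notin\Pi_v$, and the set $W:=V\setminus(\Pi_v\cup\{v\})$ is a connected \emph{full} component of $G\setminus\Pi_v$, so that $N_G(W)=\Pi_v$; here $W$ is nonempty since $G$ is prime with more than four vertices. I would also record the key observation that $(a_v,v,c_v,b_v)$ is an induced $4$-cycle of $G$: indeed $v\sim a_v,c_v$ and $b_v\sim a_v,c_v$, while $a_v\not\sim c_v$ (as $v$ is Type~3) and $\{v,b_v\}\notin E$.

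For planarity, fix a plane embedding of $G$. The induced $4$-cycle above is a Jordan curve splitting the plane into two regions. Since $W$ is connected and vertex-disjoint from the cycle it lies entirely in one region, and as every vertex of $V\setminus\{a_v,b_v,c_v,v\}$ belongs to $W$, the other region contains no vertex nor edge in its interior (the $4$-cycle has no chord, and all edges leaving $\{a_v,b_v,c_v\}$ run toward $W$); hence the new edge $\{v,b_v\}$ can be drawn in that empty region, yielding a plane embedding of $G'$. For primality, suppose $G'$ had a clique separator $S'$. If $v\notin S'$ then $S'$ does not use the new edge, so it is a clique of $G$ separating $G$ (removing $S'$ only loses edges when passing from $G'$ to $G$), contradicting that $G$ is prime. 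If $v\in S'$ then $S'\subseteq N_{G'}[v]=\{v,a_v,b_v,c_v\}$ and, as $a_v\not\sim c_v$, the clique $S'$ omits $a_v$ or $c_v$; moreover $G'\setminus S'=G\setminus S'$ since deleting $v$ already removes the only new edge. Using that $W$ is connected with $a_v,c_v\in N_G(W)$, one checks directly that the remaining vertices stay in a single component after removing any subset of $\{v,a_v,b_v\}$ or of $\{v,b_v,c_v\}$, so $S'$ cannot separate $G=G'\setminus S'$, a contradiction. Thus $G'$ is prime.

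The heart of the argument is the tree-breadth equivalence, and the clean way I would obtain it is to exhibit, on each side, a minimal separator to which Corollary~\ref{cor:strong-sep} applies for the non-adjacent pair $(a_v,c_v)$. In $G$ the pair $\{a_v,c_v\}$ is itself a minimal $a_vc_v$-separator whose two full components are $\{v\}$ and $\{b_v\}\cup W$; in $G'$ the set $\Pi_v$ is a minimal $a_vc_v$-separator whose two full components are $\{v\}$ and $W$. In both cases the separator lies in $(N(a_v)\cap N(c_v))\cup\{a_v,c_v\}$, so Corollary~\ref{cor:strong-sep} yields a star-decomposition in which $a_v,c_v$ either share a bag, or lie in adjacent bags $B_{a_v}\subseteq N[a_v]$, $B_{c_v}\subseteq N[c_v]$ with $\{b_v,v\}\subseteq B_{a_v}\cap B_{c_v}$. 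For $tb(G)=1\Rightarrow tb(G')=1$ I start from such a decomposition of $G$: in the adjacent case $B_{a_v}$ already contains both $v$ and $b_v$, so the \emph{same} decomposition covers the new edge and is already a breadth-one decomposition of $G'$; in the shared-bag case the Helly property (Lemma~\ref{lem:helly}) gives a bag containing $\Pi_v$, and after deleting $v$ from every bag (which preserves breadth one, as a bag dominated by $v$ becomes dominated by $b_v$) I append the leaf bag $N_{G'}[v]=\{v,a_v,b_v,c_v\}$ there. The converse is symmetric, starting from a decomposition of $G'$: deleting $v$ from all bags gives a breadth-one decomposition of $G\setminus v$, to which I reattach $v$ either as the leaf bag $N_G[v]=\{v,a_v,c_v\}$ (shared-bag case) or by inserting $v$ back into both adjacent bags $B_{a_v},B_{c_v}$ (adjacent case); either way the edges $\{v,a_v\},\{v,c_v\}$ are covered and each modified bag is dominated by $a_v$ or $c_v$ in $G$.

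I expect the genuine obstacle to be the planarity step — justifying that all of $V\setminus\{a_v,b_v,c_v,v\}$ sits on one side of the $4$-cycle so that the chord $\{v,b_v\}$ can be routed — together with the bookkeeping ensuring breadth one survives each deletion and re-insertion of $v$. Both ultimately hinge on the connectedness and full-component properties of $W$ coming from Theorem~\ref{th:GminusVprime?}, so verifying those hypotheses carefully is where I would concentrate the work.
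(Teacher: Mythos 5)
Your proposal is correct and follows essentially the same route as the paper's proof: planarity via the Jordan curve $(a_v,b_v,c_v,v)$ and connectivity of the full component $W$ from Theorem~\ref{th:GminusVprime?}, primality via the observation that a clique separator of $G'$ must use the new edge together with $N_G(W)=\Pi_v$, and the equivalence via Corollary~\ref{cor:strong-sep}, the Helly property, and deletion/re-insertion of $v$. The only cosmetic difference is that you invoke Corollary~\ref{cor:strong-sep} with the explicitly identified minimal separators ($\{a_v,c_v\}$ in $G$, $\Pi_v$ in $G'$), while the paper reaches the same two-case structure by minimizing the distance between $T_{a_v}$ and $T_{c_v}$ — the same argument in a different packaging.
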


\begin{proof}
First, we prove that $G'$ is prime and planar.
\begin{itemize}

\item In order to prove that $G'$ is prime, by contradiction let $S$ be a clique-separator of $G'$.
Since $G'$ is a supergraph of $G$, therefore $S$ is a separator of $G$ but it does not induce a clique in $G$.
Hence, $S$ contains the edge $\{v,b_v\}$, and so either $S \subseteq \{a_v,b_v,v\}$ or $S \subseteq \{b_v,c_v,v\}$.
Let $C = V \setminus (\Pi_v \cup \{v\})$, by Theorem~\ref{th:GminusVprime?}, $C$ is a full component of $G \setminus \Pi_v$ because $G \setminus v$ is not prime.
In particular, $C$ is connected and $a_v,c_v \in N(C)$, that contradicts the fact that $G' \setminus S$ is unconnected.

\item Then in order to prove that $G'$ is planar, let us fix a plane embedding of $G$.
The cycle induced by $(a_v,b_v,c_v,v)$ separates the plane into two regions.
To prove that $G'$ is planar, we claim that it suffices to prove that all vertices in $C = V \setminus (\Pi_v \cup \{v\})$ are in the same region, for then drawing the edge $\{b_v,v\}$ in the other region leads to a plane embedding of $G'$.
By contradiction, let $x,y \in C$ be in different regions.
By~\cite[Proposition 8]{Bouchitte2003}, the cycle $(a_v,b_v,c_v,v)$ is an $xy$-separator of $G$, that contradicts the fact that $C$ is connected.

\end{itemize}

\medskip
\noindent
Let us now prove that $tb(G) = 1$ implies that $tb(G') = 1$.
Let $(T,{\cal X})$ be a star-decomposition of $G$, that exists by Lemma~\ref{lem:dominator-in-bag}, minimizing the distance in $T$ between the subtrees $T_{a_v}$ and $T_{c_v}$.
Let us remove vertex $v$ from all bags, that leads to a tree decomposition $(T, {\cal X}_{-v})$ of $G \setminus v$ of breadth one because $N_G(v) \subseteq N_G(b_v)$. 
Then, let $(T',{\cal X}')$ be any reduced tree decomposition that is obtained from $(T,{\cal X}_{-v})$, that is a star-decomposition of $G \setminus v$ by Lemma~\ref{lem:dominator-in-bag}. 
Now, there are two cases.
If $T'_{a_v} \cap T'_{c_v} \neq \emptyset$, then the subtrees $T'_{a_v},T'_{b_v},T'_{c_v}$ are pairwise intersecting and so, by the Helly property (Lemma~\ref{lem:helly}) $T'_{a_v} \cap T'_{b_v} \cap T'_{c_v} \neq \emptyset$.
Hence one obtains a star-decomposition of $G'$ simply by making some bag of $T'_{a_v} \cap T'_{b_v} \cap T'_{c_v}$ adjacent to the new bag $N_{G'}[v] = \{a_v,b_v,c_v,v\}$.
Else, $T'_{a_v} \cap T'_{c_v} = \emptyset$, so, $T_{a_v} \cap T_{c_v} = \emptyset$.
Since $\Pi_v \in {\cal P}_3(G)$ and $G$ is prime by the hypothesis, therefore, one of $\Pi_v$ or $\Pi_v \setminus b_v$ must be a minimal separator of $G$.
As a result, since $(T,{\cal X})$ is assumed to minimize the distance in $T$ between the subtrees $T_{a_v}$ and $T_{c_v}$, by Corollary~\ref{cor:strong-sep} there exist two adjacent bags $B_{a_v}, B_{c_v} \in {\cal X}$ so that $a_v \in B_{a_v} \setminus B_{c_v}$ and $c_v \in B_{c_v} \setminus B_{a_v}$ respectively dominate the bags $B_{a_v}$ and $B_{c_v}$.
In such case, $B_{a_v} \cap B_{c_v} = N_G(a_v) \cap N_G(c_v)$ and so, since $b_v,v \in B_{a_v} \cap B_{c_v}$, $(T,{\cal X})$ is also a star-decomposition of $G'$.
So, in conclusion, $tb(G') = 1$ in both cases.

Conversely, let us prove that $tb(G') = 1$ implies that $tb(G) = 1$.
Let $(T',{\cal X}')$ be a star-decomposition of $G'$, that exists by Lemma~\ref{lem:dominator-in-bag}, minimizing the distance in $T'$ between the subtrees $T'_{a_v}$ and $T'_{c_v}$.
Let us remove vertex $v$ from all bags, that leads to a tree decomposition $(T', {\cal X}'_{-v})$ of $G' \setminus v = G \setminus v$ of breadth one because $N_{G'}[v] \subseteq N_{G'}[b_v]$. 
Then, let $(T,{\cal X})$ be any reduced tree decomposition that is obtained from $(T',{\cal X}'_{-v})$, that is a star-decomposition of $G \setminus v$ by Lemma~\ref{lem:dominator-in-bag}. 
There are two cases.
If $T_{a_v} \cap T_{c_v} \neq \emptyset$, then one obtains a star-decomposition of $G$ simply by making some bag of $T_{a_v} \cap T_{c_v}$ adjacent to the new bag $N_G[v] = \{a_v,c_v,v\}$.
Else, $T_{a_v} \cap T_{c_v} = \emptyset$, so, $T'_{a_v} \cap T'_{c_v} = \emptyset$.
Since $\Pi_v \in {\cal P}_3(G')$ and $G'$ is also prime, therefore, one of $\Pi_v$ or $\Pi_v \setminus b_v$ must be a minimal separator of $G'$.
As a result, since $(T',{\cal X}')$ is assumed to minimize the distance in $T'$ between the subtrees $T'_{a_v}$ and $T'_{c_v}$, by Corollary~\ref{cor:strong-sep} there exist two adjacent bags $B'_{a_v}, B'_{c_v} \in {\cal X}'$ so that $a_v \in B'_{a_v} \setminus B'_{c_v}$ and $c_v \in B'_{c_v} \setminus B'_{a_v}$ respectively dominate the bags $B'_{a_v}$ and $B'_{c_v}$.
In such case, one obtains a star-decomposition of $G$ by adding $v$ in the two bags $B'_{a_v},B'_{c_v}$.
So, in conclusion, $tb(G) = 1$ in both cases.

\end{proof}

\subsubsection{Proof of~\ref{step:new-clique-sep}~\ref{step:no-separation}}

\begin{theorem}\label{th:contractva}
Let $G$ be a prime planar graph, let $v$ be a leaf-vertex of Type 2, $\Pi_v = (a_v,b_v,c_v)$ be as in Definition~\ref{def:leafVertex}, and let $u_v \notin \Pi_v \cup \{v\}$ be such that $(b_v,u_v)$ is an edge-separator of $G \setminus v$.  

Suppose $a_v$ and $u_v$ are non-adjacent, and either $c_v$ and $u_v$ are non-adjacent or the subset $N_G(a_v) \cap N_G(u_v)$ is \underline{not} an $a_vu_v$-separator in the subgraph $G \setminus (c_v,v)$.

Then, $G/ va_v$ (obtained by contracting $\{v,a_v\}$) is planar and prime and $tb(G) = 1$ if and only if $tb(G/ va_v) = 1$.
\end{theorem}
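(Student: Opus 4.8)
The plan is to prove the four assertions — planarity, primeness, and the two directions of the tree-breadth equivalence — in turn, treating planarity and the forward implication as routine and the backward implication as the crux. Planarity is immediate, since $G/va_v$ is obtained from the planar graph $G$ by a single edge-contraction and planarity is minor-closed. The useful first step for everything else is to record a concrete description of $G':=G/va_v$. As $v$ is of Type~2 we have $N_G(v)=\{a_v,b_v,c_v\}$ with $a_v\sim b_v\sim c_v$ and $a_v\not\sim c_v$; contracting $\{v,a_v\}$ and renaming the merged vertex $a_v$ therefore produces exactly $G'=(G\setminus v)+\{a_v,c_v\}$, the single new edge being $\{a_v,c_v\}$, with $N_{G'}(a_v)=(N_G(a_v)\setminus\{v\})\cup\{c_v\}$. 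I would also extract from Theorem~\ref{th:GminusVprime?} that $\{b_v,u_v,v\}$ is a minimal $a_vc_v$-separator of $G$ with components $C_a\ni a_v$, $C_c\ni c_v$ and with $V\setminus\{a_v,b_v,c_v,v\}$ connected; since any common neighbour of $a_v,c_v$ must lie in this separator and $a_v\not\sim u_v$ by hypothesis, this forces $N_G(a_v)\cap N_G(c_v)=\{b_v,v\}$, hence $N_{G'}(a_v)\cap N_{G'}(c_v)=\{b_v\}$.

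For \emph{primeness}, I would assume $S$ is a clique-separator of $G'$ and case on $S\cap\{a_v,c_v\}$. If at most one of $a_v,c_v$ lies in $S$, then $G[S]=G'[S]$ is a clique of $G$, and $S$ already separates $G$: deleting the edge $\{a_v,c_v\}$ and re-inserting $v$ (adjacent to $a_v,b_v,c_v$) cannot merge distinct components of $G'\setminus S$, because the surviving vertices among $a_v,b_v,c_v$ stay in a single component via $a_v\sim b_v\sim c_v$; this contradicts that $G$ is prime. If both $a_v,c_v\in S$, then every other vertex of the clique $S$ is a common $G'$-neighbour of $a_v,c_v$ and hence equals $b_v$, so $S\in\{\{a_v,c_v\},\{a_v,b_v,c_v\}\}$; but $G\setminus\{a_v,b_v,c_v,v\}$ is connected and $b_v\sim u_v$, so neither $G'\setminus\{a_v,c_v\}$ nor $G'\setminus\{a_v,b_v,c_v\}$ is disconnected, contradicting that $S$ separates. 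Thus $G'$ is prime.

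For the \emph{forward} tree-breadth implication, contraction-closedness (Lemma~\ref{lem:contraction-closed}) gives $tb(G')\le tb(G)=1$, and $tb(G')\ge 1$ since $G'$ is connected with an edge, so $tb(G')=1$. For the \emph{backward} implication I start from a reduced star-decomposition $(T',\mathcal X')$ of $G'$ (Lemma~\ref{lem:dominator-in-bag}) and must re-insert $v$. Using the neighbourhood identities above, the only bags whose $G'$-dominator can fail to dominate in $G$ are those containing both $a_v$ and $c_v$ and dominated there solely by $a_v$ or by $c_v$; every bag dominated by $b_v$, or missing one of $a_v,c_v$, remains dominated in $G$. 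The backward direction therefore reduces to exhibiting a star-decomposition of $G'$ in which $a_v$ and $c_v$ co-occur only in bags contained in $N_G[b_v]$: given such a decomposition, a bag containing the triangle $\{a_v,b_v,c_v\}$ exists by Helly (Lemma~\ref{lem:helly}), and attaching to it a new bag $\{v,a_v,b_v,c_v\}$ dominated by $v$ completes a star-decomposition of $G$.

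The main obstacle is precisely securing such a decomposition, i.e.\ guaranteeing that $a_v$ and $c_v$ can be separated through a $b_v$-dominated region and that no bag is forced to carry both $a_v,c_v$ together with a vertex outside $N_G[b_v]$. This is where the full hypothesis enters: the non-adjacency $a_v\not\sim u_v$ together with the condition that $N_G(a_v)\cap N_G(u_v)$ fails to separate $a_v$ and $u_v$ in $G\setminus(c_v,v)$ excludes exactly the configuration — handled separately in Step~\ref{step:separation} — in which the second separator vertex $u_v$ pins $a_v$ and $c_v$ into a common bag not dominated by $b_v$. Concretely I would apply the strong-pair machinery (Lemma~\ref{lem:strong-pair} and Corollaries~\ref{cor:strong-sep} and~\ref{cor:strong-three}), choosing $(T',\mathcal X')$ to minimise the distance between the relevant subtrees and using planarity via $K_{3,3}$- and $K_5$-minor arguments to bound the common neighbourhoods, so that any offending bag can be split along a $v$-bag. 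I expect this separation-and-splitting step to be the delicate heart of the argument, with the rest reducing to the routine bookkeeping sketched above.
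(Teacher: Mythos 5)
Your treatment of planarity, the forward implication, and primeness is sound, and your reduction of the backward implication is exactly the paper's: both you and the paper observe that $N_{G'}(a_v)\cap N_{G'}(c_v)=\{b_v\}$ (via the $a_vc_v$-separator $\{b_v,u_v,v\}$ of $G$ from Theorem~\ref{th:GminusVprime?} and $a_v\not\sim u_v$), deduce that the only bags that can fail in $G$ are those containing both $a_v,c_v$ and dominated in $G'$ solely by $a_v$ or $c_v$, and note that once a star-decomposition of $G'$ is also one of $G\setminus v$, attaching the bag $N_G[v]$ to a bag containing the triangle $\{a_v,b_v,c_v\}$ (which exists by Helly) finishes the proof. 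Your primeness argument is organized slightly differently (casing on $|S\cap\{a_v,c_v\}|$ rather than routing through Lemma~\ref{lem:separatorInCompo}), but it rests on the same two facts and is correct, modulo the small imprecision that when $b_v\in S$ the surviving vertices of $\{a_v,b_v,c_v\}$ are joined by the edge $\{a_v,c_v\}$ of $G'$ rather than by the path $a_v\sim b_v\sim c_v$.

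However, the crux of the theorem --- actually producing a star-decomposition of $G'$ in which every bag containing both $a_v$ and $c_v$ lies in $N_G[b_v]$ --- is not proved; you explicitly defer it ("I expect this separation-and-splitting step to be the delicate heart of the argument"), and the tools you name would not carry it out. The strong-pair machinery (Lemma~\ref{lem:strong-pair}, Corollaries~\ref{cor:strong-sep} and~\ref{cor:strong-three}) and "minimising the distance between the relevant subtrees" are designed for \emph{non-adjacent} pairs, but $a_v$ and $c_v$ are adjacent in $G'$, so $T'_{a_v}\cap T'_{c_v}\neq\emptyset$ and there is no distance to minimise; this machinery is indeed used in the neighbouring cases (Theorems~\ref{th:primeEasy}, \ref{th:primeDifficult}, \ref{th:addEdgebv}) but not here. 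What the paper does instead is a double extremality plus local bag surgery: take a star-decomposition of $G/va_v$ minimising $|{\cal X}|$ and, subject to that, minimising the number of bags not contained in the closed neighbourhood of a vertex of $G$; show an offending bag $B$ has a \emph{unique} dominator, w.l.o.g.\ $a_v$ (the w.l.o.g.\ itself needs the case analysis on whether $c_v\sim u_v$, via Theorem~\ref{th:GminusVprime?}); pick an adjacent bag $B'\ni c_v$, show $B\cap B'$ meets the component $C_a$ of $a_v$ in $G\setminus(b_v,u_v,v)$, and case on the dominator of $B'$: domination by $a_v$ lets you merge $B,B'$ (contradicting minimality of $|{\cal X}|$); domination by $u_v$ would make $B\cap B'\subseteq N(a_v)\cap N(u_v)$ an $a_vu_v$-separator, which is precisely what the hypothesis forbids; domination by $b_v$ allows replacing $B,B'$ by $B\setminus c_v,\ B'\cup\{a_v\}$, decreasing the count of bad bags and contradicting the second minimality. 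You correctly identified \emph{where} the hypothesis must enter, but identifying it is not the same as supplying this argument, and without it the backward implication is unproven.
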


\begin{figure}[h!]
 \centering
 \includegraphics[width=0.5\textwidth]{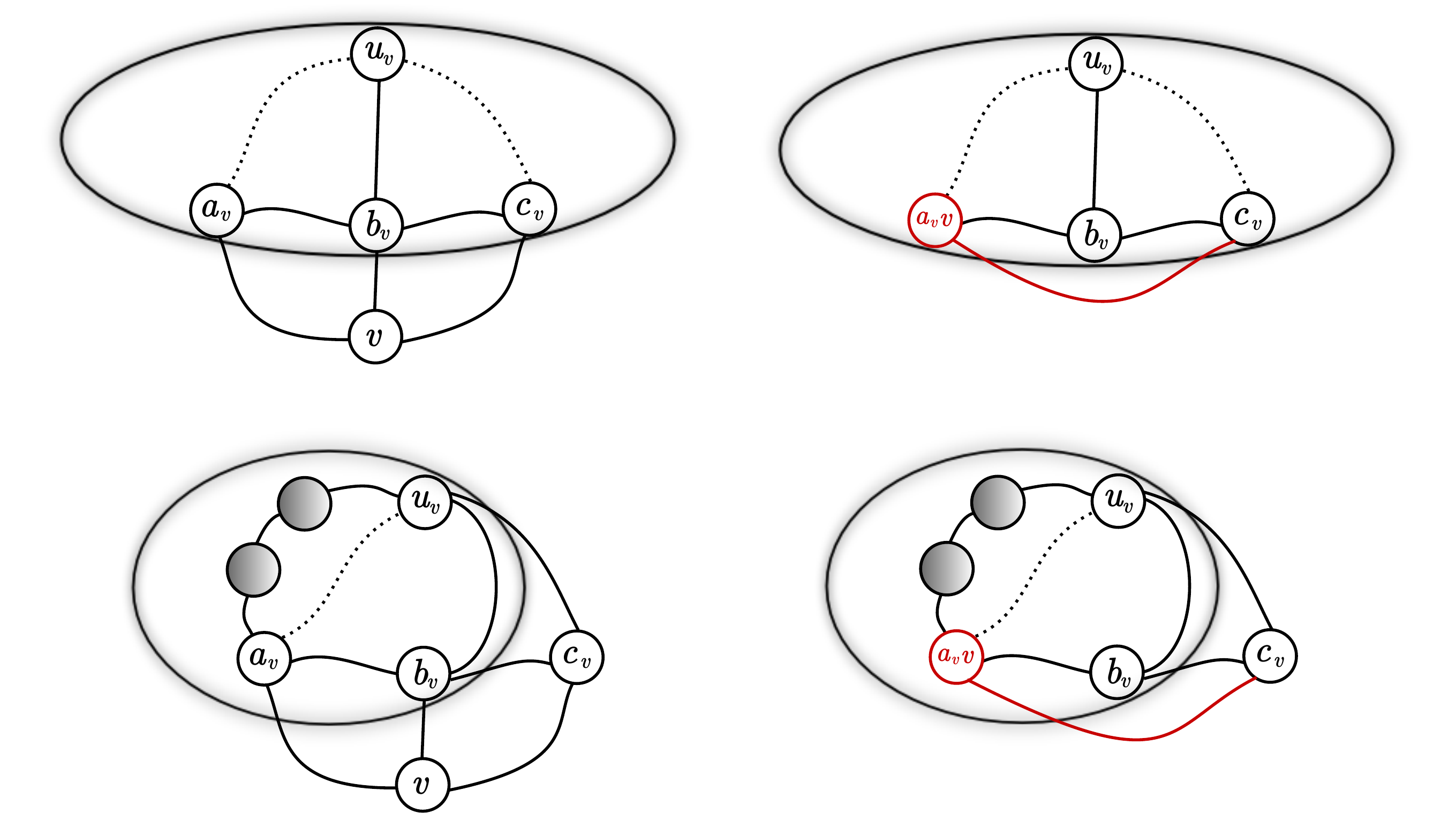}
 \caption{Cases when Theorem~\ref{th:contractva} applies and the edge $\{v,a_v\}$ can be contracted to $a_v$.}
 \label{fig:planar-clique-contraction}
\end{figure}

\begin{proof}
The graph $G/va_v$ is a contraction of the planar graph $G$, therefore it is planar.
Let us prove that $G/va_v$ is prime.
By contradiction, let $S$ be a minimal clique-separator of $G/va_v$.
Since $G/va_v$ is a supergraph of $G \setminus v$, $S$ is also a separator of $G \setminus v$.
Furthermore, it is not an $a_vc_v$-separator because $a_v,c_v$ are adjacent in $G/va_v$, therefore, by Lemma~\ref{lem:separatorInCompo} $S$ is a separator of $G$.
Since $G$ is prime by the hypothesis, $S$ does not induce a clique of $G$, whence $a_v,c_v \in S$.
However, since $(b_v,u_v)$ is not a separator of $G$ because $G$ is prime by the hypothesis, therefore by Lemma~\ref{lem:separatorInCompo} $(u_v,b_v)$ is an $a_vc_v$-separator of $G \setminus v$.
So, $N_G(a_v) \cap N_G(c_v) \subseteq \{v,b_v,u_v  \}$, that implies $N_G(a_v) \cap N_G(c_v) = \{v,b_v\}$ because $a_v$ and $u_v$ are non-adjacent by the hypothesis. 
In such a case $S \subseteq \Pi_v$, but then $V \setminus (\Pi_v \cup \{v\})$ cannot be a full component of $G \setminus \Pi_v$, thus contradicting Theorem~\ref{th:GminusVprime?}.
As a result, the graph $G/ va_v$ is planar and prime.

\medskip
\noindent
If $tb(G) = 1$ then $tb(G/va_v) = 1$ because tree-breadth is contraction-closed by Lemma~\ref{lem:contraction-closed}.
Conversely, let us prove that $tb(G/va_v) = 1$ implies $tb(G) = 1$.
To show this, let $(T,{\cal X})$ be a star-decomposition of $G/va_v$, that exists by Lemma~\ref{lem:dominator-in-bag}, minimizing the number of bags $|{\cal X}|$ (in particular, $(T,{\cal X})$ is a reduced tree decomposition).
Assume moreover $(T,{\cal X})$ to minimize the number of bags that are not contained into the closed neighbourhood of some vertex in $G$ w.r.t. this property.
Note that there is a bag of $(T,{\cal X})$ containing $\Pi_v$, because since it is a clique of $G/va_v$ the subtrees $T_{a_v},T_{b_v},T_{c_v}$ are pairwise intersecting and so, by the Helly property (Lemma~\ref{lem:helly}), $T_{a_v} \cap T_{b_v} \cap T_{c_v} \neq \emptyset$.
So, we can add in $(T,{\cal X})$ a new bag $N_G[v]$, and by making this bag adjacent to any bag of $T_{a_v} \cap T_{b_v}\cap T_{c_v}$ one obtains a tree decomposition of $G$ (not necessarily a star-decomposition). 
Consequently, we claim that to prove that $tb(G) = 1$, it suffices to prove that $(T,{\cal X})$ is a star-decomposition of $G \setminus v$, for then the above construction leads to a star-decomposition of $G$.

By contradiction, suppose it is not the case that $(T,{\cal X})$ is a star-decomposition of $G\setminus v$.
Since $G/va_v$ and $G\setminus v$ only differ in the edge $\{a_v,c_v\}$, there must be a bag $B$ of $T_{a_v} \cap T_{c_v}$ that is only dominated by some of $a_v,c_v$.
We make the stronger claim that the bag $B$ has a unique dominator, that is either $a_v$ or $c_v$.
Since $B$ is only dominated by some of $a_v,c_v$, then in order to prove the claim by contradiction we only need to consider the case when $B \subseteq N_{G/va_v}[a_v] \cap N_{G/va_v}[c_v]$.
Recall that $N_{G/va_v}[a_v] \cap N_{G/va_v}[c_v] = \{a_v,b_v,c_v\}$ by the above remarks (because $(u_v,b_v)$ is an $a_vc_v$-separator of $G \setminus v$), therefore either $B = \{a_v,b_v,c_v\}$ or $B = \{a_v,c_v\}$.
In the first case ($B = \{a_v,b_v,c_v\}$) we have that $B \subseteq N[b_v]$, thus contradicting the fact that $B$ is only dominated by some of $a_v,c_v$.
However in the second case ($B = \{a_v,c_v\}$) the bag $B$ is strictly contained in any bag of the nonempty subtree $T_{a_v} \cap T_{b_v} \cap T_{c_v}$, thus contradicting the fact that $(T,{\cal X})$ is a reduced tree decomposition by minimality of $|{\cal X}|$. 
Therefore, the claim is proved and so, the bag $B$ has a unique dominator, that is either $a_v$ or $c_v$.
Note that if $B \subseteq N_{G/va_v}[c_v]$ then we may further assume that $c_v,u_v$ are nonadjacent, or else by Theorem~\ref{th:GminusVprime?} $N_{G/va_v}[c_v] = \{a_v,b_v,c_v,u_v\}  \subseteq N[b_v]$ and so, $B \subseteq N[b_v]$, that would contradict the claim that $B$ is only dominated by some of $a_v,c_v$.
In addition, since $a_v$ and $c_v$ play symmetrical roles in the case when $u_v,c_v$ are nonadjacent, let us assume w.l.o.g. that vertex $a_v$ is the sole dominator of the bag $B$.

In such a case, $N_{G/va_v}(a_v) \cap N_{G/va_v}(c_v) = \{b_v\}$ because $(u_v,b_v)$ is an $a_vc_v$-separator of $G \setminus v$, so, since $N(c_v) \setminus (\Pi_v \cup \{v\}) \neq \emptyset$ because $G$ is prime by the hypothesis, the existence of a bag $B'$ containing vertex $c_v$ and adjacent to $B$ follows. 
By the properties of a tree decomposition, $B \cap B'$ is a separator of $G/va_v$
Now, let $C_a$ be the component of vertex $a_v$ in $G \setminus (b_v,u_v,v)$.
Observe that $c_v \notin C_a$ because $(u_v,b_v)$ is an $a_vc_v$-separator of $G \setminus v$.
Since $B \cap B' \subseteq N_{G/va_v}[a_v] \subseteq C_a \cup \Pi_v$, therefore, $B \cap B' \cap C_a \neq \emptyset$ or else $B \cap B'$ would be a clique-separator in $G/va_v$ (impossible since it is a prime graph). 
There are several cases to be considered depending on the dominators of bag $B'$.

\begin{itemize}
\item If $a_v$ dominates $B'$ then $B,B'$ can be merged into one, thus contradicting the minimality of $|{\cal X}|$;
\item Else, $B'$ must be dominated by one of $b_v$ or $u_v$ because $B \cap B' \cap C_a \neq \emptyset$, $c_v \in (B \cap B') \setminus C_a$ and $(b_v,u_v)$ separates $c_v$ from $C_a$.
In fact, we claim that it cannot be dominated by vertex $u_v$.
By contradiction, suppose that it is the case.
Since $a_v$ and $u_v$ are non-adjacent, therefore, $a_v \in B \setminus B'$ and $u_v \in B' \setminus B$.
So, it follows by the properties of a tree decomposition that $B \cap B'$ is an $a_vu_v$-separator of $G/va_v$.
However, $B \cap B' \subseteq N(a_v) \cap N(u_v)$, that contradicts the hypothesis that $N_G(a_v) \cap N_G(u_v)$ is not an $a_vu_v$-separator in the subgraph $G \setminus (c_v,v)$.

Therefore, $b_v \in B'$ dominate the bag.
Observe that if it were the case that there are at least two bags that are both adjacent to $B$ and dominated by $b_v$, then they could all be merged into one without violating the property for $(T,{\cal X})$ to be a star-decomposition.
As a result, by minimality of $|{\cal X}|$, $B'$ is the unique bag that is both adjacent to $B$ and dominated by $b_v$, whence it is also the unique bag adjacent to $B$ containing vertex $c_v$. 
Let us substitute the two bags $B,B'$ with $B \setminus c_v, B' \cup \{a_v\}$.
Since $N_{G/va_v}(a_v) \cap N_{G/va_v}(c_v) = \{b_v\}$, it is still a star-decomposition of $G / va_v$ with equal number of bags $|{\cal X}|$.
Furthermore, there is one less bag that is not contained in the closed neighbourhood of some vertex in $G$, thus contradicting the minimality of $(T,{\cal X})$.
\end{itemize}
\end{proof}

\subsubsection{Proof of~\ref{step:new-clique-sep}~\ref{step:separation}~\ref{step:contraction-case} and~\ref{step:new-clique-sep}~\ref{step:separation}~\ref{step:diamond-case}}

In order to deal with all remaining cases, it will require us to further study the neighbourhood of vertex $b_v$ in the graph.
Observe that in the following Theorem~\ref{claim:connect-vertex-b} we needn't prove that the resulting graph $G'$ is prime because it will be proved in Theorem~\ref{lem:clique-case-2}.

\begin{theorem}\label{claim:connect-vertex-b}
Let $G$ be a prime planar graph, let $v$ be a leaf-vertex of Type 2, $\Pi_v = (a_v,b_v,c_v)$ be as in Definition~\ref{def:leafVertex}, and let $u_v \notin \Pi_v \cup \{v\}$ be such that $(b_v,u_v)$ is an edge-separator of $G \setminus v$.   

Suppose $u_v \in N(c_v) \setminus N(a_v)$, $N(a_v) \cap N(u_v)$ is an $a_vu_v$-separator of $G \setminus (c_v,v)$, and $N(b_v) = \{a_v,c_v,u_v,v\}$.

Then, there exists $x \in (N(a_v) \cap N(u_v)) \setminus b_v$ such that the graph $G'$, obtained from $G$ by adding the edge $\{b_v,x\}$, is planar and satisfies $tb(G') = 1$ if $tb(G) = 1$.
Moreover, the vertex $x$ can be found in linear-time.
\end{theorem}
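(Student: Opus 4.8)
The plan is to establish the local structure first, then treat existence of $x$, planarity, and the tree-breadth implication in turn. Since $v$ is a leaf-vertex of Type~2, $N(v)=\{a_v,b_v,c_v\}$ and $\{a_v,b_v,c_v,v\}$ induces a diamond ($K_4$ minus the non-edge $a_vc_v$), with $b_v,v$ the two degree-three vertices. Because $(b_v,u_v)$ is an edge-separator of $G\setminus v$ and $u_v\in N(c_v)$, Theorem~\ref{th:GminusVprime?} tells me that $c_v$ is simplicial in $G\setminus v$ with $N_{G\setminus v}(c_v)=\{u_v,b_v\}$, hence $N_G(c_v)=\{u_v,b_v,v\}$, and that $W:=V\setminus(\Pi_v\cup\{v\})$ is a connected full component of $G\setminus\Pi_v$ containing $u_v$. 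I would first show $(N(a_v)\cap N(u_v))\setminus\{b_v\}\neq\emptyset$: if it were empty then $N(a_v)\cap N(u_v)=\{b_v\}$, and since by hypothesis this set separates $a_v$ from $u_v$ in $G\setminus(c_v,v)$ while $N_G(c_v)=\{u_v,b_v,v\}$ forces $c_v$ onto the $u_v$-side once $b_v,v$ are deleted, the edge $\{b_v,v\}$ would be a clique-separator of $G$, contradicting primeness. This produces at least one candidate $x$.

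For planarity I would work in $H:=G\setminus\{c_v,v\}$, where $b_v$ has degree two with neighbours $a_v,u_v$; thus $b_v$ lies on the interior of the $a_vu_v$-path $a_v-b_v-u_v$ and belongs to \emph{every} $a_vu_v$-separator of $H$. Choose a minimal such separator $S_0\subseteq N(a_v)\cap N(u_v)$, so $b_v\in S_0$. Fix a plane embedding of $G$ (linear time, \cite{Hopcroft1974}); the diamond together with $v$ being a leaf forces the rotation at $b_v$ to read $(a_v,v,c_v,u_v)$, so the edges $b_va_v$ and $b_vu_v$ bound a common face whose sector at $b_v$ avoids $v,c_v$. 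The idea is to pick $x\in S_0$ consecutive to $b_v$ along the $K_{2,|S_0|}$ ``lens'' formed by $a_v,u_v$ and $S_0$ — equivalently, realising $S_0$ as an edge or induced cycle of a planar supergraph via Corollary~\ref{cor:make-sep-cyc} (through Lemma~\ref{lem:prop-todinca}) and taking $x$ a neighbour of $b_v$ on that cycle — and to route the new edge $b_vx$ through the empty region bounded by the paths $a_v-b_v-u_v$ and $a_v-x-u_v$. I must then verify that reinserting $c_v,v$ stays compatible, which holds because $c_v,v$ sit in the diamond region on the opposite side of $b_v$. \textbf{This routing step, together with the verification that a consecutive $x$ can always be selected without creating a forbidden minor, is the main obstacle}; the fallback is a direct $K_5/K_{3,3}$-minor contradiction in the spirit of Lemma~\ref{lem:common-neighbour} to rule out the non-planar alternatives.

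For the implication $tb(G)=1\Rightarrow tb(G')=1$ it suffices to produce a star-decomposition of $G$ having a single bag that contains both $b_v$ and $x$: adding the intra-bag edge $\{b_v,x\}$ then turns it into a tree decomposition of $G'$, and every bag keeps a dominator since closed neighbourhoods only grow. Here I would apply the strong-pair machinery to the non-adjacent pair $(a_v,u_v)$. Inspecting the proof of Lemma~\ref{lem:strong-pair}, in both of its cases one obtains a bag that contains all of $N(a_v)\cap N(u_v)$ (in Case~1 because every intermediate bag contains $N(a_v)\cap N(u_v)$, and in Case~2 because $B_{a_v}\cap B_{u_v}=N(a_v)\cap N(u_v)$); since $\{b_v,x\}\subseteq N(a_v)\cap N(u_v)$, that bag contains $b_v$ and $x$ simultaneously. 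The hypothesis of Lemma~\ref{lem:strong-pair} for $(a_v,u_v)$ is exactly what Corollary~\ref{cor:strong-three} supplies when $|N(a_v)\cap N(u_v)|\geq 3$; the residual case $N(a_v)\cap N(u_v)=\{b_v,x\}$ would need a direct verification of the domination condition using planarity (a $K_{3,3}$-minor argument), which I expect to be the secondary technical point.

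Finally, for the complexity claim, the plane embedding is computed once in linear time and the candidate set $N(a_v)\cap N(u_v)$, together with the face incidences needed to select a consecutive $x$, is then scanned in linear time. To summarise, the routine existence and tree-breadth monotonicity reductions are straightforward once the structural facts from Theorem~\ref{th:GminusVprime?} and the common-bag consequence of Lemma~\ref{lem:strong-pair} are in place (with Lemma~\ref{lem:dominator-in-bag} and Lemma~\ref{lem:contraction-closed} used as in the earlier cases), so the real work — and the place I would spend most care — is the planar realisation of a consecutive common neighbour $x$ of $a_v,u_v$ through which the edge $b_vx$ can be drawn.
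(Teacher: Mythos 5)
Your skeleton (existence of $x$, planarity via Corollary~\ref{cor:make-sep-cyc}, tree-breadth via the strong-pair/Helly machinery, linear-time selection) matches the paper's, and your existence argument for $x$ is correct (it is a valid alternative to the paper's: the paper instead notes that $N(a_v)\cap N(u_v)$ must meet every component of $G\setminus\{a_v,b_v,c_v,u_v,v\}$). But the two obstacles you flag are genuine gaps, and both are closed by one observation you missed: the relevant separators can be taken to be minimal separators of $G$ \emph{itself}, not of a vertex-deleted subgraph. For planarity, instead of working in $H=G\setminus\{c_v,v\}$ (where Corollary~\ref{cor:make-sep-cyc} gives you a supergraph of $H$, forcing you to ``route'' $\{b_v,x\}$ and then re-insert $c_v,v$ --- the step you could not complete, and whose biconnectivity prerequisite for $H$ you also did not check), observe that $S:=(N(a_v)\cap N(u_v))\cup\{v\}$ is a \emph{minimal} $a_vu_v$-separator of $G$: every $s\in N(a_v)\cap N(u_v)$ lies on the path $(a_v,s,u_v)$, and $v$ lies on the path $(a_v,v,c_v,u_v)$, each of which meets $S$ only in the designated vertex (here one uses $c_v\notin N(a_v)$ and $v\notin N(u_v)$). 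Corollary~\ref{cor:make-sep-cyc} applied directly to $G$ then yields, in linear time, a planar supergraph $G_S$ in which $S$ induces a cycle; since $b_v,v\in S$ are adjacent in $G_S$, the other cycle-neighbour of $b_v$ is some $x\in(N(a_v)\cap N(u_v))\setminus b_v$, and the edge $\{b_v,x\}$ is already an edge of the planar graph $G_S$, so $G'\subseteq G_S$ is planar with no routing or re-insertion argument at all.

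The same observation repairs your secondary gap. You invoke Corollary~\ref{cor:strong-three}, which needs $|N(a_v)\cap N(u_v)|\geq 3$, and leave the case $N(a_v)\cap N(u_v)=\{b_v,x\}$ to an unspecified $K_{3,3}$ argument. Instead, note that $\{b_v,c_v,v\}$ is a full component of $G\setminus\{a_v,u_v\}$ (by Theorem~\ref{th:GminusVprime?}, $N(c_v)=\{b_v,u_v,v\}$, and $N(b_v)=\{a_v,c_v,u_v,v\}$ by hypothesis), it is not all of $G\setminus\{a_v,u_v\}$ since $G$ prime forces $N(a_v)\setminus(\Pi_v\cup\{v\})\neq\emptyset$, and primeness forces every other component of $G\setminus\{a_v,u_v\}$ to be full as well; hence $\{a_v,u_v\}$ is a minimal $2$-separator of $G$ and Corollary~\ref{cor:strong-sep} supplies the hypothesis of Lemma~\ref{lem:strong-pair} unconditionally. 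With that in place, your common-bag argument (a bag containing all of $N(a_v)\cap N(u_v)\supseteq\{b_v,x\}$, then adding the intra-bag edge) is a legitimate, slightly different route to $tb(G)=1\Rightarrow tb(G')=1$: the paper instead shows $T_{a_v}\cap T_{u_v}\neq\emptyset$, strips $b_v,c_v,v$ from the bags via Lemma~\ref{lem:sep-keep}, applies Helly to get a bag containing $a_v,x,u_v$, and attaches the new bag $N_{G'}[b_v]=\{a_v,b_v,c_v,u_v,v,x\}$. Both versions work for an \emph{arbitrary} $x\in(N(a_v)\cap N(u_v))\setminus b_v$, which is exactly what is needed, since the planarity step --- not the tree-breadth step --- dictates which $x$ is chosen.
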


\begin{figure}[h!]
 \centering
 \includegraphics[width=0.45\textwidth]{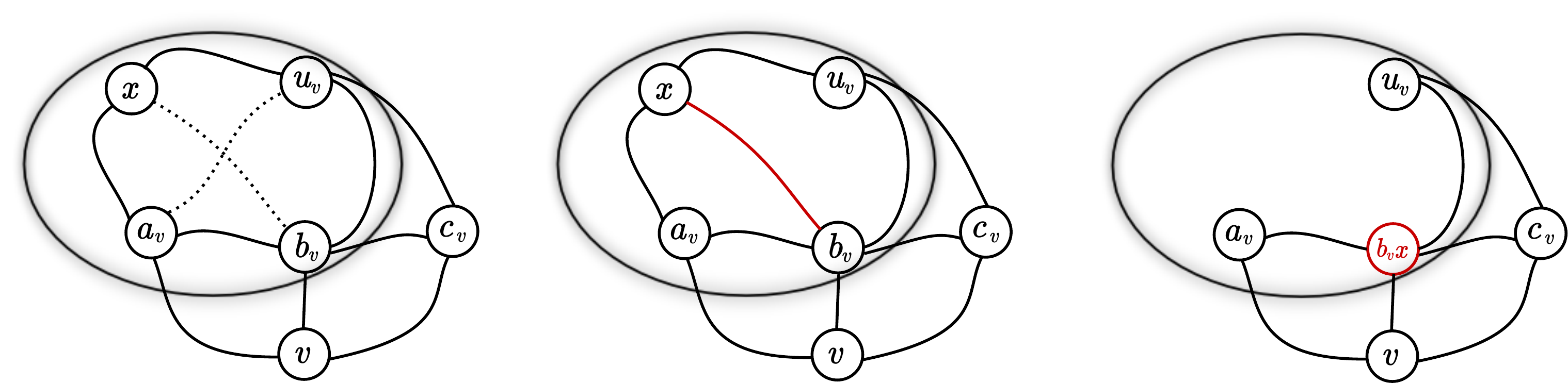}
 \caption{Cases when one of Theorem~\ref{claim:connect-vertex-b} or Theorem~\ref{lem:clique-case-2} applies and vertex $b_v$ can be eventually contracted to another vertex.}
 \label{fig:planar-clique-contraction}
\end{figure}

\begin{proof}
First, we claim that $(a_v,u_v)$ is a minimal $2$-separator of $G$.
Indeed, by the hypothesis $c_v$ and $u_v$ are adjacent, therefore, by Theorem~\ref{th:GminusVprime?} $N_G(c_v) = \{b_v,u_v,v\}$.
In addition, $N(b_v) = \{a_v,c_v,u_v,v\}$ by the hypothesis.
Last, since $G$ is prime by the hypothesis, therefore, $N(a_v) \setminus (\Pi_v \cup \{v\}) \neq \emptyset$, and so, since $a_v$ and $u_v$ are non-adjacent by the hypothesis, $V(G) \setminus (a_v,b_v,c_v,u_v,v) \neq \emptyset$.
As a result, $(a_v,u_v)$ is a minimal $2$-separator of $G$ with $\{b_v,c_v,v\}$ being a full component of $G \setminus (a_v,u_v)$.

Since $N(a_v) \cap N(u_v)$ is an $a_vu_v$-separator of $G \setminus (c_v,v)$ by the hypothesis, therefore, $N(a_v) \cap N(u_v) \setminus b_v \neq \emptyset$, for it has to contain a vertex from every component of $G \setminus (a_v,b_v,c_v,u_v,v)$.
For now, let $x \in N(a_v) \cap N(u_v) \setminus b_v$ be arbitrary.
Let us prove that $tb(G) = 1$ implies that $tb(G') = 1$ where $G'$ is obtained by adding an edge between $b_v$ and $x$ (for now, $G'$ may not be planar, depending on the choice for $x$).
To prove this, let $(T,{\cal X})$ be a star-decomposition of $G$, that exists by Lemma~\ref{lem:dominator-in-bag}, minimizing the distance in $T$ between the subtrees $T_{a_v}$ and $T_{u_v}$.
We claim that $T_{a_v} \cap T_{u_v} \neq \emptyset$.
By contradiction, if  $T_{a_v} \cap T_{u_v} = \emptyset$, then by Corollary~\ref{cor:strong-sep}, there are two bags $B_{a_v},B_{u_v}$ that are adjacent in $T$ and such that $a_v \in B_{a_v} \setminus B_{u_v}, u_v \in B_{u_v} \setminus B_{a_v}$ respectively dominate $B_{a_v},B_{u_v}$.
However, this implies by the properties of a tree decomposition that $B_{a_v} \cap B_{u_v} \subseteq N(a_v) \cap N(u_v)$ is an $a_vu_v$-separator of $G$.
Since the $a_vu_v$-path $(a_v,v,c_v,u_v)$ does not intersect $N(a_v) \cap N(u_v)$, that is clearly a contradiction, and so, $T_{a_v} \cap T_{u_v} \neq \emptyset$.  

Furthermore, since there is a full component of $G \setminus (a_v,u_v)$ in the subgraph $G \setminus (b_v,c_v,v)$, therefore, by Lemma~\ref{lem:sep-keep} the removal of vertices $b_v,c_v,v$ from all bags in ${\cal X}$ leads to a tree decomposition $(T,{\cal X}^-)$ of breadth one of $G \setminus (b_v,c_v,v)$. 
Let $(T',{\cal X}')$ be a reduced star-decomposition obtained from $(T,{\cal X}^-)$, thar exists by Lemma~\ref{lem:dominator-in-bag}.
Since the subtrees $T'_{a_v},T'_x,T'_{u_v}$ are pairwise intersecting (because $x \in N(a_v) \cap N(u_v)$ and $T_{a_v} \cap T_{u_v} \neq \emptyset$), therefore by the Helly property (Lemma~\ref{lem:helly}) $T'_{a_v} \cap T'_x \cap T'_{u_v} \neq \emptyset$.
Let $B \in T'_{a_v} \cap T'_x \cap T'_{u_v}$.
To obtain a star-decomposition of $G'$, it now suffices to make the bag $B$ adjacent to the new bag $N_{G'}[b_v] = \{a_v,b_v,c_v,u_v,v,x\}$.

\medskip
\noindent
The above result holds for any choice of vertex  $x \in (N_G(a_v) \cap N_G(u_v)) \setminus b_v$.
Let us finally prove that one such a vertex $x$ exists so that $G'$ is planar.
Indeed, since $N(a_v) \cap N(u_v)$ is an $a_vu_v$-separator of $G \setminus (c_v,v)$ by the hypothesis, therefore, $S := (N(a_v) \cap N(u_v)) \cup \{v\}$ is an $a_vu_v$-separator of $G$, and in particular it is a minimal $a_vu_v$-separator (because for every vertex $s \in S$, there is an $a_vu_v$-path that intersects $S$ only in $s$).
By Corollary~\ref{cor:make-sep-cyc}, it can be computed in linear-time a planar supergraph $G_S$ of $G$ so that $S$ induces a cycle of $G_S$. 
Then, let $N_{G_S}(b_v) \cap S = \{x,v\}$, by construction the graph $G'$ is planar for such a choice of vertex $x$.
\end{proof}

In Theorem~\ref{claim:connect-vertex-b}, we show conditions so that vertex $b_v$ can be made adjacent to some other vertex of $N_G(a_v) \cap N_G(u_v)$.
Lemma~\ref{claim:sep-b} completes the picture by proving that if it is the case that $N_G(a_v) \cap N_G(u_v) \cap N_G(b_v) \neq \emptyset$, then $|N_G(a_v) \cap N_G(u_v) \cap N_G(b_v)|=1$ and vertex $b_v$ has exactly five neighbours. 

\begin{lemma}\label{claim:sep-b}
Let $G$ be a prime planar graph, let $v$ be a leaf-vertex of Type 2, $\Pi_v = (a_v,b_v,c_v)$ be as in Definition~\ref{def:leafVertex}, and let $u_v \notin \Pi_v \cup \{v\}$ be such that $(b_v,u_v)$ is an edge-separator of $G \setminus v$. 

Suppose $u_v \in N_G(c_v) \setminus N_G(a_v)$ and there exists $x \in N_G(a_v) \cap N_G(u_v) \cap N_G(b_v)$.

Then, $N_G(b_v) = \{a_v,c_v,u_v,v,x\}$. 
\end{lemma}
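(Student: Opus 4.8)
The plan is to pin down the local structure around $b_v$ first, and then to show that any further neighbour of $b_v$ would create a clique-separator, contradicting the primality of $G$. First I would record the adjacencies forced by the hypotheses. Since $(b_v,u_v)$ is an edge-separator of $G\setminus v$, the set $\{b_v,u_v\}$ is a clique minimal separator of $G\setminus v$, so Theorem~\ref{th:GminusVprime?} applies: as $u_v\in N(c_v)$, vertex $c_v$ is simplicial in $G\setminus v$ with $N_{G\setminus v}(c_v)=\{b_v,u_v\}$, whence $N_G(c_v)=\{b_v,u_v,v\}$, and $C:=V\setminus(\Pi_v\cup\{v\})$ is a full component of $G\setminus\Pi_v$. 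With $x\in N(a_v)\cap N(u_v)\cap N(b_v)$ given, the six vertices $a_v,b_v,c_v,v,u_v,x$ are pairwise distinct, and one checks that $\Gamma=(a_v,v,c_v,u_v,x)$ is an \emph{induced} $5$-cycle of $G$: its would-be chords $\{a_v,c_v\},\{a_v,u_v\},\{v,u_v\},\{v,x\},\{c_v,x\}$ are all non-edges, the last two because $N_G(v)=\{a_v,b_v,c_v\}$ and $N_G(c_v)=\{b_v,u_v,v\}$. Crucially, $b_v$ is adjacent to every vertex of $\Gamma$, so $\{b_v\}\cup\Gamma$ spans a wheel.

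The heart of the argument is a trapped-component argument using planarity. Suppose for contradiction that $b_v$ has a neighbour $w\notin\{a_v,c_v,u_v,v,x\}$; then $w\in C$. Fix a plane embedding of $G$. Since $\Gamma$ is an induced cycle and $b_v$ is joined to all five of its vertices, the spokes from $b_v$ cut the disk bounded by $\Gamma$ that contains $b_v$ into five triangular regions, each bounded by $b_v$ together with two consecutive vertices of $\Gamma$; these are the only faces of the wheel incident to $b_v$. Hence the edge $b_vw$ lies in one such region $T$, and since $w$ is not a vertex of the wheel it lies strictly inside $T$. Let $D$ be the connected component of $w$ in the open region $T$. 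No edge of $G$ may cross the bounding triangle of $T$, so $N_G(D)$ is contained in the three boundary vertices $\partial T$. But $\partial T$ consists of $b_v$ and two \emph{consecutive} vertices of $\Gamma$, hence induces a triangle, i.e.\ a clique; and the three vertices of $\Gamma$ not on $\partial T$ lie outside the closure of $T$, so $D$ is a nonempty set properly separated from them by $\partial T$. Therefore $\partial T$ is a clique-separator of $G$, contradicting that $G$ is prime. This forces $N_G(b_v)=\{a_v,c_v,u_v,v,x\}$.

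The main obstacle, and the step deserving the most care, is the planarity bookkeeping: establishing that the five spokes really do partition the relevant disk into exactly five triangular faces (this is precisely where the induced $5$-cycle $\Gamma$ and the adjacency of $b_v$ to all of $\Gamma$ are used, via the Jordan curve $\Gamma$ splitting the sphere into two disks with all spokes confined to the one containing $b_v$), and then arguing that the \emph{whole} component $D$ of a trapped vertex — not merely $w$ — has all of its neighbours on the triangular boundary. Once this is set up, the contradiction is uniform over all five possible positions of $w$: each boundary triple $\partial T$ is automatically a clique and always leaves the three off-boundary vertices of $\Gamma$ on the far side, so no case analysis on the type of $w$, nor any appeal to extra non-adjacencies (e.g.\ via Lemma~\ref{lem:unique-neighbour}), is needed.
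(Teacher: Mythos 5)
Your proof is correct, but it reaches the contradiction by a genuinely different route than the paper. Both arguments open the same way: Theorem~\ref{th:GminusVprime?} yields $N_G(c_v)=\{b_v,u_v,v\}$, and one assumes an extra neighbour $w$ of $b_v$ for contradiction. The paper then stays purely combinatorial: it takes a component $C$ of $G\setminus\{a_v,b_v,c_v,u_v,v,x\}$ with $b_v\in N(C)$, observes $N(C)\subseteq\{a_v,b_v,u_v,x\}$, uses primality to force $a_v,u_v\in N(C)$ (otherwise $N(C)$ would lie inside one of the triangles $\{a_v,b_v,x\}$ or $\{b_v,u_v,x\}$ and be a clique-separator), and finally exhibits a $K_{3,3}$-minor with sides $\{a_v,b_v,u_v\}$ and $\{C,x,\{c_v,v\}\}$, contradicting planarity. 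You instead package the six vertices as an induced $5$-cycle $\Gamma=(a_v,v,c_v,u_v,x)$ with hub $b_v$ and run a topological argument: the embedded wheel traps the component of $w$ inside one triangular face, whose boundary triple is a clique-separator, contradicting primality. So the two proofs are roughly dual: the paper uses primality en route and lands the contradiction on planarity, while you use planarity en route (the trapping) and land it on primality. The paper's version is shorter and matches its standing toolkit ($K_{3,3}$-minor extractions recur throughout the paper); yours is uniform over the position of $w$, cleanly separates the roles of the two hypotheses, and in fact proves the more general statement that in a prime planar graph a vertex adjacent to all vertices of an induced cycle of length at least four has no further neighbours. The price you pay is the embedding bookkeeping — that the spokes' cyclic order around $b_v$ agrees with the rim order, so the wheel's faces at $b_v$ are exactly the five triangles — which you correctly flag as the delicate step; it is a standard fact about embedded wheels, but if written out in full it would make your proof somewhat longer than the paper's. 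One minor wording point: an edge-separator of $G\setminus v$ is not automatically a \emph{minimal} separator; you should note that it contains a minimal clique separator, to which Theorem~\ref{th:GminusVprime?} applies (and whose conclusion then forces that minimal separator to be $\{b_v,u_v\}$ itself). The paper elides the same point, so this is cosmetic rather than a gap.
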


\begin{figure}[h!]
	\centering
	\includegraphics[width=0.15\textwidth]{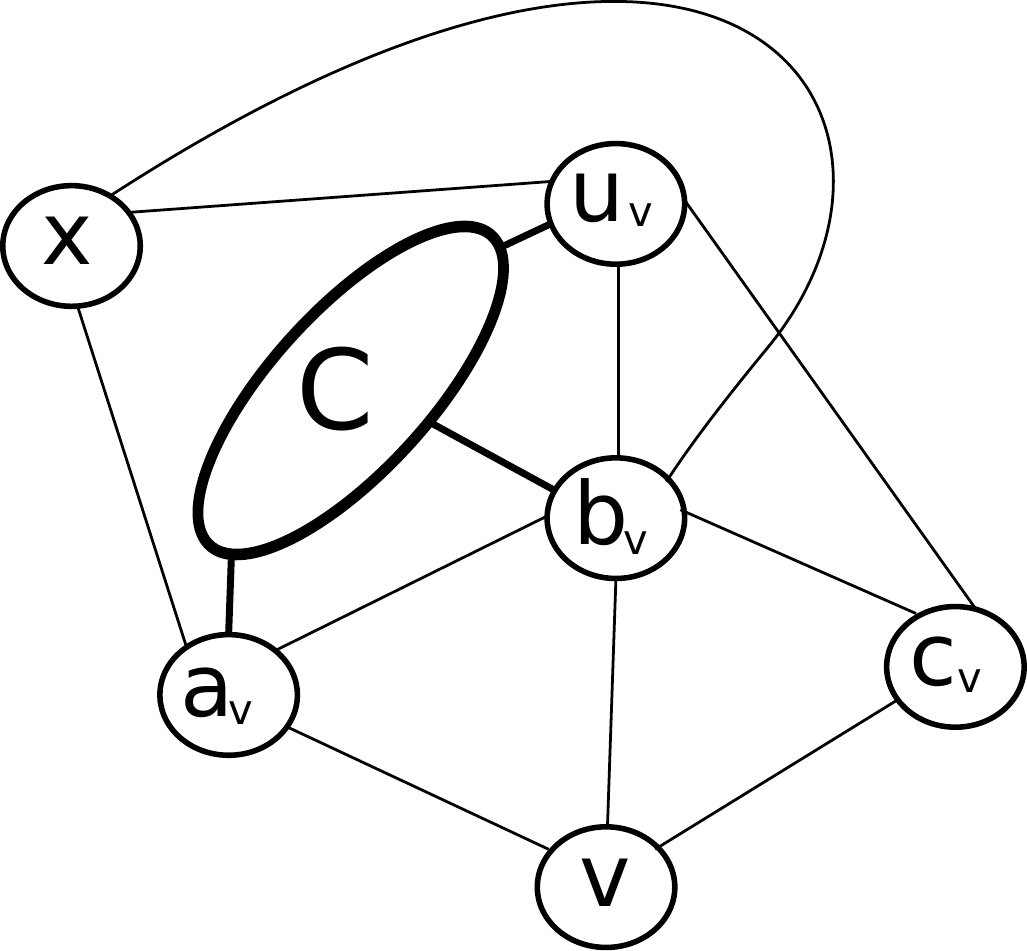}
	\caption{Case when $N_G(b_v) \neq \{a_v,c_v,u_v,v,x\}$.}
	\label{fig:single-neighbour-b}
\end{figure}

\begin{proof}
By contradiction, let $C$ be a component of $G \setminus (a_v,b_v,c_v,u_v,v,x)$ such that $b_v \in N(C)$ (see Figure~\ref{fig:single-neighbour-b} for an illustration).
By Theorem~\ref{th:GminusVprime?} $N_G(c_v) = \{b_v,u_v,v\}$, therefore, $c_v,v \notin N(C)$.
It follows that $N(C)$ is a separator of $G$.
In particular, $N(C) \subseteq \{a_v,b_v,u_v,x\}$, so, $a_v,u_v \in N(C)$ or else $N(C)$ should be a clique-separator of the prime graph $G$.
As a result, there is a $K_{3,3}$-minor with $\{a_v,b_v,u_v\}$ and $\{C,x,\{c_v,v\}\}$ being the two sides of the bipartition.
It contradicts the fact that $G$ is planar by the hypothesis.
\end{proof}

\begin{theorem}\label{lem:clique-case-2}
Let $G$ be a prime planar graph, let $v$ be a leaf-vertex of Type 2, $\Pi_v = (a_v,b_v,c_v)$ be as in Definition~\ref{def:leafVertex}, and let $u_v \notin \Pi_v \cup \{v\}$ be such that $(b_v,u_v)$ is an edge-separator of $G \setminus v$. 

Suppose $u_v \in N(c_v) \setminus N(a_v)$, $N(a_v) \cap N(u_v)$ is an $a_vu_v$-separator of $G \setminus (c_v,v)$, and either $N(b_v) = \{a_v,c_v,u_v,v\}$ or $N(b_v) \cap N(a_v) \cap N(u_v) \neq \emptyset$.

Then, there is $x \in N(a_v) \cap N(u_v)$ such that one of the following must hold:
\begin{itemize}
\item $V(G)= \{a_v,b_v,c_v,u_v,v,x\}$, and $G$ admits a star-decomposition with two bags $N_G[b_v],N_G[x]$;
\item or $\Pi'=(a_v,x,u_v) \in {\cal P}_3(G)$, and let $G'$ be obtained from $G$ by adding the edge $\{b_v,x\}$ (if it is not already present) then contracting this edge.
The graph $G'$ is planar and prime, furthermore $tb(G) = 1$ if and only if $tb(G') = 1$.
\end{itemize}
Moreover, vertex $x$ can be computed in linear-time.
\end{theorem}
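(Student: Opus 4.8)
The plan is to reduce everything to the fixed local structure around the gadget $\{b_v,c_v,v\}$, dispose of one degenerate configuration explicitly, and then establish the four substantive claims ($\Pi'\in{\cal P}_3(G)$, planarity of $G'$, primality of $G'$, and the $tb$-equivalence), leaving the backward implication of the equivalence as the genuinely hard part.

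First I would record the structural consequences of the hypotheses. Since $v$ is of Type $2$ and $u_v\in N(c_v)$, Theorem~\ref{th:GminusVprime?} yields $N_G(c_v)=\{b_v,u_v,v\}$ and that $C:=V\setminus(\Pi_v\cup\{v\})$ is a full component of $G\setminus\Pi_v$; together with $u_v\notin N(a_v)$ and the assumption that $N(a_v)\cap N(u_v)$ separates $a_v$ from $u_v$ in $G\setminus(c_v,v)$, this shows that $\{a_v,u_v\}$ is a minimal $2$-separator of $G$ whose gadget-side full component is exactly $\{b_v,c_v,v\}$. I would then produce $x$ according to the two subcases. If $N(b_v)\cap N(a_v)\cap N(u_v)\neq\emptyset$, Lemma~\ref{claim:sep-b} forces $N_G(b_v)=\{a_v,c_v,u_v,v,x\}$ for the unique common neighbour $x\in N(a_v)\cap N(u_v)$, so $\{b_v,x\}\in E$; if instead $N(b_v)=\{a_v,c_v,u_v,v\}$, Theorem~\ref{claim:connect-vertex-b} supplies in linear time an $x\in(N(a_v)\cap N(u_v))\setminus b_v$ for which $G+\{b_v,x\}$ is planar and still has $tb\le 1$ whenever $tb(G)=1$. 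This also settles the linear-time computability of $x$.

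Next I would treat the case $V=\{a_v,b_v,c_v,u_v,v,x\}$ by exhibiting $\{N_G[b_v],N_G[x]\}$: both bags are closed neighbourhoods, hence star-dominated, every edge of this fixed six-vertex graph lies in one of them, and the two bags meet in $\{a_v,u_v\}$, so this is a star-decomposition and $tb(G)=1$. In the complementary case I claim $\Pi'=(a_v,x,u_v)\in{\cal P}_3(G)$: as $a_v,u_v$ are non-adjacent and $x\in N(a_v)\cap N(u_v)$, the triple induces a $P_3$, and since the only neighbours of $\{b_v,c_v,v\}$ outside itself lie in $\{a_v,u_v,x\}$, deleting $\Pi'$ isolates $\{b_v,c_v,v\}$ from the now non-empty set $V\setminus\{a_v,b_v,c_v,u_v,v,x\}$. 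Planarity of $G'$ follows from Theorem~\ref{claim:connect-vertex-b} (for the added edge, when needed) and the fact that contraction preserves planarity. For primality I would suppose a minimal clique-separator $S'$ of $G'$ and let $w$ be the contracted vertex; distinguishing $w\in S'$ from $w\notin S'$ and using that the only adjacencies created by the contraction are incident to $w$, I would invoke Lemma~\ref{lem:separatorInCompo} to pull $S'$ back to a clique-separator of $G$, contradicting that $G$ is prime.

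The forward implication $tb(G)=1\Rightarrow tb(G')=1$ is then immediate: Theorem~\ref{claim:connect-vertex-b} gives $tb(G+\{b_v,x\})=1$, contracting $\{b_v,x\}$ cannot increase the tree-breadth by Lemma~\ref{lem:contraction-closed}, and $G'$ has an edge so $tb(G')\ge 1$. The backward implication $tb(G')=1\Rightarrow tb(G)=1$ is where I expect the bulk of the work. My plan is to take a reduced star-decomposition $(T',{\cal X}')$ of $G'$ and reconstruct one of $G$ by splitting the contracted vertex $w$ and re-inserting the fixed gadget block $G[\{a_v,b_v,c_v,u_v,v,x\}]$ (which has a star-decomposition one of whose bags is $\{a_v,u_v,x\}$) along the $P_3$-separator $\Pi'$. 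The delicate point, and the main obstacle, is to guarantee that the restriction of $(T',{\cal X}')$ to the outer part $G\setminus\{b_v,c_v,v\}=G'\setminus\{c_v,v\}$ can be taken so that a single bag contains all of $\{a_v,x,u_v\}$: this requires forcing the non-adjacent pair $(a_v,u_v)$ into the same bag or into two adjacent bags, which I would obtain by minimizing the distance between $T'_{a_v}$ and $T'_{u_v}$ and applying Corollary~\ref{cor:strong-sep} or Corollary~\ref{cor:strong-three}, exploiting that $x\in N(a_v)\cap N(u_v)$ and that $N(a_v)\cap N(u_v)$ behaves as a minimal $a_vu_v$-separator. Once such a bag is available, joining it to the gadget bag $\{a_v,u_v,x\}$ produces a star-decomposition of $G$, completing the equivalence.
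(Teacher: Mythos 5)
Your overall route is the same as the paper's: the same two-case choice of $x$ (Lemma~\ref{claim:sep-b} when $N(b_v)\cap N(a_v)\cap N(u_v)\neq\emptyset$, Theorem~\ref{claim:connect-vertex-b} when $N(b_v)=\{a_v,c_v,u_v,v\}$), the same dispatch of the six-vertex case, the same proof that $\Pi'\in{\cal P}_3(G)$, planarity via Theorem~\ref{claim:connect-vertex-b} plus contraction, and the same forward implication via Lemma~\ref{lem:contraction-closed}. However, two of the steps you sketch do not go through as stated.

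First, primality of $G'$. Your plan to ``pull $S'$ back to a clique-separator of $G$'' works when the contracted vertex $x'$ is not in $S'$, but fails exactly in the hard case $x'\in S'$. The natural pullback $S=(S'\setminus x')\cup\{b_v,x\}$ \emph{is} a separator of $G$ (indeed $G\setminus S$ and $G'\setminus S'$ are the same graph), but it need not be a clique: a vertex of $S'\setminus x'$ is adjacent to $x'$ in $G'$ as soon as it is adjacent to \emph{one} of $b_v,x$ in $G$, and $b_v,x$ themselves may be non-adjacent in $G$. In fact, since $G$ is prime, $S$ \emph{cannot} be a clique, so no contradiction is obtainable along this line, and Lemma~\ref{lem:separatorInCompo} does not repair it. The paper's actual argument is substantively different: it deduces that $c_v\in S\cap S'$ or $v\in S\cap S'$ (by splitting on whether $S\setminus b_v$ separates $G$, using that $N_G(b_v)\setminus x$ induces the path $(a_v,v,c_v,u_v)$), concludes that the clique $S'$ lies inside one of the triangles $(a_v,x',v)$, $(v,x',c_v)$, $(c_v,x',u_v)$, and then shows such a set cannot disconnect $G'$, because every component $C'$ of $G\setminus\Pi'$ has both $a_v$ and $u_v$ in its neighbourhood (by primality of $G$) and, after contracting $C'$, the $4$-cycle $(a_v,u_v,c_v,v)$ meets $S'$ in at most an edge. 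This case analysis is the real content of the primality proof and is missing from your sketch.

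Second, in the backward implication you apply Corollary~\ref{cor:strong-sep} to the pair $(a_v,u_v)$ in a star-decomposition of $G'$ and then assert that a single bag containing $\{a_v,x',u_v\}$ ``is available.'' That is only true in the same-bag outcome of the corollary, where the Helly property (Lemma~\ref{lem:helly}) applied to $T'_{a_v},T'_{u_v},T'_{x'}$ yields the desired bag. In the adjacent-bags outcome no bag contains all three, so your gluing of the gadget block along $\Pi'$ cannot be performed; that outcome must be \emph{refuted}, not used. The refutation is short but indispensable: in that case $B_{a_v}\cap B_{u_v}\subseteq N_{G'}(a_v)\cap N_{G'}(u_v)$ would be an $a_vu_v$-separator of $G'$ by the properties of tree decompositions, contradicting the existence of the path $(a_v,v,c_v,u_v)$ in $G'$, whose internal vertices $v,c_v$ are not common neighbours of $a_v$ and $u_v$. (Relatedly, your phrase that $N(a_v)\cap N(u_v)$ ``behaves as a minimal $a_vu_v$-separator'' points the wrong way: it separates only in $G\setminus(c_v,v)$, and it is precisely its failure to separate in $G'$ that drives the contradiction.) A last, non-load-bearing slip: your opening claim that $\{a_v,u_v\}$ is a minimal $2$-separator of $G$ with full component exactly $\{b_v,c_v,v\}$ is false when $x\in N(b_v)$; only the weaker statement that $\Pi'$ separates $\{b_v,c_v,v\}$ from the rest holds, and fortunately that is all you use later.
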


\begin{proof}
There are two cases.
If $N_G(b_v) = \{a_v,c_v,u_v,v\}$, then let $x$ be set as in the statement of Theorem~\ref{claim:connect-vertex-b}.
Else, let $x$ be the unique vertex of $N(b_v) \cap N(a_v) \cap N(u_v)$, that is well-defined by Lemma~\ref{claim:sep-b}.
Note that in both cases, vertex $x$ can be computed in linear-time.
In addition, $N(b_v) \subseteq \{a_v,c_v,u_v,v,x\}$ (the latter property following from Lemma~\ref{claim:sep-b} when $b_v$ and $x$ are adjacent, and being trivial else).
Suppose for the proof that $V(G) \neq \{a_v,b_v,c_v,u_v,v,x\}$ (else, Theorem~\ref{lem:clique-case-2} is trivial).
We claim that $\{b_v,c_v,v\}$ is a component of $G \setminus \Pi'$.
Indeed, $N(b_v) \subseteq \Pi' \cup \{c_v,v\}$ by the hypothesis, and by Theorem~\ref{th:GminusVprime?} $N_G(c_v) = \{b_v,u_v,v\}$.
Since $V(G) \neq \{a_v,b_v,c_v,u_v,v,x\}$, then it indeed follows that $\Pi' \in {\cal P}_3(G)$, with $\{b_v,c_v,v\}$ being a component of $G \setminus \Pi'$.

Let us prove that $G'$ is prime and planar.
By Theorem~\ref{claim:connect-vertex-b}, adding an edge between $b_v$ and $x$ if it is not already present does not violate the property for the graph $G$ to be planar.
Therefore, $G'$ is planar because it is obtained by an edge-contraction from some planar graph.
To prove that $G'$ is prime, by contradiction suppose the existence of a minimal clique-separator $S'$ of $G'$.

Let us denote by $x'$ the vertex resulting from the contraction of the edge $\{b_v,x\}$.
Let $S := S'$ if $x' \notin S'$, $S := (S' \setminus x') \cup \{b_v,x\}$ else.
By construction, $S$ is a separator of $G$.
In particular, $S$ is not a clique because $G$ is prime by the hypothesis.
Therefore, $S \neq S'$, whence $x' \in S'$ or equivalently, $x,b_v \in S$.
We now claim that $c_v \in S \cap S'$ or $v \in S \cap S'$ (possibly, $v,c_v \in S \cap S')$.
There are two cases.

\begin{itemize}
\item Suppose that $S \setminus b_v$ is a separator of $G$.
Then, $S \setminus b_v$ is not a clique because $G$ is prime by the hypothesis.
Since $S \setminus (b_v,x) = S' \setminus x'$ is a clique, there must be some vertex of $S \setminus (b_v,x) = S \cap S'$ that is adjacent to $x'$ in $G'$ but non-adjacent to $x$ in $G$.
Consequently, $v \in S \cap S'$ or $c_v \in S \cap S'$.
\item Else, $S \setminus b_v$ is not a separator of $G$.
Recall that by construction, $S$ is a separator of $G$.
In particular, there must be two neighbours of $b_v$ in $G$ that are separated by $S$ in $G$.
Since $N_G(b_v) \setminus x$ induces the path $(a_v,v,c_v,u_v)$, it follows that $S$ must contain an internal node of the path, whence $c_v \in S \cap S'$ or $v \in S \cap S'$.
\end{itemize}
However, in such case $S'$ must be contained in one of $(a_v,x',v), \ (v,x',c_v)$ or $(c_v,x',u_v)$, for it is a clique of $G'$.
In particular, let $z \in \{a_v,u_v\} \setminus S'$.
Since $z$ has a neighbour in every component $C'$ of $G \setminus \Pi'$, $\{z\} \cup C'$ is not disconnected by $S'$ in $G'$. 
Furthermore, let us contract $C'$ to $z$ so as to make $a_v$ and $u_v$ adjacent, $S'$ intersects the resulting cycle $(a_v,u_v,c_v,v)$ either in an edge (different from $\{a_v,u_v\}$) or a single vertex because it is a clique of $G'$, therefore, $(a_v,u_v,c_v,v) \setminus S'$ is not disconnected by $S'$.
Altogether, this contradicts the fact that $S'$ is a separator of $G'$, and so, $G'$ is prime.

\medskip
\noindent
Finally, let us prove that $tb(G') = 1$ if and only if $tb(G) = 1$.
If $tb(G) = 1$, then let us assume $b_v$ and $x$ to be adjacent (if they are not, then Theorem~\ref{claim:connect-vertex-b} ensures we can add the edge without violating the property for the graph to be of tree-breadth one).
Then, $tb(G') = 1$ because it is obtained by an edge-contraction from some graph with tree-breadth one and that tree-breadth is contraction-closed by Lemma~\ref{lem:contraction-closed}.

Conversely, let us prove that $tb(G') = 1$ implies that $tb(G) = 1$.
To prove this, let $(T,{\cal X})$ be a star-decomposition of $G'$, that exists by Lemma~\ref{lem:dominator-in-bag}, minimizing the distance in $T$ between the subtrees $T_{a_v}$ and $T_{u_v}$.
We claim that $T_{a_v} \cap T_{u_v} \neq \emptyset$.
By contradiction, suppose $T_{a_v} \cap T_{u_v} = \emptyset$.
Recall that $(a_v,x',u_v) \in {\cal P}_3(G')$ (because $\Pi' \in {\cal P}_3(G)$)  and $G'$ is prime, therefore one of $(a_v,x',u_v)$ or $(a_v,u_v)$ is a minimal separator of $G'$.
Since we assume the distance in $T$ between $T_{a_v}$ and $T_{u_v}$ to be minimized, by Corollary~\ref{cor:strong-sep}, there are two bags $B_{a_v},B_{u_v}$ that are adjacent in $T$ so that $a_v \in B_{a_v} \setminus B_{u_v}, u_v \in B_{u_v} \setminus B_{a_v}$ respectively dominate $B_{a_v},B_{u_v}$.
However, by the properties of a tree decomposition $B_{a_v} \cap B_{u_v} \subseteq N(a_v) \cap N(u_v)$ is an $a_vu_v$-separator of $G'$, that is impossible due to the existence of the path $(a_v,v,c_v,u_v)$ in $G'$ that does not intersect $N(a_v) \cap N(u_v)$.
Therefore, $T_{a_v} \cap T_{u_v} \neq \emptyset$. 
Hence the subtrees $T_{a_v},T_{x'},T_{u_v}$ are pairwise intersecting and so, by the Helly Property (Lemma~\ref{lem:helly}), $T_{a_v} \cap T_{x'} \cap T_{u_v} \neq \emptyset$.
Furthermore, $N_{G'}[c_v] \cup N_{G'}[v] \subseteq N_{G'}[x']$ by construction. 
So, let us construct a tree decomposition of $G$ of breadth one as follows.
First, let us remove $c_v$ and $v$ from all bags in ${\cal X}$.
Since $N_{G'}[c_v] \cup N_{G'}[v] \subseteq N_{G'}[x']$, one obtains a tree decomposition of $G' \setminus (c_v,v)$ of breadth one.
Then let us replace $x'$ with $x$ in all bags.
Note that in so doing, one obtains a tree decomposition of $G \setminus (b_v,c_v,v)$ of breadth one.
Finally, let us make adjacent the new bag $N_G[b_v]$ with any bag of $T_{a_v} \cap T_{x} \cap T_{u_v}$.
The result is indeed a tree decomposition of $G'$ because $N_G[b_v] \subseteq \{a_v,b_v,c_v,u_v,v,x\}$ and $N_G[c_v] \cup N_G[v] \subseteq N_G[b_v]$.  
\end{proof}

\subsubsection{Proof of~\ref{step:new-clique-sep}~\ref{step:separation}~\ref{step:no-diamond-case}}

\begin{theorem}\label{lem:final-case-1}
Let $G$ be a prime planar graph, let $v$ be a leaf-vertex of Type 2, $\Pi_v = (a_v,b_v,c_v)$ be as in Definition~\ref{def:leafVertex}, and let $u_v \notin \Pi_v \cup \{v\}$ be such that $(b_v,u_v)$ is an edge-separator of $G \setminus v$. 

Suppose $u_v \in N(c_v) \setminus N(a_v)$, $N(a_v) \cap N(u_v)$ is an $a_vu_v$-separator in the subgraph $G \setminus (c_v,v)$, $N(b_v) \neq \{a_v,c_v,u_v,v\}$ and $N(a_v) \cap N(b_v) \cap N(u_v) = \emptyset$.

Then it can be computed in linear-time (a unique) $x \in N(a_v) \cap N(u_v)$ such that if $tb(G) = 1$, $N(b_v) \cap N(x)$ is a $b_vx$-separator and $|N(b_v) \cap N(x)| \geq 3$.
\end{theorem}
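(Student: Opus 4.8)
The plan is to reduce everything to a single minimal separator of $G$ and to exploit the fact that star-decompositions are clique-trees of chordal supergraphs. I will use $tb(G)=1$ only where the statement requires it; the vertex $x$ itself will be produced by a purely structural (hence $tb$-independent) rule, which is what makes it computable in linear time.

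\emph{Preliminaries.} Since $\{b_v,u_v\}$ is a clique minimal separator of $G\setminus v$ and $u_v\in N(c_v)$, Theorem~\ref{th:GminusVprime?} gives $N_G(c_v)=\{b_v,u_v,v\}$ and that $V\setminus(\Pi_v\cup\{v\})$ is a full component of $G\setminus\Pi_v$. The hypothesis $N(a_v)\cap N(b_v)\cap N(u_v)=\emptyset$ immediately yields that every $x\in(N(a_v)\cap N(u_v))\setminus b_v$ is non-adjacent to $b_v$, while $\{a_v,u_v\}\subseteq N(b_v)\cap N(x)$ for all such $x$; so it will remain only to exhibit a third common neighbour and the separator property.

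\emph{Step 1: the separator $M$.} I set $M=(N(a_v)\cap N(u_v))\cup\{v\}$. The path $a_v-v-c_v-u_v$ avoids $N(a_v)\cap N(u_v)$, which (using $N_G(c_v)=\{b_v,u_v,v\}$ and the hypothesis that $N(a_v)\cap N(u_v)$ separates $a_v,u_v$ in $G\setminus(c_v,v)$) is exactly what forces $M$ to be an $a_vu_v$-separator of $G$; minimality follows since each vertex of $M$ lies on a private $a_vu_v$-path. Note $a_v$ is adjacent to all of $M$ and $u_v$ to all of $M\setminus v$. Because $G$ is prime, $M$ is not a clique, so $|M|\ge 3$ and in particular $(N(a_v)\cap N(u_v))\setminus b_v\neq\emptyset$.

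\emph{Step 2: structure of $M$ and the choice of $x$.} By Corollary~\ref{cor:sep-planar}, $G[M]$ is a cycle or a forest of paths. The no-diamond hypothesis says $b_v$ has no neighbour in $M$ other than $v$, and dually $v$ has none other than $b_v$; hence $\{b_v,v\}$ is a $K_2$-component of $G[M]$ and $G[M]$ is a forest of paths, not a cycle. Using Corollary~\ref{cor:make-sep-cyc} I compute in linear time a planar supergraph $G_M$ in which $M$ induces a cycle $C_M$, and I define $x$ to be the neighbour of $b_v$ on $C_M$ distinct from $v$; this $x\in(N(a_v)\cap N(u_v))\setminus b_v$ is unique and is obtained in linear time. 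A direct check shows that $\{a_v,b_v,u_v,x\}$ induces the $4$-cycle $a_v-b_v-u_v-x-a_v$ in $G$.

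\emph{Step 3: using $tb(G)=1$.} Fix a star-decomposition $(T,\mathcal X)$; its subtree-intersection graph $H$ is chordal~\cite{Gavril1974}. Since the induced $4$-cycle on $\{a_v,b_v,u_v,x\}$ cannot survive chordlessly in $H$, one of the two diagonals must intersect as well: either $T_{a_v}\cap T_{u_v}\neq\emptyset$ or $T_{b_v}\cap T_x\neq\emptyset$. In either case I combine this dichotomy with a Helly-property argument over the full components of $G\setminus M$ (in the style of Lemma~\ref{lem:2-separator}) to locate a bag dominated by a vertex of $N(b_v)\cap N(x)$ lying strictly outside $\{a_v,u_v\}$, which produces the required third common neighbour and gives $|N(b_v)\cap N(x)|\ge3$. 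Once this is known, Corollary~\ref{cor:strong-three} (legitimate since $G$ is $K_{3,3}$-minor-free) yields a star-decomposition with adjacent bags $B_{b_v}\subseteq N[b_v]$ and $B_x\subseteq N[x]$ whose intersection lies in $N(b_v)\cap N(x)$ and is a $b_vx$-separator; hence $N(b_v)\cap N(x)$ itself separates $b_v$ from $x$. Uniqueness of $x$ follows because two distinct vertices with these properties would, together with $a_v,u_v,b_v$, force a $K_{3,3}$- or $K_5$-minor, contradicting planarity.

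The hard part will be Step 3: converting the chordality dichotomy into an actual third common neighbour and into the separator property. The delicate point is that a diagonal intersection only hands me a single dominating vertex, which a priori could coincide with $a_v$ or $u_v$; ruling this out requires carefully tracking the full components of $G\setminus M$ and the fact (via Lemma~\ref{lem:prop-todinca}) that $b_v$ and $x$ are co-facial in the chosen embedding, so that a two-element common neighbourhood $\{a_v,u_v\}$ would contradict either planarity or the minimality of $M$.
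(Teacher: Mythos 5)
Your Steps 1--2 are sound and even elegant: $M=(N(a_v)\cap N(u_v))\cup\{v\}$ is indeed a minimal $a_vu_v$-separator of $G$, the hypothesis $N(a_v)\cap N(b_v)\cap N(u_v)=\emptyset$ forces $\{b_v,v\}$ to be a $K_2$-component of $G[M]$, and Corollary~\ref{cor:make-sep-cyc} produces a linear-time candidate $x$. But the heart of the theorem is your Step 3, and there the plan is both incomplete and logically misstructured. First, the dichotomy ``either $T_{a_v}\cap T_{u_v}\neq\emptyset$ or $T_{b_v}\cap T_x\neq\emptyset$'' cannot be ``handled in either case'': in the branch $T_{b_v}\cap T_x\neq\emptyset$ the vertices $b_v,x$ share a bag, and then no argument can produce a $b_vx$-separator inside $N(b_v)\cap N(x)$ --- that branch has to be \emph{refuted}, not accommodated. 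The paper does this by proving $T_{a_v}\cap T_{u_v}\neq\emptyset$ \emph{unconditionally} (if not, every bag between $T_{a_v}$ and $T_{u_v}$ contains $N(a_v)\cap N(u_v)$ and one of $v,c_v$, hence by $N(c_v)=\{b_v,u_v,v\}$, $N(v)=\{a_v,b_v,c_v\}$ and $x\in B$ it lies in $T_{a_v}$ or $T_{u_v}$, yielding adjacent bags whose intersection is an $a_vu_v$-separator inside $N(a_v)\cap N(u_v)$, contradicting the path $(a_v,v,c_v,u_v)$); only then does four-way Helly plus the absence of a dominator of $\{a_v,b_v,u_v,x\}$ kill the case $T_{b_v}\cap T_x\neq\emptyset$. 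Your proposal never gives this argument, and your own closing admission that ``the hard part will be Step 3'' concedes the point. Relatedly, your final appeal to Corollary~\ref{cor:strong-three} is doubly broken: it presupposes $|N(b_v)\cap N(x)|\ge 3$ (the very thing you are trying to prove), and even granted that, its conclusion allows $B_{b_v}=B_x$, which yields no separator.

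Second, you are missing the structural object that makes the whole argument run: a component $C_0$ of $G\setminus W$ (where $W=(N(a_v)\cap N(u_v))\cup\{a_v,c_v,u_v,v\}$) with $b_v\in N(C_0)$ and $N(C_0)\subseteq\{z,b_v,x\}$ for some $z\in\{a_v,u_v\}$. The paper \emph{defines} $x$ through such components (via Lemma~\ref{lem:P3exists} applied to $N(C_0)\setminus z$ in $G\setminus z$, with uniqueness by a $K_{3,3}$-minor over two components), and this buys two things at once: the set $N(C_0)$ is a \emph{minimal separator of $G$ containing both $b_v$ and $x$ and contained in $(N(b_v)\cap N(x))\cup\{b_v,x\}$}, which is exactly the hypothesis needed to invoke Corollary~\ref{cor:strong-sep} on the pair $(b_v,x)$ once $T_{b_v}\cap T_x=\emptyset$; and afterwards the separator $N(b_v)\cap N(x)$ must meet the $b_vx$-path through $C_0$, producing the third common neighbour (distinct from $a_v,u_v$ since $a_v,u_v\notin C_0$). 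Your cycle-based $x$ comes with no such attached component, so you have no licence to apply Corollary~\ref{cor:strong-sep}, no source for the third neighbour, and no proof that your $x$ (which a priori depends on the chosen embedding) coincides with a vertex having the stated property; the concluding uniqueness claim is likewise only asserted. To repair the proof you would essentially have to reconstruct the paper's component-based definition of $x$ and its unconditional $T_{a_v}\cap T_{u_v}\neq\emptyset$ argument, at which point the embedding cycle of Step 2 becomes superfluous.
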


\begin{proof}
Let $W = (N(a_v) \cap N(u_v)) \cup \{a_v,c_v,u_v,v\}$.
By the hypothesis, $N(b_v) \neq \{a_v,c_v,u_v,v\}$ and $N(a_v) \cap N(b_v) \cap N(u_v) = \emptyset$, therefore, it exists a component $C_0$ of $G \setminus W$ such that $b_v \in N(C_0)$.
We claim that there is $x \in N(a_v) \cap N(u_v) \cap N(C_0)$ satisfying that $N(C_0) \subseteq (a_v,b_v,x)$ or $N(C_0) \subseteq (u_v,b_v,x)$.
Indeed, first observe that $v,c_v \notin N(C_0)$ because by Theorem~\ref{th:GminusVprime?} $N_G(c_v) = \{b_v,u_v,v\}$. 
Furthermore, $a_v \notin N(C_0)$ or $u_v \notin N(C_0)$ because $N(a_v) \cap N(u_v)$ is an $a_vu_v$-separator of $G \setminus (c_v,v)$ by the hypothesis.
So, let $\{z,z'\} = \{a_v,u_v\}$ satisfy $z' \notin N(C_0)$.
Since  $G$ is prime by the hypothesis, and so, biconnected, $G \setminus z$ is connected.
Furthermore, $N(C_0) \setminus z \subseteq N(z')$ (by the definition of $W$), therefore, $N(C_0) \setminus z$ is a minimal separator of $G \setminus z$.
By Lemma~\ref{lem:P3exists} there exist $s,t \in N(a_v) \cap N(u_v)$ non-adjacent such that $N(C_0) \setminus z = \{s,t\}$.
Since $b_v \in N(C_0)$ by construction, therefore, let us set $\{s,t\} = \{b_v,x\}$, that finally proves the claim.

We claim in addition that $x$ does not depend on the choice of the component $C_0$.
By contradiction, let $C,C'$ be two components of $G \setminus W$ such that $b_v \in N(C) \cap N(C')$ and let $x,x' \in N(a_v) \cap N_G(u_v)$ be distinct and such that $x \in N(C), \ x' \in N(C')$.
Then, there exists a $K_{3,3}$-minor with $\{a_v,b_v,u_v\}$ and $\{ \{c_v,v\}, C \cup \{x\}, C' \cup \{x'\} \}$ being the sides of the bipartition, that contradicts the hypothesis that $G$ is planar.
Thus from now on, let $x \in N(a_v) \cap N(u_v) \setminus b_v$ be the unique vertex satisfying that for every component $C$ of $G \setminus W$, if $b_v \in N(C)$ then $x \in N(C)$.

\begin{figure}[h!]
	\centering
	\includegraphics[width=0.25\textwidth]{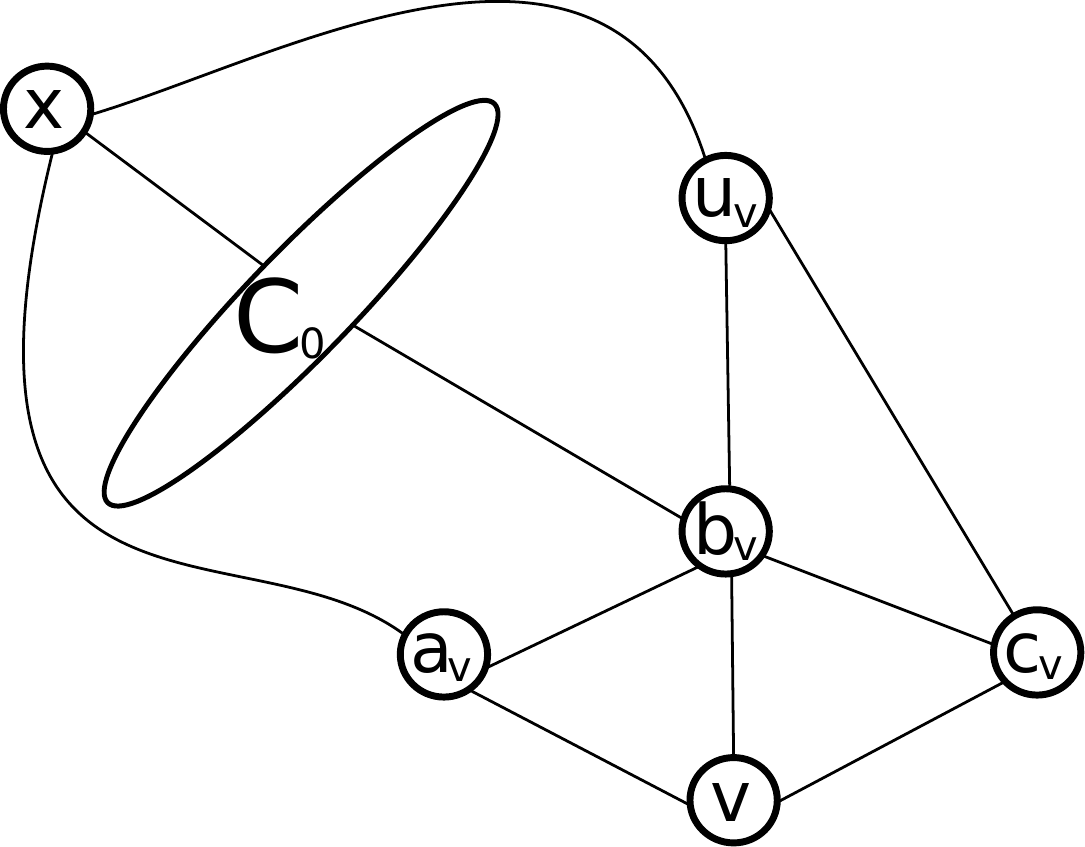}
	\caption{Component $C_0$ such that $b_v,x \in N(C_0)$.}
	\label{fig:add-chord-to-quadrangle}
\end{figure}

\smallskip
Recall that $C_0$ is a fixed component of $G \setminus W$ such that $b_v,x \in N(C_0)$ (see Figure~\ref{fig:add-chord-to-quadrangle} for an illustration).
Finally, assume for the remaining of the proof that $tb(G) = 1$ and let us prove that $N(b_v) \cap N(x)$ is a $b_vx$-separator and $|N(b_v) \cap N(x)| \geq 3$.
To prove it, we will only need to prove that $N(b_v) \cap N(x)$ is a $b_vx$-separator of $G$.
Indeed, in such a case $N(b_v) \cap N(x) \cap C_0 \neq \emptyset$, and so, $|N(b_v) \cap N(x)| \geq 3$ because $a_v,u_v \in N(b_v) \cap N(x)$ and $a_v,u_v \notin C_0$. 

Let $(T,{\cal X})$ be star-decomposition of $G$, that exists by Lemma~\ref{lem:dominator-in-bag}, minimizing the distance in $T$ between the subtrees $T_{b_v}$ and $T_x$.
We claim that $T_{b_v} \cap T_x = \emptyset$.
By contradiction, suppose $T_{b_v} \cap T_x \neq \emptyset$.
Let us prove as an intermediate subclaim that $T_{a_v} \cap T_{u_v} \neq \emptyset$. 
By contradiction, let $T_{a_v} \cap T_{u_v} = \emptyset$. 
By the properties of a tree decomposition, every bag $B$ onto the path in $T$ between $T_{a_v}$ and $T_{u_v}$ must contain $N(a_v) \cap N(u_v)$ and at least one of $v$ or $c_v$.
If $c_v \in B$ then $B \subseteq N[u_v]$ since $N(c_v) = \{b_v,u_v,v\}$ and $x \in B$.
Similarly if $v \in B$ then $B \subseteq N[a_v]$ since $N(v) = \{a_v,b_v,c_v\}$ and $x \in B$.
Consequently, there are two adjacent bags $B_{a_v},B_{u_v}$ such that $a_v \in B_{a_v} \setminus B_{u_v}$ and $u_v \in B_{u_v} \setminus B_{a_v}$ respectively dominate $B_{a_v}$ and $B_{u_v}$.
However, by the properties of a tree decomposition, $B_{a_v} \cap B_{u_v} = N(a_v) \cap N(u_v)$ is an $a_vu_v$-separator of $G$, thus contradicting the existence of the path $(a_v,v,c_v,u_v)$ in $G$.
Therefore, it follows that $T_{a_v} \cap T_{u_v} \neq \emptyset$, that proves the subclaim.

If $T_{a_v} \cap T_{u_v} \neq \emptyset$ and $T_{b_v} \cap T_x \neq \emptyset$ then the subtrees $T_{a_v},T_{b_v},T_{u_v},T_x$ are pairwise intersecting and so, it implies $T_{a_v} \cap T_{u_v} \cap T_{b_v} \cap T_x \neq \emptyset$ by the Helly property (Lemma~\ref{lem:helly}).
However, let $B' \in T_{a_v} \cap T_{u_v} \cap T_{b_v} \cap T_x$, no vertex in $G$ can dominate $B'$ because $N(b_v) \cap N(a_v) \cap N(u_v) = \emptyset$ by the hypothesis, thus contradicting the fact that $(T,{\cal X})$ is a star-decomposition.
As a result, we proved the claim that $T_{b_v} \cap T_x = \emptyset$.

Finally, since there exists $S \subseteq (N(b_v) \cap N(x)) \cup \{b_v,x\}$ a minimal separator of $G$ such that $b_v,x \in S$ (namely, $S := N(C_0)$), and $(T,{\cal X})$ is assumed to minimize the distance in $T$ between $T_{b_v}$ and $T_x$, by Corollary~\ref{cor:strong-sep} there exist two adjacent bags $B_{b_v},B_x$ such that $b_v \in B_{b_v} \setminus B_x, \ x \in B_x \setminus B_{b_v}$ respectively dominate $B_{b_v}$ and $B_x$.
By the properties of a tree decomposition, $B_{b_v} \cap B_x = N(b_v) \cap N(x)$ is indeed a $b_vx$-separator of $G$. 
\end{proof}

\begin{lemma}\label{claim:final-case-2}
Let $G$ be a prime planar graph, let $v$ be a leaf-vertex of Type 2, $\Pi_v = (a_v,b_v,c_v)$ be as in Definition~\ref{def:leafVertex}, and let $u_v \notin \Pi_v \cup \{v\}$ be such that $(b_v,u_v)$ is an edge-separator of $G \setminus v$. 

Suppose $u_v \in N(c_v) \setminus N(a_v)$, $N(a_v) \cap N(u_v)$ is an $a_vu_v$-separator in the subgraph $G \setminus (c_v,v)$, $N(b_v) \neq \{a_v,c_v,u_v,v\}$ and $N(a_v) \cap N(b_v) \cap N(u_v) = \emptyset$.
Assume furthermore that there is $x \in N(a_v) \cap N(u_v)$, and there exists a leaf-vertex $l \in N(b_v) \cap N(x)$.

Then, $l$ is a leaf-vertex of Type 1, or $l$ is a leaf-vertex of Type 2 or 3 and $G \setminus l$ is prime.
\end{lemma}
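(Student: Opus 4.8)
The plan is to split on the type of the leaf-vertex $\ell$. If $\ell$ has Type 1 then the first alternative of the conclusion already holds, so I would assume from the outset that $\ell$ has Type 2 or 3 and aim to prove that $G\setminus\ell$ is prime. I would argue by contradiction: assuming $G\setminus\ell$ is not prime, it admits a clique minimal separator, so Theorem~\ref{th:GminusVprime?} applies with $\ell$ playing the role of $v$. Writing $\Pi_\ell=(a_\ell,b_\ell,c_\ell)$ for the induced $P_3$ attached to $\ell$ by Definition~\ref{def:leafVertex}, this produces a clique minimal separator $S=\{b_\ell,u_\ell\}$ of $G\setminus\ell$ with $u_\ell\notin\Pi_\ell$; and, exactly as in the proof of that theorem (via Lemma~\ref{lem:separatorInCompo}, using that $S$ is a clique and $G$ is prime), $S$ is an $a_\ell c_\ell$-separator of $G\setminus\ell$.

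Next I would pin down the endpoints of $\Pi_\ell$. By hypothesis $\ell\in N(b_v)\cap N(x)$, so $b_v$ and $x$ both lie in $N(\ell)$, and they are non-adjacent: indeed $x\in N(a_v)\cap N(u_v)$ together with $N(a_v)\cap N(b_v)\cap N(u_v)=\emptyset$ forces $x\notin N(b_v)$. Since the neighbourhood of a Type 2 (resp. Type 3) leaf-vertex is an induced path on three vertices (resp. a pair of non-adjacent vertices), the only non-adjacent pair inside $N(\ell)$ is $\{a_\ell,c_\ell\}$; hence $\{a_\ell,c_\ell\}=\{b_v,x\}$ and $b_\ell$ is a common neighbour of $b_v$ and $x$. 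Consequently $S$ separates $b_v$ from $x$ in $G\setminus\ell$.

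The core of the argument is then one line. Both $a_v$ and $u_v$ are common neighbours of $b_v$ and $x$ (we have $a_v\sim b_v$ and $x\in N(a_v)$, while $u_v\sim b_v$ since $\{b_v,u_v\}$ is an edge and $x\in N(u_v)$). Provided $a_v\neq\ell$ and $u_v\neq\ell$, the two length-two paths $b_v a_v x$ and $b_v u_v x$ survive in $G\setminus\ell$, so each must meet the $b_vx$-separator $S$; their only interior vertices being $a_v$ and $u_v$, this forces $a_v,u_v\in S$. But $S$ is a clique while $a_v$ and $u_v$ are non-adjacent (as $u_v\in N(c_v)\setminus N(a_v)$) --- a contradiction, which shows $G\setminus\ell$ is prime.

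The one genuinely delicate point --- and the step I expect to be the main obstacle --- is certifying $a_v\neq\ell$ and $u_v\neq\ell$, where I would exploit the small, rigid neighbourhood of a Type 2/3 leaf-vertex. If $a_v=\ell$ then $v\in N(a_v)=N(\ell)$; but $N(\ell)\subseteq\{b_v,b_\ell,x\}$ with $v\neq b_v,x$, so $v$ would have to equal the mid-vertex $b_\ell$, making $v$ adjacent to $x$ and contradicting $x\notin N(v)=\{a_v,b_v,c_v\}$ (the Type 3 case is even quicker, on degree grounds). If $u_v=\ell$ then $c_v\in N(u_v)=N(\ell)$ with $c_v\notin\{b_v,x\}$; this is impossible for Type 3 by degree, and for Type 2 it forces $c_v=b_\ell$, whence $c_v$ becomes a common neighbour of $b_v$ and $x$ distinct from $\ell$, so $c_v\in S$, and then $a_v,c_v\in S$ with $S$ a clique yields $a_v\sim c_v$, contradicting that $a_v,c_v$ are the non-adjacent ends of $\Pi_v$. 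With these two inequalities in hand, the contradiction of the previous paragraph closes the proof; note that planarity enters only implicitly, through Theorem~\ref{th:GminusVprime?}.
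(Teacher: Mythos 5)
Your proof is correct and follows essentially the same route as the paper's: identify $b_v,x$ as the non-adjacent endpoints of $\Pi_\ell$, deduce (via Lemma~\ref{lem:separatorInCompo}, equivalently through Theorem~\ref{th:GminusVprime?}) that a minimal clique-separator of $G\setminus \ell$ would have to be a $b_vx$-separator, and conclude by the contradiction that the non-adjacent vertices $a_v,u_v$ each lie as the sole interior vertex on a $b_vx$-path of length two and so would both belong to that clique. The only cosmetic difference is in ruling out $\ell\in\{a_v,u_v\}$: the paper disposes of both at once by observing that $\{v,b_v,x\}\subseteq N(a_v)$ and $\{b_v,c_v,x\}\subseteq N(u_v)$ do not induce paths (so neither vertex can be a Type 2 or 3 leaf-vertex at all), whereas your Type 2 subcase for $u_v=\ell$ takes a slightly longer, but still valid, detour through membership in $S$.
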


\begin{figure}[h!]
 \centering
 \includegraphics[width=0.65\textwidth]{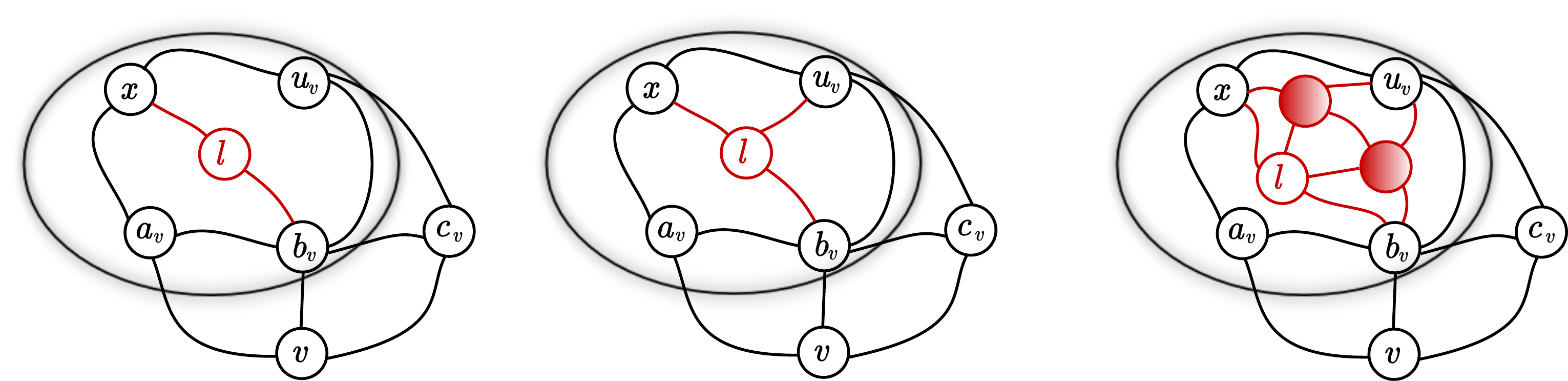}
 \caption{Existence of a leaf-vertex in $N(b_v) \cap N(x)$.}
 \label{fig:planar-clique-new-leaf}
\end{figure}

\begin{proof}
Suppose for the proof that $l$ is not of Type 1 (else, Lemma~\ref{claim:final-case-2} is trivial).
Then, $l$ is of Type 2 or 3, let $\Pi_l$ be as in Definition~\ref{def:leafVertex}.
Note that $l \neq a_v$ because $v,b_v,x \in N(a_v)$ do not induce a path, similarly $l \neq u_v$ because $b_v,c_v,x \in N(u_v)$ do not induce a path.
Furthermore by the hypothesis, $b_v$ and $x$ are the two endpoints of $\Pi_l$.
Suppose by way of contradiction that there is a minimal clique-separator $S$ of $G \setminus l$.
Since $G$ is prime by the hypothesis, by Lemma~\ref{lem:separatorInCompo} $S$ is a $b_vx$-separator of $G \setminus l$.
However, it implies that $a_v,u_v \in S$, that contradicts the fact that $S$ is a clique.
As a result, $G \setminus l$ is prime.   
\end{proof}

Equipped with Lemma~\ref{claim:final-case-2}, we can assume from now on that there is no leaf-vertex that is adjacent to both vertices $b_v,x$, or else it could be immediately processed by the algorithm.

\begin{theorem}\label{lem:final-case-3}
Let $G$ be a prime planar graph, let $v$ be a leaf-vertex of Type 2, $\Pi_v = (a_v,b_v,c_v)$ be as in Definition~\ref{def:leafVertex}, and let $u_v \notin \Pi_v \cup \{v\}$ be such that $(b_v,u_v)$ is an edge-separator of $G \setminus v$. 

Suppose $u_v \in N(c_v) \setminus N(a_v)$, $N(a_v) \cap N(u_v)$ is an $a_vu_v$-separator in the subgraph $G \setminus (c_v,v)$, $N(b_v) \neq \{a_v,c_v,u_v,v\}$ and $N(a_v) \cap N(b_v) \cap N(u_v) = \emptyset$.
Assume furthermore that there is $x \in N(a_v) \cap N(u_v)$ such that $N(b_v) \cap N(x)$ is a $b_vx$-separator, $|N(b_v) \cap N(x)| \geq 3$, and there is no leaf-vertex in $N(b_v) \cap N(x)$.

Then, there exist $y,z \in N(b_v) \cap N(x)$ non-adjacent such that the graph $G'$, obtained from $G$ by making $y,z$ adjacent, is planar and prime, and it holds $tb(G) = 1$ if and only if $tb(G') = 1$.
Furthermore, the pair $y,z$ can be computed in linear-time.
\end{theorem}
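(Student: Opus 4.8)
The plan is to work throughout with the set $S := N(b_v) \cap N(x)$ and the pair $(b_v,x)$. First I would record the basic facts: $b_v$ and $x$ are non-adjacent (otherwise $x \in N(a_v)\cap N(b_v)\cap N(u_v)$, against the hypothesis), and by hypothesis $S$ is a $b_vx$-separator with $a_v,u_v \in S$ and $|S|\geq 3$. The crucial preliminary observation is that \emph{every} $b_vx$-separator of $G$ contains $S$, since each $s\in S$ gives the length-two path $(b_v,s,x)$ whose only internal vertex is $s$; hence $S$ is a minimal $b_vx$-separator whose $b_v$-side and $x$-side are both full components. I would then check that $(b_v,x)$ satisfies the hypothesis of Lemma~\ref{lem:strong-pair}: if some $w\notin\{b_v,x\}$ dominates a $b_vx$-separator, it dominates $S$; if $w\notin S$ then $b_v,x,w$ are each adjacent to all $\geq 3$ vertices of $S$, yielding a $K_{3,3}$ forbidden by planarity, while if $w\in S$ then $w\in N(b_v)\cap N(x)$ by definition. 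This verification is insensitive to adding the edge $\{y,z\}$, so it holds verbatim in $G'$ as well (the neighbourhoods of $b_v$ and $x$, and planarity, are unchanged).

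Next I would locate $y,z$. Fixing a plane embedding, Corollary~\ref{cor:sep-planar} says the minimal separator $S$ induces either a cycle or a forest of paths. The delicate point is to exclude the cycle case: if $S$ induced a cycle then, with $b_v$ and $x$ adjacent to all of $S$ on opposite sides, the embedding would be a ``double wheel'' admitting no planar chord between non-adjacent vertices of $S$, so no admissible $\{y,z\}$ could exist. I would rule this out using $N(a_v)\cap N(b_v)\cap N(u_v)=\emptyset$ (so $a_v,u_v$ share no neighbour inside $S$, hence lie far apart on any such cycle) together with the path $(a_v,v,c_v,u_v)$ lying on the $b_v$-side and the component $C_0$ of Theorem~\ref{lem:final-case-1}, producing a forbidden $K_5$- or $K_{3,3}$-minor. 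Thus $S$ induces a forest of paths, and I would take $y,z$ to be consecutive in the cyclic order of $S$ around $b_v$ but lying in different path-components, so $y,z$ are non-adjacent in $G$ and the facial wedge at $b_v$ between the spokes $b_vy,b_vz$ is a common face through which $\{y,z\}$ can be drawn; $G'$ is therefore planar. The plane embedding and the routine of Corollary~\ref{cor:make-sep-cyc} make this selection linear-time.

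For the equivalence I would route both directions through Lemma~\ref{lem:strong-pair} applied to $(b_v,x)$, valid in $G$ and in $G'$ by the first paragraph. If $tb(G)=1$, the lemma yields a star-decomposition with adjacent bags $B_{b_v}\subseteq N[b_v]$, $B_x\subseteq N[x]$ and $B_{b_v}\cap B_x = N(b_v)\cap N(x)=S\ni y,z$; hence $y,z$ already share the bag $B_{b_v}$, the decomposition covers the new edge, and since domination in $G$ persists in the supergraph it is a star-decomposition of $G'$ (the case where $b_v,x$ fall in a single bag I would handle by inserting $y,z$ into a bag dominated by $b_v$). Conversely, if $tb(G')=1$ I would start from the star-decomposition of $G'$ furnished by Lemma~\ref{lem:strong-pair}, where again $B_{b_v}\cap B_x=S$, and pull it back to $G$: the only bags that could fail to be dominated in $G$ are those dominated in $G'$ solely by $y$ or $z$ through the new edge, and for these I would re-dominate by $b_v$ or $x$ using $y,z\in N_G(b_v)\cap N_G(x)$ and the minimality of the decomposition.

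Finally I would prove $G'$ prime. A clique-separator $S'$ of $G'$ that avoids the pair $\{y,z\}$ is already a clique-separator of $G$, contradicting primeness of $G$; so $\{y,z\}\subseteq S'$, and via Lemma~\ref{lem:separatorInCompo} together with the fact that $y,z$ are common neighbours of $b_v,x$ I would force a forbidden $K_{3,3}$/$K_5$, here invoking the hypothesis that $N(b_v)\cap N(x)$ contains \emph{no} leaf-vertex (which rules out the degenerate configurations, exactly as in Lemma~\ref{claim:final-case-2}). The main obstacle is the combination of the second paragraph with the converse half of the third: showing unconditionally that $S$ cannot induce a cycle, so that a planar non-edge $\{y,z\}$ genuinely exists, and then controlling every bag of a star-decomposition of $G'$ when transporting it back to $G$. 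Once these are in place, the remaining steps are bookkeeping with the Helly property (Lemma~\ref{lem:helly}), Theorem~\ref{th:GminusVprime?}, and the structural lemmas already established.
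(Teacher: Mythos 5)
Your proposal assembles the right ingredients (Lemma~\ref{lem:strong-pair} justified by a $K_{3,3}$ argument, Corollaries~\ref{cor:sep-planar} and~\ref{cor:make-sep-cyc} for the planar structure of $S = N(b_v)\cap N(x)$), but it breaks down at the three places where the real work lies, and none of the missing steps is routine. First, the selection of $y,z$. The paper does not merely pick two co-facial non-adjacent vertices of $S$: it picks $y$ inside a specific component $C_0$ of $G\setminus W$ (where $W=\{a_v,c_v,v,u_v\}\cup(N(a_v)\cap N(u_v))$) with $b_v,x\in N(C_0)$, and $z$ consecutive to $y$ on the cycle $G_S[S]$ of Corollary~\ref{cor:make-sep-cyc} but with $z\notin N_G[C_0]$. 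This engineered choice is what yields $N_G(y)\cap N_G(z)=\{b_v,x\}$ (proved by a $K_{3,3}$-minor argument on two components), and that identity is the engine of both the primality proof of $G'$ and the converse direction. Your rule --- ``consecutive in the cyclic order around $b_v$, in different path-components of $G[S]$'' --- guarantees neither existence (you never exclude that $G[S]$ is a single path, in which case no such pair exists) nor this identity ($y$ and $z$ could lie in the same component of $G\setminus W$ and share neighbours there), so your primality paragraph, which explicitly leans on the paper-style argument and on Lemma~\ref{claim:final-case-2}, has nothing to run on.

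Second, both halves of the equivalence have genuine holes. In the forward direction, Lemma~\ref{lem:strong-pair} may return the outcome in which $b_v$ and $x$ share a bag, and your fix --- ``inserting $y,z$ into a bag dominated by $b_v$'' --- is not a legal operation: adding $y$ and $z$ to an arbitrary bag destroys connectedness of the subtrees $T_y,T_z$ unless that bag already meets both, which is precisely what you are trying to establish. (The shared-bag outcome can in fact be excluded: one first shows $T_{a_v}\cap T_{u_v}\neq\emptyset$ as in Claim~\ref{claim:crossing-subtrees}, then Helly applied to $T_{a_v},T_{u_v},T_{b_v},T_x$ together with $N(a_v)\cap N(b_v)\cap N(u_v)=\emptyset$ gives a contradiction; the paper's forward direction runs exactly this chain and concludes via the induced $4$-cycle $(x,y,b_v,z)$ forcing $T_y\cap T_z\neq\emptyset$. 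You make none of these moves.) In the converse direction the gap is worse: a bag of a star-decomposition of $G'$ dominated only by $y$ is contained in $N_{G'}[y]$, and there is no reason for it to be contained in $N_G[b_v]$ or $N_G[x]$, so ``re-dominating by $b_v$ or $x$'' is simply unavailable. The paper spends the bulk of its proof here: under a double extremality assumption (minimum number of bags, then minimum number of bags not dominated in $G$) it performs bag surgery (replacing $B,B'$ by $B\setminus t$ and $B'\cup\{s\}$), supported by Claims~\ref{claim:neighbours-y}--\ref{claim:disjoint-subtrees}, all of which depend on the structural choice of $y,z$ from the first point. You correctly identify this as ``the main obstacle'', but flagging the obstacle is not overcoming it; as written, the proposal does not prove the theorem.
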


\begin{figure}[h!]
 \centering
 \includegraphics[width=0.25\textwidth]{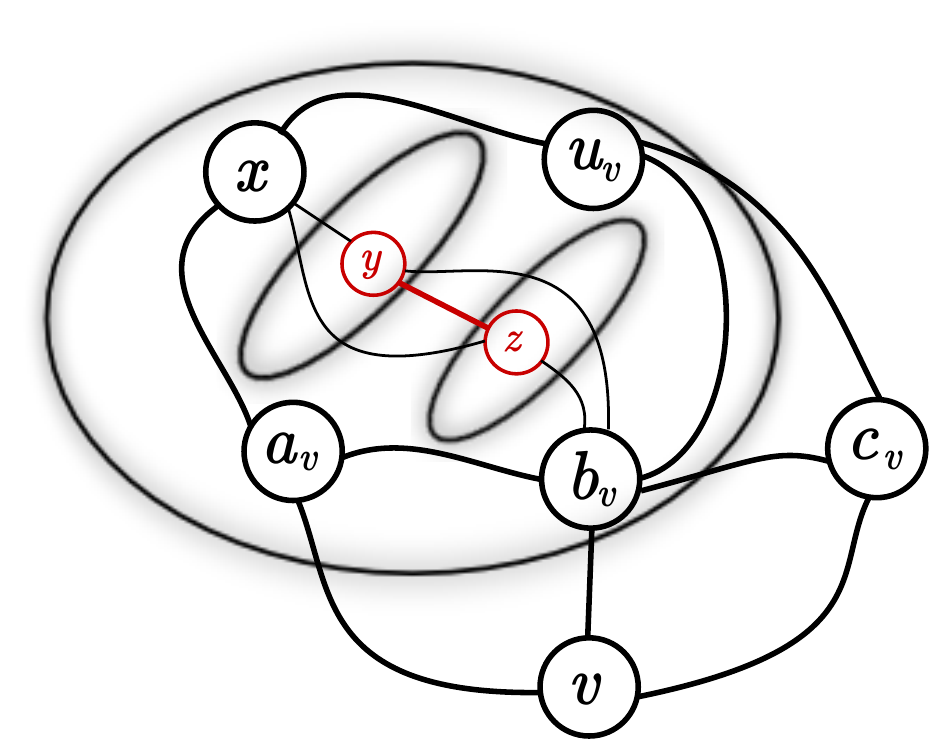}
 \caption{Illustration of Theorem~\ref{lem:final-case-3}.}
 \label{fig:planar-clique-add-edge}
\end{figure}

\begin{proof}
Let us first show how to find the pair $y,z$.
Let $W = \{a_v,c_v,v,u_v\} \cup (N(a_v) \cap N(u_v))$. 
Choose any component $C_0$ of $G \setminus W$ such that $b_v,x \in N(C_0)$ and $N(C_0) \subseteq (a_v,b_v,x)$ or $N(C_0) \subseteq (u_v,b_v,x)$ (the existence of such a component has been proved in Theorem~\ref{lem:final-case-1}).
Note that $N(b_v) \cap N(x) \cap C_0 \neq \emptyset$ since $N(b_v) \cap N(x)$ is a $b_vx$-separator of $G$ by the hypothesis.
Then, let $S := N(b_v) \cap N(x)$. 
By the hypothesis $S$ is a minimal separator of $G$ and $|S| \geq 3$, therefore, by Corollary~\ref{cor:make-sep-cyc} there is a planar supergraph $G_S$ of $G$ so that $S$ induces a cycle of $G_S$.
Furthermore, $G_S$ can be computed in linear-time.
Let $P$ be an $a_vu_v$-path of the cycle $G_S[S]$ that intersects $C_0$.
Since by the above claim $a_v \notin N_G(C_0)$ or $u_v \notin N_G(C_0)$, therefore, there is $y \in C_0 \cap V(P)$, there is $z$ adjacent to vertex $y$ in $P$ so that either $z \in C_1$ for some component $C_1$ of $G \setminus (W \cup C_0)$ or $z \in \{a_v,u_v\} \setminus N_G(C_0)$. 
In particular, $z \notin N_G[C_0] = C_0 \cup N_G(C_0)$.
Moreover, the graph $G'$, obtained from $G$ by adding an edge between $y$ and $z$, is planar by construction.

\begin{claim}
\label{claim:gprime-prime-2}
$G'$ is prime.
\end{claim}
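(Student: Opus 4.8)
The plan is to argue by contradiction in the same spirit as the earlier primeness claims of the paper (e.g. Claim~\ref{claim:gprime-prime}). Suppose $S'$ is a clique-separator of $G'$; by Leimer's theorem we may take $S'$ to be a \emph{minimal} separator of $G'$ that is a clique. Since $G'$ differs from $G$ only by the single new edge $\{y,z\}$, I would first show that $\{y,z\}\subseteq S'$. Indeed, if at least one of $y,z$ lay outside $S'$ then $S'$ would still induce a clique in $G$ (the new edge being irrelevant), and as $G\setminus S'=G'\setminus S'$ is disconnected, $S'$ would be a clique-separator of the prime graph $G$, a contradiction. I would also record the key fact that $S'$ is a minimal separator of $G$ as well: removing $y,z$ (both now in $S'$) makes $G\setminus S'$ and $G'\setminus S'$ literally the same graph, so they have exactly the same full components, hence $S'$ is a minimal separator of both.

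Next I would pin down $S'$. Every $w\in S'\setminus\{y,z\}$ is adjacent in $G'$, hence in $G$, to both $y$ and $z$, so $S'\setminus\{y,z\}\subseteq N_G(y)\cap N_G(z)$. Since $y\in C_0$ while $z\notin N_G[C_0]$ by the very construction of the pair, any common neighbour of $y$ and $z$ lies in $N_G(C_0)$, and the claim proved inside Theorem~\ref{lem:final-case-1} gives $N_G(C_0)\subseteq\{a_v,b_v,x\}$; thus $S'\subseteq\{y,z,a_v,b_v,x\}$. I would then observe that $b_v$ and $x$ are non-adjacent: otherwise $x\in N(a_v)\cap N(b_v)\cap N(u_v)$ because $x\in N(a_v)\cap N(u_v)$, contradicting the hypothesis $N(a_v)\cap N(b_v)\cap N(u_v)=\emptyset$ of Theorem~\ref{lem:final-case-3}. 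As $S'$ is a clique it contains at most one of $b_v,x$, so $S'\subseteq\{y,z,a_v,b_v\}$ or $S'\subseteq\{y,z,a_v,x\}$; the relation $a_v\notin N(u_v)$ prunes this further whenever $z=u_v$.

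Now comes the elimination of the surviving candidates. For $|S'|\ge 4$ there is a clean argument: $S'$ then consists of $\{y,z\}$ together with a clique $T$ of size $\ge 2$ all of whose vertices are adjacent to both $y$ and $z$, so $G[S']$ is a complete graph minus the single edge $\{y,z\}$ and therefore contains a triangle (e.g. $\{y,a_v,b_v\}$); this is neither a cycle nor a forest of paths, contradicting Corollary~\ref{cor:sep-planar} since $S'$ is a minimal separator of the planar graph $G$. The remaining and genuinely delicate cases are $S'=\{y,z\}$ and $S'=\{y,w,z\}$ with $w\in\{a_v,b_v,x\}$, where $G[S']$ is only a forest of paths and Corollary~\ref{cor:sep-planar} says nothing. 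For these I would show $S'$ cannot actually disconnect $G'$: the path $a_v-v-c_v-u_v$ together with $u_v\sim x$ and $b_v\sim c_v$ keeps whichever of $b_v,x$ survives attached to the surviving core $\{a_v,b_v,c_v,u_v,v,x\}\setminus S'$, so the two sides that $S'$ is supposed to separate both reach this core; combining this with the biconnectivity of $G$, with the fact that $a_v,u_v$ lie on the separating cycle of $S=N(b_v)\cap N(x)$ exhibited in the planar supergraph $G_S$, and with $y,z$ being consecutive on that cycle, the putative bipartition of $G'\setminus S'$ would force $b_v$ (or $x$) to attach to both arcs, yielding a $K_5$- or $K_{3,3}$-minor exactly as in Theorem~\ref{lem:final-case-1} and Lemma~\ref{claim:sep-b}, contradicting planarity. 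This last connectivity-plus-forbidden-minor analysis is the main obstacle, since it requires juggling the cycle structure of $S$, the placement of $C_0$, and planarity simultaneously; everything before it is bookkeeping that localizes $S'$ into the five-vertex set $\{y,z,a_v,b_v,x\}$.
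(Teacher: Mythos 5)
Your argument is correct, and close to the paper, up to the localization of $S'$: forcing $y,z\in S'$, transferring minimality of $S'$ from $G'$ to $G$, the containment $S'\setminus\{y,z\}\subseteq N_G(y)\cap N_G(z)\subseteq N_G(C_0)$, and the non-adjacency of $b_v,x$ are all sound, and your elimination of $|S'|\ge 4$ via Corollary~\ref{cor:sep-planar} is a valid alternative to what the paper does (the paper instead proves $N_G(y)\cap N_G(z)=\{b_v,x\}$ exactly, via a $K_{3,3}$-minor built from two components $C_0,C_1$, which expels $a_v,u_v$ from $S'$ from the outset). The problem lies in the two cases you yourself call the main obstacle, $S'=\{y,z\}$ and $S'=\{y,w,z\}$: what you write there is not a proof, and the strategy it gestures at would not close the argument.

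Concretely, the sentence ``the two sides that $S'$ is supposed to separate both reach this core'' is exactly the statement that needs proof, and it is defeated by one specific configuration your sketch never addresses: a component $A$ of $G'\setminus S'$ lying \emph{strictly inside} $C_0$, with $N_{G'}(A)\subseteq S'$. Such an $A$ touches neither the core $\{a_v,b_v,c_v,u_v,v,x\}\setminus S'$ nor the cycle of $S=N(b_v)\cap N(x)$ in $G_S$, so connectivity of the core says nothing about it, and the appeal to ``a $K_5$- or $K_{3,3}$-minor exactly as in Theorem~\ref{lem:final-case-1}'' exhibits no actual minor; planarity is in fact the wrong tool here. The paper kills precisely this configuration with primeness of $G$ instead: $W\setminus S'$ is connected (every vertex of $W$ is adjacent to $a_v$ or to $u_v$, at most one of which can lie in the clique $S'$, and the path $(a_v,v,c_v,u_v)$ supplies the link), so every other component $A$ of $G'\setminus S'$ avoids $W$ and hence lies inside a single component of $G\setminus W$; if $A\subseteq C_0$ then $z\notin N_G(A)$ because $z\notin N_G[C_0]$, hence $N_G(A)\subseteq S'\setminus\{z\}$, which is a clique of $G$ (the only non-edge of $G$ inside $S'$ is $\{y,z\}$), i.e., a clique-separator of the prime graph $G$ --- a contradiction; the cases where $A$ lies in the component of $z$, or in any other component of $G\setminus W$, are handled the same way with $y$ (respectively both $y$ and $z$) removed instead. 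Until you supply this step, or an equivalent one, your proof of Claim~\ref{claim:gprime-prime-2} is incomplete.
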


\begin{proofclaim}
By contradiction, let $X$ be a minimal clique-separator of $G'$.
Since $G'$ is a supergraph of $G$, $X$ is a separator of $G$.
As a result, $y,z \in X$ because $G$ is prime by the hypothesis.
Let us prove as an intermediate step that $N(y) \cap N(z) = \{b_v,x\}$.
There are two cases.
If $z \in \{a_v,u_v\}$, then let $\{z,z'\} = \{a_v,u_v\}$.
Since $N(C_0) \subseteq (z',b_v,x)$ (because $z \notin N_G(C_0)$) and $z,z'$ are non-adjacent by the hypothesis, therefore, the claim immediately follows in this case.
Else, $z \notin \{a_v,u_v\}$.
Let $C_1$ be the component of $G \setminus (W \cup C_0)$ containing $z$.
In such case, $b_v,x \in N(C_1)$ because $z \in S$ by construction.
Therefore, $N(C_1) \subseteq (a_v,b_v,x)$ or $N(C_1) \subseteq (u_v,b_v,x)$ because the respective roles of components $C_0,C_1$ are symmetrical in this case.
Suppose by way of contradiction $N(C_0) = N(C_1) = \{s,b_v,x\}$ for some $s \in \{a_v,u_v\}$ and let $\{s,t\} = \{a_v,u_v\}$. 
Then, there is a $K_{3,3}$-minor of $G$ with $\{b_v,x,s\}$ and $\{C_0,C_1,\{c_v,v,t\}\}$ being the sides of the bipartition, that contradicts the hypothesis that $G$ is planar.
As a result, $N(C_0) \cap N(C_1) = \{b_v,x\}$, that finally proves the claim.

Since $X$ is assumed to be a clique of $G'$ and $y,z \in X$, it follows $X \subseteq \{b_v,y,z\}$ or $X \subseteq \{x,y,z\}$.
Consequently, $G[W \setminus X]$ is connected because $a_v,u_v \notin X$ is a dominating pair of $W$, $b_v,x \in N(a_v) \cap N(u_v)$ and $b_v \notin X$ or $x \notin X$. 
However, since $y,z \in X$ and $N_G(y) \subseteq W \cup C_0$, then in such case there must be a component $A$ of $G \setminus X$ so that $A \subset C_0$.  
Since $z \notin N_G[C_0]$ by construction, $N_G(A) \subseteq X \setminus z$ is a clique-separator of $G$, thus contradicting the hypothesis that $G$ is prime.
As a result, $G'$ is prime.
\end{proofclaim}

Now, let us prove $tb(G) = 1$ if and only if $tb(G') = 1$.

If $tb(G) = 1$, then let $(T,{\cal X})$ be star-decomposition of $G$, that exists by Lemma~\ref{lem:dominator-in-bag}, minimizing the distance in $T$ between $T_{a_v}$ and $T_{u_v}$.
Let us prove that $(T,{\cal X})$ is a star-decomposition of $G'$, whence $tb(G') = 1$.
To prove it, it is sufficient to prove $T_y \cap T_z \neq \emptyset$.
We will prove as an intermediate claim that $T_{a_v} \cap T_{u_v} \neq \emptyset$.
By contradiction, assume $T_{a_v} \cap T_{u_v} = \emptyset$.
Observe that $\Pi' = (a_v,b_v,u_v) \in {\cal P}_3(G)$ with $\{c_v,v\}$ being a full component of $G \setminus \Pi'$.
Therefore, since $G$ is prime by the hypothesis, one of $\Pi'$ or $\Pi' \setminus b_v$ is a minimal separator of $G$.
Since $(T,{\cal X})$ is assumed to minimize the distance in $T$ between $T_{a_v}$ and $T_{u_v}$, therefore, by Corollary~\ref{cor:strong-sep} there are two adjacent bags $B_{a_v},B_{u_v}$ such that $a_v \in B_{a_v} \setminus B_{u_v}$ and $u_v \in B_{u_v} \setminus B_{a_v}$ respectively dominate $B_{a_v}$ and $B_{u_v}$.
However, by the properties of a tree decomposition $B_{a_v} \cap B_{u_v} = N(a_v) \cap N(u_v)$ is an $a_vu_v$-separator of $G$, that contradicts the existence of the path $(a_v,v,c_v,u_v)$ in $G$.
Consequently the claim is proved, hence $T_{a_v} \cap T_{u_v} \neq \emptyset$. 
The latter claim implies $T_{b_v} \cap T_{x} = \emptyset$, for if $T_{b_v} \cap T_x \neq \emptyset$ then the subtrees $T_{a_v},T_{b_v},T_{u_v},T_x$ are pairwise intersecting, hence $T_{a_v} \cap T_{b_v} \cap T_x \cap T_{u_v} \neq \emptyset$ by the Helly property (Lemma~\ref{lem:helly}), that would contradict the fact that $(T,{\cal X})$ is a star-decomposition because $N(a_v) \cap N(u_v) \cap N(b_v) = \emptyset$ by the hypothesis.
Finally, since $(x,y,b_v,z)$ induces a cycle of $G$ and $T_{b_v} \cap T_x = \emptyset$, therefore, by the properties of a tree decomposition $T_y \cap T_z \neq \emptyset$, and so, $(T,{\cal X})$ is indeed a star-decomposition of $G'$.

\smallskip
Conversely, let us prove $tb(G') = 1$ implies $tb(G) = 1$.
To prove it, let $(T',{\cal X}')$ be a star-decomposition of $G'$, that exists by Lemma~\ref{lem:dominator-in-bag}, minimizing the number $|{\cal X}'|$ of bags.
Assume furthermore $(T',{\cal X}')$ to minimize the number of bags $B \in {\cal X}'$ that are not contained into the closed neighbourhood of some vertex in $G$ w.r.t. the minimality of $|{\cal X}'|$. 
In order to prove $tb(G) = 1$, it suffices to prove that $(T',{\cal X}')$ is a star-decomposition of $G$.
We will start proving intermediate claims.

\begin{claim}
\label{claim:neighbours-y}
$a_v,u_v \notin N_G(y)$.
\end{claim}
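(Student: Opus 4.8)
The plan is to prove the contrapositive: assume $s \in N_G(y)$ for some $s \in \{a_v,u_v\}$ and derive a contradiction. First I would record the elementary facts used throughout. The vertices $b_v$ and $x$ are non-adjacent (otherwise $x \in N(a_v)\cap N(b_v)\cap N(u_v)$, which is empty by hypothesis); both $a_v$ and $u_v$ belong to $S := N(b_v)\cap N(x)$, since $a_v,u_v \sim b_v$ (the path $\Pi_v$ and the edge-separator $(b_v,u_v)$) and $a_v,u_v \sim x$ (as $x \in N(a_v)\cap N(u_v)$); the vertices $a_v,u_v$ are joined by the path $a_v - v - c_v - u_v$ which avoids $b_v,x,y$ (using $N(v)=\{a_v,b_v,c_v\}$ from Definition~\ref{def:leafVertex} and $N(c_v)=\{b_v,u_v,v\}$ from Theorem~\ref{th:GminusVprime?}); and $y \in S$ since $y \in C_0 \cap V(P)$ with $P$ an arc of the cycle $G_S[S]$. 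Because $a_v$ and $u_v$ play symmetric roles here, it suffices to treat a single $s \in \{a_v,u_v\}$ with $s \in N_G(y)$.

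Because $y \in C_0$ while $s \in \{a_v,u_v\} \subseteq W$, the assumption $s \sim y$ forces $s \in N_G(C_0)$; by Theorem~\ref{lem:final-case-1} we have $N_G(C_0) \subseteq \{a_v,b_v,x\}$ or $N_G(C_0) \subseteq \{u_v,b_v,x\}$, and since $b_v,x \in N_G(C_0)$ already, this pins down $N_G(C_0) = \{s,b_v,x\}$. As $b_v \not\sim x$ while $s\sim b_v$ and $s\sim x$, the set $N_G(C_0)$ induces the path $\Pi_s = (b_v,s,x) \in {\cal P}_3(G)$, with $C_0$ a full component. Moreover $y$ is adjacent to all three of $b_v,s,x$, so Lemma~\ref{lem:common-neighbour} applies to $\Pi_s$, $C_0$ and $y$: either $C_0 = \{y\}$, or $(b_v,y,x)$ is a separator of $G$ that separates $s$ from $C_0 \setminus y$. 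In the first case $N_G(y) = N_G(C_0) = \{b_v,s,x\}$ induces the path $(b_v,s,x)$, so $y$ is a leaf-vertex of Type $2$ (Definition~\ref{def:leafVertex}) lying in $S = N(b_v)\cap N(x)$, contradicting the hypothesis that $S$ contains no leaf-vertex.

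It remains to rule out the second case, which I expect to be the main obstacle. Here $\Pi''' = (b_v,y,x)$ is a new $P_3$-separator (recall $b_v \not\sim x$) and $C_0 \setminus y$ is nonempty; let $C''$ be a component of $G \setminus \Pi'''$ on the side away from $s$, so $C'' \subseteq C_0 \setminus y$ and, since $G$ is prime, $\{b_v,x\} \subseteq N(C'') \subseteq \{b_v,y,x\}$. If $N(C'') = \{b_v,x\}$, then $\{b_v,x\}$ is a minimal $2$-separator having $C''$ as a full component, and Lemma~\ref{lem:2-separator} (available because $tb(G)=1$) yields a common neighbour of $b_v,x$ inside $C''$, i.e. a vertex of $S$ in $C_0$ distinct from $y$; this I would contradict against the selection of $y$ as the vertex of $C_0$ where the separating arc of $S$ exits $N_G[C_0]$, equivalently against $C_0 \cap S = \{y\}$, which follows from the chordless-cycle structure of the minimal separator $S$ provided by Corollaries~\ref{cor:sep-planar} and~\ref{cor:make-sep-cyc}. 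The genuinely delicate case is $N(C'') = \{b_v,y,x\}$: here $\{b_v,x\}$ alone is \emph{not} a separator, since $y$ bridges $C''$ to the $s$-side, and the naive $K_{3,3}$- and $K_5$-minors all fail to close (the configuration is in fact planar). In that case I would instead pass to a star-decomposition of $G$ (Lemma~\ref{lem:dominator-in-bag}) and apply the Helly property to the subtrees $T_{b_v}$, $T_x$ and the subtree meeting $C''$, forcing a bag containing $b_v,x$ that intersects $C''$ and hence a dominating common neighbour of $b_v,x$ inside $C''$; this again exhibits an extra vertex of $S \cap C_0$ and yields the same contradiction. Making this final domination argument airtight — in particular excluding the degenerate possibility that the forced dominator is $y$ itself — is where the real work lies.
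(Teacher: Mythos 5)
Your first half is exactly the paper's argument: assuming some $s \in \{a_v,u_v\}$ is adjacent to $y$, deriving $N_G(C_0) = \{s,b_v,x\}$, applying Lemma~\ref{lem:common-neighbour}, killing the case $C_0 = \{y\}$ because $y$ would then be a Type-2 leaf-vertex in $N(b_v) \cap N(x)$, and producing a component $C''$ of $G \setminus (b_v,y,x)$ inside $C_0 \setminus y$ with $b_v,x \in N(C'')$. The endgame, however, has a genuine gap. First, your case split is spurious: since $C'' \subsetneq C_0$, $C_0$ is connected, and $N(C'') \subseteq \{b_v,y,x\}$ with $\{b_v,y,x\} \cap C_0 = \{y\}$, one always has $y \in N(C'')$, so the subcase $N(C'') = \{b_v,x\}$ never occurs (this one-line connectivity remark is in the paper). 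More seriously, both of your subcase arguments bottom out on the assertion $C_0 \cap S = \{y\}$, which is neither proved nor true in general: $y$ is merely chosen as a vertex of $C_0 \cap V(P)$ having a neighbour $z$ on $P$ with $z \notin N_G[C_0]$, and nothing prevents the cycle $G_S[S]$ from traversing $C_0$ along several consecutive vertices, i.e.\ $S$ may contain further common neighbours of $b_v$ and $x$ inside $C_0$. Your Helly step is also unavailable as stated: it needs $T_{b_v} \cap T_x \neq \emptyset$, whereas the proof of Theorem~\ref{lem:final-case-3} shows precisely that this intersection is empty in the decompositions under consideration.

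The idea you are missing is that the contradiction is with the planarity of $G'$, not of $G$. You correctly sense that the configuration is planar inside $G$; but Claim~\ref{claim:neighbours-y} lives inside the proof of Theorem~\ref{lem:final-case-3}, after the edge $\{y,z\}$ has been added, and $G'$ is planar by construction (it is a subgraph of the planar supergraph $G_S$ of Corollary~\ref{cor:make-sep-cyc}, since $y$ and $z$ are consecutive on the cycle $G_S[S]$). The paper closes with a $K_{3,3}$-minor of $G'$, with sides $\{b_v,x,y\}$ and $\{a_v,C_0',u_v\}$; in your notation one may always take $\{s,\,C'',\,z\}$ as the second side. Indeed $s$ and $z$ both belong to $S = N(b_v) \cap N(x)$ (note $a_v,u_v \in S$ and $z \in V(P) \subseteq S$), so $b_v$ and $x$ are adjacent to $s$, to $z$, and to $C''$; while $y$ is adjacent to $s$ (your contradiction hypothesis), to $C''$ (the connectivity remark above), and to $z$ by the new edge of $G'$. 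The branch sets are pairwise disjoint precisely because $z \notin N_G[C_0] \supseteq C'' \cup \{y\}$ while $s \in N_G(C_0)$. This is the only place where the defining property of $y$ --- the existence of its cycle-neighbour $z$ outside $N_G[C_0]$ --- is used; your endgame never invokes $z$ at all, and without it (and without the planarity of $G'$) the claim cannot be closed.
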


\begin{proofclaim}
By contradiction, assume the existence of $z' \in \{a_v,u_v\}$ so that $z'$ and $y$ are adjacent in $G$.
In particular, $z' \neq z$ (since $z \notin N_G[C_0]$) and $N_G(C_0) = \{b_v,x,z'\}$ since either $N_G(C_0) \subseteq \{b_v,x,a_v\}$ or $N_G(C_0) \subseteq \{b_v,x,u_v\}$. 
Hence, the path $(b_v,z',x)$ is a separator of $G$.
Since $y \in N_G(b_v) \cap N_G(z') \cap N_G(x)$, by Lemma~\ref{lem:common-neighbour} either $C_0$ is reduced to $y$ or $(b_v,y,x) \in  {\cal P}_3(G)$ separates $z'$ from $C_0 \setminus y$.
The case $C_0 \setminus y = \emptyset$ implies that $y$ is a leaf-vertex of Type 2, that contradicts the hypothesis that there is no leaf-vertex in $N_G(b_v) \cap N_G(x)$.
Therefore, let $(b_v,y,x) \in {\cal P}_3(G)$ separates $z'$ from $C_0 \setminus y$ in $G$, and let $C_0' \subseteq C_0 \setminus y$ be a component of $G \setminus (b_v,y,x)$ (such a component $C_0'$ exists because $N_G(C_0) = \{z',b_v,x\}$).
Since $G$ is prime, $b_v,x \in N_G(C_0')$ (indeed, neither $b_v$ nor $y$ nor $x$ nor $(b_v,y)$ nor $(y,x)$ can be a separator of $G$).
Therefore, $N_G(b_v) \cap N_G(x) \cap C_0' \neq \emptyset$ because $N_G(b_v) \cap N_G(x)$ is a $b_vx$-separator of $G$ by the hypothesis.
Furthermore, $y \in N_G(C_0')$ because $C_0$ is connected.
However in such case, there is a $K_{3,3}$-minor of $G'$ with $\{b_v,x,y\}$ and $\{a_v,C_0',u_v\}$ being the sides of the bipartition, that contradicts the fact that $G'$ is planar.
\end{proofclaim}

\begin{claim}
\label{claim:no-dominator-for-cycle}
There is no vertex dominating the cycle $(a_v,b_v,u_v,x)$ in $G'$.
\end{claim}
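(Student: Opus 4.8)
The plan is to argue by contradiction. Suppose some vertex $w$ dominates the cycle, i.e.\ $\{a_v,b_v,u_v,x\}\subseteq N_{G'}[w]$. The single structural fact I intend to lean on is that $G'$ differs from $G$ only by the added edge $\{y,z\}$, whereas the standing hypothesis already forbids a common neighbour of $a_v,b_v,u_v$ in $G$ (namely $N(a_v)\cap N(b_v)\cap N(u_v)=\emptyset$). Thus the whole argument reduces to showing that the one new edge cannot create the missing dominator.

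First I would rule out that $w$ is itself a vertex of the cycle. Observe that $x\notin N_G(b_v)$: otherwise $x\in N(a_v)\cap N(b_v)\cap N(u_v)$ (recall $x\in N(a_v)\cap N(u_v)$), contradicting the hypothesis. Since neither $b_v$ nor $x$ is an endpoint of $\{y,z\}$ (this I justify below), their neighbourhoods coincide in $G$ and $G'$, so $x\notin N_{G'}[b_v]$ and $b_v\notin N_{G'}[x]$, excluding $w=b_v$ and $w=x$. Likewise $a_v,u_v$ are non-adjacent in $G$ (as $u_v\notin N(a_v)$) and the added edge is not $\{a_v,u_v\}$ because $y\in C_0$ is distinct from $a_v,u_v$; hence $u_v\notin N_{G'}[a_v]$ and $a_v\notin N_{G'}[u_v]$, excluding $w=a_v$ and $w=u_v$. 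Therefore $w\notin\{a_v,b_v,u_v,x\}$ and consequently $w\in N_{G'}(a_v)\cap N_{G'}(b_v)\cap N_{G'}(u_v)\cap N_{G'}(x)$.

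Next I would transfer back to $G$. Since $b_v$ and $x$ are not endpoints of $\{y,z\}$ we have $N_{G'}(b_v)=N_G(b_v)$, so $w\in N_G(b_v)$. Because $y\in C_0\subseteq V\setminus W$ is different from $a_v,u_v$ while $z$ is a single vertex, at most one of $a_v,u_v$ can be an endpoint of $\{y,z\}$; assume \emph{w.l.o.g.} that $a_v$ is not an endpoint (the case of $u_v$ is symmetric). Then $N_{G'}(a_v)=N_G(a_v)$, giving $w\in N_G(a_v)\cap N_G(b_v)$. If in addition $w\in N_G(u_v)$, then $w\in N(a_v)\cap N(b_v)\cap N(u_v)=\emptyset$, a contradiction. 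Otherwise $w\notin N_G(u_v)$, so the adjacency $\{w,u_v\}$ present in $G'$ must be exactly the new edge, forcing $w=y$ and $z=u_v$; but Claim~\ref{claim:neighbours-y} states $a_v\notin N_G(y)$, i.e.\ $y\notin N_G(a_v)$, contradicting $w\in N_G(a_v)$. Every branch thus yields a contradiction, so no dominator exists.

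The one delicate point, which I regard as the main obstacle, is the precise bookkeeping of \emph{which} of the adjacencies among $a_v,b_v,u_v,x$ may genuinely differ between $G$ and $G'$. This is settled by recalling that the endpoints of the added edge satisfy $y\in C_0$ and $z\notin N_G[C_0]$, together with $b_v,x\in N_G(C_0)$ (by the choice of the component $C_0$ in Theorem~\ref{lem:final-case-3}): this forces $\{y,z\}$ to avoid $b_v$ and $x$ entirely and to touch at most one of $a_v,u_v$. Once this is pinned down, the hypothesis $N(a_v)\cap N(b_v)\cap N(u_v)=\emptyset$ and Claim~\ref{claim:neighbours-y} close the remaining cases mechanically.
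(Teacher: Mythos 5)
Your proof is correct and follows essentially the same route as the paper's: both arguments observe that since $N_G(a_v)\cap N_G(b_v)\cap N_G(u_v)=\emptyset$, any dominator in $G'$ must exploit the single added edge $\{y,z\}$, which forces the dominator to be $y$ with $z\in\{a_v,u_v\}$, and then Claim~\ref{claim:neighbours-y} ($a_v,u_v\notin N_G(y)$) yields the contradiction. The paper states this very tersely, while you additionally spell out the bookkeeping (ruling out cycle vertices as dominators and pinning down which adjacencies can change), but the key ingredients and their use are identical.
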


\begin{proofclaim}
By contradiction, if it were the case that such a vertex exists, then, since $N_G(b_v) \cap N_G(a_v) \cap N_G(u_v) = \emptyset$ by the hypothesis, the dominator should be $y$ and furthermore $z \in \{a_v,u_v\}$.
In particular, $y \in N_G(b_v) \cap N_G(x) \cap N_G(z')$ with $\{z,z'\} = \{a_v,u_v\}$, thus contradicting Claim~\ref{claim:neighbours-y}. 
\end{proofclaim} 

\begin{claim}
\label{claim:crossing-subtrees}
$T'_{a_v} \cap T'_{u_v} \neq \emptyset$.
\end{claim}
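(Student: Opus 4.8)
The plan is to argue by contradiction: assume $T'_{a_v}\cap T'_{u_v}=\emptyset$ and exploit the fact that $(T',{\cal X}')$ is a star-decomposition of $G'$ together with two very short $a_vu_v$-paths. Since $v$ is of Type $2$ we have $N_{G'}(v)=\{a_v,b_v,c_v\}$, and since $u_v\in N(c_v)$ Theorem~\ref{th:GminusVprime?} gives $N_{G'}(c_v)=\{b_v,u_v,v\}$; note the added edge $\{y,z\}$ changes none of $N(v)$, $N(c_v)$, $N(x)$ because $y,z\in N(b_v)\cap N(x)$ lie in $C_0$ (so are distinct from $v,c_v,x$). Thus $G'$ contains the induced $a_vu_v$-paths $P_1=(a_v,v,c_v,u_v)$ and $P_2=(a_v,x,u_v)$. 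The key structural observation is that \emph{every} $a_vu_v$-separator of $G'$ must meet each of these paths in an internal vertex; hence it contains $x$ (the only internal vertex of $P_2$) and at least one of $v,c_v$.

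First I would rule out internal bags. Consider the path of $T'$ joining $T'_{a_v}$ to $T'_{u_v}$; because the two subtrees are assumed disjoint, every bag strictly between them is an $a_vu_v$-separator avoiding both $a_v$ and $u_v$. By the paragraph above such an internal bag $A$ satisfies $x\in A$ and ($v\in A$ or $c_v\in A$). Now I compute the relevant closed-neighbourhood intersections, using that $b_v$ and $x$ are non-adjacent (the no-diamond hypothesis $N(a_v)\cap N(b_v)\cap N(u_v)=\emptyset$): one checks $N_{G'}[x]\cap N_{G'}[v]=\{a_v\}$ and $N_{G'}[x]\cap N_{G'}[c_v]=\{u_v\}$. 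Consequently the dominator of $A$ (which lies in $A$ by Lemma~\ref{lem:dominator-in-bag}, as $(T',{\cal X}')$ is reduced) would have to be $a_v$ or $u_v$, neither of which belongs to $A$ — a contradiction. Therefore there are no internal bags and $T'_{a_v},T'_{u_v}$ are adjacent, say via bags $B_{a_v}\ni a_v$ and $B_{u_v}\ni u_v$ with $a_v\notin B_{u_v}$, $u_v\notin B_{a_v}$.

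It remains to contradict the adjacent case, which uses exactly the same neighbourhood computation. The intersection $S^\ast=B_{a_v}\cap B_{u_v}$ is an $a_vu_v$-separator, so $x\in S^\ast$ and $S^\ast$ contains $v$ or $c_v$. If $c_v\in S^\ast$, then $B_{a_v}\supseteq\{a_v,x,c_v\}$, whose dominator must lie in $N_{G'}[a_v]\cap N_{G'}[x]\cap N_{G'}[c_v]=N_{G'}[a_v]\cap\{u_v\}=\emptyset$, impossible. If instead $v\in S^\ast$, then $B_{u_v}\supseteq\{u_v,x,v\}$, whose dominator must lie in $N_{G'}[u_v]\cap N_{G'}[x]\cap N_{G'}[v]=N_{G'}[u_v]\cap\{a_v\}=\emptyset$, again impossible. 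In either case a bag of the star-decomposition has no dominator, contradicting Lemma~\ref{lem:dominator-in-bag}, and the claim $T'_{a_v}\cap T'_{u_v}\neq\emptyset$ follows.

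I expect the main obstacle to be purely organizational rather than conceptual: carefully verifying that the edge $\{y,z\}$ leaves the neighbourhoods of $v$, $c_v$ and $x$ untouched (so that $N_{G'}(c_v)=\{b_v,u_v,v\}$ really holds in $G'$) and pinning down that $b_v\notin N(x)$, since it is precisely the non-adjacency of $b_v$ and $x$ that collapses $N_{G'}[x]\cap N_{G'}[v]$ and $N_{G'}[x]\cap N_{G'}[c_v]$ to the single vertices $a_v$ and $u_v$. An alternative route, which I would keep in reserve, is to use the four-subtree ``cycle'' property on the induced quadrangle $(a_v,b_v,u_v,x)$: disjointness of $T'_{a_v},T'_{u_v}$ would force $T'_{b_v}\cap T'_x\neq\emptyset$, and then Claim~\ref{claim:no-dominator-for-cycle} together with the Helly property (Lemma~\ref{lem:helly}) yields a bag dominated by no vertex; however the direct two-path argument above seems cleaner and avoids invoking the minimality of $|{\cal X}'|$ altogether.
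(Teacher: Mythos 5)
Your proof is correct and follows essentially the same route as the paper's: both argue by contradiction that every bag on the $T'$-path between $T'_{a_v}$ and $T'_{u_v}$ must contain $x$ together with one of $v,c_v$, use the non-adjacency of $x$ to $b_v$, $v$ and $c_v$ to force the dominator of such a bag to be $a_v$ or $u_v$, and extract the final contradiction from the fact that $B_{a_v}\cap B_{u_v}$ is an $a_vu_v$-separator which the path $(a_v,v,c_v,u_v)$ must meet. The two small slips in your write-up --- attributing ``the dominator lies in the bag'' to Lemma~\ref{lem:dominator-in-bag} (it is really just Definition~\ref{def:star-tree-dec} of a star-decomposition) and asserting that both $y,z$ lie in $C_0$ (only $y$ does; $z\notin N_G[C_0]$) --- are harmless, since the fact you actually need, $\{y,z\}\cap\{v,c_v,x\}=\emptyset$, follows directly from $y,z\in N(x)$ while $v,c_v,x\notin N(x)$.
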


\begin{proofclaim}
By contradiction, let $T'_{a_v} \cap T'_{u_v} = \emptyset$. 
By the properties of a tree decomposition, every bag $B$ onto the path in $T'$ between $T'_{a_v}$ and $T'_{u_v}$ (including the endpoints) must contain $N_{G'}(a_v) \cap N_{G'}(u_v)$ and at least one of $v$ or $c_v$.
Then, if $c_v \in B$ then $B \subseteq N_{G'}[u_v]$ and so, $B \in T'_{u_v}$, since $N_{G'}(c_v) = \{b_v,u_v,v\}$ and $x \in B$.
Similarly if $v \in B$ then $B \subseteq N_{G'}[a_v]$ and so, $B \in T'_{a_v}$, since $N_{G'}(v) = \{a_v,b_v,c_v\}$ and $x \in B$.
Consequently, there are two adjacent bags $B_{a_v},B_{u_v}$ such that $a_v \in B_{a_v} \setminus B_{u_v}$ and $u_v \in B_{u_v} \setminus B_{a_v}$ respectively dominate $B_{a_v}$ and $B_{u_v}$ in $G'$.
However, by the properties of a tree decomposition, $B_{a_v} \cap B_{u_v} = N_{G'}(a_v) \cap N_{G'}(u_v)$ is an $a_vu_v$-separator of $G'$, thus contradicting the existence of the path $(a_v,v,c_v,u_v)$ in $G'$.
Therefore, it follows that $T'_{a_v} \cap T'_{u_v} \neq \emptyset$, that proves the claim.
\end{proofclaim}

\begin{claim}
\label{claim:disjoint-subtrees}
$T'_{b_v} \cap T'_x = \emptyset$
\end{claim}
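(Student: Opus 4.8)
The plan is to argue by contradiction, exactly mirroring the forward direction of the theorem, with Claim~\ref{claim:no-dominator-for-cycle} playing the role that the hypothesis $N(a_v) \cap N(u_v) \cap N(b_v) = \emptyset$ played there. So I would suppose for contradiction that $T'_{b_v} \cap T'_x \neq \emptyset$. The key structural observation is that the four vertices $a_v, b_v, u_v, x$ induce the cycle $(a_v, b_v, u_v, x)$ in $G'$: indeed $\{a_v,b_v\}$ and $\{b_v,u_v\}$ are edges since $\Pi_v = (a_v,b_v,c_v)$ is a path and $(b_v,u_v)$ is an edge-separator of $G \setminus v$, while $\{u_v,x\}$ and $\{x,a_v\}$ are edges because $x \in N(a_v) \cap N(u_v)$.

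Next I would verify that the four subtrees $T'_{a_v}, T'_{b_v}, T'_{u_v}, T'_x$ pairwise intersect. Four of the six pairs correspond to the edges of the cycle above, hence their subtrees intersect by the properties of a tree decomposition; the pair $(T'_{a_v}, T'_{u_v})$ intersects by Claim~\ref{claim:crossing-subtrees}; and the remaining pair $(T'_{b_v}, T'_x)$ intersects by the contradiction hypothesis. I would note in passing that $b_v$ and $x$ are genuinely non-adjacent in $G'$, since $x \in N(a_v) \cap N(u_v)$ while $N(a_v) \cap N(b_v) \cap N(u_v) = \emptyset$; this is why the last intersection is not automatic and must be supplied by the assumption. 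By the Helly property (Lemma~\ref{lem:helly}), $T'_{a_v} \cap T'_{b_v} \cap T'_{u_v} \cap T'_x \neq \emptyset$, that is, some bag $B'$ of $(T',{\cal X}')$ contains all four of $a_v, b_v, u_v, x$.

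Finally I would derive the contradiction: since $(T',{\cal X}')$ is a star-decomposition, there is a vertex $w$ with $B' \subseteq N_{G'}[w]$, and then $w$ dominates $\{a_v, b_v, u_v, x\}$, hence dominates the cycle $(a_v, b_v, u_v, x)$ in $G'$, contradicting Claim~\ref{claim:no-dominator-for-cycle}. Therefore $T'_{b_v} \cap T'_x = \emptyset$. I expect no real obstacle: the argument is a direct transcription of the forward-direction computation, and the only inputs beyond the cycle structure are the two already-established Claims~\ref{claim:crossing-subtrees} and~\ref{claim:no-dominator-for-cycle}. The single point requiring minor care is confirming all six pairwise intersections before invoking Helly, in particular that the non-edge $\{b_v,x\}$ is exactly the hypothesis being refuted, so that the contradiction is correctly located.
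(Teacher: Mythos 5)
Your proof is correct and follows exactly the paper's own argument: assume $T'_{b_v}\cap T'_x\neq\emptyset$, combine it with Claim~\ref{claim:crossing-subtrees} and the four cycle edges to get pairwise intersecting subtrees, apply the Helly property (Lemma~\ref{lem:helly}) to obtain a bag containing $\{a_v,b_v,u_v,x\}$, and contradict Claim~\ref{claim:no-dominator-for-cycle} via the star-decomposition property. The only difference is that you spell out the six pairwise intersections and the non-adjacency of $b_v,x$ explicitly, which the paper leaves implicit.
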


\begin{proofclaim}
Suppose for the sake of contradiction that $T'_{b_v} \cap T'_x \neq \emptyset$.
By Claim~\ref{claim:crossing-subtrees}, $T'_{a_v} \cap T'_{u_v} \neq \emptyset$, and so, the subtrees $T'_{a_v},T'_{b_v},T'_{u_v},T'_x$ are pairwise intersecting.
Hence by the Helly property (Lemma~\ref{lem:helly}), $T'_{a_v} \cap T'_{b_v} \cap T'_{u_v} \cap T'_x \neq \emptyset$.
However in such case, since $(T',{\cal X}')$ is a star-decomposition of $G'$ there must be a vertex dominating the cycle $(a_v,b_v,u_v,x)$ in $G'$, thereby contradicting Claim~\ref{claim:no-dominator-for-cycle}. 
\end{proofclaim}

As a result, $T'_{a_v} \cap T'_{u_v} \neq \emptyset$ by Claim~\ref{claim:crossing-subtrees} and $T'_{b_v} \cap T'_x = \emptyset$ by Claim~\ref{claim:disjoint-subtrees}.

Finally, suppose by way of contradiction $(T',{\cal X}')$ is not a star-decomposition of $G$.
In such case, since $G$ and $G'$ only differ in the edge $\{y,z\}$, there must exist $B \in T'_y \cap T'_z$ that is uniquely dominated by some of $y,z$ in $G'$.
More precisely, let us prove that only one of $y,z$ can dominate $B$.
By contradiction, suppose $B \subseteq N_{G'}[y] \cap N_{G'}[z] = \{b_v,x,y,z\}$ (indeed, $y \in C_0$ whereas $z \notin N_G[C_0]$).
Since $T'_{b_v} \cap T'_x = \emptyset$ by Claim~\ref{claim:disjoint-subtrees}, $B \subseteq \{b_v,y,z\}$ or $B \subseteq \{x,y,z,\}$.
Therefore, $B$ is a clique of $G'$.
However, since $B \neq V(G')$, there is a bag $B'$ adjacent to $B$ and by the properties of a tree decomposition $B \cap B'$ is a clique-separator of $G'$, thus contradicting the fact that $G'$ is prime by Claim~\ref{claim:gprime-prime-2}.
Consequently, either $B \subseteq N_{G'}[y]$ or $B \subseteq N_{G'}[z]$, and either $B \not\subseteq N_{G'}[y]$ or $B \not\subseteq N_{G'}[z]$.
In the following, let $\{s,t\} = \{y,z\}$ satisfy $B \subseteq N_{G'}[s]$, that is well-defined.
Let $B'$ be any bag adjacent to $B$ so that $t \in B'$ (such bag exists because $y,z \in N(b_v) \cap N(x)$, and $b_v \notin B$ or $x \notin B$ because $T'_{b_v} \cap T'_x = \emptyset$).
There are three cases.

\begin{itemize}
\item Suppose no vertex of $b_v,x,y,z$ dominates $B'$ in $G'$ (see Figure~\ref{fig:very-last-case} for an illustration).
Since $b_v \notin B$ or $x \notin B$ because $T'_{b_v} \cap T'_x = \emptyset$, therefore, $(B \cap B') \setminus (b_v,x,y,z) \neq \emptyset$, or else by the properties of a tree decomposition that would be a clique-separator of $G'$, thus contradicting the fact that $G'$ is prime by Claim~\ref{claim:gprime-prime-2}.  
Let $t' \in (B \cap B') \setminus (b_v,x,y,z)$.
Note that $t'$ and $t$ are non-adjacent in $G'$ because $t' \in N_{G'}[s]$ and $N_{G'}[y] \cap N_{G'}[z] = \{b_v,x,y,z\}$.
Let $s' \in B'$ dominate this bag.
Note that $s'$ and $s$ are non-adjacent in $G'$ because we assume $s' \notin \{b_v,x,y,z\}$, $t \in N_{G'}(s')$ and $N_{G'}[y] \cap N_{G'}[z] = \{b_v,x,y,z\}$.
In particular, $s' \neq t'$ and $(s,t',s',t)$ induces a path in $G$. 
By construction, $y \in C_0$ and $z \notin N_G[C_0]$, hence there must be some of $s',t'$ in $N_G(C_0)$. 
Since $N_G(C_0) \subseteq \{a_v,b_v,u_v,x\}$ and $s',t' \notin \{b_v,x,y,z\}$, therefore the pairs $\{s',t'\}$ and $\{a_v,u_v\}$ intersect.
However, by Claim~\ref{claim:neighbours-y} $a_v,u_v \notin N_G(y)$, similarly $a_v,u_v \notin N_G(z)$, that contradicts the existence of the path $(s,t',s',t)$ in $G$.
Consequently, assume in the remaining cases that there is some vertex of $b_v,x,y,z$ dominating $B'$ in $G'$.

\begin{figure}[h!]
	\centering
	\includegraphics[width=0.25\textwidth]{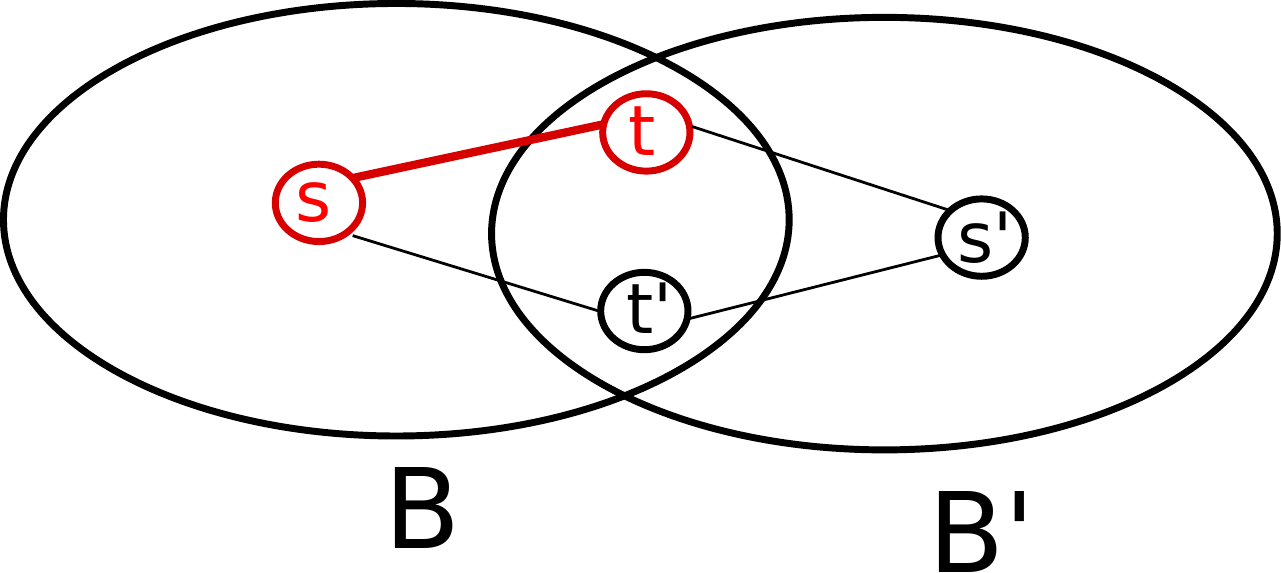}
	\caption{$B \subseteq N_{G'}[s]$, $B' \subseteq N_{G'}[s']$.}
	\label{fig:very-last-case}
\end{figure}

\item Suppose $B'$ is dominated by one of $y,z$ in $G'$.
We claim that $B$ and $B'$ are dominated by the same vertex of $y,z$, for if it were not the case $B \cap B' \subseteq N_{G'}[y] \cap N_{G'}[z] = \{b_v,x,y,z\}$, and so, since by the properties of a tree decomposition $B \cap B'$ is a separator of $G'$, and $b_v \notin B$ or $x \notin B$ because $T'_{b_v} \cap T'_x = \emptyset$, $B \cap B'$ should be a clique-separator of $G'$, thus contradicting the fact that $G'$ is prime by Claim~\ref{claim:gprime-prime-2}. 
However, in such a case bags $B,B'$ could be merged into one while preserving the property for the tree decomposition to be a star-decomposition of $G'$, that would contradict the minimality of $|{\cal X}'|$.
\item Therefore, $B'$ is dominated by some of $b_v,x$. 
We claim that there is a unique such bag $B'$ that is adjacent to $B$.
By contradiction, let $B'' \neq B'$ be adjacent to $B$ and such that $B''$ is also dominated by some of $b_v,x$.
In particular, if $B''  \cup B' \subseteq N[b_v]$ or $B'' \cup B' \subseteq N[x]$ then both bags $B',B''$ could be merged into one without violating the property for the tree decomposition to be a star-decomposition of $G'$, that would contradict the minimality of $|{\cal X}'|$.
Else, w.l.o.g. $B' \subseteq N[b_v]$ and $B'' \subseteq N[x]$.
Since $T'_{b_v} \cap T'_x = \emptyset$, $a_v,u_v \in N(b_v) \cap N(x)$ and $B',B''$ are adjacent to $B$, therefore, by the properties of the tree decomposition $a_v,u_v \in B$.
However, $a_v,u_v \notin N_G[y]$ by Claim~\ref{claim:neighbours-y}, $a_v$ and $u_v$ are non-adjacent, and either $z \in \{a_v,u_v\}$ or $a_v,u_v \notin N_G[z]$ (by the same proof as for Claim~\ref{claim:neighbours-y}), thus contradicting the fact that either $B \subseteq N_{G'}[y]$ or $B \subseteq N_{G'}[z]$.
Hence, the claim is proved and $B'$ is assumed to be the unique bag adjacent to $B$ such that $B \subseteq N[b_v]$ or $B \subseteq N[x]$.   
In particular, $B'$ is the unique bag adjacent to $B$ containing vertex $t$ (recall that $\{s,t\} = \{y,z\}$ and $B \subseteq N_{G'}[s]$, $B \not\subseteq N_{G'}[t]$).

Let us substitute the bags $B,B'$ with $B \setminus t, B' \cup \{s\}$.
We claim that this operation keeps the property for $(T',{\cal X}')$ to be a star-decomposition of $G'$.
To prove the claim, first note that the operation only modifies bags $B$ and $B'$, furthermore $B \setminus t \subseteq N[s]$ and $B' \cup \{s\} \subseteq N[b_v]$ or $B' \cup \{s\} \subseteq N[x]$.
Consequently, to prove the claim, it suffices to prove that the operation keeps the property for $(T',{\cal X}')$ to be a tree decomposition of $G'$ (for in such a case, it is always a star-decomposition).
Since $T'_t \setminus B$ is connected because $B'$ is the only bag containing vertex $t$ that is adjacent to the bag $B$, therefore, we are left to prove that there is no $w \in N_{G'}(t) \setminus s$ such that $T'_w \cap T'_t = \{B\}$.
By contradiction, let $w \in N_{G'}(t) \setminus s$ satisfy $T'_w \cap T'_t = \{B\}$.
Since $w \in B \subseteq N_{G'}[s]$, therefore $w \in N_{G'}(s) \cap N_{G'}(t) = N_{G}(y) \cap N_G(z) = \{b_v,x\}$.
Moreover, $w \notin B'$ because $t \in B'$ and we assume that $T'_w \cap T'_t = \{B\}$.
In such a case, since it is assumed that $B' \subseteq N[b_v]$ or $B' \subseteq N[x]$, and in addition $T'_x \cap T'_{b_v} = \emptyset$, let us write $\{w,w'\} = \{b_v,x\}$ such that $w \in B \setminus B', \ w' \in B' \setminus B$ and $B' \subseteq N_{G'}[w']$.
By the properties of a tree decomposition, $B \cap B'$ is a $b_vx$-separator of $G'$, so, $a_v,u_v \in B \cap B'$.
However, $a_v,u_v \notin N_G(y)$ by Claim~\ref{claim:neighbours-y} and similarly $a_v,u_v \notin N_G(z)$, that contradicts the fact that $B \subseteq N_{G'}[s]$ for some $s \in \{y,z\}$.
This finally proves the claim that substituting the bags $B,B'$ with $B \setminus t, B' \cup \{s\}$ keeps the property for $(T',{\cal X}')$ to be a star-decomposition of $G'$.

However, the above operation does not increase the number of bags $|{\cal X}'|$, furthermore there is one less bag that is not contained in the closed neighbourhood of some vertex in $G$.
This contradicts the minimality of $(T',{\cal X}')$ w.r.t. these two properties.
\end{itemize} 
As a result, we proved by contradiction that $(T',{\cal X}')$ is a star-decomposition of $G$, hence $tb(G) = 1$.
\end{proof}

\subsection{Complexity of Algorithm \texttt{Leaf-BottomUp}}

To complete this section, let us emphasize on some computational aspects of Algorithm \texttt{Leaf-BottomUp}, that will ensure the quadratic-time complexity of the algorithm.
We here assume that the planar graph $G$ is encoded with adjacency lists.
Note that the adjacency lists can be updated in linear-time before each recursive call to the algorithm.

We will need as a routine to test whether two vertices are adjacent in \emph{constant-time}.
In order to achieve the goal, the following result (relying upon the bounded degeneracy of planar graphs) will be used:

\begin{lemma}[~\cite{Chrobak91}]
There exists a data structure such that each entry in the adjacency matrix of a planar graph can be looked up in constant time. 
The data structure uses linear storage, and can be constructed in linear time.
\end{lemma}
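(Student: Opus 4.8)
The final statement cited is the constant-time adjacency lookup lemma of Chrobak and Eppstein~\cite{Chrobak91}. Since this is quoted verbatim as an external result, I will reconstruct the standard proof that establishes it, exploiting the bounded degeneracy of planar graphs.

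The plan is to use the fact that every planar graph is $5$-degenerate, meaning that there is an ordering $w_1, w_2, \ldots, w_n$ of the vertices such that each $w_i$ has at most $5$ neighbours among $w_{i+1}, \ldots, w_n$; equivalently, one can repeatedly remove a vertex of degree at most $5$. First I would compute such a degeneracy ordering in linear time by a standard bucket-based peeling procedure: maintain buckets of vertices indexed by current degree, repeatedly extract a vertex of minimum degree (which is always at most $5$ for planar graphs), record it, and decrement the degrees of its remaining neighbours. Each edge is touched a constant number of times, so the whole ordering is produced in $\mathcal{O}(n+m) = \mathcal{O}(n)$ time for planar graphs. For each vertex $w_i$ I then store its (at most $5$) \emph{forward} neighbours, those appearing later in the ordering, in a small fixed-size record. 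This is the linear-storage data structure.

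The key observation is that for any edge $\{u,v\}$, exactly one endpoint precedes the other in the degeneracy ordering, and $v$ appears in the forward-neighbour list of $u$ (or symmetrically). Hence, to test adjacency of two arbitrary vertices $u,v$, it suffices to scan the forward-neighbour record of whichever of $u,v$ comes first in the ordering and check whether the other is listed there. Because each such record has size at most $5$, this scan takes constant time regardless of the actual degrees of $u$ and $v$ in $G$. Thus each adjacency query is answered in $\mathcal{O}(1)$ time, as claimed. I would formalize correctness by noting the two directions: if $\{u,v\} \in E$ then the later endpoint is a forward neighbour of the earlier one and is therefore found; and conversely the record only contains genuine neighbours, so a positive answer is never spurious.

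The only subtle point, and the part I expect to require the most care, is justifying that the peeling procedure runs in linear time while maintaining the minimum-degree buckets: one must argue that the total work of decrementing neighbour degrees is bounded by the number of edges, and that locating a minimum-degree vertex never costs more than amortized constant time. This follows from the fact that degrees only decrease during peeling and that a vertex's degree can be decremented at most once per incident edge, so the bucket pointers advance monotonically and the amortized cost is $\mathcal{O}(1)$ per operation. Combined with the constant bound on forward-list sizes guaranteed by $5$-degeneracy, this yields linear construction time, linear storage, and constant query time, which is exactly the statement of the lemma.
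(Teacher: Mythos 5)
Your reconstruction is correct: computing a $5$-degeneracy ordering by bucket peeling, storing for each vertex its at most five forward neighbours, and answering a query by scanning the forward list of the earlier endpoint gives linear storage, linear construction time, and constant-time lookups, which is exactly the claim. The paper does not prove this lemma at all --- it imports it from~\cite{Chrobak91}, noting only that it relies on the bounded degeneracy of planar graphs --- so your argument is precisely the standard proof behind the citation (Chrobak and Eppstein actually construct an out-degree-$3$ orientation, but degeneracy $5$ already suffices for the stated bounds).
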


\subsubsection{Finding a leaf-vertex}
\label{sec:complexity-leaf-vertex}

At each call to the algorithm, it is first required to decide whether a leaf-vertex exists.
If that is the case, then one such a vertex must be computed.
Here is a way to achieve the goal in linear-time.
Let us start computing the degree sequence of $G$, then let us order the vertices of the graph $G$ by increasing degree.

\paragraph{Finding a leaf-vertex of Type 1.}
Let $v$ be any vertex of degree at least four.
We claim that a necessary condition for $v$ to be a leaf-vertex of Type 1 is that all but at most two neighbours of $v$ have degree four.
Indeed, if $v$ is a leaf-vertex of Type 1, then let $\Pi_v, d_v$ be defined as in Definition~\ref{def:leafVertex}.
By Lemma~\ref{lem:ends-separate}, either $V(G) = N[v] \cup \{d_v\}$ or $\Pi' = (a_v,d_v,c_v) \in {\cal P}_3(G)$ and $N[v] \setminus (a_v,c_v)$ is a full component of $G \setminus \Pi'$. 
In both cases, all neighbours in $N(v) \setminus (a_v,c_v)$ have degree four.

\begin{itemize}
\item Therefore, let us count the number of neighbours of degree four in $N(v)$, that can be done in ${\cal O}(deg(v))$-time simply by traversing the adjacency list of vertex $v$ (recall that the degree sequence of $G$ has been computed).

\item If there are all but at most two neighbours in $N(v)$ that have degree four, then we claim that one can construct the induced subgraph $G[N(v)]$ in ${\cal O}(deg(v))$-time.
Indeed, for every neighbour $u \in N(v)$ that has degree four, let us test in constant-time for each of its four neighbours whether they are adjacent to vertex $v$ --- we only keep those for which it is the case in the adjacency list of $u$ in $G[N(v)]$.
Then, for every $u \in N(v)$ that does not have degree four (there are at most two such vertices), let us construct the adjacency list of $u$ in $G[N(v)]$ simply by testing to which vertices in $N(v) \setminus u$ it is adjacent --- the latter takes constant-time by neighbour.

\item Once $G[N(v)]$ has been computed, it is easy to check whether it is a path in ${\cal O}(|N(v)|) = {\cal O}(deg(v))$-time.

\item Finally, let $u \in N(v)$ have degree four. Let us pick in constant-time any neighbour $d_v \in N(u) \setminus N(v)$ (note that such a vertex is unique if $G[N(v)]$ induces a path).
In order to decide whether $v$ is a leaf-vertex of Type 1, it is now sufficient to test whether vertex $d_v$ is adjacent to every vertex in $N(v)$ --- that takes constant-time by neighbour. 
\end{itemize}

\paragraph{Finding a leaf-vertex of Type 2.}
Recall that a vertex is a leaf-vertex of Type 2 if and only if it has degree three and its three neighbours induce a path.
Given any vertex of degree three, three adjacency tests are enough in order to determine whether its three neighbours induce a path --- and each adjacency tests takes constant-time.
Therefore, it can be checked in constant-time whether a vertex is a leaf-vertex of Type 2.

\paragraph{Finding a leaf-vertex of Type 3.}
By Definition~\ref{def:leafVertex}, a vertex $v$ is a leaf-vertex of Type 3 if and only if it has degree two and its two neighbours are non-adjacent and they have at least two common neighbours (including $v$).
Note that given a degree-two vertex, it can be checked whether its two neighbours are non-adjacent in constant-time.  
We now distinguish three cases.

\begin{enumerate}
\item First, suppose there is a vertex $v$ such that $N(v) = \{x,y\}$ and neighbour $x$ is a degree-two vertex.
In such case, let $N(x) = \{v,z\}$, in order to decide whether $v$ is a leaf-vertex of Type 3, it is sufficient to test in constant-time whether $y,z$ are adjacent.	
	
\item Second, suppose there are two degree-two vertices $v,v'$ that share the same two non-adjacent neighbours ({\it i.e.}, $N(v) = N(v') = \{x,y\}$ and $x,y$ are non-adjacent).
In such case, both vertices $v,v'$ are leaf-vertices of Type 3 (this case may happen if for instance, $G = K_{2,q}$ with $q \geq 2$).
In order to check whether this case happens, it is sufficient to sort the pairs $N(v)$ with $v$ being a degree-two vertex in linear-time (for instance, using a bucket-sort).

\item Else, let $V'$ contain every degree-two vertex $v$ with two non-adjacent neighbours of degree at least three (if one of the two neighbours of $v$ has degree two, we fall in the first case)
W.l.o.g., every vertex $v \in V'$ is uniquely determined by the pair $N(v)$ composed of its two neighbours (or else, we fall in the second case).
In such case, let us contract every $v \in V'$ to one of its two neighbours.
By doing so, we remove $v$ and we make the two vertices in $N(v)$ adjacent.
Note that all these edge-contractions are pairwise independent.
Let us call $G'$ the graph resulting from all edge-contractions, and let us call ``virtual edges'' any new edge resulting from an edge-contraction.
Then, let us list all triangles in the resulting graph $G'$, it can be done in linear-time~\cite{Papadimitriou81}.
By construction, $v \in V'$ is a leaf-vertex of Type 3 if and only if the virtual edge resulting from its contraction belongs to a triangle in which it is the unique virtual edge.
\end{enumerate}

Overall, finding a leaf-vertex in $G$ takes ${\cal O}(\sum_{v \in V} deg(v))$-time, that is ${\cal O}(n)$-time because $G$ is planar.

\subsubsection{Existence of a star-decomposition with two bags}

\begin{lemma}\label{lem:complexity-2bags}
Let $G$ be a planar graph, it can be decided in quadratic-time whether $G$ admits a star-decomposition with one or two bags.
\end{lemma}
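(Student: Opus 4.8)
The plan is to reduce the existence of a star-decomposition with at most two bags to a simple condition on closed neighbourhoods, and then to test that condition in quadratic time using the sparsity of planar graphs. First I would dispose of the one-bag case: a star-decomposition with a single bag exists if and only if that bag is $V$ and $\gamma(G[V]) = 1$, i.e. if and only if $G$ has a universal vertex, which is tested in $O(n)$ time by inspecting degrees. For the two-bag case I would argue that one may assume without loss of generality that the two bags are exactly the closed neighbourhoods $N[d_1]$ and $N[d_2]$ of two dominator vertices $d_1,d_2$. Indeed, if $(B_1,B_2)$ is a two-bag star-decomposition with $B_i$ dominated by $d_i$ (so $B_i \subseteq N[d_i]$), then replacing each $B_i$ by $N[d_i]$ preserves the three tree-decomposition axioms: the union only grows, the connectivity condition is automatic for two bags, and every edge already covered by some $B_i$ remains covered by the larger $N[d_i]$. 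Hence $G$ admits a two-bag star-decomposition if and only if there exist $d_1,d_2\in V$ with $N[d_1]\cup N[d_2]=V$ and such that no edge has one endpoint in $N[d_1]\setminus N[d_2]$ and the other in $N[d_2]\setminus N[d_1]$ (equivalently, every edge lies inside $N[d_1]$ or inside $N[d_2]$).

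Next I would make this condition checkable by fixing $d_1$. Writing $R = V\setminus N[d_1]$, every vertex of $R$ must belong to $B_2=N[d_2]$, and the no-crossing-edge requirement forces every $N[d_1]$-neighbour of a vertex of $R$ to lie in $N[d_2]$ as well; conversely, I would verify these two requirements are sufficient. Thus for a fixed $d_1$ a valid partner $d_2$ exists if and only if some vertex dominates $N[R] = R\cup N(R)$. The whole problem therefore becomes: does there exist $d_1$ with $R\neq\emptyset$ such that $N[R]$ admits a dominating vertex? (Note $d_2\neq d_1$ is automatic, since $d_1$ does not dominate $R$.)

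For the complexity I would iterate over the $n$ candidates $d_1$. For each, computing $R$ and $T:=N[R]$ takes $O(n+m)$ time, and deciding whether $T$ has a dominator is done by scanning every $t\in T$, incrementing a counter $c(x)$ for each neighbour $x$ of $t$, and then checking in one pass whether some $x$ satisfies $c(x)=|T\setminus\{x\}|$; this costs $O(\sum_{t\in T}\deg(t))=O(m)$. Since $G$ is planar we have $m=O(n)$, so each of the $n$ iterations runs in $O(n)$ time and the algorithm terminates in $O(n^2)$ time overall, as claimed.

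The hard part will be the correctness of the reduction rather than the timing: I must justify carefully that enlarging the bags to full closed neighbourhoods keeps the decomposition valid, and that the purely local ``no crossing edge around $R$'' requirement is both necessary and sufficient, so that testing domination of $N[R]$ is exactly the right condition. Once this equivalence is pinned down, the quadratic running time follows routinely from planar sparsity, and no further structural property of planar graphs is needed (the characterisation itself holds for arbitrary connected graphs).
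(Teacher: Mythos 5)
Your proposal is correct, and it rests on the same underlying characterization as the paper's proof: a two-bag star-decomposition exists if and only if there are vertices $d_1,d_2$ whose closed neighbourhoods cover $V$ and such that no edge joins $N[d_1]\setminus N[d_2]$ to $N[d_2]\setminus N[d_1]$; the paper states exactly these two conditions as necessary and proves sufficiency by exhibiting the decomposition with bags $N[x],N[y]$, which is precisely your ``enlarge the bags to closed neighbourhoods'' step. Where you genuinely depart from the paper is in how the characterization is searched. The paper enumerates all ${\cal O}(n^2)$ candidate pairs $(x,y)$, spending ${\cal O}(\deg(x)+\deg(y))$ per pair on the covering test and ${\cal O}(\deg(x)\cdot\deg(y))$ per pair on the crossing-edge test, and charges the totals to planar sparsity. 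You instead fix only $d_1$, set $R=V\setminus N[d_1]$, and observe that a valid partner $d_2$ exists if and only if the single set $N[R]=R\cup N(R)$ lies in some closed neighbourhood $N[d_2]$ --- a correct collapse of the two per-pair conditions into one per-vertex domination query: once $R\subseteq N[d_2]$ one has $N[d_2]\setminus N[d_1]=R$, so the crossing-edge condition becomes exactly $N(R)\subseteq N[d_2]$, as you argue. This buys a cleaner accounting (no pair enumeration) and a better bound on arbitrary connected graphs, ${\cal O}(n(n+m))={\cal O}(nm)$ versus the paper's ${\cal O}(nm+m^2)$, though on planar graphs both are the required ${\cal O}(n^2)$; it also makes explicit, as you note, that planarity enters only through sparsity and the characterization itself holds for all connected graphs (which is equally true, but left implicit, in the paper). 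One small remark: when $R\neq\emptyset$, any vertex $d_2$ with $N[R]\subseteq N[d_2]$ automatically lies in $N[R]$, so the case $x\notin T$ in your counter test $c(x)=|T\setminus\{x\}|$ never actually produces a witness; your test is nonetheless correct as stated.
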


\begin{proof}
$G$ admits a star-decomposition with one bag if and only if there is a universal vertex in $G$, hence it can be decided in linear-time.
Assume for the remaining of the proof that $G$ does not admit a star-decomposition with less than two bags.
We will consider two necessary conditions for some fixed pair $x,y$ to be the dominators of the only two bags in some star-decomposition of $G$.
For each of the two conditions, we will prove that all pairs satisfying the condition can be computed in quadratic-time.
Then, we will conclude the proof by showing that the two conditions are sufficient to ensure the existence of a star-decomposition of $G$ with two bags.

\begin{enumerate}
\item Recall that if it exists a star-decomposition of $G$ with two bags, then by the properties of a tree decomposition every vertex of $G$ must be contained in at least one bag. 
Therefore, if $x,y$ are the only two dominators of the bags in some star-decomposition of $G$, they must be a dominating pair of $G$.
It can be decided in ${\cal O}(deg(x) + deg(y))$-time whether a fixed pair $x,y$ is a dominating pair.
So, overall, it takes ${\cal O}(n^2)$-time to compute all dominating pairs of $G$ with $n$ being the order of the graph, for the graph is planar and so, it is a sparse graph.
\item Furthermore, recall that if it exists a star-decomposition of $G$ with two bags, then by the properties of a tree decomposition every edge of $G$ must be contained in at least one bag. 
Therefore, if there is a star-decomposition of $G$ with two bags that are respectively dominated by $x$ and $y$, then it must be the case that there does not exist any edge $e = \{u,v\}$ so that $u \in N[x] \setminus N[y]$ and $v \in N[y] \setminus N[x]$ (else, such an edge could not be contained in any of the two bags).
In order to decide whether the latter condition holds for some fixed pair $x,y$, it suffices to test whether every vertex of $N[x] \setminus N[y]$ is non-adjacent to all vertices in $N[y] \setminus N[x]$ --- it takes constant-time per test and so, ${\cal O}(deg(x) \cdot deg(y))$-time in total.
As a result, computing all pairs $x,y$ satisfying the condition requires ${\cal O}(\sum_{x,y} deg(x) \cdot deg(y)) = {\cal O}([\sum_x deg(x)][\sum_y deg(y)]) = {\cal O}(n^2)$-time because the graph $G$ is planar and so, it is a sparse graph.
\end{enumerate}

Finally, let $x,y$ satisfy the two above necessary conditions.
We claim that $(T,{\cal X})$ with $T$ being an edge and ${\cal X} = \{ N[x], N[y] \}$ is a star-decomposition of $G$.
Indeed, every vertex is contained into a bag because the pair $x,y$ satisfies the first necessary condition.
Furthermore, every edge has its both ends contained into a common bag because the pair $x,y$ satisfies the second necessary condition.
Last, all the bags containing a common vertex induce a subtree because there are only two bags.
As a result, $(T,{\cal X})$ is a tree decomposition of $G$.
Since each bag of ${\cal X}$ is respectively dominated by $x$ or $y$, therefore $(T,{\cal X})$ is indeed a star-decomposition of $G$, that proves the claim, hence the lemma.
\end{proof}

Note that in any execution of Algorithm \texttt{Leaf-BottomUp}, it is verified at most once whether some planar graph admits a star-decomposition with one or two bags.

\subsubsection{Upper-bound on the number of steps in the algorithm}

\begin{lemma}\label{lem:ub-steps}
Let $G$ be a prime planar graph with $n$ vertices and $m$ edges.
Then, there are at most $5n-m$ recursive calls to the Algorithm \texttt{Leaf-BottomUp}, that is ${\cal O}(n)$. 
\end{lemma}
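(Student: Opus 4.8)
The plan is to exhibit a nonnegative integer potential that strictly decreases at every recursive call, and then bound the number of calls by its initial value. Concretely, I would set $\Phi(G) = 5n - m$, where $n = |V(G)|$ and $m = |E(G)|$. Since every graph handled by the algorithm is (prime, hence connected) planar, Euler's formula gives $m \le 3n - 6$, so $\Phi(G) = 5n - m \ge 2n + 6 > 0$; in particular $\Phi$ stays nonnegative throughout. I would also observe from the description of Algorithm \texttt{Leaf-BottomUp} that each invocation issues \emph{at most one} recursive call: the branches of~\ref{step:type-1},~\ref{step:type-2-3} and~\ref{step:new-clique-sep} each lead either to a single recursion or to a base case that halts. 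Thus one execution yields a chain $G = G_0 \to G_1 \to \cdots \to G_t$, the number of recursive calls being exactly $t$. If I can show $\Phi(G_{i+1}) \le \Phi(G_i) - 1$ for all $i$, then $t \le \Phi(G_0) - \Phi(G_t) \le \Phi(G_0) = 5n - m$, which is the claimed bound (and is $O(n)$ since $m \ge n-1$).

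The core is a case analysis over the transformation carried out before each recursive call, verifying in each case that $\Delta\Phi := 5\,\Delta n - \Delta m \le -1$. I would group these into three kinds. First, \emph{pure edge additions} (the two edges of~\ref{step:two-neighbours}, the single edge of~\ref{step:no-diamond-case} subcase (ii), and the preliminary edge $\{v,b_v\}$ added in~\ref{step:new-clique-sep} when $v$ has Type~3): each added edge leaves $n$ fixed and raises $m$, contributing $-1$ to $\Delta\Phi$. Second, \emph{leaf-vertex removal} (~\ref{step:many-neighbours}, recursing on $G\setminus v$): a Type-2 leaf-vertex has degree $3$ and a Type-3 leaf-vertex has degree $2$, so $\Delta n = -1$, $\Delta m = -\deg(v)$ and $\Delta\Phi = \deg(v) - 5 \le -2$. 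Third, \emph{edge contractions} (~\ref{step:no-separation} contracting $\{v,a_v\}$, and~\ref{step:contraction-case},~\ref{step:diamond-case} contracting $\{b_v,x\}$): contracting an edge whose endpoints have exactly $c$ common neighbours gives $\Delta n = -1$, $\Delta m = -(1+c)$, hence $\Delta\Phi = c - 4$, so the contraction itself already decreases $\Phi$ whenever $c \le 3$; since any preliminary edge addition only lowers $\Phi$ further, establishing $c \le 3$ suffices.

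I expect the contraction cases to be the main obstacle, as they are exactly where the structural lemmas are needed. For $G/va_v$ with $v$ a Type-2 leaf-vertex, $N(v) = \{a_v,b_v,c_v\}$ and $\Pi_v$ is induced, so $a_v$ and $v$ share only $b_v$, giving $c = 1$. For $G/b_vx$ I would use the earlier analyses of $N(b_v)$: in~\ref{step:contraction-case} one has $N(b_v) = \{v,a_v,c_v,u_v\}$, and in~\ref{step:diamond-case} one has $N(b_v) = \{a_v,c_v,u_v,v,x\}$ by Lemma~\ref{claim:sep-b}; combined with $N_G(c_v) = \{b_v,u_v,v\}$ from Theorem~\ref{th:GminusVprime?} and $N(v) = \{a_v,b_v,c_v\}$, the only common neighbours of $b_v$ and $x$ are $a_v$ and $u_v$, so $c = 2$ in both cases. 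This keeps every contraction step within the required budget.

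Two cases then remain. Step~\ref{step:type-1} deletes $v$ and contracts the $k \ge 2$ internal vertices of the induced path $\Pi_v$ (dominated by $d_v$) to a single edge; using Lemma~\ref{lem:ends-separate} (which confines the neighbours of these internal vertices to $v$, $d_v$, their path-neighbours, and the ends $a_v,c_v$), a direct count yields $\Delta n = -(k-1)$ and $\Delta m = 2 - 3k$, whence $\Delta\Phi = -2k + 3 \le -1$ for $k \ge 2$. Finally, in~\ref{step:no-diamond-case} subcase (i) the algorithm does not recurse on a new graph but merely reselects a leaf-vertex $\ell$ and redirects (at most once, as noted) to~\ref{step:type-1} or~\ref{step:prime-case}; the recursive call it ultimately issues is one of the cases already treated and still lowers $\Phi$. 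Assembling all cases gives $\Phi(G_{i+1}) \le \Phi(G_i) - 1$ for every $i$, and therefore the number of recursive calls is at most $\Phi(G_0) = 5n - m = O(n)$.
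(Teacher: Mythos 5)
Your proof is correct and follows essentially the same route as the paper: the potential $5n-m$, the observation that each invocation issues at most one recursive call, and a case-by-case verification that every transformation (Type-1 reduction, leaf removal, edge additions, and the contractions with at most two common neighbours) decreases the potential by at least one. Your general contraction formula $\Delta\Phi = c-4$ and the explicit treatment of the redirection subcase in~\ref{step:no-diamond-case} are minor presentational refinements of the same argument, with identical arithmetic in every case.
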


\begin{proof}
First note that since $G$ is planar by the hypothesis, $5n-m \geq 0$ and $5n-m = {\cal O}(n)$. 
Let $G'$ with $n'$ vertices and $m'$ edges so that Algorithm \texttt{Leaf-BottomUp} is recursively applied on $G'$ when $G$ is the input.
Since there is at most one such a graph $G'$ ({\it i.e.}, there is no more than one recursive call at each call of the algorithm), furthermore $G'$ is prime and planar, therefore, in order to prove the lemma it suffices to prove that $5n' - m '< 5n -m$.
To prove it, let us consider at which step of the algorithm the recursive call occurs.
\begin{itemize}
\item If it is at~\ref{step:type-1}, then $G'$ is obtained by removing a leaf-vertex of Type 1, denoted by $v$, and then contracting all the internal vertices in the path $\Pi_v$ (induced by $N(v)$) to a single edge.
Therefore, $n' = n - deg(v) + 3$, $m' = m - 3deg(v) + 8$ and so, $5n'-m' =  5n - m - (2deg(v) - 7) < 5n - m$ because $deg(v) \geq 4$.

Thus, from now on let us assume we fall in~\ref{step:type-2-3}, {\it i.e.}, a leaf-vertex of Type 2 or 3 is considered, denoted by $v$.

\item If the recursion happens at~\ref{step:prime-case}~\ref{step:many-neighbours}, then $G'$ is obtained by removing $v$.
Therefore, $n' = n - 1$ and either $m' = m-3$ (if $v$ is of Type 2) or $m'=m-2$ (if $v$ is of Type 3), hence $m' \geq m - 3$ and so, $5n' - m' \leq 5n -m - 2 < 5n - m$.
\item If it is at~\ref{step:prime-case}~\ref{step:few-neighbours}, then we fall in~\ref{step:prime-case}~\ref{step:few-neighbours}~\ref{step:two-neighbours} (no recursion occurs in~\ref{step:prime-case}~\ref{step:few-neighbours}~\ref{step:one-neighbour}), thus $G'$ is obtained by making $v$ adjacent to the two vertices in $(N(a_v) \cap N(c_v)) \setminus v$ (including $b_v$ in the case when $v$ is of Type 3).
Therefore, $n' = n$ and either $m'=m+1$ (if $v$ is of Type 2) or $m'=m+2$ (if $v$ is of Type 3), hence $m' \geq m+1$ and so, $5n' -m' \leq 5n -m - 1 < 5n-m$.

\item Else, the recursion happens at~\ref{step:new-clique-sep}.
Recall that in such case, there exists a vertex $u_v$ such that $(b_v,u_v)$ is a clique-separator of $G \setminus v$.
Adding an edge between $v$ and $b_v$ if it does not exist, decreases $5n-m$ by $1$, therefore from now on let us assume that $v$ is a leaf-vertex of Type 2.

\begin{itemize}
\item if it is at~\ref{step:new-clique-sep}~\ref{step:no-separation}, then $G'$ is obtained by contracting the edge $\{v,a_v\}$.
Therefore, $n'=n-1$, $m'=m-2$, hence $5n'-m' = 5n-m-3 < 5n-m$.
\item If it is at~\ref{step:new-clique-sep}~\ref{step:separation}~\ref{step:contraction-case}, then $G'$ is obtained by adding an edge between $b_v$ and some vertex $x \in (N(a_v) \cap N(u_v)) \setminus b_v$ then contracting this edge. 
Furthermore, $N(b_v) = \{a_v,c_v,u_v,v\}$ in such case and $c_v,v \notin N(x)$.
Therefore, $n'=n-1$, $m'=m-2$ and so, $5n'-m' = 5n-m-3 < 5n-m$.
\item If it is at~\ref{step:new-clique-sep}~\ref{step:separation}~\ref{step:diamond-case}, then $G'$ is obtained by contracting the edge $\{b_v,x\}$ where $x \in N(a_v) \cap N(u_v) \cap N(b_v)$.
Furthermore, $N(b_v) = \{a_v,c_v,u_v,v,x\}$ in such case and $c_v,v \notin N(x)$.
Therefore, $n'=n-1$, $m' = m-3$ and so, $5n' - m' = 5n -m - 2 < 5n - m$.
\item Finally, in all other cases the recursive call happens at~\ref{step:new-clique-sep}~\ref{step:separation}~\ref{step:no-diamond-case}.
Then, $G'$ is obtained by adding an edge between two vertices $y,z \in N(b_v) \cap N(x)$ for some $x \in (N(a_v) \cap N(u_v)) \setminus b_v$.
Therefore, $n' = n$, $m' = m+1$ and so, $5n' -m' = 5n -m - 1 < 5n-m$.
\end{itemize}

\end{itemize}
\end{proof}

\section{Conclusion and Open questions}
\label{sec:conclusion}

On the negative side, we proved the NP-hardness of computing five metric graph invariants (namely, tree-breadth, path-length, path-breadth, $k$-good tree and path decompositions) whose complexity has been left open in several works~\cite{Dourisboure2007b,Dragan2014a,Dragan2014b}.
These results add up to the proof in~\cite{Lokshtanov2010} that it is NP-hard to compute the tree-length.
We leave as a future work further study on the border between tractable and intractable instances for the problem of computing the above metric graph invariants.
Especially, what are the graph classes for which it can be decided in polynomial-time whether a graph admits a star-decomposition ?
In this paper, we partially answer to this question by proving that it is the case for bipartite graphs and planar graphs.
Based on these two positive results, we conjecture that the problem is Fixed-Parameter Tractable when it is parameterized by the \emph{clique-number} of the graph (note that there is a large clique in all the graphs obtained from our polynomial-time reductions).
Intermediate challenges could be to determine whether the problem is Fixed-Parameter Tractable when it is parameterized by the genus, the tree-width or the Hardwiger number.

Finally, we notice that all our NP-hardness results imply that the above metric graph invariants cannot be approximated below some constant-factor.
There remains a gap between our inapproximability results and the constant-ratio of the approximation algorithms in~\cite{Dourisboure2007b,Dragan2014b}.
Therefore, we leave as an interesting open question whether we can fill in this gap.

\bibliographystyle{abbrv}
\bibliography{biblio}

\end{document}